\title{\ttitle} % Defines the thesis title - don't touch this
\theoremstyle{definition}
\newtheorem{notation}[theorem]{Notation}
\def\({\left(}
\def\){\right)}
\newcommand{\image}[3]{\begin{figure*}[ht]
\begin{center}
\includegraphics[width=#2\textwidth]{img/#1}
\caption{\small{\label{#1}#3}}
\end{center}
\end{figure*}}
\newcommand{\imagerot}[4]{\begin{figure*}[ht]
\begin{center}
\includegraphics[width=#2\textwidth,angle=#4]{img/#1}
\caption{\small{\label{#1}#3}}
\end{center}
\end{figure*}}
\newcommand{\R}{\mathbb{R}}
\newcommand{\N}{\mathbb{N}}
\newcommand{\CC}{\mathbb{C}}
\newcommand{\de}{\textnormal{d}}
\newcommand{\grad}{\textnormal{grad }}
\newcommand{\vol}{\de_{vol}}
\newcommand{\tn}{\textnormal}
\newcommand{\ds}{\displaystyle}
\newcommand{\ie}{\textit{i.e.}}
\newcommand{\cf}{\textit{cf.}}
\newcommand{\vs}{\textit{vs.}}
\newcommand{\eg}{\textit{e.g.}}
\newcommand{\etc}{\textit{etc.}}
\newcommand{\citepcf}[2]{({\cf} \citep{#1}{#2})}
\newcommand{\cfeg}[2]{({\cf} {\eg} \citep{#1}{#2})}
\newcommand{\rank}{\textnormal{rank }}
\newcommand{\diag}{\textnormal{diag}}
\newcommand{\mf}[1]{\mathfrak{#1}}
\newcommand{\mc}[1]{\mathcal{#1}}
\newcommand{\ms}[1]{\mathscr{#1}}
\newcommand{\IM}{\tn{im }}
\newcommand{\tensors}[3]{\mc T{}^{#1}_{#2}#3}
\newcommand{\sseuclid}[3]{\R^{#1,#2,#3}}
\newcommand{\supp}[1]{\tn{supp}(#1)}
\newcommand{\idxannih}[2]{#1{}^{#2}{}}
\newcommand{\idxcoannih}[2]{#1{}_{#2}{}}
\newcommand{\radix}[1]{\idxcoannih{#1}{\circ}}
\newcommand{\annih}[1]{\idxannih{#1}{\bullet}}
\newcommand{\coannih}[1]{\idxcoannih{#1}{\bullet}}
\newcommand{\coradix}[1]{\idxannih{#1}{\circ}}
\newcommand{\annihg}{\coannih{g}}
\newcommand{\coannihg}{\annih{g}}
\newcommand{\metric}[1]{\langle#1\rangle}
\newcommand{\annihprod}[1]{\coannih{\langle\!\langle#1\rangle\!\rangle}}
\newcommand{\cocontr}{{{}_\bullet}}
\newcommand{\vectmodule}{\mf X}
\newcommand{\fivect}[1]{\vectmodule(#1)}
\newcommand{\fivectnull}[1]{\vectmodule_\circ(#1)}
\newcommand{\fiscal}[1]{\ms F(#1)}
\newcommand{\fiformk}[2]{\mc A^{#1}(#2)}
\newcommand{\fivectlift}[1]{\mf L(#1)}
\newcommand{\annihforms}[1]{\annih{\mc A}(#1)}
\newcommand{\annihformsk}[2]{\annih{\mc A}{}^{#1}(#2)}
\newcommand{\metricformsk}[2]{\annih{\ms A}{}^{#1}(#2)}
\newcommand{\srformsk}[2]{\annih{\ms A}{}^{#1}(#2)}
\newcommand{\ric}{\tn{Ric}}
\newcommand{\rhord}[1]{\mc O_\rho\(#1\)}
\newcommand{\lie}{\mc L}
\newcommand{\kosz}{\mc K}
\newcommand{\der}{\nabla}
\newcommand{\dera}[1]{\der_{#1}}
\newcommand{\derb}[2]{\dera{#1}{#2}}
\newcommand{\derc}[3]{({\derb{#1}{#2}})(#3)}
\newcommand{\lder}{\der^{\flat}}
\newcommand{\ldera}[1]{\lder_{#1}}
\newcommand{\lderb}[2]{\ldera{#1}{#2}}
\newcommand{\lderc}[3]{(\lderb{#1}{#2})(#3)}
\newcommand{\curv}[2]{\mc R^\flat_{#1#2}}
\newcommand{\Ric}{\textnormal{Ric}}
\newcommand{\cyclic}{\circlearrowleft}
\newcommand{\abs}[1]{\left|#1\right|}
\newcommand{\dsfrac}[2]{\ds{\frac{#1}{#2}}}
\newcommand{\metricname}{metric}%{fundamental tensor}
\newcommand{\koszulname}{Koszul object}
\newcommand{\CS}{\mathbb{S}}
\newcommand{\CT}{\mathbb{T}}
\newcommand{\CM}{\mathbb{M}}
\def\hyph{-\penalty0\hskip0pt\relax}
\newcommand{\semiriem}{semi{\hyph}Riemannian}
\newcommand{\ssemiriem}{Semi{\hyph}Riemannian}
\newcommand{\semieucl}{semi{\hyph}Euclidean}
\newcommand{\semireg}{semi{\hyph}regular}
\newcommand{\ssemireg}{Semi{\hyph}regular}
\newcommand{\quasireg}{quasi{\hyph}regular}
\newcommand{\qquasireg}{Quasi{\hyph}regular}
\newcommand{\nondeg}{non{\hyph}degenerate}
\newcommand{\nonsing}{non{\hyph}singular}
\newcommand{\rstationary}{radical{\hyph}stationary}
\newcommand{\rrstationary}{Radical{\hyph}stationary}
\newcommand{\rannih}{radical{\hyph}annihilator}
\newcommand{\nonrenormalizable}{non{\hyph}renormalizable}
\newcommand{\nonrenormalizability}{non{\hyph}renormalizability}
\newcommand{\flrw}{Friedmann-Lema\^itre-Robertson-Walker}
\newcommand{\FLRW}{FLRW}
\newcommand{\schw}{Schwarzschild}
\newcommand{\rn}{Reissner-Nordstr\"om}
\newcommand{\kn}{Kerr-Newman}
\newcommand{\Wch}{Weyl curvature hypothesis}
\newcommand{\WCH}{WCH}
\newcommand{\hor}{Ho{\v{r}}ava}
\newcommand{\HL}{\hor-Lifschitz}
\newcommand{\chap}[2]{\chapter{#2}\lhead{\emph{#2}}\label{#1}}
\newcommand{\movedfrom}[1]{}
\newcommand{\movedto}[1]{}
\begin{document}
\makeatletter
\let\citep\@undefined
\newcommand{\citep}[2]{\cite{#1}, p. #2}
\let\sref\@undefined
\newcommand{\sref}[1]{\S\ref{#1}}
\makeatother

\frontmatter % Use roman page numbering style (i, ii, iii, iv...) for the pre-content pages

\setstretch{1.15} % Line spacing of 1.3

% Define the page headers using the FancyHdr package and set up for one-sided printing
\fancyhead{} % Clears all page headers and footers
\rhead{\thepage} % Sets the right side header to show the page number
\lhead{} % Clears the left side page header

\pagestyle{fancy} % Finally, use the "fancy" page style to implement the FancyHdr headers

\newcommand{\HRule}{\rule{\linewidth}{0.5mm}} % New command to make the lines in the title page

% PDF meta-data
\hypersetup{pdftitle={\ttitle}}
\hypersetup{pdfsubject=\subjectname}
\hypersetup{pdfauthor=\authornames}
\hypersetup{pdfkeywords=\keywordnames}

%----------------------------------------------------------------------------------------
%	TITLE PAGE
%----------------------------------------------------------------------------------------

\addtotoc{Title page}
\begin{titlepage}
\begin{center}

\textsc{\LARGE \univname}
\\[0.1cm] % University name
\textsc{\LARGE \deptname}
\\[7.0cm] % University name

%\includegraphics[width=50mm]{img/unibuc-logo.png}

%\HRule \\[0.4cm] % Horizontal line
\textsc{\Huge \ttitle}\\[0.5cm] % Thesis title
%\HRule \\[1.5cm] % Horizontal line

{\Large Ph.D. Thesis} % Thesis type
\\[0.5cm]

\textsc{\LARGE \authornames}
\\[6.5cm]

{\large Supervisor} \\[0.1cm]
{\Large Prof. dr. CONSTANTIN UDRI{\c S}TE} % Supervisor name - remove the \href bracket to remove the link  
\\{\large Corresponding Member of the Accademia Peloritana dei Pericolanti, Messina}
\\{\large Full Member of the Academy of Romanian Scientists}
\\[3.0cm]

{\large Bucharest 2013}
 
\vfill
\end{center}

\end{titlepage}

%----------------------------------------------------------------------------------------
%	DECLARATION PAGE
%	Your institution may give you a different text to place here
%----------------------------------------------------------------------------------------

\Declaration{

\addtocontents{toc}{\vspace{1em}} % Add a gap in the Contents, for aesthetics

I, \authornames, declare that this Thesis titled, `\ttitle' and the work presented in it are my own. I confirm that:

\begin{itemize} 
%\item[\tiny{$\blacksquare$}] This work was done wholly or mainly while in candidature for a research degree at this University.
%\item[\tiny{$\blacksquare$}] Where any part of this Thesis has previously been submitted for a degree or any other qualification at this University or any other institution, this has been clearly stated.
\item[\tiny{$\blacksquare$}] Where I have consulted the published work of others, this is always clearly attributed.
\item[\tiny{$\blacksquare$}] Where I have quoted from the work of others, the source is always given. With the exception of such quotations, this Thesis is entirely my own work.
\item[\tiny{$\blacksquare$}] I have acknowledged all main sources of help.
%\item[\tiny{$\blacksquare$}] Where the Thesis is based on work done by myself jointly with others, I have made clear exactly what was done by others and what I have contributed myself.\\
\end{itemize}
 
Signed:\\
\rule[1em]{25em}{0.5pt} % This prints a line for the signature
 
Date:\\
\rule[1em]{25em}{0.5pt} % This prints a line to write the date
}

\clearpage % Start a new page

%----------------------------------------------------------------------------------------
%	QUOTATION PAGE
%----------------------------------------------------------------------------------------

%\pagestyle{empty} % No headers or footers for the following pages

%\null\vfill % Add some space to move the quote down the page a bit

%\textit{``Thanks to my solid academic training, today I can write hundreds of words on virtually any topic without possessing a shred of information, which is how I got a good job in journalism."}

%\begin{flushright}
%Dave Barry
%\end{flushright}

%\vfill\vfill\vfill\vfill\vfill\vfill\null % Add some space at the bottom to position the quote just right

%\clearpage % Start a new page

%----------------------------------------------------------------------------------------
%	ABSTRACT PAGE DESIGN
%----------------------------------------------------------------------------------------
\renewenvironment{abstract}
{
  \btypeout{Abstract Page}
  \thispagestyle{empty}
  \null\vfil
  \begin{center}
    \setlength{\parskip}{0pt}
    %{\normalsize \UNIVNAME \par} % University name in capitals
    %\bigskip
    {\huge{\textit{Abstract}} \par}
    \bigskip
    %{\normalsize \facname \par} % Faculty name
    %{\normalsize \deptname \par} % Department name
    %\bigskip
    %{\normalsize \degreename\par} % Degree name
    %\bigskip
    %{\normalsize\bf \@title \par} % Thesis title
    %\medskip
    %{\normalsize by \authornames \par} % Author name
    \bigskip
  \end{center}
}

%----------------------------------------------------------------------------------------
%	ABSTRACT PAGE
%----------------------------------------------------------------------------------------

\addtotoc{Abstract} % Add the "Abstract" page entry to the Contents

\abstract{\addtocontents{toc}{\vspace{1em}} % Add a gap in the Contents, for aesthetics

This work presents the foundations of Singular Semi-Riemannian Geometry  and Singular General Relativity, based on the author's research. An extension of differential geometry and of Einstein's equation to singularities is reported. Singularities of the form studied here allow a smooth extension of the Einstein field equations, including matter. This applies to the Big-Bang singularity of the FLRW solution. It applies to stationary black holes, in appropriate coordinates (since the standard coordinates are singular at singularity, hiding the smoothness of the metric). In these coordinates, charged black holes have the electromagnetic potential regular everywhere. Implications on Penrose's Weyl curvature hypothesis are presented. In addition, these singularities exhibit a (geo)metric dimensional reduction, which might act as a regulator for the quantum fields, including for quantum gravity, in the UV regime. This opens the perspective of perturbative renormalizability of quantum gravity without modifying General Relativity.
}

\clearpage % Start a new page

%----------------------------------------------------------------------------------------
%	ACKNOWLEDGEMENTS
%----------------------------------------------------------------------------------------

\setstretch{1.15} % Reset the line-spacing to 1.3 for body text (if it has changed)

\acknowledgements{\addtocontents{toc}{\vspace{1em}} % Add a gap in the Contents, for aesthetics

I thank my PhD advisor, Prof. dr. Constantin Udri\c{s}te from the \textit{\univname}, who accepted and supervised me, with a theme with enough obstacles and subtleties.
I thank my former Professors Gabriel Pripoae and Olivian Simionescu-Panait from the \textit{University of Bucharest}, for comments and important suggestions. I offer my profound gratitude to the late Professor Kostake Teleman \cite{pripoae2009teleman}, for mentoring me during my Master at the \textit{University of Bucharest}.
I wish to thank Professor Vasile Br{\^i}nz{\u a}nescu, my former PhD advisor during the preparation years of my PhD degree program at the {\em Institute of Mathematics of the Romanian Academy}, for preparing high standard exams containing topics suited for my purposes, and for giving me the needed freedom, when my research evolved in a different direction.
I thank Professor M. Vi\c sinescu from the \textit{National Institute for Physics and Nuclear Engineering Magurele, Bucharest, Romania}, for his unconditioned help and suggestions.

I thank Professor David Ritz Finkelstein from \textit{Georgia Tech Institute of Technology}, whose work was such an inspiration in finding the coordinates which remove the infinities in the black hole solutions, and for his warm encouragements and moral support.

I thank Professors P. Fiziev and D. V. Shirkov for helpful discussions and advice received during my stay at the {\em Bogoliubov Laboratory of Theoretical Physics, JINR, Dubna}.

I thank Professor A. Ashtekar from \textit{Penn State University}, for an inspiring conversation about some of the ideas from this Thesis, and possible relations with {\em Loop Quantum Gravity}, during private conversations in Turin.

I thank the anonymous referees for the valuable comments and suggestions to improve the clarity and the quality of the published papers on which this Thesis is based.

This work was partially supported by the {\em Romanian Government grant PN II Idei 1187}.
}
\clearpage % Start a new page

%----------------------------------------------------------------------------------------
%	LIST OF CONTENTS/FIGURES/TABLES PAGES
%----------------------------------------------------------------------------------------

\pagestyle{fancy} % The page style headers have been "empty" all this time, now use the "fancy" headers as defined before to bring them back

\addtotoc{Contents}
\setcounter{tocdepth}{3}
\lhead{\emph{Contents}} % Set the left side page header to "Contents"
\tableofcontents % Write out the Table of Contents

\mainmatter % Begin numeric (1,2,3...) page numbering

\pagestyle{fancy} % Return the page headers back to the "fancy" style

% Include the chapters of the Thesis as separate files from the Chapters folder
% Uncomment the lines as you write the chapters

%%%%%%%%%%%%%%%%%%%%%%%%%%%%%%%%%%%%%%%%%%%%%%%%%%%%%%%%%%%%%%%%%%%%%%%%%%%%%%%%%%%%%%%%%
%----------------------------------------------------------------------------------------
%<--uncomment-->%\chap{ch_todo}{To do}

%<--uncomment-->%\section{Check each chapter\tocheck}

%<--uncomment-->%\section{Check Global against Ovidius\tocheck}

%<--uncomment-->%\section{Check WCH after report\tocheck}

%<--uncomment-->%\section{Check Dimensional Reduction after report\tocheck}

%<--uncomment-->%\section{Remove duplicate labels\tocheck}

%<--uncomment-->%\section{Check the applications section in warped products chapter\tocheck}

%<--uncomment-->%\section{Replace references to papers with references to chapters\tocheck}

%<--uncomment-->%\section{Remove other warnings\tocheck}

%<--uncomment-->%\section{Check acknowledgments\tocheck}

%<--uncomment-->%\section{Check diagonal metric\tocheck}Check section \ref{s_semi_reg_semi_riem_man_example_diagonal} from 1105.0201-body, where I added text from 1105.3404-abstract

%\setcounter{chapter}{-1}
%----------------------------------------------------------------------------------------
\chap{ch_intro}{Introduction}

%~~~~~~~~~~~~~~~~~~~~~~~~~~~~~~~~~~~~~~~~~~~~~~~~~~~~~~~~~~~~~~~~~~~~~~~%
\section{Historical background}

We are interested in the properties of a class of smooth differentiable manifolds which have on the tangent bundle a smooth symmetric bilinear form (also named \textit{metric}), which is allowed to change its signature.

The first such manifolds which were studied have {\nondeg} metric -- starting with the Euclidean plane and space, and continuing with the non-Euclidean geometries introduced by Lobachevsky, Gauss, and Bolyai. After Gauss extended the study of the Euclidean plane to curved surfaces, Bernhard Riemann generalized it to curved spaces with arbitrary number of dimensions \cite{riemann1873hypotheses}. Riemann hoped to give a geometric description of the physical space, in the idea that matter is in fact the effect of the curvature. The previously discovered geometries -- the Euclidean and non-Euclidean ones, and Gauss's geometry of surfaces -- are all particular cases of \textit{Riemannian geometry}. A Riemannian manifold is a differentiable manifold endowed with a symmetric, non-degenerate and positive definite bilinear form on its tangent bundle.

The necessity of studying spaces having a symmetric, non-degenerate bilinear form which is not positive definite appeared with the Theory of Relativity \cite{einstein1920relativity}. A differentiable manifold having on its tangent bundle a symmetric, non-degenerate bilinear form which is not necessarily positive or negative definite is named \textit{{\semiriem} manifold} (sometimes \textit{pseudo-Riemannian manifold}, and in older textbooks even is called Riemannian manifold). {\ssemiriem} geometry constitutes the mathematical foundation of General Relativity. It was thoroughly studied, and the constructions made starting from the non-degenerate metric, such as the Levi-Civita connection, the covariant derivative, the Riemann, Ricci and scalar curvatures are very similar to the Riemannian case, when the metric is positive definite. On the other hand, other properties, especially the global ones, are very different in the indefinite case. Very good references for {\semiriem} geometry are the textbooks \cite{ONe83,BE96,HE95}.

If we allow the metric to be degenerate, many difficulties occur. For this reason, advances were made slower than for the non-degenerate case, and only in particular situations. The study of manifolds endowed with degenerate metric is pioneered by Moisil \cite{Moi40}, Strubecker \cite{Str41,Str42a,Str42b,Str45}, Vr\u{a}nceanu \cite{Vra42}
\footnote{Gheorghe Vr\u{a}nceanu also introduced in 1926 the {\em non-holonomic geometry}, nowadays known as {\em sub-Riemannian geometry} \cite{Vra26,Vra36,teleman2004vranceanu,pripoae2004vranceanu}. Although at a point, there is a duality between the sub-Riemannian and the singular {\semiriem} metrics, locally and especially globally there are too many differences to allow an import of the rich amount of results already obtained in sub-Riemannian geometry.}.

One situation when the metric can be degenerate occurs in the study of submanifolds of {\semiriem} manifolds. In the Riemannian case, the submanifolds are Riemannian too. But in general the image of a smooth mapping from a differentiable manifold to a Riemannian or {\semiriem} manifold may be singular, in particular may have degenerate metric. In the case of \textit{varieties} the problem of finding a \textit{resolution of its singularities} was proved to have positive answer by Hironaka \cite{hironaka1964resolution1,hironaka1964resolution2}. In the {\semiriem} case, the metric induced on a submanifold can be degenerate, even though the larger manifold has non-degenerate metric. The properties of such submanifolds, studied in many articles, {\eg} in \cite{Ros72}, \cite{Kup87a,Kup87c}, \cite{Bej95,Bej96}, were extended by Kupeli to manifolds endowed with degenerate metric of constant signature \cite{Kup87b,Kup96}. The situation is much more difficult when the signature changes.

There are some situations in General Relativity when the metric becomes degenerate or changes its signature. There are cosmological models of the Universe in which the initial singularity of the Big Bang is replaced, by making the metric of the early Universe Riemannian. Such models, constructed in connection to the Hartle-Hawking no-boundary approach to Quantum Cosmology, assume that the metric was Riemannian, and it changed, becoming Lorentzian, so that time emerged from a space dimension. Such a change of signature is considered to take place when traversing a hypersurface, on which the metric becomes degenerate \cite{Sak84},\cite{Ellis92a,Ellis92b},\cite{Hay92,Hay93,Hay95}, \cite{Der93}, \cite{Dray91,Dray93,Dray94,Dray95,Dray96,Dray01}, \cite{Koss85,Koss87,Koss93a,Koss93b,Koss94a,Koss94b}.

Another situation where the metric can become degenerate was proposed by Einstein and Rosen, as a model of charged particles. They were the first to model charged particles as \textit{wormholes}, also named \textit{Einstein-Rosen bridges} \cite{ER35}, and inspired Wheeler's \textit{charge without charge} program \cite{MW57,whe:1962}.

The Einstein's equation, as well as its Hamiltonian formulation due to Arnowitt, Deser and Misner \cite{ADM62}, may lead to cases when the metric is degenerate. As the Penrose and Hawking \textit{singularity theorems} show, the conditions leading to singularities are very general, applying to the matter distribution in our Universe \cite{Pen65,Haw66i,Haw66ii,Haw67iii,HP70,HE95}. Therefore, it is important to know how we can deal with such singularities. Many attempts were done to solve this issue.

For example it was suggested that Ashtekar's method of ``new variables'' \cite{ASH87,ASH91,Rom93a} can be used to pass beyond the singularities, because the variable $\widetilde E^a_i$ -- a densitized frame of vector fields -- defines the metric, which can be degenerate. Unfortunately, it turned out that in this case the connection variable $A_a^i$ may become singular \cf \eg \cite{Yon97}. In fact, this is the general case, because the connection variable $A_a^i$ contains as a term the Levi-Civita connection, which is singular in most cases when $g$ becomes degenerate.

Quantum effects are suspected to play an important role in \textit{avoiding the singularities}. Loop Quantum Cosmology, by quantizing spacetime, provided a mean to avoid the Big-Bang singularity and replace it with a Big-Bounce, due to the fact that the curvature is bounded, because there is a minimum distance \cite{bojowald2003absenceLQC,ashtekar2011LQC,visinescu2009bianchi,visinescu2012bianchi}. A very interesting model which is nonsingular at the end of the evolution and does not allow the anisotropic universe to turn into an isotropic one is studied in \cite{visinescu2006bel,visinescu2010bianchi}. Another possibility to avoid singularities is given in \cite{poplawski2012nonsingular}, within the Einstein-Cartan-Sciama-Kibble theory \cite{cartan1922torsion,cartan1925varietes,hehl1976cartan}, and also in $f(R)$ modifications of General Relativity. Singularity removal may arise arise in more classical contexts. For example, certain symmetries eliminates the so-called NUT singularity, in certain conditions \cite{misner1963flatter,cotuaescu2001dirac,cotaescu2001dynamical,cotaescu2004gravitational}.

A more classical proposal to avoid the consequences of singularities was initiated by R. Penrose, with the \textit{cosmic censorship hypothesis} \cite{Pen69,Pen74,Pen79,Pen98}. According to the \textit{weak cosmic censorship hypothesis}, all singularities (except the Big-Bang singularity) are hidden behind an \textit{event horizon}, hence are not \textit{naked}. The \textit{strong cosmic censorship hypothesis} conjectures that the maximal extension of spacetime as a regular Lorentzian manifold is \textit{globally hyperbolic}.

The literature in the approaches to singularities is too vast, and it would be unjust to claim to review it properly in a research work which is not a dedicated review. A great review on the problem of singularities in General Relativity is given in \cite{tipler1980singularities_review}, and a more up-to-date one in \cite{garcia2005singularities_review}.

%~~~~~~~~~~~~~~~~~~~~~~~~~~~~~~~~~~~~~~~~~~~~~~~~~~~~~~~~~~~~~~~~~~~~~~~%
\section{Motivation for this research}

In this Thesis is developed the mathematical formalism for a large class of manifolds having on the tangent structure symmetric bilinear forms, which are allowed to become degenerate. Then these results are applied to the singularities in General Realtivity.

This special type of singular {\semiriem} manifold has regular properties in what concerns
\begin{enumerate}
	\item 
the Riemann curvature $R_{abcd}$ (and not $R^a{}_{bcd}$, which in general diverges at singularities),
	\item 
the covariant derivative of an important class of differential forms,
	\item 
and other geometric objects and differential operators which in general cannot be defined properly because they require the inverse of the metric.
\end{enumerate}
%These fields and operations include those necessary to write the Einstein's equation, in a form which is equivalent to Einstein's, but also works at singularities.
These properties of regularity are not valid for any type of degenerate metric. This justifies the name of \textit{semi-regular {\semiriem} manifolds} given here to these special singular {\semiriem} manifolds. Semi-regular metrics can also be used to give a densitized version to the Einstein equation, and to approach the problem of singularities in General Relativity.

The signature of the metric of a semi-regular {\semiriem} manifold can change, but when it doesn't change, we obtain the ``stationary singular {\semiriem} manifolds'' with constant signature, researched by Kupeli \cite{Kup87b,Kup96}. These in turn contain as particular case the {\semiriem} manifolds, which contain the Riemannian manifolds.

In General Relativity, Einstein's equation encodes the relation between the stress-energy tensor of matter, and the Ricci curvature. 
In 1965 Roger Penrose \cite{Pen65}, and later he and S. Hawking \cite{Haw66i,Haw66ii,Haw67iii,HP70,HE95}, proved a set of \textit{singularity theorems}. These theorems state that, under reasonable conditions, the spacetime turns out to be \textit{geodesic incomplete} -- {\ie} it has \textit{singularities}. 
They show that the conditions of occurrence of singularities are quite common. 
Christodoulou \cite{Chr09} showed that these conditions are in fact more common, and then Klainerman and Rodnianski \cite{KR09} proved that they are even more common.

Consequently, some researchers proclaimed that General Relativity predicts its own breakdown, by predicting the singularities \cite{HP70,Haw76,ASH91,HP96,Ash08,Ash09}. Hawking's discovery of the black hole evaporation, leading to his \textit{information loss paradox} \cite{Haw75,Haw76}, made the things even worse. The singularities seem to destroy information, in particular violating the unitary evolution of quantum systems. 
The reason is that the field equations cannot be continued through singularities.

Therefore, it would be important to better understand the singularities.

There are two main situations in which singularities appear in General Relativity and Cosmology: at the Big-Bang, and in the black holes. We will see that the mathematical apparatus developed in this Thesis finds applications in both these situations.

As the previous section shows, much work is done on singular metrics. But to the author's knowledge, the systematic approach presented here and the results are novel, as well as the applications to General Relativity, and have no overlap with the research that is previously done by other researchers.

%~~~~~~~~~~~~~~~~~~~~~~~~~~~~~~~~~~~~~~~~~~~~~~~~~~~~~~~~~~~~~~~~~~~~~~~%
\section{Presentation per chapter}

Chapter \ref{ch_ssr} introduces  the geometry of metrics which can be degenerate, and are allowed to change signature. It studies the main properties of such manifolds, and is based almost entirely on the author's research. It contains an invariant definition of metric contraction between covariant indices, which works also when the metric is degenerate. With the help of the metric, it is shown that in some cases one can construct covariant derivatives, and even define a Riemann curvature tensor. If the metric is {\nondeg}, all these geometric objects become the ones known from {\semiriem} geometry. Section \ref{s_cartan} contains the derivation of a generalization of Cartan's structural equations for degenerate metric. Section \ref{s_warped} generalizes the notion of warped product to the case when the metric can become degenerate. It provides a simple way to construct examples of singular {\semiriem} manifolds. The degenerate warped product turns out to be relevant in some of the singularities which will be analyzed in the Thesis.

The following chapters, also based on author's own research, applies the mathematics developed in the first part to the singularities encountered in General Relativity. Chapter \ref{ch_einstein_equation} introduces two equations equivalent to Einstein's equation at the points where the metric is {\nondeg}, but which remain smooth at singularities. The first equation remains smooth at the so-called {\semireg} singularities. The second of the equations applies to the more restricted case of {\quasireg} singularities, which will turn out to be important in the following chapters.

Chapter \ref{ch_big_bang} shows, with the apparatus developed so far, that the {\flrw} spacetime is {\semireg}, but also {\quasireg}. It also studies some important properties of the {\FLRW} singularity. The {\nonsing} version of Einstein's equation is written explicitly, and shown to be smooth at singularity. Then, in section \sref{s_wch}, a more general solution, which is not homogeneous or isotropic, is presented. It is shown that it contains as particular cases the isotropic singularities and the {\flrw} singularities, and, being {\quasireg}, satisfies the Weyl curvature hypothesis of Penrose.

The black hole singularities are studied in chapter \ref{ch_black_hole}. It is shown that the Schwarzschild singularity is semi-regularizable, by using a method inspired by that of Eddington and Finkelstein \cite{eddington1924comparison,finkelstein1958past}, used to prove that the metric is regular on the event horizon. This method is also applied to make the Reissner-Nordstrom and the Kerr-Newman singularities analytic. Then, it is shown that this approach allows the construction of globally hyperbolic spacetimes with singularities (Chapter \ref{ch_global_hyperbolic}). This shows that it doesn't follow with necessity that black holes destroy information, leading by this to Hawking's information loss paradox and violations of unitary evolution. This suggests that a possible resolution of Hawking's paradox can be obtained without modifying General Relativity.

Chapter \ref{ch_quantum_gravity} explores the possibility that the dimensional reduction at singularities improves the renormalization in Quantum Field Theory, and makes Quantum Gravity preturbatively renormalizable.

%%%%%%%%%%%%%%%%%%%%%%%%%%%%%%%%%%%%%%%%%%%%%%%%%%%%%%%%%%%%%%%%%%%%%%%%%%%%%%%%%%%%%%%%%
%\part{Singular semi-Riemannian geometry}
%\label{part_ssrg}

%----------------------------------------------------------------------------------------
\chap{ch_ssr}{Singular semi-Riemannian manifolds}

 % On Singular Semi-Riemannian Manifolds
The text in this chapter is based on author's original results, communicated in the papers \cite{Sto11a}, \cite{Sto11d}, and \cite{Sto11b}.
%\input{./sections/1105.0201-ssrg-abstract}

%~~~~~~~~~~~~~~~~~~~~~~~~~~~~~~~~~~~~~~~~~~~~~~~~~~~~~~~~~~~~~~~~~~~~~~~%
\section{Introduction}

%~~~~~~~~~~~~~~~~~~~~~~~~~~~~~~~~~~~~~~~~~~~~~~~~~~~~~~~~~~~~~~~~~~~~~~~%
\subsection{Motivation and related advances}

On a {\semiriem} manifold (including the Riemannian case), one can use the metric to define geometric objects like the Levi-Civita connection and the Riemann curvature. One essential ingredient is the metric covariant contraction \cite{Balan1999diffgeom,udriste2005linear}. In singular {\semiriem} geometry, the metric is not necessarily {\nondeg}, and these standard constructions no longer work, being based on the inverse of the metric, and on operations defined with its help, like the metric contraction between covariant indices.

In this chapter, we define in an invariant way the canonical metric contraction between covariant indices, which works even for degenerate metrics (although in this case is well defined only on a special type of tensor fields, but this suffices for our purposes). Then, we use this contraction to define in a canonical manner the covariant derivative for {\rannih} indices of covariant tensor fields, on a class of singular {\semiriem} manifolds. This newly defined covariant derivative helps us construct the Riemann curvature, which is smooth on a class of singular {\semiriem} manifolds, named {\semireg}. We also define the Ricci and scalar curvatures, which are smooth for degenerate metric, so long as the signature is constant.

%~~~~~~~~~~~~~~~~~~~~~~~~~~~~~~~~~~~~~~~~~~~~~~~~~~~~~~~~~~~~~~~~~~~~~~~%
\subsection{Presentation of this chapter}

Section \sref{s_singular_semi_riemannian} recalls known generalities about singular {\semiriem} manifolds, such as the space of degenerate tangent vectors at a point.

Section \sref{s_dual_inner_prod} continues with the properties of the {\rannih} space, consisting in the covectors annihilating the degenerate vectors at a point. These spaces of covectors are used to define tensor fields. On the space of {\rannih} covectors we can define a symmetric bilinear form which generalizes the inverse of the metric. This metric will be used to perform metric contractions between covariant indices, for the tensor fields constructed with the help of the {\rannih} space, in section \sref{s_tensors_contraction_sign_const}.

The standard way to define the Levi-Civita connection is by raising an index in the Koszul formula. But this method works only for {\nondeg} metrics, since it relies on raising indices. We will be able to obtain differential operators like the covariant derivative for a class of tensor fields, by using the right-hand side of the Koszul formula (named here the {\koszulname}). The properties of the {\koszulname} are studied in section \sref{s_koszul_form}. They are similar to those of the Levi-Civita connection, and are used to construct a kind of covariant derivative for vector fields in section \sref{s_cov_der}, and a covariant derivative for differential forms in \sref{s_cov_der_forms}.

An important class of singular {\semiriem} manifolds is that on which the lower covariant derivative of any vector field, which is a $1$-form, admits smooth covariant derivatives. We define them in section \sref{s_riemann_curvature}, and call them {\em {\semireg} manifolds}.

The {\koszulname} and of the covariant derivatives for differential forms introduced in section \sref{s_cov_der} are used to construct, in \sref{s_riemann_curvature}, the Riemann curvature tensor, which has the same symmetry properties as in the {\nondeg} case. We show that on a {\semireg} {\semiriem} manifold, the Riemann curvature tensor is smooth. Since it is {\rannih} in all of its indices, we can construct from it the Ricci and scalar curvatures. Section \sref{s_riemann_curvature_ii}, proves a useful formula of the Riemann curvature tensor, directly in terms of the {\koszulname}. We compare the Riemann curvature we obtained, with the one obtained by Kupeli by different methods in \cite{Kup87b}.

We give in section \sref{s_semi_reg_semi_riem_man_example} two examples of {\semireg} {\semiriem} metrics: diagonal metrics, and degenerate conformal transformations of {\nondeg} metrics.

In section \sref{s_cartan}, we derive structural equations similar to those of Cartan, but for degenerate metrics.

Section \sref{s_warped} contains the generalization of the warped product to singular {\semiriem} manifolds, where the warping function can be allowed to vanish at some points, and the manifolds whose product is taken may be singular {\semiriem}. It is shown that the degenerate warped product of {\semiriem} manifolds is {\semiriem}, under very general conditions imposed to the warping function. Degenerate warped products will have important applications, both in the case of big-bang, and the black hole singularities.

%~~~~~~~~~~~~~~~~~~~~~~~~~~~~~~~~~~~~~~~~~~~~~~~~~~~~~~~~~~~~~~~~~~~~~~~%
\section{Singular {\semiriem} manifolds}
\label{s_singular_semi_riemannian}

%~~~~~~~~~~~~~~~~~~~~~~~~~~~~~~~~~~~~~~~~~~~~~~~~~~~~~~~~~~~~~~~~~~~~~~~%
\subsection{Definition of singular {\semiriem} manifolds}
\label{s_singular_semi_riemannian_def}

\begin{definition}(see {\eg} \cite{Kup87b}, \citep{Pam03}{265} for comparison)
\label{def_sing_semiRiemm_man}
Let $M$ be a differentiable manifold, and $g\in \tensors 0 2 M$ a symmetric bilinear form on $M$. Then the pair $(M,g)$ is called {\em singular {\semiriem} manifold}. The bilinear form $g$ is called the {\em metric tensor}, or the {\em metric}, or the {\em fundamental tensor}. 

For any point $p\in M$, the tangent space $T_pM$ has a frame in which the metric is
$$g_p=\diag(0,\ldots,0,-1,\ldots,-1,1,\ldots,1),$$
where $0$ appears $r$ times, $-1$ appears $s$ times, and $1$ appears $t$ times. The triple $(r,s,t)$ is called the \textit{signature} of $g$ at $p$. The relations $\dim M=r+s+t$, and $\rank g=s+t=n-r$, hold. 
If the signature of $g$ is fixed, then 
$(M,g)$ is said to have {\em constant signature}, otherwise is said to be with {\em variable signature}. If $g$ is {\nondeg}, then the manifold $(M,g)$ is {\em {\semiriem}}, and if $g$ is positive definite, it is {\em Riemannian}.
\end{definition}

\begin{remark}
The name ``singular {\semiriem} manifold'' may suggest that such a manifold is {\semiriem}. In fact it is more general, containing the {\nondeg} case as a subcase. This may be a misnomer, but we adhere to it, because it is generally used in the literature \cite{Moi40,Vra42,Kup96}. Moreover, for the geometric objects we introduce, which are similar and generalize objects from the {\nondeg} case, we will prefer to import the standard terminology from {\semiriem} geometry (see {\eg} \cite{ONe83}).
\end{remark}

The results obtained in the following don't necessarily require the {\metricname} to be degenerate. Hence, they also apply to {\semiriem} manifolds, including Riemannian. Moreover, while in standard materials like \cite{Kup87b,Kup96} was preferred to maintain the signature constant (because the most acute singular behavior takes place at the points where the signature changes) we will not assume constant signature.

\begin{example}
\label{ex_sing_semi_euclidean}
Let $r,s,t\in\N$, $n=r+s+t$. The space
\begin{equation}
	\sseuclid r s t:=(\R^n,\metric{,}),
\end{equation}
with the metric $g$ given, for any two vector fields $X$, $Y$ on $\R^n$ at a point $p$ on the manifold, in the natural chart, by
\begin{equation}
	\metric{X_p,Y_p} = -\sum_{i=r+1}^s X^i Y^i + \sum_{j=r+s+1}^n X^j Y^j,
\end{equation}
is called the singular {\semieucl} space $\sseuclid r s t$ (see {\eg} \citep{Pam03}{262}).
If $r=0$ we obtain the {\semieucl} space $\R^n_s:=\sseuclid 0 s t$ (see {\eg} \citep{ONe83}{58}).
If $r=s=0$, then $t=n$ and we obtain as particular case the Euclidean space $\R^n$, endowed with the natural scalar product.
\end{example}

%~~~~~~~~~~~~~~~~~~~~~~~~~~~~~~~~~~~~~~~~~~~~~~~~~~~~~~~~~~~~~~~~~~~~~~~%
\subsection{The radical of a singular {\semiriem} manifold}
\label{s_radix}

\begin{definition}(\cf \eg \citep{Bej95}{1}, \citep{Kup96}{3}, and \citep{ONe83}{53})
Let $V$ be a finite dimensional vector space, endowed with a symmetric bilinear form $g$ which may be degenerate. The space $\radix{V}:=V^\perp=\{X\in V|\forall Y\in V, g(X,Y)=0\}$ is named the {\em radical} of $V$. The symmetric bilinear form $g$ is {\nondeg} if and only if $\radix{V}=\{0\}$.
\end{definition}

\begin{definition}(see {\eg} \citep{Kup87b}{261}, \citep{Pam03}{263})
\label{def_radix}
Let $(M,g)$ be a singular {\semiriem} manifold. The subset of the tangent bundle defined by $\radix{T}M=\cup_{p\in M}\radix{(T_pM)}$ is called {\em the radical of $TM$}.
It is a vector bundle if and only if the signature of $g$ is constant on the entire $M$, and in this case, $\radix{T}M$ is a distribution.
We denote by $\fivectnull{M}\subseteq\fivect M$ the set of vector fields $W\in\fivect M$ for which $W_p\in\radix{(T_pM)}$.  
\end{definition}

\begin{example}
\label{ex_sing_semi_euclidean_radix}
In the case of the the singular {\semieucl} manifold $\sseuclid r s t$  from the Example \ref{ex_sing_semi_euclidean}, the radical $\radix{T}\sseuclid r s t$ is:
\begin{equation}
	\radix{T}\sseuclid r s t = \bigcup_{p\in\sseuclid r s t}\tn{span}({\{(p,\partial_{ap})|\partial_{ap}\in T_p\sseuclid r s t,a\leq r\}}),
\end{equation}
and
\begin{equation}
	\fivectnull{\sseuclid r s t} = \{X\in\fivect{\sseuclid r s t}|X=\sum_{a=1}^r X^a\partial_a\}.
\end{equation}
\end{example}

%~~~~~~~~~~~~~~~~~~~~~~~~~~~~~~~~~~~~~~~~~~~~~~~~~~~~~~~~~~~~~~~~~~~~~~~%
\section{The {\rannih}}
\label{s_dual_inner_prod}

Let $(V,g)$ be a vector space endowed with a bilinear form (also named inner product). If $g$ is {\nondeg}, it defines an isomorphism $\flat:V\to V^*$ (see {\eg} \citep{Gibb06}{15}; \citep{GHLF04}{72}). If $g$ is degenerate, $\flat$ is a linear morphism, but not an isomorphism. This prevents the definition of a dual for $g$ on $V^*$ in the usual sense. But we can still define canonically a symmetric bilinear form $\annihg\in\flat(V)^*\odot\flat(V)^*$, which extends immediately to singular {\semiriem} manifolds, and can be used to contract covariant indices and construct the needed geometric objects.

%~~~~~~~~~~~~~~~~~~~~~~~~~~~~~~~~~~~~~~~~~~~~~~~~~~~~~~~~~~~~~~~~~~~~~~~%
\subsection{The {\rannih} vector space}
\label{s_rad_annih_space}

\begin{definition}
\label{def_inner_morphism}
Let $\flat:V\to V^*$, which associates to any $u\in V$ a linear form $\flat(u):V\to \R$, defined by $\flat(u)v:=\metric{u,v}$.
Then, $\flat$ is a vector space morphism, called the {\em index lowering morphism}. We will also use the notation $u^\flat$ for $\flat(u)$, and sometimes $\annih{u}$.
\end{definition}

\begin{remark}
\label{thm_radix_ker}
It is easy to see that $\radix{V}=\ker\flat$, so $\flat$ is an isomorphism if and only if $g$ is {\nondeg}.
\end{remark}

\begin{definition}
\label{def_radical_annihilator}
The vector space $\annih{V}:=\IM\flat\subseteq V^*$ of $1$-forms $\omega$ which can be expressed, for some $u\in V$, as $\omega=u^\flat$, is called the {\em {\rannih} space} (it is the annihilator of the radical space $\radix{V}$). The forms $\omega\in\annih{V}$ are called {\em {\rannih} forms}, and act on $V$ by $\omega(v)=\metric{u,v}$.
\end{definition}

It is easy to see that $\dim\annih{V}+\dim\radix{V}=n$. If $g$ is {\nondeg}, $\annih{V}=V^*$.
If $u'\in V$ is another vector so that $u'^\flat=\omega$, then $u'-u\in\radix{V}$. Such $1$-forms $\omega\in\annih{V}$ satisfy $\omega|_{\radix{V}} = 0$.

\begin{definition}
\label{def_co_inner_product}
The symmetric bilinear form $g$ on $V$ defines on $\annih{V}$ a canonical {\nondeg} symmetric bilinear form $\annihg$, by $\annihg(\omega,\tau):=\metric{u,v}$, where $u^\flat=\omega$ and $v=\tau^\flat$. We alternatively use the notation $\annihprod{\omega,\tau}=\annihg(\omega,\tau)$. This notation is consistent with the notation $u^\flat=\annih{u}\in\annih{V}$.
\end{definition}

\begin{proposition}
The inner product $\annihg$ is well-defined, in the sense that it doesn't depend on the vectors $u,v$ which represent the $1$-forms $\omega$, $\tau$.
It is {\nondeg}, and has the signature $(0,s,t)$, where $(r,s,t)$ is the signature of $g$.
\end{proposition}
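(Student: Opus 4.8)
The plan is to verify the three assertions—well-definedness, non-degeneracy, and the signature—in turn. The first two are formal consequences of the identification $\radix{V}=\ker\flat$ from Remark \ref{thm_radix_ker}, while only the signature requires choosing an adapted frame.

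First I would establish that $\annihg$ is well-defined. Suppose $\omega=u^\flat=u'^\flat$ and $\tau=v^\flat=v'^\flat$; then $u-u'$ and $v-v'$ lie in $\ker\flat=\radix{V}$, so
\[
\metric{u,v}-\metric{u',v'}=\metric{u-u',v}+\metric{u',v-v'}=0,
\]
since each term carries a radical vector in one of its slots (using symmetry of $g$ for the second). Bilinearity and symmetry of $\annihg$ then follow at once: because $\flat$ is linear, $(au_1+bu_2)^\flat=a\,u_1^\flat+b\,u_2^\flat$, so $\annihg(\omega,\tau)=\metric{u,v}$ is linear in each argument through its representative, and symmetric because $g$ is.

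Next, for non-degeneracy, I would take $\omega=u^\flat\in\annih{V}$ with $\annihg(\omega,\tau)=0$ for every $\tau\in\annih{V}$. The key point is that $\annih{V}=\IM\flat$, so $\tau$ ranges exactly over all $v^\flat$ with $v\in V$; hence $\metric{u,v}=0$ for all $v\in V$, which forces $u\in\radix{V}=\ker\flat$ and therefore $\omega=u^\flat=0$. Thus the radical of $\annihg$ is trivial.

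The signature is the only part needing a computation, and is where I expect the (modest) work to lie. I would pick a frame $e_1,\dots,e_n$ of $V$ adapted to $g$, so that $g=\diag(0,\dots,0,-1,\dots,-1,1,\dots,1)$ with $r$ zeros, $s$ minus-ones and $t$ ones, and $\radix{V}=\vecgen{e_1,\dots,e_r}$. Since $\flat$ vanishes exactly on the radical, the images $e_{r+1}^\flat,\dots,e_n^\flat$ are linearly independent and span $\IM\flat=\annih{V}$, yielding a basis of the correct dimension $s+t$. In this basis the Gram matrix of $\annihg$ is, by definition, $\big(\annihg(e_i^\flat,e_j^\flat)\big)_{i,j>r}=\big(\metric{e_i,e_j}\big)_{i,j>r}=\diag(-1,\dots,-1,1,\dots,1)$ with $s$ minus-ones and $t$ ones, so Sylvester's law of inertia reads off the signature $(0,s,t)$. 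The one point requiring care is confirming that $\{e_i^\flat\}_{i>r}$ genuinely forms a basis of $\annih{V}$; this holds because the complement $\vecgen{e_{r+1},\dots,e_n}$ meets $\ker\flat$ trivially, so $\flat$ restricts to an isomorphism of it onto $\annih{V}$.
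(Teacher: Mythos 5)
Your proof is correct and follows essentially the same route as the paper's: the same expansion argument for well-definedness, and the same adapted diagonal basis whose non-radical images give a basis of $\annih{V}$ with Gram matrix $\diag(-1,\dots,-1,1,\dots,1)$. The only difference is cosmetic — you prove non-degeneracy directly from $\annih{V}=\IM\flat$, whereas the paper reads it off from the signature — and your explicit check that $\{e_i^\flat\}_{i>r}$ is a basis fills in a step the paper leaves terse.
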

\begin{proof}
Let $u',v'\in V$ be other vectors so that $u'^\flat=\omega$ and $v'^\flat=\tau$. Then, since $u'-u\in\radix{V}$ and $v'-v\in\radix{V}$, $\metric{u',v'}=\metric{u,v}+\metric{u'-u,v}+\metric{u,v'-v}+\metric{u'-u,v'-v}=\metric{u,v}$.

Let $(e_a)_{a=1}^n$ be a basis in which $g$ is diagonal, the first $r$ diagonal elements being $0$. For $a\in\{1,\ldots,r\}$, $e_a^\flat=0$. For $a\in\{1,\ldots,s+t\}$, $\omega_a:=e_{r+a}^\flat$ are the generators of $\annih{V}$, and $\annihprod{\omega_a,\omega_b}=\metric{e_{r+a},e_{r+b}}$. Hence, $(\omega_a)_{a=1}^{s+t}$ are linear independent, and $\annihg$ has the signature $(0,s,t)$.
\end{proof}

In Figure \ref{degenerate-metric} we can see the various spaces associated with $(V,g)$, and the inner products induced by $g$ on them.

\image{degenerate-metric}{0.7}{
Let $(V,g)$ be an inner product space. The morphism $\flat:V\to V^*$ is defined by $u\mapsto \annih{u}:=\flat(u)=u^\flat=g(u,\_)$. The set of isotropic vectors in $V$ forms the radical $\radix{V}:=\ker\flat=V^\perp$. The image of $\flat$ is $\annih{V}:=\IM{\flat}\leq V^*$. The inner product $g$ defines an inner product on $\annih{V}$, by $\annihg(u_1^\flat,u_1^\flat):=g(u_1,u_2)$. The inner product $\annihg$ is the inverse of $g$ iff $\det g\neq 0$. The quotient space $\coannih{V}:=V/\radix{V}$ is made of the equivalence classes of the form $u+\radix{V}$. On $\coannih{V}$, $g$ defines an inner product $\coannihg(u_1+\radix{V},u_2+\radix{V}):=g(u_1,u_2)$.}

The relations between the radical, the radical annihilator and the factor spaces can be collected in the diagram:
\begin{center}
\begin{tikzpicture}
\matrix (m) [matrix of math nodes, row sep=4em,
column sep=4em, text height=1.5ex, text depth=0.25ex]
{ 0  & \radix{V} & (V,g) & (\coannih{V},\coannihg) & 0 \\
0	& \coradix{V} & V^*	& (\annih{V},\annihg) & 0	\\ };
\path[right hook->]
(m-1-1) edge (m-1-2)
(m-1-2) edge node[auto] {$\radix{i}$}(m-1-3);
\path[->>]
(m-1-3) edge node[auto] {$\coannih{\pi}$}(m-1-4)
(m-1-4) edge(m-1-5);
\path[->>]
(m-2-2) edge (m-2-1)
(m-2-3) edge node[auto,swap] {$\coradix{\pi}$}(m-2-2)
(m-1-3) edge node[auto,sloped] {$\flat_V$}(m-2-4);
\path[left hook->]
(m-2-4) edge node[auto,swap] {$\annih{i}$}(m-2-3)
(m-2-5) edge(m-2-4);
\path[->]
(m-1-4) edge[bend right=15] node[auto,left] {$\flat$} (m-2-4)
(m-2-4) edge[bend right=15] node[auto,right]{$\sharp$} (m-1-4);
\end{tikzpicture}
\end{center}
where $\coannih{V}=\annih{V}^*=\dsfrac V{\radix V}$ and $\coradix{V}=\radix{V}^*=\frac {V^*}{\annih V}$.

Let $(e_a)_{a=1}^n$ be a basis of $V$ in which $g=\diag(\alpha_1,\alpha_2,\ldots,\alpha_n)$, $\alpha_a\in\R$ for all $1\leq a\leq n$. Then
\begin{equation}
g_{ab}=\metric{e_a,e_b}=\alpha_a\delta_{ab},
\end{equation}
\begin{equation*}
\annih{e_a}(e_b):=\metric{e_a,e_b}=\alpha_a\delta_{ab},
\end{equation*}
and, if $(e^{*a})_{a=1}^n$ is the dual basis of $(e_a)_{a=1}^n$,
\begin{equation}
\annih{e_a}=\alpha_a e^{*a}.
\end{equation}
It is easy to see that
\begin{equation}
\label{thm_cometric_in_basis}
\annihg^{ab}=\frac 1{\alpha_a}\delta^{ab},
\end{equation}for all $a$ so that $\alpha_a\neq 0$.

%~~~~~~~~~~~~~~~~~~~~~~~~~~~~~~~~~~~~~~~~~~~~~~~~~~~~~~~~~~~~~~~~~~~~~~~%
\subsection{The {\rannih} vector bundle}
\label{s_annih}

It is straightforward to extend the notions from previous section to the cotangent bundle. We define
\begin{equation}
	\annih{T}M=\bigcup_{p\in M}\annih{(T_pM)},
\end{equation}
and we call
\begin{equation}
	\annihforms{M}:=\{\omega\in\fiformk 1{M}|\omega_p\in\annih{(T_pM)}\tn{ for any }p\in M\}
\end{equation}
the space of {\em {\rannih} $1$-forms}.

\begin{example}
\label{ex_sing_semi_euclidean_annih}
The {\rannih} $\annih{T}\sseuclid r s t$ of the space $\sseuclid r s t$ from Example \ref{ex_sing_semi_euclidean} is
\begin{equation}
	\annih{T}\sseuclid r s t = \bigcup_{p\in\sseuclid r s t}\tn{span}({\{\de x^a\in T^*_p\sseuclid r s t|a> r\}}).
\end{equation}
The space of {\rannih} $1$-forms is
\begin{equation}
	\annihforms{\sseuclid r s t}=\{\omega\in\fiformk 1{\sseuclid r s t}|\omega^i=0,i\leq r\},
\end{equation}
and they have the general form
\begin{equation}
	\omega=\sum_{a=r+1}^n\omega_a\de x^a.
\end{equation}
\end{example}

%~~~~~~~~~~~~~~~~~~~~~~~~~~~~~~~~~~~~~~~~~~~~~~~~~~~~~~~~~~~~~~~~~~~~~~~%
\subsection{Radical and {\rannih} tensors}
\label{s_radix_annih_tensors}

The radical and {\rannih} spaces determine special subspaces of tensors on $T_pM$, on which we can define the metric contraction in covariant indices.

\begin{definition}
\label{def_radix_annih_tensor_field}
Let $T$ be a tensor of type $(r,s)$. If $T\in \tensors{k-1}0 {M}\otimes_M\radix{T}M\otimes_M \tensors{r-k}s{M}$, we call it {\em radical} in the $k$-th contravariant slot. If  $T\in \tensors r{l-1}{M}\otimes_M\annih{T}M\otimes_M \tensors 0{s-l}{M}$, we call it {\em {\rannih}} in the $l$-th covariant slot.
\end{definition}

\begin{proposition}
\label{thm_radical_contravariant_index}
Let $T\in\tensors r s {M}$ be a tensor. Then, $T$ is radical in the $k$-th contravariant slot if and only if its contraction with any {\rannih} linear $1$-form $\omega\in \fiformk 1{M}$, $C^k_{s+1}(T\otimes\omega)=0$.
\end{proposition}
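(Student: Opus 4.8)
The plan is to reduce the statement to a pointwise (fiberwise) assertion about the single vector space $V=T_pM$, and to use the defining property of the {\rannih} space: by Definition \ref{def_radical_annihilator}, a $1$-form lies in $\annih V$ exactly when it vanishes on $\radix V$, so that $\annih V$ is the annihilator of $\radix V$ in $V^*$. Since both $T$ and the contraction $C^k_{s+1}(T\otimes\omega)$ are built fiberwise, it suffices to prove at each $p$ the corresponding linear-algebra equivalence; here the ``{\rannih} $1$-form'' $\omega$ is understood to range over $\annihforms M$, i.e. over forms with $\omega_p\in\annih{(T_pM)}$.

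The first step is to reinterpret the contraction. Feeding arbitrary vectors and covectors into all the covariant slots of $T$ and into all contravariant slots except the $k$-th leaves, in the $k$-th contravariant slot, a well-defined vector $v\in T_pM$ (the ``$k$-th slice'' of $T$ for that choice of arguments). Unwinding the definition of $C^k_{s+1}$, evaluating $C^k_{s+1}(T\otimes\omega)$ on the same arguments returns precisely $\omega(v)$. Hence $C^k_{s+1}(T\otimes\omega)=0$ for all such arguments is equivalent to $\omega(v)=0$ for every slice $v$; and, by Definition \ref{def_radix_annih_tensor_field}, $T$ is radical in the $k$-th contravariant slot exactly when every slice $v$ lies in $\radix{(T_pM)}$.

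With these two reformulations the two implications become immediate. For the forward direction, if $T$ is radical in the $k$-th slot then each slice $v\in\radix{(T_pM)}$, and since every $\omega\in\annih{(T_pM)}$ annihilates $\radix{(T_pM)}$ we obtain $\omega(v)=0$, whence $C^k_{s+1}(T\otimes\omega)=0$. For the converse I would pass to a basis $(e_a)$ adapted to the radical, say $\radix{(T_pM)}=\vecgen{e_1,\ldots,e_m}$ with $m=\dim\radix{(T_pM)}$; then $\annih{(T_pM)}$ is spanned by the dual covectors $e^{*c}$ with $c>m$, and in components the vanishing of the contraction reads $\sum_{c>m}v^c\omega_c=0$ for arbitrary free coefficients $\omega_c$, forcing $v^c=0$ for all $c>m$, i.e. $v\in\radix{(T_pM)}$. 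As this holds for every slice, $T$ is radical in the $k$-th slot.

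I do not expect a serious obstacle; the only point requiring care is the converse, which is exactly the assertion that the {\rannih} forms separate radical from non-radical vectors. Invariantly this is the finite-dimensional double-annihilator identity $((\radix V)^0)^0=\radix V$ applied to $V=T_pM$, and the adapted-basis computation above is merely its concrete realization. It also makes transparent why $\omega$ must be restricted to $\annihforms M$ rather than to all of $\fiformk 1 M$: a generic $1$-form does not vanish on $\radix{(T_pM)}$, so the forward implication would fail for it.
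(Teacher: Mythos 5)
Your proof is correct and takes essentially the same route as the paper's: both reduce the claim to a pointwise linear-algebra statement and rest on the fact that $\annih{(T_pM)}$ is precisely the annihilator of $\radix{(T_pM)}$, so that the double-annihilator identity recovers the radical in the converse direction. The only real difference is how $T$ is opened up: the paper writes $T=\sum_\alpha S_\alpha\otimes v_\alpha$ and, for the converse, infers from $\sum_\alpha S_\alpha\,\omega(v_\alpha)=0$ that each term vanishes --- a step that tacitly requires the $S_\alpha$ to be chosen linearly independent --- whereas your ``slice'' formulation via partial evaluation of the remaining arguments sidesteps this and makes the converse airtight. Substantively the two arguments are the same.
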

\begin{proof}
Suppose $k=r$ (the case $k<r$ reduces to $k=r$, by using the permutation automorphisms of the tensor space $\tensors r s T_pM$). 
The tensor $T$ can be written as $T=\sum_{\alpha}S_{\alpha}\otimes v_{\alpha}$, with $S_{\alpha}\in\tensors{r-1}s T_pM$ and $v_{\alpha}\in T_pM$. The contraction $C^k_{s+1}(T\otimes\omega)$ becomes $\sum_{\alpha}S_{\alpha}\omega(v_{\alpha})$.
Since $T$ is radical in the $r$-th contravariant slot, $\omega(v_{\alpha})=0$, for all $\alpha$ and any $\omega\in\annih{T_pM}$. Therefore $\sum_{\alpha}S_{\alpha}\omega(v_{\alpha})=0$.

Reciprocally, if $\sum_{\alpha}S_{\alpha}\omega(v_{\alpha})=0$, it follows that for any $\alpha$, $S_{\alpha}\omega(v_{\alpha})=0$, for any $\omega\in\annih{T_pM}$. Then, $v_{\alpha}\in\radix{V}$.
\end{proof}

\begin{proposition}
\label{thm_radical_annihilator_covariant_index}
Let $T\in\tensors r s {M}$ be a tensor. Then, $T$ is {\rannih} in the $l$-th covariant slot if and only if its $l$-th contraction with any radical vector field $X$, $C^{r+1}_l(T\otimes X)=0$.
\end{proposition}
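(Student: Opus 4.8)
The plan is to mirror the proof of the dual statement, Proposition \ref{thm_radical_contravariant_index}, exploiting that the {\rannih} space is exactly the annihilator of the radical. Exactly as in that proof I would work at a fixed point $p\in M$, write $V=T_pM$, and read the hypothesis ``$C^{r+1}_l(T\otimes X)=0$ for every radical $X$'' as the algebraic condition that the $l$-th covariant slot of $T_p$ kills every vector of $\radix{(T_pM)}$. First I would reduce to the case $l=s$: a permutation automorphism of $\tensors r s{T_pM}$ reshuffles the covariant slots, carrying the $l$-th slot to the last one while commuting with the contraction $C^{r+1}_l$, so there is no loss of generality in taking $l=s$.

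The crucial structural input is that $\annih V$ is precisely the annihilator of $\radix V$ inside $V^*$: a $1$-form $\omega\in V^*$ lies in $\annih V$ if and only if $\omega(X)=0$ for all $X\in\radix V$. The ``only if'' is immediate from Definition \ref{def_radical_annihilator}, since any $\omega=u^\flat$ satisfies $\omega(X)=\metric{u,X}=0$ when $X\in\radix V$; the ``if'' follows from the dimension count $\dim\annih V+\dim\radix V=n$ recorded after that definition, which forces $\annih V$ to exhaust the $(n-\dim\radix V)$-dimensional annihilator of $\radix V$.

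With this in hand the forward implication is immediate: if $T$ is {\rannih} in the last slot I write $T_p=\sum_\alpha S_\alpha\otimes\omega_\alpha$ with each $\omega_\alpha\in\annih{(T_pM)}$, so $C^{r+1}_s(T\otimes X)=\sum_\alpha S_\alpha\,\omega_\alpha(X)$ vanishes for every radical $X$. For the converse I would fix a \emph{reduced} decomposition $T_p=\sum_\alpha S_\alpha\otimes\omega_\alpha$ in which the tensors $S_\alpha\in\tensors r{s-1}{T_pM}$ are linearly independent. Then $C^{r+1}_s(T\otimes X)=\sum_\alpha S_\alpha\,\omega_\alpha(X)$, and the vanishing of this expression for every $X\in\radix{(T_pM)}$, combined with the linear independence of the $S_\alpha$, forces $\omega_\alpha(X)=0$ for each $\alpha$. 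By the characterization above each $\omega_\alpha\in\annih{(T_pM)}$, so $T_p\in\tensors r{s-1}{T_pM}\otimes\annih{(T_pM)}$; as $p$ is arbitrary, $T$ is {\rannih} in the last covariant slot.

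The only delicate point is this last step of the converse: peeling the vanishing condition apart index-by-index is legitimate only because the decomposition is chosen with the $S_\alpha$ linearly independent, for otherwise $\sum_\alpha S_\alpha\,\omega_\alpha(X)=0$ would not yield $\omega_\alpha(X)=0$ separately. Everything else is the bookkeeping of the permutation reduction together with the annihilator characterization of $\annih V$, both already available from the preceding development.
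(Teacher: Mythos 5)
Your proof is correct and takes essentially the same route as the paper, which simply dualizes the argument of Proposition \ref{thm_radical_contravariant_index}: work at a point, reduce to the last slot by a permutation automorphism, decompose $T_p=\sum_\alpha S_\alpha\otimes\omega_\alpha$, and use that $\annih{(T_pM)}$ is precisely the annihilator of $\radix{(T_pM)}$ via the dimension count $\dim\annih{V}+\dim\radix{V}=n$. Your explicit insistence on a reduced decomposition with linearly independent $S_\alpha$ in the converse is a point of care that the paper's template proof leaves implicit but genuinely needs.
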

\begin{proof}
The proof is similar to that of Proposition \ref{thm_radical_contravariant_index}.
\end{proof}

\begin{example}
\label{thm_metric_radical_annihilator}
The inner product $g\in\tensors 0 2 {M}$ is {\rannih} in both of its slots.
\end{example}

\begin{proposition}
\label{thm_radical_annihilator_vs_radical_contraction}
Let $T\in\tensors r s {M}$ be a tensor, radical in it's $k$-th contravariant slot, and {\rannih} in it's $l$-th covariant slot. Then, the contraction $C^k_l(T)=0$.
\end{proposition}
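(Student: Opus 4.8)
The plan is to verify the identity pointwise, in a basis adapted to the radical, after first reducing to the case $k=r$, $l=s$. Exactly as in the proof of Proposition \ref{thm_radical_contravariant_index}, one can move the $k$-th contravariant slot and the $l$-th covariant slot into last position by composing $T$ with the appropriate permutation automorphisms of $\tensors r s {T_pM}$; these commute with the tensor contraction, so it suffices to treat the contraction of the last contravariant index against the last covariant index. Note that no well-definedness issue arises, since contraction of a contravariant against a covariant slot is the natural pairing and needs no metric.

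Fix $p\in M$ and choose a basis $(e_a)_{a=1}^n$ of $T_pM$ in which $g_p$ is diagonal with the first $r$ diagonal entries zero, so that $\radix{(T_pM)}=\vecgen{e_1,\ldots,e_r}$. The dual basis $(e^{*a})_{a=1}^n$ then satisfies $e^{*a}\in\annih{(T_pM)}$ \emph{precisely} for $a>r$, because these are exactly the forms vanishing on $\radix{(T_pM)}$, and $\annih{(T_pM)}$ is the annihilator of the radical. Now I read off the two hypotheses at the level of the components $T^{a_1\ldots a_r}_{b_1\ldots b_s}$. Being radical in the $k$-th contravariant slot (Definition \ref{def_radix_annih_tensor_field}) forces the vector occupying that slot to lie in $\vecgen{e_1,\ldots,e_r}$, so every component with $a_k>r$ vanishes; being {\rannih} in the $l$-th covariant slot forces the covector in that slot to lie in $\vecgen{e^{*a}\mid a>r}$, so every component with $b_l\le r$ vanishes.

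Finally, $C^k_l(T)$ is computed by setting $a_k=b_l$ and summing over this common index. A term can survive only if that index is simultaneously $\le r$ (from the radical condition) and $>r$ (from the {\rannih} condition), which is impossible; hence the diagonal sum is empty and $C^k_l(T)=0$. Equivalently, and more invariantly, one writes the contraction as the sum over $a$ of the tensor obtained by inserting $e^{*a}$ into the $k$-th contravariant slot and $e_a$ into the $l$-th covariant slot: by Proposition \ref{thm_radical_contravariant_index} the summands with $a>r$ vanish, since $e^{*a}$ is then {\rannih}, while by Proposition \ref{thm_radical_annihilator_covariant_index} the summands with $a\le r$ vanish, since $e_a$ is then radical, and the two index ranges exhaust all of $a$.

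There is no serious obstacle here: the only points requiring care are the reduction to last-slot contraction via the permutation automorphisms and the identification of which dual basis vectors are {\rannih}. Conceptually the statement is just the observation that the radical and {\rannih} conditions concentrate the surviving components of $T$ in complementary index ranges, so the diagonal sum defining $C^k_l(T)$ never picks up a nonzero term.
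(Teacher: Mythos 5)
Your proof is correct and rests on the same mechanism as the paper's (one-line) proof: radical\hyph{}annihilator covectors annihilate radical vectors, so the natural pairing performed by $C^k_l$ vanishes term by term — the paper simply cites Proposition \ref{thm_radical_contravariant_index}, which encodes exactly this. Your component computation in the adapted basis and your closing invariant argument are just more explicit renderings of that observation, so no further comment is needed.
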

\begin{proof}
The result follows from Proposition \ref{thm_radical_contravariant_index}.
\end{proof}

%~~~~~~~~~~~~~~~~~~~~~~~~~~~~~~~~~~~~~~~~~~~~~~~~~~~~~~~~~~~~~~~~~~~~~~~%
\section{Covariant contraction of tensor fields}
\label{s_tensors_contraction_sign_const}

Contractions between one covariant and one contravariant indices don't require a metric, and the inner product $g$ can contract between two contravariant indices, obtaining the {\em contravariant contraction operator} $C^{kl}$ \cfeg{ONe83}{83}. But how we define the metric contraction between two covariant indices, in the absence of an inverse of the metric? We will see that $\annihg$ can do this, although only for vectors or tensors which are {\rannih} in covariant slots. Luckily, these tensors are the relevant ones for our purpose.

%~~~~~~~~~~~~~~~~~~~~~~~~~~~~~~~~~~~~~~~~~~~~~~~~~~~~~~~~~~~~~~~~~~~~~~~%
\subsection{Covariant contraction on inner product spaces}
\label{s_tensors_covariant_contraction_inner_prod}

\begin{definition}
\label{def_contraction_covariant}
Let $T\in\annih{V}\otimes\annih{V}$, be a $(2,0)-$tensor which is {\rannih} in both its slots. Then, $C_{12}T:=\annihg^{ab}T_{ab}$ is the metric {\em covariant contraction}. This definition is independent on the basis, because $\annihg\in\annih{V}^*\otimes\annih{V}^*$. 
Let $r\geq 0$ and $s\geq 2$, and let $T\in\tensors r s V$ be a tensor which satisfies
\begin{equation}
	T\in V^{\otimes r}\otimes {V^*}^{\otimes {k-1}}\otimes\annih{V}\otimes {V^*}^{\otimes l-k-1}\otimes\annih{V}\otimes {V^*}^{\otimes s-l},
\end{equation}
$1\leq k<l\leq s$.
We define the operator
\begin{equation*}
C_{kl}:V^{\otimes r}\otimes {V^*}^{\otimes {k-1}}\otimes\annih{V}\otimes {V^*}^{\otimes l-k-1}\otimes\annih{V}\otimes {V^*}^{\otimes s-l}
\to V^{\otimes r}\otimes {V^*}^{\otimes {s-2}},
\end{equation*}
by $C_{kl}:=C_{s-1\,s}\circ P_{k,s-1;l,s}$, where $P_{k,s-1;l,s}:T\in\tensors r s V\to T\in\tensors r s V$ is the permutation isomorphisms which moves the $k$-th and $l$-th slots in the last two positions, and $C_{s-1\,s}$ is the operator
\begin{equation*}
C_{s-1\,s}:=1_{\tensors r {s-2} V}\otimes C_{1,2}:\tensors r {s-2} V\otimes \annih{V}\otimes\annih{V}\to\tensors r {s-2} V,
\end{equation*}
where $1_{\tensors r {s-2} V}:\tensors r {s-2} V\to\tensors r {s-2}V$ is the identity. Then, the operator $C_{kl}$ is named the metric {\em covariant contraction} between the covariant slots $k$ and $l$.
In a radical basis, the contraction has the form
\begin{equation}
\label{eq_contraction_covariant_inner_prod_space}
(C_{kl} T)^{a_1\ldots a_r}{}_{b_1\ldots\widehat{b}_k\ldots\widehat{b}_l\ldots b_s} :=	\annihg^{b_k b_l}T^{a_1\ldots a_r}{}_{b_1\ldots b_k\ldots b_l\ldots b_s}.
\end{equation}
We denote the contraction $C_{kl} T$ of $T$ also by 
\begin{equation*}
T(\omega_1,\ldots,\omega_r,v_1,\ldots,\cocontr,\ldots,\cocontr,\ldots,v_s).
\end{equation*}
\end{definition}

%~~~~~~~~~~~~~~~~~~~~~~~~~~~~~~~~~~~~~~~~~~~~~~~~~~~~~~~~~~~~~~~~~~~~~~~%
\subsection{Covariant contraction on singular {\semiriem} manifolds}
\label{ss_tensors_contraction_manifolds}

We can now extend the metric covariant contraction in {\rannih} slots to singular {\semiriem} manifolds.

\begin{definition}
\label{def_contraction_covariant_ct_sign}
Let $T\in\tensors r s {M}$, $s\geq 2$, be a tensor field on $M$, which is {\rannih} in the $k$-th and $l$-th covariant slots, $1\leq k<l\leq s$. The operator
\begin{equation*}
C_{kl}:\tensors r{k-1} {M}\otimes_M\annihforms M\otimes_M\tensors 0{l-k-1} {M}\otimes_M\annihforms M\otimes_M \tensors 0 {s-l}{M} \to \tensors{r}{s-2}{M}
\end{equation*}
\begin{equation*}
(C_{kl}T)(p)=C_{kl}(T(p))
\end{equation*}
is called the metric {\em covariant contraction} operator.
We denote it also by 
\begin{equation*}
T(\omega_1,\ldots,\omega_r,X_1,\ldots,\cocontr,\ldots,\cocontr,\ldots,X_s).
\end{equation*}
\end{definition}

\begin{lemma}
\label{thm_contraction_with_metric}
Let $T$ be a tensor field $T\in\tensors r s {M}$, {\rannih} in the $k$-th covariant slot, $1\leq k\leq s$, where $r\geq 0$ and $s\geq 1$. Then
\begin{equation}
\label{eq_contraction_with_metric}
\begin{array}{l}
T(\omega_1,\ldots,\omega_r,X_1,\ldots,\cocontr,\ldots,X_s)\metric{X_k,\cocontr}\\
\,\,\,\,\,=T(\omega_1,\ldots,\omega_r,X_1,\ldots,X_k,\ldots,X_s).
\end{array}
\end{equation}
\end{lemma}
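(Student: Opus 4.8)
The plan is to reduce the identity to a pointwise statement on each tangent space and there to a single algebraic identity for the cometric $\annihg$. Since the metric covariant contraction is defined pointwise (Definition \ref{def_contraction_covariant_ct_sign}), it suffices to fix $p\in M$ and work in the inner product space $(T_pM,g_p)$. The crux is the identity
\begin{equation*}
\annihprod{\omega,X^\flat}=\omega(X),
\end{equation*}
valid for every {\rannih} $1$-form $\omega\in\annih{(T_pM)}$ and every vector $X\in T_pM$.

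First I would establish this identity directly from the definitions. Since $\omega$ is {\rannih}, write $\omega=u^\flat$ for some $u\in T_pM$ (Definition \ref{def_radical_annihilator}). By Example \ref{thm_metric_radical_annihilator} the metric is {\rannih} in both slots, so $X^\flat=\metric{X,\cocontr}$ is itself a {\rannih} $1$-form and $\annihprod{\omega,X^\flat}$ is defined. The definition of $\annihg$ (Definition \ref{def_co_inner_product}), applied to the representatives $u$ and $X$, gives $\annihprod{u^\flat,X^\flat}=\metric{u,X}$, while the action of a {\rannih} form (Definition \ref{def_radical_annihilator}) gives $\omega(X)=u^\flat(X)=\metric{u,X}$; comparing the two proves the identity.

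Then I would assemble the lemma. Freezing every argument except the $k$-th, the object obtained by leaving the $k$-th slot of $T$ open is linear in that slot, and because $T$ is {\rannih} there (Definition \ref{def_radix_annih_tensor_field}) it takes values in $\annih{(T_pM)}$; call the resulting {\rannih} $1$-form $\phi$. By construction the left-hand side of \eqref{eq_contraction_with_metric} is exactly the metric covariant contraction of $\phi$ with $X_k^\flat$, namely $\annihprod{\phi,X_k^\flat}$. Applying the identity above with $\omega=\phi$ and $X=X_k$ gives $\annihprod{\phi,X_k^\flat}=\phi(X_k)$, and $\phi(X_k)$ is precisely $T(\omega_1,\ldots,X_k,\ldots,X_s)$, the right-hand side.

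The one point needing care is the well-definedness of the contraction: $\annihg$ pairs only {\rannih} forms, so both arguments must qualify. This is where the hypotheses do their work — the assumption that $T$ is {\rannih} in the $k$-th slot guarantees $\phi\in\annih{(T_pM)}$, and Example \ref{thm_metric_radical_annihilator} guarantees $X_k^\flat\in\annih{(T_pM)}$. Alternatively one could compute in a radical basis where $g=\diag(\alpha_1,\ldots,\alpha_n)$, using $\annihg^{ab}=\frac{1}{\alpha_a}\delta^{ab}$ from \eqref{thm_cometric_in_basis} together with the vanishing of the {\rannih} components along the radical directions, so that the factors $\alpha_a$ and $1/\alpha_a$ cancel; I would prefer the invariant argument since it avoids choosing a basis.
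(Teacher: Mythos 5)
Your proposal is correct and follows essentially the same route as the paper: both reduce to the pointwise identity $\annihprod{\omega,X^\flat}=\omega(X)$ for a {\rannih} $1$-form $\omega=u^\flat$, proved by comparing $\annihprod{u^\flat,X^\flat}=\metric{u,X}$ with $u^\flat(X)=\metric{u,X}$, and then pass to general $T$ by linearity (the paper invokes linearity of the tensor product where you freeze the remaining slots to produce the {\rannih} form $\phi$ — the same step, phrased slightly more explicitly). Your added remark on well-definedness of the pairing is a welcome clarification but not a departure from the paper's argument.
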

\begin{proof}
Let's work at a point $p\in M$, and consider first the case $T_p\in\tensors 0 1 T_pM$, in fact, $T_p=\omega_p\in\annih{T}_p{M}$. Then, equation \eqref{eq_contraction_with_metric} becomes
\begin{equation}
	\omega_p(\cocontr)\metric{X_p,\cocontr}=\omega_p(X_p).
\end{equation}
But $\omega_p=Y_p^\flat$ for some vector $Y_p\in T_pM$, and $\omega_p(\cocontr)\metric{X_p,\cocontr}=\annihprod{\omega_p,X_p^\flat}=\omega(X_p)$.

The general case results from the linearity of the tensor product.
\end{proof}

\begin{corollary}
\label{thm_contracted_metric_w_metric}
$\metric{X,\cocontr}\metric{Y,\cocontr}=\metric{X,Y}.$
\end{corollary}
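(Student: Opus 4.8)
The plan is to read this as an immediate instance of Lemma~\ref{thm_contraction_with_metric}, once the two factors on the left are recognized as {\rannih} $1$-forms. First I would recall that $\metric{X,\cocontr}$ and $\metric{Y,\cocontr}$ are precisely the $1$-forms $X^\flat$ and $Y^\flat$ produced by the index lowering morphism. By Definition~\ref{def_radical_annihilator} both lie in $\annihforms M$, being of the form $u^\flat$, so each is {\rannih} in its single covariant slot; this is exactly what legitimizes contracting the two $\cocontr$-marked slots against one another by means of $\annihg$.

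Next I would set $T:=\metric{Y,\cocontr}=Y^\flat$, a $(0,1)$-tensor which is {\rannih} in its only covariant slot, and apply Lemma~\ref{thm_contraction_with_metric} in the elementary case $r=0$, $s=1$, $k=1$, feeding $X$ into the contracted slot. The lemma then delivers
\begin{equation*}
\metric{Y,\cocontr}\metric{X,\cocontr}=\metric{Y,X}
\end{equation*}
directly. Since the covariant contraction is governed by the symmetric bilinear form $\annihg$, it is itself symmetric, so the left-hand side agrees with $\metric{X,\cocontr}\metric{Y,\cocontr}$; combining this with the symmetry $\metric{Y,X}=\metric{X,Y}$ of $g$ yields the claim.

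Alternatively, and even more directly, one may simply unwind the definitions: the left-hand side is by construction $\annihprod{X^\flat,Y^\flat}=\annihg(X^\flat,Y^\flat)$, and Definition~\ref{def_co_inner_product} stipulates precisely that this equals $\metric{X,Y}$. I do not expect any genuine obstacle here; the only point demanding care is the bookkeeping of the $\cocontr$ notation, namely checking that both marked slots are {\rannih} so that $\annihg$ is the correct contracting object. That check is automatic, since both slots arise as images of $\flat$, and so the corollary is essentially a restatement of how $\annihg$ was defined.
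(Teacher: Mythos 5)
Your proposal is correct and essentially coincides with the paper's proof, which likewise obtains the identity as a one-line application of Lemma~\ref{thm_contraction_with_metric} (the paper feeds in $g$ itself, you feed in $Y^\flat=g(Y,\_)$, which is the same computation since both marked slots are {\rannih}). Your closing observation that the statement also unwinds directly from Definition~\ref{def_co_inner_product} is a valid and even shorter route.
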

\begin{proof}
It is obtained by applying Lemma \ref{thm_contraction_with_metric} to $g\in\annihforms{M}\odot_M\annihforms{M}$.
\end{proof}

\begin{theorem}
\label{thm_contraction_orthogonal}
Let $(M,g)$ be a singular {\semiriem} manifold with constant signature. Let $T\in\tensors r s M$, be a tensor field as in Lemma \ref{thm_contraction_with_metric}.
Then
\begin{equation}
\label{eq_cov_contraction_orthogonal}
\begin{array}{l}
T(\omega_1,\ldots,\omega_r,X_1,\ldots,\cocontr,\ldots,\cocontr,\ldots,X_s) \\
\,\,\,\,\,
=\sum_{a=n-\rank g+1}^n \dsfrac{1}{\metric{E_a,E_a}}T(\omega_1,\ldots,\omega_r,X_1,\ldots,E_a,\ldots,E_a,\ldots,X_s),
\end{array}
\end{equation}
for any $X_1,\ldots,X_s\in\fivect M,\omega_1,\ldots,\omega_r\in\fiformk 1 M$,
where $(E_a)_{a=1}^n$ is a local orthogonal basis on $M$, so that $E_1,\ldots,E_{n-\rank g}\in\fivectnull{M}$. 
\end{theorem}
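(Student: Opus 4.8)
The plan is to reduce the statement to a direct computation of the covariant contraction $C_{kl}$ in a carefully chosen local frame. The right-hand side is nothing but the coordinate expression of $C_{kl}T$ once the cometric $\annihg$ is written out in an orthogonal frame adapted to the radical; the constant-signature hypothesis is precisely what makes such a frame available smoothly, and it is the only place that hypothesis is needed.

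First I would construct, on a neighborhood of each point, a local orthogonal frame $(E_a)_{a=1}^n$ with $E_1,\ldots,E_{n-\rank g}\in\fivectnull M$ spanning the radical and $E_{n-\rank g+1},\ldots,E_n$ non-null. Since the signature is constant, $\radix{T}M$ is a smooth distribution (Definition \ref{def_radix}), so a Gram--Schmidt procedure adapted to the degeneracy — orthonormalize a smooth complement to $\radix{T}M$, which carries a non-degenerate induced form — produces such a frame, with the norms $\metric{E_a,E_a}$ smooth and nowhere zero for $a>n-\rank g$. I expect the smoothness of this adapted frame to be the main technical point; for a fixed tangent space it is routine linear algebra, but globalizing it as a smooth frame on a neighborhood is exactly what constant signature buys us.

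With the frame in hand, I would expand $C_{kl}T$ using its component formula \eqref{eq_contraction_covariant_inner_prod_space}, namely $(C_{kl}T)_{\ldots}=\annihg^{b_k b_l}T^{\ldots}{}_{\ldots b_k\ldots b_l\ldots}$, the components being taken with respect to $(E_a)$ and its dual coframe. Here the key observations are two. By Proposition \ref{thm_radical_annihilator_covariant_index}, since $T$ is {\rannih} in the $k$-th and $l$-th slots, $T(\ldots,E_a,\ldots)=0$ in either of those slots whenever $E_a$ is a radical frame vector, so only the indices $a>n-\rank g$ survive the sum. On that non-radical range, equation \eqref{thm_cometric_in_basis} gives the cometric in diagonal form, $\annihg^{ab}=\frac{1}{\metric{E_a,E_a}}\delta^{ab}$.

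Substituting the diagonal cometric collapses the double sum $\annihg^{b_k b_l}T^{\ldots}{}_{\ldots b_k\ldots b_l\ldots}$ to the single diagonal sum $\sum_{a=n-\rank g+1}^n \frac{1}{\metric{E_a,E_a}}T^{\ldots}{}_{\ldots a\ldots a\ldots}$, and reading $T^{\ldots}{}_{\ldots a\ldots a\ldots}=T(\omega_1,\ldots,\omega_r,X_1,\ldots,E_a,\ldots,E_a,\ldots,X_s)$ reproduces exactly the claimed right-hand side. Because $C_{kl}$ was defined invariantly in Definition \ref{def_contraction_covariant_ct_sign}, verifying the identity in this one adapted frame suffices. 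As a cross-check, the same collapse can be obtained from Lemma \ref{thm_contraction_with_metric}: writing $E_a^\flat=\metric{E_a,E_a}E^{*a}$ for $a>n-\rank g$ and invoking $\metric{E_a,\cocontr}\metric{E_a,\cocontr}=\metric{E_a,E_a}$ from Corollary \ref{thm_contracted_metric_w_metric} recovers each diagonal weight $1/\metric{E_a,E_a}$.
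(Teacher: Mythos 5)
Your proposal is correct and follows essentially the same route as the paper's proof, which simply observes that in the adapted orthogonal frame $\annihg$ is diagonal with $\annihg^{aa}=1/\metric{E_a,E_a}$ on the non-radical range and writes down the resulting single sum. You supply details the paper leaves implicit (existence of the smooth adapted frame under constant signature, and the vanishing of the radical-index terms via Proposition \ref{thm_radical_annihilator_covariant_index}), but the core computation is identical.
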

\begin{proof}
Since $\annihg$ is diagonal and $\annihg^{aa}=\dsfrac 1{g_{aa}}=\dsfrac{1}{\metric{E_a,E_a}}$ (Proposition \ref{thm_cometric_in_basis}),
\begin{equation*}
\begin{array}{l}
\annihg^{ab}T(\omega_1,\ldots,\omega_r,X_1,\ldots,E_a,\ldots,E_b,\ldots,X_s) \\
\,\,\,\,\,
=\sum_{a=n-\rank g+1}^n \dsfrac{1}{\metric{E_a,E_a}}T(\omega_1,\ldots,\omega_r,X_1,\ldots,E_a,\ldots,E_a,\ldots,X_s).
\end{array}
\end{equation*}
\end{proof}

The metric covariant contraction of a smooth tensor is smooth, except for the points where the signature changes, where the inverse of the metric becomes divergent, as seen in equation \eqref{thm_cometric_in_basis}.

\begin{example}
\label{thm_contracted_metric_w_itself}
Let $p\in M$. Then, 
$\metric{\cocontr,\cocontr}_p=\rank g_p$. This is also an example of metric covariant contraction which is discontinuous when the signature changes.
\end{example}

\begin{example}
\label{ex_contraction_sign_change_smooth}
Let $X\in \fivect{M}$ and $\omega\in\annihforms{M}$. Then, $C_{12}(\omega\otimes_M X^\flat)=\annihprod{\omega,X^\flat}=\omega(X)$ is smooth, even if the signature is not constant.
\end{example}

%~~~~~~~~~~~~~~~~~~~~~~~~~~~~~~~~~~~~~~~~~~~~~~~~~~~~~~~~~~~~~~~~~~~~~~~%
\section{The {\koszulname}}
\label{s_koszul_form}

\begin{definition}[The {\koszulname}, see {\eg} \citep{Kup87b}{263}]
\label{def_Koszul_form}
The object defined as
\begin{equation*}
	\kosz:\fivect M^3\to\R,
\end{equation*}
\begin{equation}
\label{eq_Koszul_form}
\begin{array}{llll}
	\kosz(X,Y,Z) &:=&\ds{\frac 1 2} \{ X \metric{Y,Z} + Y \metric{Z,X} - Z \metric{X,Y} \\
	&&\ - \metric{X,[Y,Z]} + \metric{Y, [Z,X]} + \metric{Z, [X,Y]}\}.
\end{array}
\end{equation}
is called {\em the {\koszulname}}.
\end{definition}

For {\nondeg} metric, the {\koszulname} is nothing but the right hand side of the Koszul formula
\begin{equation}
\label{eq_koszul_formula}
	\metric{\derb X Y,Z} = \kosz(X,Y,Z),
\end{equation}
which is used to construct the Levi-Civita connection, by raising the $1$-form $\kosz(X,Y,\_)$ (see {\eg} \citep{ONe83}{61})
\begin{equation}
\label{eq_koszul_formula_inv}
	\derb X Y = \kosz(X,Y,\_)^\sharp.
\end{equation}
This is not possible for degenerate metric. An alternative was proposed by Kupeli (\citep{Kup87b}{261--262}). He defined the so-called  {\em Koszul derivatives}, by raising the index with the help of a distribution complementary to $\radix{T}M$, provided that $(M,g)$ is a singular {\semiriem} manifold with metric of constant signature, which satisfies the condition of {\em radical-stationarity} (Definition \ref{def_radical_stationary_manifold}). His Koszul derivative is therefore not unique, depending on the choice of the complementary distribution, and is not a connection.
By contrast, our method doesn't rely on arbitrary constructions, and works even if the metric changes its signature. We will only rely on the {\koszulname}.

%~~~~~~~~~~~~~~~~~~~~~~~~~~~~~~~~~~~~~~~~~~~~~~~~~~~~~~~~~~~~~~~~~~~~~~~%
\subsection{Basic properties of the {\koszulname}}
\label{s_koszul_form_props}

We prove here explicitly some properties of the {\koszulname} which will be useful in the following, and which correspond to known properties of the Levi-Civita connection of a {\nondeg} metric \cfeg{ONe83}{61}, but are valid even on singular manifolds, where the Levi-Civita connection can't be defined.

\begin{theorem}
\label{thm_Koszul_form_props}
Let $(M,g)$ be a singular {\semiriem} manifold. Then, its {\koszulname} has, for any $X,Y,Z\in\fivect M$ and $f\in\fiscal M$, the properties:
\begin{enumerate}
	\item \label{thm_Koszul_form_props_linear}
	It is additive and $\R$-linear in all of its arguments.
	\item \label{thm_Koszul_form_props_flinearX}
	It is $\fiscal M$-linear in the first argument:

	$\kosz(fX,Y,Z) = f\kosz(X,Y,Z).$
	\item \label{thm_Koszul_form_props_flinearY}
	Satisfies the {\em Leibniz rule}:

	$\kosz(X,fY,Z) = f\kosz(X,Y,Z) + X(f) \metric{Y,Z}.$
	\item \label{thm_Koszul_form_props_flinearZ}
	It is $\fiscal M$-linear in the third argument:

	$\kosz(X,Y,fZ) = f\kosz(X,Y,Z).$
	\item \label{thm_Koszul_form_props_commutYZ}
	It is {\em metric}:

	$\kosz(X,Y,Z) + \kosz(X,Z,Y) = X \metric{Y,Z}$.
	\item \label{thm_Koszul_form_props_commutXY}
	It is {\em symmetric} or {\em torsionless}:

	$\kosz(X,Y,Z) - \kosz(Y,X,Z) = \metric{[X,Y],Z}$.
	\item \label{thm_Koszul_form_props_commutZX}
	Relation with the Lie derivative of $g$:

	$\kosz(X,Y,Z) + \kosz(Z,Y,X) = (\lie_Y g)(Z,X)$,

	where $(\lie_Y g)(Z,X):=Y\metric{Z,X} - \metric{[Y,Z],X} - \metric{Z,[Y,X]}$ is the Lie derivative of $g$ with respect to a vector field $Y\in\fivect M$.
	\item \label{thm_Koszul_form_props_commutX2Y}

	$\kosz(X,Y,Z) + \kosz(Y,Z,X) = Y\metric{Z,X} + \metric{[X,Y],Z}$.
	\end{enumerate}
\end{theorem}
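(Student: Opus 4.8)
The plan is to prove all eight identities by direct computation straight from the definition \eqref{eq_Koszul_form}, using only three elementary ingredients: the bilinearity and symmetry of $g$; the antisymmetry of the Lie bracket, $[Z,Y]=-[Y,Z]$; and the fact that vector fields act as derivations on $\fiscal M$ together with the bracket-with-function identity $[fX,Y]=f[X,Y]-Y(f)X$ and its variants. Since $\kosz(X,Y,Z)$ is a sum of three ``vector field applied to an inner product'' terms and three ``inner product against a bracket'' terms, every claim reduces to expanding the relevant combination and collecting like terms. Item \ref{thm_Koszul_form_props_linear} is immediate, since each of the six terms is additive and $\R$-linear in every slot: $g$ and $[\cdot,\cdot]$ are $\R$-bilinear and $X\mapsto X\metric{Y,Z}$ is $\R$-linear.

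I would then treat the tensoriality and Leibniz items \ref{thm_Koszul_form_props_flinearX}, \ref{thm_Koszul_form_props_flinearY}, \ref{thm_Koszul_form_props_flinearZ} together, as these are the only ones that invoke the bracket-with-function identities. For item \ref{thm_Koszul_form_props_flinearX}, substituting $fX$ and applying the derivation rule to the metric-derivative terms produces extra pieces $Y(f)\metric{Z,X}$ and $-Z(f)\metric{X,Y}$, while the two bracket terms that carry $X$ inside a bracket produce, via $[Z,fX]=f[Z,X]+Z(f)X$ and $[fX,Y]=f[X,Y]-Y(f)X$, the extra pieces $Z(f)\metric{Y,X}$ and $-Y(f)\metric{Z,X}$. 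These four leftovers cancel in pairs, using the symmetry of $g$, leaving exactly $f\kosz(X,Y,Z)$. Item \ref{thm_Koszul_form_props_flinearZ} is the identical computation in the third slot. Item \ref{thm_Koszul_form_props_flinearY} runs the same way except that the surviving terms combine, again by symmetry of $g$, into $2X(f)\metric{Y,Z}$, which the global factor $\tfrac12$ converts into the stated $X(f)\metric{Y,Z}$.

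The remaining items \ref{thm_Koszul_form_props_commutYZ}--\ref{thm_Koszul_form_props_commutX2Y} are pure permutation identities requiring no function coefficients. For each I would write out the two or three Koszul objects with their arguments permuted and add or subtract them term by term. In every case the three metric-derivative terms collapse using only $\metric{Y,Z}=\metric{Z,Y}$, and the six bracket terms collapse using only $[Y,Z]=-[Z,Y]$. For example, in item \ref{thm_Koszul_form_props_commutYZ} the bracket contributions pair off to zero and the derivative terms sum to $X\metric{Y,Z}$; in item \ref{thm_Koszul_form_props_commutXY} the derivative terms cancel and the bracket terms leave $2\metric{Z,[X,Y]}$, i.e. $\metric{[X,Y],Z}$ after the $\tfrac12$. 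Items \ref{thm_Koszul_form_props_commutZX} and \ref{thm_Koszul_form_props_commutX2Y} follow identically, with their right-hand sides read off from the surviving terms; for \ref{thm_Koszul_form_props_commutZX} one recognizes the residue $Y\metric{Z,X}-\metric{[Y,Z],X}-\metric{Z,[Y,X]}$ as the stated $(\lie_Y g)(Z,X)$, after rewriting $-\metric{X,[Y,Z]}$ and $\metric{Z,[X,Y]}$ via symmetry of $g$ and antisymmetry of the bracket.

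The computation carries no conceptual obstacle; the only genuine difficulty is bookkeeping. The two points to watch are that the apparently non-tensorial extra terms in items \ref{thm_Koszul_form_props_flinearX}--\ref{thm_Koszul_form_props_flinearZ} vanish only after invoking the symmetry of $g$, and that throughout items \ref{thm_Koszul_form_props_commutYZ}--\ref{thm_Koszul_form_props_commutX2Y} one must track carefully which bracket reverses sign, so that the antisymmetry of $[\cdot,\cdot]$ yields the correct surviving multiples rather than spurious cancellations. Notably, none of these arguments uses nondegeneracy of $g$, which is why the properties persist on singular manifolds where the Levi-Civita connection is unavailable.
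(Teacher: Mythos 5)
Your proposal is correct and follows essentially the same route as the paper: a direct expansion of the definition of $\kosz$, using the derivation property of vector fields, the bilinearity and symmetry of $g$, and the identities $[fX,Y]=f[X,Y]-Y(f)X$, with the extra terms cancelling pairwise (the paper writes out only the case $\kosz(fX,Y,Z)=f\kosz(X,Y,Z)$ in full and leaves the rest as analogous computations, exactly as you describe them). The surviving terms you identify in each item, including the factor $2$ absorbed by the overall $\tfrac12$ and the recognition of the Lie-derivative expression in item \ref{thm_Koszul_form_props_commutZX}, all check out.
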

\begin{proof}
\eqref{thm_Koszul_form_props_linear}\ 
It is a direct consequence of Definition \ref{def_Koszul_form}, the fact that $g$ is tensor, and the linearity of the action of vector fields on scalars, and of the Lie brackets.

The other properties can also be proved by using the properties of $g$, of the action of vector fields on scalars, of the Lie brackets, and of the Lie derivative, so we will give just one example.

\begin{equation*}
\begin{array}{llll}
\eqref{thm_Koszul_form_props_flinearX}\ 
&2\kosz(fX,Y,Z) &=& fX \metric{Y,Z} + Y \metric{Z,fX} - Z \metric{fX,Y} \\
	&&&- \metric{fX,[Y,Z]} + \metric{Y, [Z,fX]} + \metric{Z, [fX,Y]} \\
	&&=& fX \metric{Y,Z} + Y (f\metric{Z,X}) - Z (f\metric{X,Y}) \\
	&&&- f\metric{X,[Y,Z]}+ \metric{Y, f[Z,X] + Z(f)X} \\
	&&&+ \metric{Z, f[X,Y] - Y(f)X} \\
	&&=& fX \metric{Y,Z} + fY \metric{Z,X} \\
	&&&+ Y(f) \metric{Z,X} - fZ \metric{X,Y} \\
	&&&- Z(f)\metric{X,Y} - f\metric{X,[Y,Z]} + f\metric{Y, [Z,X]} \\
	&&&+ Z(f)\metric{Y, X} + f\metric{Z, [X,Y]} - Y(f)\metric{Z, X} \\
	&&=& fX \metric{Y,Z} + fY \metric{Z,X} - fZ \metric{X,Y} \\
	&&& - f\metric{X,[Y,Z]} + f\metric{Y,[Z,X]} + f\metric{Z,[X,Y]} \\
	&&=& 2f\kosz(X,Y,Z) \\
\end{array}
\end{equation*}
\end{proof}

\begin{remark}
\label{thm_Koszul_form_index}
Let $(E_a)_{a=1}^n\subset\fivect U$ a local frame of vector fields on an open set $U\subseteq M$.
Let $g_{ab} = \metric{E_a,E_b}$, and let $\ms C^c_{ab}$ so that $[E_a,E_b] = \ms C_{ab}^c E_c$.
Then,
\begin{equation}
\label{eq_Koszul_form_index}
\begin{array}{lll}
	\kosz_{abc}&:=&\kosz(E_a,E_b,E_c) \\
	&=&\ds{\frac 1 2} \{E_a(g_{bc}) + E_b(g_{ca}) - E_c(g_{ab})
	- g_{as} \ms C^s_{bc} + g_{bs} \ms C^s_{ca} + g_{cs} \ms C^s_{ab}\}.
\end{array}
\end{equation}

In the basis $(E_a)_{a=1}^n$, the equations (\ref{thm_Koszul_form_props_commutYZ} -- \ref{thm_Koszul_form_props_commutX2Y}) in Theorem \ref{thm_Koszul_form_props} become:
\begin{equation*}
\begin{array}{ll}
	(\ref{thm_Koszul_form_props_commutYZ}')
	& \kosz_{abc} + \kosz_{acb} = E_a (g_{bc}). \\
	(\ref{thm_Koszul_form_props_commutZX}')
	& \kosz_{abc} + \kosz_{cba} = (\lie_{E_b} g)_{ca}. \\
	(\ref{thm_Koszul_form_props_commutXY}')
	& \kosz_{abc} - \kosz_{bac} = g_{sc}\ms C^s_{ab}. \\
	(\ref{thm_Koszul_form_props_commutX2Y}')
	& \kosz_{abc} + \kosz_{bca} = E_b(g_{ca}) + g_{sc}\ms C^s_{ab}. \\
\end{array}
\end{equation*}

If $E_a=\partial_a:=\ds{\frac {\partial}{\partial x^a}}$ for all $a\in\{1,\ldots,n\}$, then $[\partial_a,\partial_b]=0$, and the coefficients of the {\koszulname} become Christoffel's symbols of the first kind,
\begin{equation}
\label{eq_Koszul_form_coord}
	\kosz_{abc}=\kosz(\partial_a,\partial_b,\partial_c)=\ds{\frac 1 2} (
	\partial_a g_{bc} + \partial_b g_{ca} - \partial_c g_{ab}).
\end{equation}
\end{remark}

\begin{corollary}
\label{thm_Koszul_form}
Let $X,Y\in\fivect{M}$ be two vector fields. The map $\kosz_{XY}:\fivect{M}\to\R$
\begin{equation}
	\kosz_{XY}(Z) := \kosz(X,Y,Z)
\end{equation}
for any vector field $Z\in\fivect{M}$, is a differential $1$-form.
\end{corollary}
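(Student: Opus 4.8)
The plan is to reduce the statement to the standard tensoriality criterion for differential $1$-forms: a map sending each vector field $Z$ to a function must be additive and $\fiscal M$-linear in $Z$, and must land in $\fiscal M$; these together force the value at $p$ to depend only on $Z_p$, which exhibits the map as a smooth section of the cotangent bundle. So the work splits into reading off the algebraic properties of $\kosz$ in its third slot, checking smoothness, and then invoking localization.

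First I would extract the two needed properties directly from Theorem \ref{thm_Koszul_form_props}. Part \eqref{thm_Koszul_form_props_linear} gives additivity (indeed $\R$-linearity) in every argument, in particular the third; part \eqref{thm_Koszul_form_props_flinearZ} gives $\kosz(X,Y,fZ) = f\,\kosz(X,Y,Z)$ for all $f \in \fiscal M$. Together these say exactly that $Z \mapsto \kosz_{XY}(Z)$ is an additive, $\fiscal M$-linear map $\fivect M \to \fiscal M$. For smoothness, I would inspect Definition \ref{def_Koszul_form}: $\kosz(X,Y,Z)$ is assembled from the smooth tensor $g$, from derivatives of smooth functions of the form $\metric{\cdot,\cdot}$ along smooth vector fields, and from Lie brackets of smooth vector fields, so $\kosz_{XY}(Z)\in\fiscal M$ whenever $X,Y,Z$ are smooth.

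Finally I would apply the classical tensor-characterization lemma. Since $\kosz_{XY}$ is $\fiscal M$-linear, the value $\kosz_{XY}(Z)(p)$ depends only on $Z_p$: if $Z_p=0$, one expands $Z=\sum_a f^a E_a$ in a local frame near $p$ with $f^a(p)=0$, localizes this expression by a bump function, and uses $\fiscal M$-linearity to get $\kosz_{XY}(Z)(p)=\sum_a f^a(p)\,\kosz_{XY}(E_a)(p)=0$. This produces a well-defined cotangent vector $(\kosz_{XY})_p\in T^*_pM$ at each point, and the smoothness of the component functions $\kosz_{XY}(E_a)$ for a smooth frame shows the resulting section is smooth, i.e.\ a differential $1$-form.

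I expect the only non-formal ingredient to be this last localization step (the passage from $\fiscal M$-linearity to pointwise dependence), but it is the standard tensoriality lemma and transfers verbatim from the {\nondeg} setting, since the possible degeneracy of $g$ plays no role here: all the substantive computation was already absorbed into establishing property \eqref{thm_Koszul_form_props_flinearZ} of Theorem \ref{thm_Koszul_form_props}. Hence the corollary is essentially immediate once that theorem is available.
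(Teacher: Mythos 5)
Your proposal is correct and follows the same route as the paper: the paper's proof simply cites properties (1) and (4) of Theorem \ref{thm_Koszul_form_props} (additivity/$\R$-linearity and $\fiscal M$-linearity in the third argument), and you have merely spelled out the standard tensoriality/localization lemma that the paper leaves implicit. No gap.
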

\begin{proof}
Follows directly from Theorem \ref{thm_Koszul_form_props}, properties \eqref{thm_Koszul_form_props_linear} and \eqref{thm_Koszul_form_props_flinearZ}.
\end{proof}

\begin{corollary}
\label{thm_Koszul_null_props}
Let $X,Y\in \fivect M$, and $W\in\fivectnull{M}$. Then
\begin{equation}
	\kosz(X,Y,W) = \kosz(Y,X,W) = -\kosz(X,W,Y) = -\kosz(Y,W,X). \\
\end{equation}
\end{corollary}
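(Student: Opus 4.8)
The plan is to derive the entire string of equalities from the algebraic identities for the {\koszulname} already proved in Theorem \ref{thm_Koszul_form_props}, exploiting the one defining feature of a radical vector field: if $W\in\fivectnull{M}$ then $W_p\in\radix{(T_pM)}$ for every $p$, so that $\metric{W,Z}=0$ for all $Z\in\fivect M$ (equivalently $W^\flat=0$). The effect is that, whenever $W$ occupies the appropriate slot, the right-hand sides of the torsionless and metric identities collapse to zero.

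First I would prove the symmetry $\kosz(X,Y,W)=\kosz(Y,X,W)$. By the torsionless property \eqref{thm_Koszul_form_props_commutXY} of Theorem \ref{thm_Koszul_form_props}, $\kosz(X,Y,W)-\kosz(Y,X,W)=\metric{[X,Y],W}$, and the right-hand side vanishes because $W$ is radical.

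Next I would obtain the two sign-reversing identities from the metric property \eqref{thm_Koszul_form_props_commutYZ}, which reads $\kosz(A,B,C)+\kosz(A,C,B)=A\metric{B,C}$ in general. Taking $(A,B,C)=(X,Y,W)$ gives $\kosz(X,Y,W)+\kosz(X,W,Y)=X\metric{Y,W}=X(0)=0$, hence $\kosz(X,Y,W)=-\kosz(X,W,Y)$; taking $(A,B,C)=(Y,X,W)$ gives $\kosz(Y,X,W)=-\kosz(Y,W,X)$ by the same reasoning. Chaining the three identities yields $\kosz(X,Y,W)=\kosz(Y,X,W)=-\kosz(X,W,Y)=-\kosz(Y,W,X)$, as claimed.

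I do not expect a genuine obstacle: the argument is purely formal, resting only on the vanishing $\metric{W,\cdot}=0$ together with two of the bilinear identities already in hand. The single point deserving a moment's care is that these identities were established for arbitrary singular {\semiriem} manifolds, with no hypothesis of constant signature or non-degeneracy, so that they apply verbatim here; Theorem \ref{thm_Koszul_form_props} has already secured this.
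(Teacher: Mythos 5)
Your proof is correct and follows exactly the paper's own route: the first equality from the torsionless property \eqref{thm_Koszul_form_props_commutXY} together with $\metric{[X,Y],W}=0$, and the two sign-reversing equalities from the metric property \eqref{thm_Koszul_form_props_commutYZ} with $X\metric{Y,W}=0$. The paper's proof is just a terser statement of the same argument.
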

\begin{proof}
The first identity follows from Theorem \ref{thm_Koszul_form_props}, property \eqref{thm_Koszul_form_props_commutXY}, and the other two from property \eqref{thm_Koszul_form_props_commutYZ}.
\end{proof}

%~~~~~~~~~~~~~~~~~~~~~~~~~~~~~~~~~~~~~~~~~~~~~~~~~~~~~~~~~~~~~~~~~~~~~~~%
\section{The covariant derivative}
\label{s_cov_der}

In this section we will show that, even when the metric is degenerate, it is possible to define in a canonical way the covariant derivative for differential forms which are radical annihilator. We can use the {\koszulname} for a sort of covariant derivative, named here lower covariant derivative, of vector fields, which associates to vector fields not vector fields, but $1$-forms. This means it doesn't have the properties of a connection \cite{kob:1963,teleman2000connections}, but for our purpose will do the same job.

%~~~~~~~~~~~~~~~~~~~~~~~~~~~~~~~~~~~~~~~~~~~~~~~~~~~~~~~~~~~~~~~~~~~~~~~%
\subsection{The lower covariant derivative of vector fields}
\label{s_l_cov_dev}

\begin{definition}
\label{def_l_cov_der}
Let $X,Y\in\fivect{M}$. The $1$-form $\lderb XY \in \fiformk 1{M}$, defined by
\begin{equation}
\label{eq_l_cov_der_vect}
\lderc XYZ := \kosz(X,Y,Z)
\end{equation}
for any $Z\in\fivect{M}$, is called the {\em lower covariant derivative} of the vector field $Y$ in the direction of the vector field $X$.
The operator
\begin{equation}
	\lder:\fivect{M} \times \fivect{M} \to \fiformk 1{M},
\end{equation}
which associates to each $X,Y\in\fivect{M}$ the differential $1$-form $\ldera XY$, is called the {\em lower covariant derivative operator}.
\end{definition}

\begin{remark}
The lower covariant derivative is well defined even if the metric is degenerate. When applying the lower covariant derivative to a vector field we don't obtain another vector field, but a differential $1$-form, so it is not exactly a covariant derivative. If the metric is {\nondeg}, one obtains the covariant derivative by $\derb XY=(\lderb XY)^\sharp$. In \citep{Koss85}{464--465} were used similar objects which map vector fields to $1$-forms.
\end{remark}

\begin{theorem}
\label{thm_l_cov_der_props}
Let $(M,g)$  be a a singular {\semiriem} manifold. The lower covariant derivative operator $\lder$ has the properties:
\begin{enumerate}
	\item \label{thm_l_cov_der_props_linear}
	It is additive and $\R$-linear in both of its arguments.
	\item \label{thm_l_cov_der_props_flinearX}
	It is $\fiscal M$-linear in the first argument:

	$\lderb{fX}Y = f\lderb XY.$
	\item \label{thm_l_cov_der_props_flinearY}
	Satisfies the {\em Leibniz rule}:

	$\lderb X{fY} = f\lderb XY + X(f) Y^\flat,$
	
	or, explicitly,

	$\lderc X{fY}Z = f\lderc XYZ + X(f) \metric{Y,Z}.$
	\item \label{thm_l_cov_der_props_flinearZ}
	It is {\em metric}:

	$\lderc XYZ + \lderc XZY = X \metric{Y,Z}$.
	\item \label{thm_l_cov_der_props_commutXY}
	It is {\em symmetric} or {\em torsionless}:

	$\lderb XY - \lderb YX = [X,Y]^\flat,$
	
	or, explicitly,
	
	$\lderc XYZ - \lderc YXZ = \metric{[X,Y],Z}$.
	\item \label{thm_l_cov_der_props_commutZX}
	Relation with the Lie derivative of $g$:

	$\lderc XYZ + \lderc ZYX = (\lie_Y g)(Z,X)$.
	\item \label{thm_l_cov_der_props_commutX2Y}

	$\lderc XYZ + \lderc YZX = Y\metric{Z,X} + \metric{[X,Y],Z}$,
	\end{enumerate}
	for any $X,Y,Z\in\fivect M$ and $f\in\fiscal M$.
\end{theorem}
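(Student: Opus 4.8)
The plan is to reduce every item to the corresponding property of the {\koszulname} established in Theorem \ref{thm_Koszul_form_props}, exploiting the defining identity $\lderc XYZ = \kosz(X,Y,Z)$ of Definition \ref{def_l_cov_der}. Each statement is either an equality of $1$-forms or an equality of scalars. In the first case I would prove it by evaluating both sides on an arbitrary $Z\in\fivect M$, thereby passing to a scalar identity in $\kosz$, applying Theorem \ref{thm_Koszul_form_props}, and then invoking that $Z$ was arbitrary. That the objects $\lderb XY$ are genuine $1$-forms, so that these evaluations are legitimate, is already guaranteed by Corollary \ref{thm_Koszul_form}.

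The correspondence is direct. Property \eqref{thm_l_cov_der_props_linear} (additivity and $\R$-linearity in $X$ and $Y$) transfers verbatim from property \eqref{thm_Koszul_form_props_linear} of $\kosz$, since $\lder$ only uses the first two arguments of the {\koszulname}. Property \eqref{thm_l_cov_der_props_flinearX} follows from \eqref{thm_Koszul_form_props_flinearX}: evaluating $\lderb{fX}Y$ at $Z$ gives $\kosz(fX,Y,Z) = f\kosz(X,Y,Z) = (f\lderb XY)(Z)$. The metric property \eqref{thm_l_cov_der_props_flinearZ} is exactly the metric property \eqref{thm_Koszul_form_props_commutYZ} after substituting the definition, the torsionless property \eqref{thm_l_cov_der_props_commutXY} is property \eqref{thm_Koszul_form_props_commutXY}, property \eqref{thm_l_cov_der_props_commutZX} is property \eqref{thm_Koszul_form_props_commutZX}, and property \eqref{thm_l_cov_der_props_commutX2Y} is property \eqref{thm_Koszul_form_props_commutX2Y}. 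In each case the stated identity coincides with the $\kosz$ identity once one identifies $[X,Y]^\flat(Z)$ with $\metric{[X,Y],Z}$ through Definition \ref{def_inner_morphism}.

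The only item deserving care is the Leibniz rule \eqref{thm_l_cov_der_props_flinearY}, which involves an extra term. Evaluating the left side at $Z$ and using property \eqref{thm_Koszul_form_props_flinearY} of $\kosz$ yields
\begin{equation*}
\lderc X{fY}Z = \kosz(X,fY,Z) = f\kosz(X,Y,Z) + X(f)\metric{Y,Z}.
\end{equation*}
The first term is $f\lderc XYZ$, while the extra term is precisely $X(f)\,Y^\flat(Z)$, since $Y^\flat(Z)=\metric{Y,Z}$ by Definition \ref{def_inner_morphism}; as $Z$ is arbitrary this gives $\lderb X{fY} = f\lderb XY + X(f)Y^\flat$ as an identity of $1$-forms. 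I expect no genuine obstacle here: all the analytic content has already been absorbed into Theorem \ref{thm_Koszul_form_props}, so the remaining work is purely the bookkeeping of matching each $\lder$-identity to the right $\kosz$-identity and of consistently rewriting the scalar value $\metric{V,Z}$ as the $1$-form $V^\flat$ evaluated at $Z$.
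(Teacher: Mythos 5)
Your proposal is correct and follows exactly the route the paper takes: the paper's proof is the one-line observation that everything ``follows directly from Theorem \ref{thm_Koszul_form_props}'', and your detailed matching of each $\lder$-identity to the corresponding $\kosz$-identity (including the careful handling of the extra $X(f)Y^\flat$ term in the Leibniz rule) is just the explicit version of that reduction.
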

\begin{proof}
Follows directly from Theorem \ref{thm_Koszul_form_props}.
\end{proof}

%~~~~~~~~~~~~~~~~~~~~~~~~~~~~~~~~~~~~~~~~~~~~~~~~~~~~~~~~~~~~~~~~~~~~~~~%
\subsection{{\rrstationary} singular {\semiriem} manifolds}
\label{s_radical_stationary_manifolds}

\begin{definition}[{\cf} \cite{Kup96} Definition 3.1.3]
\label{def_radical_stationary_manifold}
A singular {\semiriem} manifold $(M,g)$ is {\em {\rstationary}} if it satisfies the condition 
\begin{equation}
\label{eq_radical_stationary_manifold}
		\kosz(X,Y,\_)\in\annihforms M,
\end{equation}
for any $X,Y\in\fivect{M}$.
Equivalently, $\kosz(X,Y,W_p)=0$ for any $X,Y\in \fivect M$ and $W_p\in \fivectnull{M_p}$, $p\in M$.
\end{definition}

\begin{remark}
Kupeli introduced and studied the {\rstationary} singular {\semiriem} manifolds of constant signature in \citep{Kup87b}{259--260}. In \cite{Kup96} Definition 3.1.3, he called them ``stationary singular {\semiriem} manifolds''. We will call them here ``{\rstationary} {\semiriem} manifolds'', to avoid confusions with the more spread usage of the term ``stationary'' for manifolds admitting a Killing vector field, and particularly for spacetimes invariant at time translation. Kupeli needed them to ensure the existence of what he called ``Koszul derivative'', which is not needed in our invariant approach.
\end{remark}

\begin{corollary}
\label{thm_Koszul_null_props_rad_stat}
Let $(M,g)$ be {\rstationary}, and $X,Y\in \fivect M$ and $W\in\fivectnull{M}$. Then,
\begin{equation}
	\kosz(X,Y,W) = \kosz(Y,X,W) = -\kosz(X,W,Y) = -\kosz(Y,W,X) = 0. \\
\end{equation}
\end{corollary}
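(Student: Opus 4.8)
The plan is to derive all four equalities directly from the definition of radical-stationarity combined with Corollary~\ref{thm_Koszul_null_props}, which already establishes the three equalities among the Koszul-object expressions for a general singular {\semiriem} manifold. Corollary~\ref{thm_Koszul_null_props} gives us, for any $X,Y\in\fivect M$ and $W\in\fivectnull M$, the chain $\kosz(X,Y,W) = \kosz(Y,X,W) = -\kosz(X,W,Y) = -\kosz(Y,W,X)$. So the only new content in the present corollary is that all of these quantities vanish; the internal equalities are inherited for free.

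First I would observe that the definition of {\rstationary} (Definition~\ref{def_radical_stationary_manifold}) is stated in two equivalent forms, and it is the second form that does the work here: for any $X,Y\in\fivect M$ and any $W_p\in\fivectnull{M_p}$, one has $\kosz(X,Y,W_p)=0$. Applying this with the roles of $X$ and $Y$ as given yields immediately $\kosz(X,Y,W)=0$. Since Corollary~\ref{thm_Koszul_null_props} already identifies this expression with each of the other three, every term in the chain equals $0$, which is exactly the assertion.

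The one point requiring a moment of care is that radical-stationarity is phrased with $W$ appearing in the \emph{third} slot of $\kosz$, whereas two of the four expressions, namely $\kosz(X,W,Y)$ and $\kosz(Y,W,X)$, place the radical vector field $W$ in the \emph{second} slot. I would not invoke the defining condition separately on these; instead I rely on Corollary~\ref{thm_Koszul_null_props}, whose proof used property~\eqref{thm_Koszul_form_props_commutYZ} (the metric property) precisely to convert second-slot occurrences into third-slot ones. Thus the vanishing propagates through the equalities without needing a fresh appeal to the definition.

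The proof is therefore a two-line synthesis: invoke Definition~\ref{def_radical_stationary_manifold} to kill the first term, then invoke Corollary~\ref{thm_Koszul_null_props} to spread the zero across the chain. There is no genuine obstacle here; the only thing to guard against is the slot-position subtlety just described, which is fully absorbed by citing the earlier corollary rather than re-deriving the symmetry relations. A clean write-up would read essentially: ``By Definition~\ref{def_radical_stationary_manifold}, $\kosz(X,Y,W)=0$. The remaining equalities, and hence the vanishing of all four terms, follow from Corollary~\ref{thm_Koszul_null_props}.''
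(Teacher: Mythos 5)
Your proof is correct and follows essentially the same route as the paper, which simply cites Corollary~\ref{thm_Koszul_null_props} as a ``direct application''; your version is in fact slightly more explicit, since you spell out that Definition~\ref{def_radical_stationary_manifold} supplies the vanishing of $\kosz(X,Y,W)$ while the earlier corollary propagates the zero through the chain. Nothing is missing.
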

\begin{proof}
It is a direct application of Corollary \ref{thm_Koszul_null_props}.
\end{proof}

\begin{remark}
\label{rem_rad_stat_lower_der}
The manifold $(M,g)$ is {\rstationary} iff, for any $X,Y\in\fivect{M}$,
\begin{equation}
		\lderb X Y\in\annihforms M.
\end{equation}
\end{remark}

%~~~~~~~~~~~~~~~~~~~~~~~~~~~~~~~~~~~~~~~~~~~~~~~~~~~~~~~~~~~~~~~~~~~~~~~%
\subsection{The covariant derivative of differential forms}
\label{s_cov_der_forms}

When $g$ is {\nondeg}, the covariant derivative of a differential $1$-form $\omega$ is
\begin{equation}
	\left(\der_X\omega\right)(Y) = X\left(\omega(Y)\right) - \omega\left(\der_X Y\right).
\end{equation}
To generalize this definition to the case of degenerate metrics, we have to replace $\omega\left(\der_XY\right)$ with an expression which is well-defined even if the metric is degenerate, {\ie}
\begin{equation}
\label{eq_cov_der_form_raise}
	\omega\left(\der_XY\right) = \kosz(X,Y,\cocontr)\omega(\cocontr).
\end{equation}
This is defined on {\rstationary} {\semiriem} manifolds, if the $1$-form $\omega$ is {\rannih}.
This leads naturally to the definition:

\begin{definition}
\label{def_cov_der_covect}
Let $(M,g)$ be a {\rstationary} {\semiriem} manifold. The operator
\begin{equation}
\begin{array}{lll}
	\left(\der_X\omega\right)(Y) := X\left(\omega(Y)\right) - \annihprod{\lderb X Y,\omega},
\end{array}
\end{equation}
is called the {\em covariant derivative} of a {\rannih} $1$-form $\omega\in\annihforms{M}$ in the direction of a vector field $X\in\fivect{M}$.
\end{definition}

\begin{proposition}
\label{thm_cov_deriv_annih_smooth}
If $\der_X\omega$ is smooth, then $\der_X\omega\in\annihforms M$.
\end{proposition}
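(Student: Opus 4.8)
The statement asks me to show that the smooth object $\der_X\omega$ is {\rannih}, i.e. that at every $p$ the covector $(\der_X\omega)_p$ lies in $\annih{(T_pM)}$, equivalently that it vanishes on $\radix{(T_pM)}$; the convenient criterion (Proposition~\ref{thm_radical_annihilator_covariant_index}) is that $\der_X\omega$ be annihilated by every radical vector field. Before using it I would record that $\der_X\omega$ is genuinely $\fiscal M$-linear in its argument, so that ``{\rannih}'' is meaningful: replacing $Y$ by $fY$ in Definition~\ref{def_cov_der_covect} and using the Leibniz rule $\lderb X{fY}=f\lderb XY+X(f)Y^\flat$ (Theorem~\ref{thm_l_cov_der_props}\eqref{thm_l_cov_der_props_flinearY}) together with $\annihprod{Y^\flat,\omega}=\omega(Y)$ (Example~\ref{ex_contraction_sign_change_smooth}), the two $X(f)\omega(Y)$ contributions cancel and one is left with $f(\der_X\omega)(Y)$.

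The heart of the argument is a single short computation. Let $W\in\fivectnull M$ be a radical vector field. Since $\omega\in\annihforms M$ vanishes on the radical, $\omega(W)\equiv0$, hence $X(\omega(W))=0$. Since $(M,g)$ is {\rstationary}, Corollary~\ref{thm_Koszul_null_props_rad_stat} gives $\kosz(X,W,Z)=0$ for all $Z$, that is $\lderb XW=0$, so $\annihprod{\lderb XW,\omega}=0$. Therefore
\begin{equation*}
(\der_X\omega)(W)=X(\omega(W))-\annihprod{\lderb XW,\omega}=0 .
\end{equation*}
On the open set where the signature is locally constant $\radix TM$ is a smooth distribution, so every radical tangent vector is the value of a local radical field $W$; the displayed identity then yields $(\der_X\omega)_p\in\annih{(T_pM)}$ on this dense open set.

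The step I expect to be the real obstacle is the passage to the points where the signature changes. There a radical tangent vector $w\in\radix{(T_pM)}$ need not extend to a smooth radical field, and the vanishing of $\der_X\omega$ on the (possibly too few) global radical fields does not by itself pin down its value on the larger radical at $p$. Here the smoothness hypothesis is indispensable and \emph{cannot} be replaced by the conclusion on a dense set: for $g=x^2\,\de x\otimes\de x$ on $\R$ the form $\de x$ is smooth and trivially {\rannih} away from $0$, yet $(\de x)_0\neq0\notin\annih{(T_0\R)}=\{0\}$. What rescues the particular form $\der_X\omega$ is that the contraction $\annihprod{\lderb XY,\omega}$ carries factors $1/\metric{E_a,E_a}$ that diverge as the metric degenerates; the assumed smoothness of $\der_X\omega$ forces these divergent contributions to cancel, and I would make precise, in a frame adapted to the radical and by taking limits from the constant-signature region, that this cancellation is exactly the statement that the radical component of $(\der_X\omega)_p$ vanishes.
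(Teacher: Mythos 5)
Your displayed identity $(\der_X\omega)(W)=X(\omega(W))-\annihprod{\lderb XW,\omega}=0$ for $W\in\fivectnull M$ is, essentially word for word, the paper's entire proof, so the core of your argument coincides with it; the preliminary check that $\der_X\omega$ is $\fiscal M$-linear in its argument is correct and is a sensible supplement that the paper leaves implicit. Where you depart from the paper is your last paragraph, and the concern you raise there is genuine: the paper silently passes from ``$(\der_X\omega)(W)=0$ for every $W\in\fivectnull M$'' to ``$(\der_X\omega)_p\in\annih{(T_pM)}$ for every $p$'', and these are equivalent only where the signature is locally constant, so that $\radix{T}M$ is a distribution and every radical tangent vector extends to a local radical field. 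At a signature-changing point the radical is strictly larger than at nearby points (the rank of $g$ is lower semicontinuous), $\fivectnull M$ need not span $\radix{(T_pM)}$ --- in your one-dimensional example it reduces to the zero field --- and the one-line computation then says nothing about the value of $\der_X\omega$ on the extra radical directions. Your example $g=x^2\,\de x\otimes\de x$ correctly shows that smoothness together with the conclusion on a dense open set cannot close this by continuity alone, and the mechanism you point to is the right one: there, for $\omega=h\,\de x$ with $h(0)=0$, one computes $(\der_\partial\omega)(\partial)=h'-h/x$, and $h/x\to h'(0)$ forces the value $0$ at the degenerate point. But you stop at ``I would make precise'', so your argument, exactly like the paper's, is actually complete only on the locus of locally constant signature; the difference is that you have identified and localized the missing step, while the paper does not acknowledge that there is one.
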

\begin{proof}
Follows from Definition \ref{def_cov_der_covect}: $\left(\der_X\omega\right)(W) = X\left(\omega(W)\right) - \annihprod{\lderb X W,\omega} = 0$.
\end{proof}

\begin{notation}
\label{def_cov_der_smooth}
If $(M,g)$ is a {\rstationary} {\semiriem} manifold, then
\begin{equation}
	\srformsk 1 M = \{\omega\in\annihforms M|(\forall X\in\fivect M)\ \der_X\omega\in\annihforms M\},
\end{equation}
\begin{equation}
	\srformsk k M := \bigwedge^k_M\srformsk 1 M,
\end{equation}
denote the vector spaces of differential forms having smooth covariant derivatives.
\end{notation}

\begin{theorem}
\label{thm_cov_der_covect_props}
Let $(M,g)$ be a {\rstationary} {\semiriem} manifold. Then, the covariant derivative operator $\der$ of differential $1$-forms has the following properties:
\begin{enumerate}
	\item \label{thm_cov_der_covect_props_linear}
	It is additive and $\R$-linear in both of its arguments.
	\item \label{thm_cov_der_covect_props_flinearX}
	It is $\fiscal M$-linear in the first argument:

	$\derb{fX}\omega = f\derb X\omega.$
	\item \label{thm_cov_der_covect_props_flinearY}
	It satisfies the {\em Leibniz rule}:

	$\derb X{f\omega} = f\derb X\omega + X(f) \omega.$
	\item \label{thm_cov_der_covect_props_flat_commut}
	It commutes with the lowering operator:

	$\derb X{Y^\flat} = \lderb XY$,
	\end{enumerate}
	for any $X,Y\in\fivect M$, $\omega\in\annihforms{M}$ and $f\in\fiscal M$.
\end{theorem}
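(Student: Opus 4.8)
The plan is to verify each of the four properties by unfolding Definition \ref{def_cov_der_covect} and reducing it to the corresponding property of the lower covariant derivative $\lder$ in Theorem \ref{thm_l_cov_der_props}, together with the bilinearity of $\annihprod{\cdot,\cdot}$. Before any of the computations I would secure well-definedness: since $(M,g)$ is {\rstationary}, Remark \ref{rem_rad_stat_lower_der} gives $\lderb XY\in\annihforms M$, while $Y^\flat$ and $f\omega$ are manifestly {\rannih}, so every pairing $\annihprod{\cdot,\cdot}$ that appears is legitimate and the outputs land back in $\annihforms M$ by Proposition \ref{thm_cov_deriv_annih_smooth}.

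For the $\R$-linearity in part \eqref{thm_cov_der_covect_props_linear} I would combine three facts: the map $X\mapsto X(\omega(Y))$ is $\R$-linear in $X$ and in $\omega$, the form $\lderb XY$ is $\R$-linear in both $X$ and $Y$ by Theorem \ref{thm_l_cov_der_props}\eqref{thm_l_cov_der_props_linear}, and $\annihprod{\cdot,\cdot}$ is $\R$-bilinear; subtracting the two $\R$-linear pieces preserves linearity. Part \eqref{thm_cov_der_covect_props_flinearX} follows by pulling $f$ out of both terms: $(fX)(\omega(Y))=f\,X(\omega(Y))$, and $\lderb{fX}Y=f\lderb XY$ by Theorem \ref{thm_l_cov_der_props}\eqref{thm_l_cov_der_props_flinearX}, so $\annihprod{\lderb{fX}Y,\omega}=f\annihprod{\lderb XY,\omega}$. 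Part \eqref{thm_cov_der_covect_props_flinearY} is where the Leibniz rule for the vector-field action enters: expanding $X\big((f\omega)(Y)\big)=X(f)\,\omega(Y)+f\,X(\omega(Y))$ and using the $\fiscal M$-linearity of $\annihprod{\cdot,\cdot}$ in its second slot, the $f$-weighted terms reassemble into $f(\derb X\omega)(Y)$ while the leftover $X(f)\,\omega(Y)$ produces the extra term $X(f)\,\omega$.

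The substantive step is part \eqref{thm_cov_der_covect_props_flat_commut}, where the degenerate-metric machinery actually does something. Evaluating on an arbitrary $Z\in\fivect M$,
\begin{equation*}
(\derb X{Y^\flat})(Z)=X\metric{Y,Z}-\annihprod{\lderb XZ,Y^\flat}.
\end{equation*}
Here I would invoke Example \ref{ex_contraction_sign_change_smooth}, which identifies a pairing against a lowered vector with plain evaluation: $\annihprod{\lderb XZ,Y^\flat}=(\lderb XZ)(Y)=\lderc XZY$. This is precisely the place where the {\rannih} structure is essential, since it lets us contract against $Y^\flat$ with $\annihg$ without ever inverting the possibly degenerate $g$. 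Then the metric property of the lower covariant derivative, Theorem \ref{thm_l_cov_der_props}\eqref{thm_l_cov_der_props_flinearZ}, namely $\lderc XYZ+\lderc XZY=X\metric{Y,Z}$, rewrites the right-hand side as $\lderc XYZ=(\lderb XY)(Z)$, giving $\derb X{Y^\flat}=\lderb XY$ since $Z$ was arbitrary.

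I do not anticipate a genuine obstacle: once well-definedness is in place, parts \eqref{thm_cov_der_covect_props_linear}--\eqref{thm_cov_der_covect_props_flinearY} are bookkeeping inherited from Theorem \ref{thm_l_cov_der_props} and the bilinearity of $\annihprod{\cdot,\cdot}$. The one point demanding care is the identity $\annihprod{\alpha,V^\flat}=\alpha(V)$ for {\rannih} $\alpha$ used in part \eqref{thm_cov_der_covect_props_flat_commut}, together with checking that $\lderb XZ$ and $Y^\flat$ really are {\rannih} so that this identity applies; both hold because $(M,g)$ is {\rstationary} and because $Y^\flat\in\annihforms M$ lies in $\IM\flat$ pointwise.
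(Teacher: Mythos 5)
Your proposal is correct and follows essentially the same route as the paper's proof: each property is obtained by unfolding Definition \ref{def_cov_der_covect} and invoking the corresponding property of $\lder$ from Theorem \ref{thm_l_cov_der_props}, with part \eqref{thm_cov_der_covect_props_flat_commut} resting on the identity $\annihprod{\lderb XZ,Y^\flat}=\lderc XZY$ and the metric property $\lderc XYZ+\lderc XZY=X\metric{Y,Z}$. Your extra care about well-definedness (via Remark \ref{rem_rad_stat_lower_der} and the {\rannih} nature of the pairings) is a welcome addition that the paper leaves implicit, but it does not change the argument.
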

\begin{proof}
Property \eqref{thm_cov_der_covect_props_linear} follows directly from Theorem \ref{thm_l_cov_der_props} and Definition \ref{def_cov_der_covect}.

\begin{equation}
\eqref{thm_cov_der_covect_props_flinearX}\ 
	\derc{fX}\omega Y = fX\left(\omega(Y)\right) - \annihprod{\lderb {fX} Y,\omega} = f \derc X\omega Y.
\end{equation}

\begin{equation}
\begin{array}{lll}
\eqref{thm_cov_der_covect_props_flinearY}\ 
	\derc{X}{f\omega}Y &=& X\left(f\omega(Y)\right) - \annihprod{\lderb {X} Y,f\omega} \\
	&=& X(f)\omega(Y) +fX\left(\omega(Y)\right) - f\annihprod{\lderb {X} Y,\omega}\\
	&=& f\derc X\omega Y + X(f) \omega(Y).
\end{array}
\end{equation}

\begin{equation}
\begin{array}{lll}
\eqref{thm_cov_der_covect_props_flat_commut}\ 
	\derc{X}{Y^\flat}Z &=& X\left(Y^\flat(Z)\right) - \annihprod{\lderb X Z,Y^\flat} \\
	&=& X\metric{Y,Z} - \lderc X Z Y \\
	&=& \lderc X Y Z. \\
\end{array}
\end{equation}
\end{proof}

Now we define the covariant derivative for tensors which are covariant and radical annihilator in all their slots (including differential forms). The obtained formulae generalize the corresponding ones from the {\nondeg} case, see {\eg} \citep{GHLF04}{70}).

\begin{definition}
\label{def_cov_der_cov_tensors}
Let $(M,g)$ be a {\rstationary} {\semiriem} manifold. The operator $\der:\fivect{M} \times \otimes^s_M\srformsk 1 M \to \otimes^s_M\annihformsk 1 M$,
\begin{equation}
	\der_X(\omega_1\otimes\ldots\otimes\omega_s) := \der_X(\omega_1)\otimes\ldots\otimes\omega_s +\ldots + \omega_1\otimes\ldots\otimes\der_X(\omega_s)
\end{equation}
is called the {\em covariant derivative} of tensors of type $(0,s)$.
In particular, for a differential $k$-form, $\der:\fivect{M} \times \srformsk k M \to \annihformsk k M$,
\begin{equation}
	\der_X(\omega_1\wedge\ldots\wedge\omega_s) := \der_X(\omega_1)\wedge\ldots\wedge\omega_s +\ldots + \omega_1\wedge\ldots\wedge\der_X(\omega_s).
\end{equation}
\end{definition}

\begin{theorem}
\label{thm_cov_der_cov_tensors}
Let $(M,g)$ be a {\rstationary} {\semiriem} manifold. Then,
\begin{equation}
\begin{array}{lll}
	\left(\nabla_X T\right)(Y_1,\ldots,Y_k) &=& X\left(T(Y_1,\ldots,Y_k)\right) \\
	&& - \sum_{i=1}^k\kosz(X,Y_i,\cocontr)T(Y_1,,\ldots,\cocontr,\ldots,Y_k).
\end{array}
\end{equation}
\end{theorem}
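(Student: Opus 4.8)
The plan is to reduce the statement to decomposable tensors by multilinearity, verify the one-form case directly from the definitions, and then assemble the general formula from the product rule. Throughout I take $T\in\otimes^k_M\srformsk 1 M$, so that each factor is {\rannih} and every contraction on the right-hand side is meaningful; the hypothesis that $(M,g)$ is {\rstationary} guarantees, via Remark \ref{rem_rad_stat_lower_der}, that $\lderb X Y_i\in\annihforms M$ for each $i$, which is precisely what makes $\kosz(X,Y_i,\cocontr)T(Y_1,\ldots,\cocontr,\ldots,Y_k)$ well-defined.

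First I would settle the base case $k=1$. Here $T=\omega\in\srformsk 1 M$, and Definition \ref{def_cov_der_covect} gives $(\der_X\omega)(Y)=X(\omega(Y))-\annihprod{\lderb X Y,\omega}$. To match the claimed formula I would rewrite the last term as a covariant contraction: since $\lderb X Y=\kosz(X,Y,\_)$ by Definition \ref{def_l_cov_der} and both $\lderb X Y$ and $\omega$ are {\rannih}, Definition \ref{def_contraction_covariant} yields $\kosz(X,Y,\cocontr)\omega(\cocontr)=C_{12}\big((\lderb X Y)\otimes\omega\big)=\annihprod{\lderb X Y,\omega}$. Thus the $k=1$ case is a mere restatement of the definition.

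Next I would treat a decomposable tensor $T=\omega_1\otimes\cdots\otimes\omega_k$ with each $\omega_i\in\srformsk 1 M$. Applying the product rule of Definition \ref{def_cov_der_cov_tensors} and evaluating on $(Y_1,\ldots,Y_k)$ produces a sum over $i$ of terms $\omega_1(Y_1)\cdots(\der_X\omega_i)(Y_i)\cdots\omega_k(Y_k)$. Substituting the base case into each $(\der_X\omega_i)(Y_i)$ splits every summand into an $X$-derivative piece and a contraction piece. The $X$-derivative pieces recombine, by the Leibniz rule for the action of $X$ on the scalar $\omega_1(Y_1)\cdots\omega_k(Y_k)$, into the single term $X(T(Y_1,\ldots,Y_k))$. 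The contraction pieces are, slot by slot, exactly $\kosz(X,Y_i,\cocontr)T(Y_1,\ldots,\cocontr,\ldots,Y_k)$, since for a decomposable $T$ inserting $\cocontr$ in the $i$-th slot leaves the remaining factors untouched. This establishes the formula for decomposable $T$, and extending it to arbitrary $T\in\otimes^k_M\srformsk 1 M$ follows from the additivity built into Definition \ref{def_cov_der_cov_tensors} together with the multilinearity of both sides in $T$.

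The computation is routine; the only care needed is bookkeeping. One must check that each contraction is genuinely well-defined, which is what forces both the {\rstationary} hypothesis (so that $\lderb X Y_i$ is {\rannih}) and the restriction to the smooth class $\srformsk 1 M$ (so that $\der_X\omega_i$ remains {\rannih} and the tensorial contraction persists); and one must confirm that the $\cocontr$-notation places the contracted index in precisely the $i$-th slot. Neither point is a genuine obstacle, so I expect the argument to be short once the one-form case is identified with Definition \ref{def_cov_der_covect}.
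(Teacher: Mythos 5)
Your proposal is correct and follows essentially the same route as the paper: reduce to decomposable tensors $T=\omega_1\otimes\cdots\otimes\omega_k$, apply Definition \ref{def_cov_der_cov_tensors} and the one-form case from Definition \ref{def_cov_der_covect}, recombine the derivative terms by the Leibniz rule and identify $\annihprod{\lderb X Y_i,\omega_i}$ with $\kosz(X,Y_i,\cocontr)T(\ldots,\cocontr,\ldots)$, then extend by linearity. The paper leaves the last identification implicit, whereas you spell it out; otherwise the arguments coincide.
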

\begin{proof}
We will prove it for the case $T = \omega_1\otimes_M\ldots\otimes_M\omega_k$, and extend by linearity. The proof follows by applying the Definitions \ref{def_cov_der_cov_tensors} and \ref{def_cov_der_covect},
\begin{equation}
\begin{array}{lll}
	(\der_XT)(Y_1,\ldots,Y_k) &=& \derc X {\omega_1}{Y_1}\cdot\ldots\cdot\omega_k(Y_k) +\ldots \\
 && + \omega_1(Y_1)\cdot\ldots\cdot\derc X {\omega_k}{Y_k} \\
 &=& (X(\omega_1(Y_1)) - \annihprod{\lderb X Y_1,\omega_1})\cdot\ldots\cdot\omega_k(Y_k) +\ldots \\
 && + \omega_1(Y_1)\cdot\ldots\cdot(X(\omega_k(Y_k)) - \annihprod{\lderb X Y_k,\omega_k}) \\
 &=& X(\omega_1(Y_1))\cdot\ldots\cdot\omega_k(Y_k) + \ldots \\
 && +  \omega_1(Y_1)\cdot\ldots\cdot X(\omega_k(Y_k)) \\
 && - \annihprod{\lderb X Y_1,\omega_1}\cdot\ldots\cdot\omega_k(Y_k) \\
 && - \omega_1(Y_1)\cdot\ldots\cdot\annihprod{\lderb X Y_k,\omega_k} \\
 &=&  X\left(T(Y_1,\ldots,Y_k)\right) \\
 && - \ds{\sum_{i=1}^k}\kosz(X,Y_i,\cocontr)T(Y_1,,\ldots,\cocontr,\ldots,Y_k).
\end{array}
\end{equation}
\end{proof}

\begin{corollary}
Let $(M,g)$ be a {\rstationary} {\semiriem} manifold. Then, the metric $g$ is parallel:
\begin{equation}
	\nabla_Xg = 0.
\end{equation}
\end{corollary}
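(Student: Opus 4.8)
The plan is to apply the explicit formula for the covariant derivative of covariant tensors from Theorem \ref{thm_cov_der_cov_tensors} to $T=g$, and then collapse the resulting metric contractions using the algebraic identities already established for the {\koszulname}. First I would check that this is legitimate: by Example \ref{thm_metric_radical_annihilator} the metric $g$ is {\rannih} in both of its slots, so it lies in the domain $\otimes^2_M\srformsk 1 M$ on which $\der$ acts, and since $(M,g)$ is {\rstationary} each $1$-form $\kosz(X,Y,\_)$ belongs to $\annihforms M$ (Definition \ref{def_radical_stationary_manifold}); this is precisely what makes the covariant contractions below well-defined.

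Specializing Theorem \ref{thm_cov_der_cov_tensors} to $k=2$ gives, for arbitrary $X,Y,Z\in\fivect M$,
\begin{equation*}
(\der_X g)(Y,Z) = X\metric{Y,Z} - \kosz(X,Y,\cocontr)\metric{\cocontr,Z} - \kosz(X,Z,\cocontr)\metric{Y,\cocontr}.
\end{equation*}
The key step is to recognize each contraction term as a plain value of the {\koszulname}. Since $\kosz(X,Y,\_)\in\annihforms M$, I would write it locally as $V^\flat$ for a vector field $V$, so that
\begin{equation*}
\kosz(X,Y,\cocontr)\metric{\cocontr,Z} = \metric{V,\cocontr}\metric{Z,\cocontr} = \metric{V,Z} = \kosz(X,Y,Z),
\end{equation*}
where the middle equality is Corollary \ref{thm_contracted_metric_w_metric} (equivalently, a direct instance of Lemma \ref{thm_contraction_with_metric} with $g$ in the role of $T$). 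The same reasoning yields $\kosz(X,Z,\cocontr)\metric{Y,\cocontr}=\kosz(X,Z,Y)$.

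Substituting these back, the claim reduces to
\begin{equation*}
(\der_X g)(Y,Z) = X\metric{Y,Z} - \kosz(X,Y,Z) - \kosz(X,Z,Y),
\end{equation*}
and the right-hand side vanishes by the \emph{metric} property of the {\koszulname}, Theorem \ref{thm_Koszul_form_props}, property \eqref{thm_Koszul_form_props_commutYZ}, namely $\kosz(X,Y,Z)+\kosz(X,Z,Y)=X\metric{Y,Z}$. I expect the only genuine subtlety to be the bookkeeping in the reduction step: one must invoke {\rstationarity} to guarantee that $\kosz(X,Y,\_)$ is {\rannih}, both so that the contraction against $g$ is defined at all and so that the representation $\kosz(X,Y,\_)=V^\flat$ exists. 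Once that structural point is settled, the conclusion is a purely formal consequence of the already-proven identities, and no analytic difficulty arises.
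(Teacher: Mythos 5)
Your proposal follows exactly the paper's own proof: apply the formula of Theorem \ref{thm_cov_der_cov_tensors} to $T=g$, collapse the contractions $\kosz(X,Y,\cocontr)\metric{\cocontr,Z}$ to $\kosz(X,Y,Z)$, and conclude by the metric property \eqref{thm_Koszul_form_props_commutYZ} of the {\koszulname}. The only difference is that you spell out the contraction-collapsing step and the well-definedness bookkeeping explicitly, which the paper leaves implicit; the argument is correct.
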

\begin{proof}
We apply Theorems \ref{thm_cov_der_cov_tensors} and \ref{thm_Koszul_form_props}, property \eqref{thm_Koszul_form_props_commutYZ}:
\begin{equation}
	(\nabla_Xg)(Y,Z) = X\metric{Y,Z} - \kosz(X,Y,\cocontr)g(\cocontr,Z) - \kosz(X,Z,\cocontr)g(Y,\cocontr) = 0.
\end{equation}
\end{proof}

%~~~~~~~~~~~~~~~~~~~~~~~~~~~~~~~~~~~~~~~~~~~~~~~~~~~~~~~~~~~~~~~~~~~~~~~%
\subsection{{\ssemireg} {\semiriem} manifolds}
\label{s_semi_regular}

\begin{definition}
\label{def_semi_regular_semi_riemannian}
Let $(M,g)$ be a singular {\semiriem} manifold. If 
\begin{equation}
	\ldera X Y \in\srformsk 1 M
\end{equation}
for any $X,Y\in\fivect{M}$, 
$(M,g)$ is called {\em {\semireg} {\semiriem} manifold}.
\end{definition}

\begin{remark}
Since $\srformsk 1 M \subseteq \annihforms M$, any {\semireg} {\semiriem} manifold is {\rstationary} ({\cf} Definition \ref{def_radical_stationary_manifold}).
\end{remark}

\begin{remark}
\label{rem_semi_regular_semi_riemannian}
From Definition \ref{def_cov_der_smooth}, $(M,g)$ is {\semireg} iff for any $X,Y,Z\in\fivect M$,
\begin{equation}
	\dera X {\ldera Y}Z \in \annihforms M.
\end{equation}
\end{remark}

\begin{proposition}
\label{thm_sr_cocontr_kosz}
A {\rstationary} {\semiriem} manifold $(M,g)$ is {\semireg} if and only if for any $X,Y,Z,T\in\fivect M$,
\begin{equation}
	\kosz(X,Y,\cocontr)\kosz(Z,T,\cocontr) \in \fiscal M.
\end{equation}
\end{proposition}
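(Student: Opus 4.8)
The plan is to unwind the definition of semi-regularity through the chain of objects built in the preceding sections and reduce the condition to the stated smoothness of the quantity $\kosz(X,Y,\cocontr)\kosz(Z,T,\cocontr)$. Recall that $(M,g)$ is \semireg{} precisely when $\ldera X Y\in\srformsk 1 M$ for all $X,Y$, which by Notation \ref{def_cov_der_smooth} means $\ldera X Y\in\annihforms M$ (this is the \rstationary{} hypothesis, already assumed) together with $\der_Z(\ldera X Y)\in\annihforms M$ for every $Z$. So the real content to be characterized is the smoothness and radical-annihilator property of $\der_Z(\lderb X Y)$. By Remark \ref{rem_semi_regular_semi_riemannian}, this is equivalent to $\derc Z{\ldera X Y}T\in\annihforms M$ for all $X,Y,Z,T$, so I would start by writing this covariant derivative out explicitly.

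First I would apply Definition \ref{def_cov_der_covect} to the \rannih{} $1$-form $\omega=\lderb X Y$ (it is \rannih{} by Remark \ref{rem_rad_stat_lower_der}, since the manifold is \rstationary), obtaining
\begin{equation*}
\derc Z{\lderb X Y}T = Z\big((\lderb X Y)(T)\big) - \annihprod{\lderb Z T,\lderb X Y}.
\end{equation*}
Then I would rewrite each piece using $\lderc X Y T=\kosz(X,Y,T)$ from Definition \ref{def_l_cov_der}. The first term becomes $Z\,\kosz(X,Y,T)$, which is manifestly smooth in itself. The crux is the second term $\annihprod{\lderb Z T,\lderb X Y}$: by Definition \ref{def_co_inner_product} of $\annihg$ together with the covariant-contraction notation, this is exactly $\kosz(Z,T,\cocontr)\kosz(X,Y,\cocontr)$, i.e. the metric covariant contraction of the two \rannih{} $1$-forms $\lderb X Y$ and $\lderb Z T$ over their single slot. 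Thus $\derc Z{\lderb X Y}T$ is smooth for all arguments if and only if $\kosz(X,Y,\cocontr)\kosz(Z,T,\cocontr)\in\fiscal M$ for all $X,Y,Z,T$.

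The remaining point is to confirm that once this contraction is smooth, $\der_Z(\lderb X Y)$ is genuinely a \emph{smooth radical-annihilator} $1$-form, not merely a smooth scalar when paired with $T$. Smoothness follows because both terms above are then smooth in $T$, and the radical-annihilator property follows from Proposition \ref{thm_cov_deriv_annih_smooth}: any smooth covariant derivative of an \rannih{} form is automatically \rannih. So the two directions of the equivalence both reduce to the single scalar smoothness condition, and the proposition follows by tracing the definitions.

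I expect the main obstacle to be bookkeeping rather than conceptual: one must be careful that the covariant-contraction bracket $\annihprod{\cdot,\cdot}$ appearing in Definition \ref{def_cov_der_covect} coincides with the $\cocontr$-contraction notation of Definition \ref{def_contraction_covariant} and with $\annihg$ acting on the two Koszul $1$-forms, and that Lemma \ref{thm_contraction_with_metric} / Example \ref{ex_contraction_sign_change_smooth} guarantee these expressions are interpreted consistently. The subtle step is recognizing that the potential failure of smoothness is concentrated entirely in this one contraction term — the term $Z\,\kosz(X,Y,T)$ causes no trouble — so that the \semireg{} condition collapses to exactly the displayed requirement. Verifying the ``only if'' direction requires noting that $T$ ranges over all vector fields, so smoothness of $\derc Z{\lderb X Y}T$ for every $T$ forces smoothness of the contraction term by itself.
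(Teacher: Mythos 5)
Your proposal is correct and follows essentially the same route as the paper: the paper's proof consists precisely of applying Definition \ref{def_cov_der_covect} to write $\derc X{\lderb YZ}T = X\left(\lderc Y Z T\right) - \kosz(X,T,\cocontr)\kosz(Y,Z,\cocontr)$ and observing that the first term is smooth, so the equivalence reduces to the smoothness of the contraction term. Your additional remarks (invoking Proposition \ref{thm_cov_deriv_annih_smooth} for the \rannih{} property and noting that quantifying over all $T$ isolates the contraction term) only make explicit what the paper leaves implicit.
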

\begin{proof}
From Definition \ref{def_cov_der_covect} follows that
\begin{equation}
\begin{array}{lll}
	\derc X {\lderb YZ} T 
	&=& X\left(\lderc Y Z T\right) - \annihprod{\lderb X T,\lderb YZ} \\
	&=& X\left(\lderc Y Z T\right) - \kosz(X,T,\cocontr)\kosz(Y,Z,\cocontr). \\
\end{array}
\end{equation}
\end{proof}

%~~~~~~~~~~~~~~~~~~~~~~~~~~~~~~~~~~~~~~~~~~~~~~~~~~~~~~~~~~~~~~~~~~~~~~~%
\section{Curvature of {\semireg} {\semiriem} manifolds}
\label{s_riemann_curvature}

If the {\metricname} is {\nondeg}, the curvature is constructed from the Levi-Civita connection \cfeg{ONe83}{59}. But there is no Levi-Civita connection if $g$ is degenerate.

In this section we will propose another way to define the Riemann curvature tensor, which works and is invariant even if the {\metricname} is degenerate.

Further, in \sref{s_ricci_tensor_scalar}, we will obtain the Ricci curvature tensor and the scalar curvature, by using the metric contraction of the Riemann curvature tensor in two covariant indices, introduced in section \sref{s_tensors_contraction_sign_const}. For a degenerate metric, this covariant contraction requires the Riemann curvature tensor to be {\rannih} in all its slots. Luckily, this condition holds.

%~~~~~~~~~~~~~~~~~~~~~~~~~~~~~~~~~~~~~~~~~~~~~~~~~~~~~~~~~~~~~~~~~~~~~~~%
\subsection{Riemann curvature of {\semireg} {\semiriem} manifolds}
\label{ss_riemann_curvature}

\begin{definition}
\label{def_riemann_curvature_operator}
Let $(M,g)$ be a {\rstationary} {\semiriem} manifold. The operator 
\begin{equation}
\label{eq_riemann_curvature_operator}
	\curv XY Z := \dera X {\ldera Y}Z - \dera Y {\ldera X}Z - \ldera {[X,Y]}Z,
\end{equation}
for any vector fields $X,Y,Z\in\fivect{M}$, is called the {\em lower Riemann curvature operator}.
\end{definition}

\begin{theorem}
\label{thm_riemann_curvature_semi_regular}
Let $(M,g)$ be a {\semireg} {\semiriem} manifold. The object
\begin{equation}
\label{eq_riemann_curvature}
	R(X,Y,Z,T) := (\curv XY Z)(T),
\end{equation}
for any vector fields $X,Y,Z,T\in\fivect{M}$, is a smooth tensor field $R\in\tensors 0 4 M$.
\end{theorem}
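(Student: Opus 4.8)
The plan is to reduce the claim to two independent verifications: that $R$ is $\fiscal M$-multilinear in each of its four arguments (hence defines a tensor field), and that it is smooth. First I would unfold the definition completely. Using Definition \ref{def_cov_der_covect} for the covariant derivative of the radical-annihilator $1$-form $\lderb Y Z$, together with $(\lderb Y Z)(T)=\kosz(Y,Z,T)$ and $\annihprod{\lderb X T,\lderb Y Z}=\kosz(X,T,\cocontr)\kosz(Y,Z,\cocontr)$, one obtains the explicit expression
\begin{equation*}
\begin{array}{ll}
R(X,Y,Z,T) =& X\kosz(Y,Z,T) - \kosz(X,T,\cocontr)\kosz(Y,Z,\cocontr) \\
& -\, Y\kosz(X,Z,T) + \kosz(Y,T,\cocontr)\kosz(X,Z,\cocontr) \\
& -\, \kosz([X,Y],Z,T).
\end{array}
\end{equation*}
This formula is the backbone of both parts.

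For smoothness, which I expect to be the heart of the matter, every term is manifestly smooth except the two contracted products of Koszul objects, because the covariant contraction $\annihprod{\cdot,\cdot}$ silently involves the inverse metric $\annihg$, which diverges exactly where the signature of $g$ changes. This is precisely where the hypothesis of semi-regularity enters: by Definition \ref{def_semi_regular_semi_riemannian} we have $\lderb Y Z\in\srformsk 1 M$, and by Proposition \ref{thm_sr_cocontr_kosz} semi-regularity is equivalent to $\kosz(X,Y,\cocontr)\kosz(Z,T,\cocontr)\in\fiscal M$ for all vector fields. Hence both $\kosz(X,T,\cocontr)\kosz(Y,Z,\cocontr)$ and $\kosz(Y,T,\cocontr)\kosz(X,Z,\cocontr)$ are smooth, so $R$ is a smooth function of its four vector-field arguments.

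For tensoriality I would check $\fiscal M$-linearity slot by slot, leaning on the established properties of $\lder$ (Theorem \ref{thm_l_cov_der_props}) and of $\der$ on $1$-forms (Theorem \ref{thm_cov_der_covect_props}); additivity is immediate, so only scaling by $f\in\fiscal M$ must be examined. Linearity in $T$ is free, since $\curv XY Z$ is a difference of covariant derivatives of $1$-forms and of a lower covariant derivative, each of which is a genuine $1$-form, so $R(X,Y,Z,\cocontr)$ is $\fiscal M$-linear in its evaluation slot. Linearity in $X$ follows from $\dera{fX}=f\dera X$, $\ldera{fX}=f\ldera X$ and $[fX,Y]=f[X,Y]-Y(f)X$, the spurious $Y(f)\lderb X Z$ terms from the Leibniz rule and the bracket cancelling. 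Linearity in $Y$ then needs no separate computation, since the definition gives at once the antisymmetry $\curv XY Z=-\curv YX Z$. Finally, linearity in $Z$ is the one slot carrying second derivatives of $f$: expanding with the Leibniz rules $\lderb Y{fZ}=f\lderb Y Z+Y(f)Z^\flat$ and $\derb X{f\omega}=f\derb X\omega+X(f)\omega$, together with $\derb X{Z^\flat}=\lderb X Z$ (Theorem \ref{thm_cov_der_covect_props}, commutation with $\flat$), produces exactly the terms $X(Y(f))Z^\flat$, $-Y(X(f))Z^\flat$ and $-[X,Y](f)Z^\flat$, which cancel by the definition of the Lie bracket, leaving $\curv XY{fZ}=f\curv XY Z$.

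The main obstacle is therefore not the tensorial bookkeeping, which is the standard Riemann-curvature cancellation transported to the lower/Koszul setting, but the smoothness: the expression for $R$ secretly contains factors of the divergent inverse metric, and it is only the semi-regularity hypothesis, through Proposition \ref{thm_sr_cocontr_kosz}, that forces these to combine into a smooth quantity. Everything else reduces to the algebraic identities for $\kosz$ and $\lder$ already proved.
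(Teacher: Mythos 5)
Your proposal is correct and follows essentially the same route as the paper: $\fiscal M$-linearity slot by slot via the Leibniz rules for $\lder$ and $\der$ (with the same cancellations, including the second-derivative cancellation in the $Z$ slot), and smoothness from the semi-regularity hypothesis. The only cosmetic difference is that you justify smoothness by unfolding $R$ into the Koszul-object formula and invoking Proposition \ref{thm_sr_cocontr_kosz}, whereas the paper cites the definition of semi-regularity directly (smoothness of $\dera X{\ldera Y Z}$, etc.); these are equivalent, and your unfolding in fact reproduces the paper's later Proposition \ref{thm_riemann_curvature_tensor_koszul_formula}.
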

\begin{proof}
The Riemann curvature $R$ is additive and $\R$-linear in all its arguments, due to Theorem \ref{thm_l_cov_der_props}, property \eqref{thm_l_cov_der_props_linear}, and Theorem \ref{thm_cov_der_covect_props}, property \eqref{thm_l_cov_der_props_linear}.

To show now that $R$ is $\fiscal M$-linear in all its arguments, we check by using the properties of the lower covariant derivative for vector fields (Theorem \ref{thm_l_cov_der_props} properties \eqref{thm_l_cov_der_props_flinearX}-\eqref{thm_l_cov_der_props_flinearZ}), and those of the covariant derivative for differential $1$-forms (Theorem \ref{thm_cov_der_covect_props}, properties \eqref{thm_cov_der_covect_props_flinearX}-\eqref{thm_cov_der_covect_props_flat_commut}). It follows that for any function $f\in\fiscal{M}$, $R(fX,Y,Z,T)=R(X,fY,Z,T)=R(X,Y,fZ,T)=R(X,Y,Z,fT)=fR(X,Y,Z,T)$. For example,
\begin{equation*}
\begin{array}{lll}
	R(fX,Y,Z,T) &=& \derc {fX} {{\ldera Y}Z}T - \derc Y {{\ldera {fX}}Z}T - \lderc {[fX,Y]}ZT \\
	&=& f\derc {X} {{\ldera Y}Z}T - \derc Y {{(f\ldera {X}}Z)}T \\
	&& - \lderc {f[X,Y]-Y(f)X}ZT \\
	&=& f\derc {X} {{\ldera Y}Z}T - f\derc Y {{\ldera {X}}Z}T \\
	&& - Y(f)\lderc X Z T - f\lderc {[X,Y]}ZT \\
	&&  + Y(f)\lderc {X}ZT \\
	&=& fR(X,Y,Z,T). \\
\end{array}
\end{equation*}

The smoothness of $R$ follows from that of ${\ldera X}{Z}$, ${\ldera Y}{Z}$, and ${\ldera {[X,Y]}}{Z}$.
\end{proof}

\begin{definition}
\label{def_riemann_curvature}
The object from equation \eqref{eq_riemann_curvature} is called the {\em Riemann curvature tensor}. It generalizes the Riemann curvature tensor $R(X,Y,Z,T) := \metric{R_{XY}Z,T}$ known from {\semiriem} geometry \cfeg{ONe83}{75}.
\end{definition}

\begin{notation}
We denote by $\curv{}{}$ the map $\curv{}{}: \fivect M ^2 \to \tensors 0 2 M$, 
\begin{equation}
	\curv XY := \dera X {\ldera Y} - \dera Y {\ldera X} - \ldera {[X,Y]},
\end{equation}
where, for any $Z,T\in\fivect M$,
\begin{equation}
	\curv XY(Z,T) := (\curv XYZ)(T).
\end{equation}
\end{notation}

%~~~~~~~~~~~~~~~~~~~~~~~~~~~~~~~~~~~~~~~~~~~~~~~~~~~~~~~~~~~~~~~~~~~~~~~%
\subsection{The symmetries of the Riemann curvature tensor}
\label{s_riemann_curvature_symmetries}

The well-known symmetry properties of the Riemann curvature tensor of a {\nondeg} metric \cfeg{ONe83}{75} can be extended to  {\semireg} metrics.

\begin{proposition}
\label{thm_curv_symm}
Let $(M,g)$ be a {\semireg} {\semiriem} manifold. Then, for any $X,Y,Z,T\in\fivect M$, the Riemann curvature has the following symmetry properties
\begin{enumerate}
	\item \label{thm_curv_symm_xy}
	$\curv XY = -\curv YX$
	\item \label{thm_curv_symm_zt}
	$\curv XY(Z,T) = -\curv XY(T,Z)$
	\item  \label{thm_curv_symm_xyz}
	$\curv YZ X + \curv ZX Y + \curv XY Z = 0$
	\item  \label{thm_curv_symm_xy_zt}
	$\curv XY(Z,T) = \curv ZT(X,Y)$
\end{enumerate}
\end{proposition}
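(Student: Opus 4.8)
The plan is to prove the four identities in the order listed, deriving \eqref{thm_curv_symm_xy} and \eqref{thm_curv_symm_zt} directly from Definition \ref{def_riemann_curvature_operator} and Definition \ref{def_cov_der_covect}, then \eqref{thm_curv_symm_xyz} as a first Bianchi identity, and finally \eqref{thm_curv_symm_xy_zt} as a purely algebraic consequence of the first three. Property \eqref{thm_curv_symm_xy} is read off immediately: interchanging $X$ and $Y$ in Definition \ref{def_riemann_curvature_operator} swaps the first two terms and sends $\ldera{[X,Y]}$ to $\ldera{[Y,X]}=-\ldera{[X,Y]}$, so the whole expression changes sign.

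For property \eqref{thm_curv_symm_zt} I would expand $R(X,Y,Z,T)+R(X,Y,T,Z)$ using Definition \ref{def_cov_der_covect}, writing $(\dera X {\ldera Y}Z)(T)=X\kosz(Y,Z,T)-\annihprod{\lderb X T,\lderb Y Z}$ and likewise for the other pieces. The four contraction terms $\annihprod{\lderb{\cdot}{\cdot},\lderb{\cdot}{\cdot}}$ cancel in pairs by symmetry of $\annihg$. The remaining first-derivative terms combine through the metric property \eqref{thm_Koszul_form_props_commutYZ} of the {\koszulname}, $\kosz(A,B,C)+\kosz(A,C,B)=A\metric{B,C}$, into $X(Y\metric{Z,T})-Y(X\metric{Z,T})-[X,Y]\metric{Z,T}$, which is zero because $[X,Y]=XY-YX$ on functions; this is just the infinitesimal form of the parallelism $\der_X g=0$ established above.

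The step I expect to be the main obstacle is property \eqref{thm_curv_symm_xyz}. I would form the cyclic sum $\curv YZ X+\curv ZX Y+\curv XY Z$, evaluate it on an arbitrary $T\in\fivect M$, and expand every covariant derivative by Definition \ref{def_cov_der_covect}. The subtlety specific to the degenerate setting is that $\lder$ produces $1$-forms while $\der$ acts on $1$-forms, so the two derivatives do not compose as operators on vector fields; the bookkeeping must pass through the {\koszulname} and the contraction $\annihprod{\lderb X T,\lderb Y Z}=\kosz(X,T,\cocontr)\kosz(Y,Z,\cocontr)$. Unlike in \eqref{thm_curv_symm_zt}, these contraction terms do not cancel by symmetry alone; I would pair them and apply the torsionless property \eqref{thm_Koszul_form_props_commutXY}, turning differences such as $\kosz(Z,Y,\cocontr)-\kosz(Y,Z,\cocontr)=\metric{[Z,Y],\cocontr}$ into Koszul terms carrying a Lie bracket (using Lemma \ref{thm_contraction_with_metric} to fold the contraction back). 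After this reduction the surviving derivative and bracket terms regroup, exactly as in the classical argument, into $[X,[Y,Z]]+[Y,[Z,X]]+[Z,[X,Y]]$ paired against $T$, which vanishes by the Jacobi identity. Controlling the many contraction terms and checking that the torsionless identity eliminates them cleanly is the delicate part.

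Property \eqref{thm_curv_symm_xy_zt} then needs no further geometric input. Granting antisymmetry in $(X,Y)$ and in $(Z,T)$ together with the first Bianchi identity \eqref{thm_curv_symm_xyz}, the pair symmetry $R(X,Y,Z,T)=R(Z,T,X,Y)$ follows from the standard algebraic identity: writing \eqref{thm_curv_symm_xyz} for the cyclic permutations of the four arguments and adding them with the signs dictated by \eqref{thm_curv_symm_xy} and \eqref{thm_curv_symm_zt} isolates the pair swap. I would simply invoke this combinatorial computation, as it uses only \eqref{thm_curv_symm_xy}--\eqref{thm_curv_symm_xyz}.
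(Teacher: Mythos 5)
Your proposal is correct and follows essentially the same route as the paper: \eqref{thm_curv_symm_xy} from the definition, \eqref{thm_curv_symm_zt} from the metric property of the {\koszulname} together with the symmetry of $\annihprod{\cdot,\cdot}$ (the paper proves $\curv XY(V,V)=0$ and polarizes, you symmetrize in $Z,T$ directly), \eqref{thm_curv_symm_xyz} from the torsionless property and the Jacobi identity, and \eqref{thm_curv_symm_xy_zt} by the standard algebraic combination of the first three. The only difference is presentational: for the first Bianchi identity the paper manipulates the cyclic sum at the operator level, using $\derb X{Y^\flat}=\lderb XY$ to collapse $\ldera YZ-\ldera ZY$ into $[Y,Z]^\flat$ before ever introducing a test field, whereas you expand everything against $T$ through the {\koszulname} and cancel term by term — both reduce to the same two ingredients and the same final Jacobi cancellation.
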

\begin{proof}
\begin{equation*}
\begin{array}{lll}
\eqref{thm_curv_symm_xy}\
	\curv XY Z&=& \dera X {\ldera Y}Z - \dera Y {\ldera X}Z - \ldera {[X,Y]}Z \\
	&=&-\curv YX Z
\end{array}
\end{equation*}

\eqref{thm_curv_symm_zt}
It is enough to prove that, for any $V\in\fivect M$,
\begin{equation}
	\curv XY(V,V)=0.
\end{equation}
Definition \ref{def_cov_der_covect} and Theorem \ref{thm_l_cov_der_props}, property \eqref{thm_l_cov_der_props_flinearZ} implies that
\begin{equation}
	\derc X {\lderb YV} V  = \dsfrac 1 2 XY\metric{V,V} - \annihprod{\lderb X V,\lderb YV}.
\end{equation}
Also Theorem \ref{thm_l_cov_der_props}, property \eqref{thm_l_cov_der_props_flinearZ}, saids
\begin{equation*}
\lderc{[X,Y]}VV=\frac 1 2[X,Y]\metric{V,V}
\end{equation*}
Hence,
\begin{equation*}
\begin{array}{lll}
	\curv XY(V,V) 
	&=& \dsfrac 1 2X\left(\lderc Y V V\right) - \annihprod{\lderb X V,\lderb YV} \\
	&& - \dsfrac 1 2Y\left(\lderc X V V\right) + \annihprod{\lderb Y V,\lderb XV} \\
	&& - \frac 1 2[X,Y]\metric{V,V} = 0\\
\end{array}
\end{equation*}

\eqref{thm_curv_symm_xyz}
We define the cyclic sum for any $F:\fivect M^3\to\fiformk 1 M$ by
\begin{equation}
\begin{array}{l}
	\sum_{\cyclic}F(X,Y,Z):=F(X,Y,Z)+F(Y,Z,X)+F(Z,X,Y).
\end{array}
\end{equation}
Since it doesn't change at cyclic permutations of $X,Y,Z$, from the properties of the lower covariant derivative and from Jacobi's identity,
\begin{equation*}
\begin{array}{lll}
\sum_{\cyclic}\curv XY Z
&=& \sum_{\cyclic}\dera X {\ldera Y}Z - \sum_{\cyclic}\dera Y {\ldera X}Z - \sum_{\cyclic}\ldera {[X,Y]}Z\\
&=& \sum_{\cyclic}\dera X {\ldera Y}Z - \sum_{\cyclic}\dera X {\ldera Z}Y - \sum_{\cyclic}\ldera {[X,Y]}Z\\
&=& \sum_{\cyclic}\dera X \left({\ldera Y}Z - {\ldera Z}Y\right) - \sum_{\cyclic}\ldera {[X,Y]}Z\\
&=& \sum_{\cyclic}\dera X [Y,Z]^\flat - \sum_{\cyclic}\ldera {[X,Y]}Z\\
&=& \sum_{\cyclic}\dera X^\flat [Y,Z] - \sum_{\cyclic}\ldera {[Y,Z]}X\\
&=& \sum_{\cyclic}[X,[Y,Z]]^\flat = 0.\\
\end{array}
\end{equation*}

\eqref{thm_curv_symm_xy_zt} By applying \eqref{thm_curv_symm_xyz} four times (just like in the proof of the properties of the curvature for {\nondeg} metric),
\begin{equation*}
\begin{array}{lllllll}
	\curv XY(Z,T) &+& \curv YZ(X,T) &+& \curv ZX(Y,T) &=& 0 \\
	\curv YZ(T,X) &+& \curv ZT(Y,X) &+& \curv TY(Z,X) &=& 0 \\
	\curv ZT(X,Y) &+& \curv TX(Z,Y) &+& \curv XZ(T,Y) &=& 0 \\
	\curv TX(Y,Z) &+& \curv XY(T,Z) &+& \curv YT(X,Z) &=& 0. \\
\end{array}
\end{equation*}
We sum, then divide by $2$, and get:
\begin{equation*}
\curv XY(Z,T) = \curv ZT(X,Y).
\end{equation*}
\end{proof}

\begin{corollary}[see \citep{Kup87b}{270}]
\label{thm_curvature_tensor_radical}
Let $(M,g)$ be a {\rstationary} manifold. Then, for any $X,Y,Z\in \fivect{M}$ and $W\in\fivectnull{M}$,
\begin{equation}
	R(W,X,Y,Z) = R(X,W,Y,Z) = R(X,Y,W,Z) = R(X,Y,Z,W) = 0.
\end{equation}
\end{corollary}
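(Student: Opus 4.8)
The plan is to isolate the one slot in which the vanishing is most transparent --- the third --- prove it by direct expansion of the curvature operator, and then propagate the conclusion to the other three slots using the symmetries of Proposition \ref{thm_curv_symm}. The key preliminary observation is that a radical vector field is annihilated by the lower covariant derivative: if $W\in\fivectnull M$, then for every $X,Z\in\fivect M$ we have $(\lderb X W)(Z)=\kosz(X,W,Z)=0$ by Corollary \ref{thm_Koszul_null_props_rad_stat}, so $\lderb X W=0$ as a $1$-form. The same corollary also gives $\kosz(X,Y,W)=0$, i.e. the {\koszulname} annihilates $W$ in its last slot.

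With this, the third slot is immediate. Expanding the curvature operator as in Definition \ref{def_riemann_curvature_operator},
\begin{equation*}
\curv XY W = \dera X {\ldera Y}W - \dera Y {\ldera X}W - \ldera {[X,Y]}W,
\end{equation*}
each of the three lower covariant derivatives $\ldera Y W=\lderb Y W$, $\ldera X W=\lderb X W$ and $\ldera {[X,Y]}W=\lderb {[X,Y]}W$ vanishes by the preliminary observation, and since $\der_X$ is $\R$-linear (Theorem \ref{thm_cov_der_covect_props}, property \eqref{thm_cov_der_covect_props_linear}) it carries the zero form to zero. Hence $\curv XY W=0$ identically, which is exactly $R(X,Y,W,Z)=0$ for all $X,Y,Z\in\fivect M$.

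From here I would invoke the symmetries of $R$. The antisymmetry in the last pair (Proposition \ref{thm_curv_symm}, property \eqref{thm_curv_symm_zt}) converts the third-slot identity into the fourth-slot one, $R(X,Y,Z,W)=-R(X,Y,W,Z)=0$. For the first two slots I would use the pair-exchange symmetry (property \eqref{thm_curv_symm_xy_zt}): $R(W,X,Y,Z)=R(Y,Z,W,X)$ places $W$ back into the third slot, and $R(X,W,Y,Z)=R(Y,Z,X,W)$ places it into the fourth, so both vanish by what was just shown. (Equivalently, the first slot follows from the second by property \eqref{thm_curv_symm_xy}.)

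The one point needing care is the hypothesis mismatch: Proposition \ref{thm_curv_symm} is stated for {\semireg} manifolds, whereas the corollary assumes only {\rstationary}. I would observe that the proofs of the symmetry identities use nothing beyond radical-stationarity --- they require only that the lower covariant derivatives $\lderb X Y$ be {\rannih}, so that the pairings $\annihprod{\lderb X Y,\lderb Z T}$ and the covariant derivatives $\der_X(\lderb Y Z)$ are well defined, which is precisely Remark \ref{rem_rad_stat_lower_der}, together with the torsionless property and the commutation $\derb X{Y^\flat}=\lderb XY$. Thus the symmetries remain valid under the weaker hypothesis. The alternative of proving all four slots directly is less clean, and this is where the real obstacle would lie: the first two slots cannot be dispatched as easily as the third, because $\lderb W Z$ need not vanish --- indeed $(\lderb W Z)(V)=\kosz(W,Z,V)=\metric{[W,Z],V}$ in general --- so one must carry the nonvanishing $\lder_W$ terms through Definition \ref{def_cov_der_covect}, which is exactly why routing the argument through the pair symmetry is preferable.
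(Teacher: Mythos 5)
Your proof is correct and follows the same strategy as the paper's: annihilate one slot by direct expansion of the curvature operator using radical-stationarity, then propagate to the remaining three slots via the symmetries of Proposition \ref{thm_curv_symm} (whose validity under the weaker hypothesis you rightly pause to check). The only difference is the entry point --- you compute the third slot, where $\lderb X W=0$ makes all three terms of $\curv XY W$ vanish identically, whereas the paper evaluates $(\curv XY Z)(W)$ in the fourth slot using that each summand is a radical-annihilator $1$-form --- and your variant has the minor advantage of not needing to invoke Remark \ref{rem_semi_regular_semi_riemannian}, whose stated hypothesis is semi-regularity rather than radical-stationarity.
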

\begin{proof}
Since, for any $X,Y,Z\in\fivect{M}$, $\dera X {\ldera Y}Z \in \annihforms M$ (Remark \ref{rem_semi_regular_semi_riemannian}), and $\lderb X Y\in\annihforms M$ (Remark \ref{rem_rad_stat_lower_der}), we get $R(X,Y,Z,W)=0$. The other identities follow from the symmetry properties \eqref{thm_curv_symm_xy} and \eqref{thm_curv_symm_xy_zt} from Theorem \ref{thm_curv_symm}.
\end{proof}

%~~~~~~~~~~~~~~~~~~~~~~~~~~~~~~~~~~~~~~~~~~~~~~~~~~~~~~~~~~~~~~~~~~~~~~~%
\subsection{Ricci curvature tensor and scalar curvature}
\label{s_ricci_tensor_scalar}

\begin{definition}
\label{def_ricci_curvature_tensor}
Let $(M,g)$ be a {\rstationary} {\semiriem} manifold. Then,
\begin{equation}
	\ric(X,Y):=R(X,\cocontr,Y,\cocontr),
\end{equation}
for any $X,Y\in\fivect{M}$, is called the {\em Ricci curvature tensor}. If the {\metricname} has constant signature, the Ricci tensor is smooth.
\end{definition}

\begin{proposition}
The Ricci curvature tensor on a {\rstationary} {\semiriem} manifold with constant signature is symmetric:
\begin{equation}
	\ric(X,Y)=\ric(Y,X)
\end{equation}
for any $X,Y\in\fivect{M}$.
\end{proposition}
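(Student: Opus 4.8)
The plan is to reduce the symmetry of $\ric$ to the pair-exchange symmetry of the Riemann tensor recorded in Proposition \ref{thm_curv_symm}. First I would unfold the definition $\ric(X,Y)=R(X,\cocontr,Y,\cocontr)$ as the metric covariant contraction of $R$ in its second and fourth slots. Because the signature is constant, Theorem \ref{thm_contraction_orthogonal} lets me realize this contraction through a local orthogonal frame $(E_a)_{a=1}^n$ adapted to the radical (so that $E_1,\ldots,E_{n-\rank g}$ span $\fivectnull M$), giving
\begin{equation*}
\ric(X,Y)=\sum_{a=n-\rank g+1}^n\dsfrac{1}{\metric{E_a,E_a}}R(X,E_a,Y,E_a).
\end{equation*}

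Next I would apply the block symmetry \eqref{thm_curv_symm_xy_zt}, namely $R(X,E_a,Y,E_a)=R(Y,E_a,X,E_a)$, term by term inside the sum. Since the scalar coefficients $1/\metric{E_a,E_a}$ are unaffected, this immediately yields
\begin{equation*}
\ric(X,Y)=\sum_{a=n-\rank g+1}^n\dsfrac{1}{\metric{E_a,E_a}}R(Y,E_a,X,E_a)=\ric(Y,X),
\end{equation*}
where the last equality is just the contraction formula read backwards with the roles of $X$ and $Y$ exchanged. Thus the whole argument is a one-line consequence of the pair symmetry of $R$.

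The only point that needs care is legitimacy of the contraction, i.e. that $R$ is {\rannih} in the two slots being contracted; this is exactly Corollary \ref{thm_curvature_tensor_radical}, and constant signature is what guarantees the resulting tensor is smooth (the number of radical directions does not jump, so the weights $1/\metric{E_a,E_a}$ stay finite). I do not expect a genuine obstacle here: once Corollary \ref{thm_curvature_tensor_radical} justifies the contraction, symmetry follows from the block symmetry of $R$ alone. If one prefers to avoid choosing an orthogonal frame, the same conclusion follows from the index-free form $\ric(X,Y)=\annihg^{ab}R(X,E_a,Y,E_b)$ by relabelling the dummy indices $a\leftrightarrow b$ and invoking $\annihg^{ab}=\annihg^{ba}$ together with \eqref{thm_curv_symm_xy_zt}, so no special basis is actually required.
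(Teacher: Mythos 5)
Your proof is correct and follows essentially the same route as the paper: both reduce the symmetry of $\ric$ to the pair-exchange symmetry $R(X,Y,Z,T)=R(Z,T,X,Y)$ of Proposition \ref{thm_curv_symm}, the paper stating this in one line and you merely spelling out the contraction explicitly via Theorem \ref{thm_contraction_orthogonal}. Your added remark that Corollary \ref{thm_curvature_tensor_radical} legitimizes the contraction and that constant signature keeps the weights finite is a sound (and slightly more careful) elaboration of what the paper leaves implicit.
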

\begin{proof}
From Proposition \ref{thm_curv_symm}, for any $X,Y,Z,T\in\fivect{M}$, $R(X,Y,Z,T)=R(Z,T,X,Y)$. Hence, $\ric(X,Y)=\ric(Y,X)$ (like in the {\nondeg} case \cfeg{ONe83}{87}).
\end{proof}

\begin{definition}
\label{def_scalar_curvature}
Let $(M,g)$ be a {\rstationary} {\semiriem} manifold. Then,
\begin{equation}
	s:=\ric(\cocontr,\cocontr)
\end{equation}
is called the {\em scalar curvature}. It is smooth if the {\metricname} has constant signature.
\end{definition}

%~~~~~~~~~~~~~~~~~~~~~~~~~~~~~~~~~~~~~~~~~~~~~~~~~~~~~~~~~~~~~~~~~~~~~~~%
\section{Relation with Kupeli's curvature function}
\label{s_riemann_curvature_ii}

Subsection \sref{s_riemann_curvature_koszul_formula} contains a useful formula for the Riemann curvature, in terms of the {\koszulname}.

Subsection \sref{s_koszul_deriv_curv_funct} contains a comparison of our Riemann curvature, and Kupeli's curvature function associated to the (non-unique) Koszul derivative $\der$ \cite{Kup87b}. We show that his curvature coincides to our Riemann curvature tensor, introduced in an invariant way in \sref{s_riemann_curvature}.

%~~~~~~~~~~~~~~~~~~~~~~~~~~~~~~~~~~~~~~~~~~~~~~~~~~~~~~~~~~~~~~~~~~~~~~~%
\subsection{Riemann curvature in terms of the {\koszulname}}
\label{s_riemann_curvature_koszul_formula}

\begin{proposition}
\label{thm_riemann_curvature_tensor_koszul_formula}
Let $(M,g)$ be a {\semireg} {\semiriem} manifold. For any vector fields $X,Y,Z,T\in\fivect{M}$,
\begin{equation}
\begin{array}{lll}
	R(X,Y,Z,T) &=& X\left(\lderc Y Z T\right) - Y\left(\lderc X Z T\right) - \lderc {[X,Y]}ZT \\
&& + \annihprod{\lderb XZ,\lderb Y T} - \annihprod{\lderb YZ,\lderb X T}. \\
\end{array}
\end{equation}
Equivalently,
\begin{equation}
\label{eq_riemann_curvature_tensor_koszul_formula}
\begin{array}{lll}
	R(X,Y,Z,T)&=& X \kosz(Y,Z,T) - Y \kosz(X,Z,T) - \kosz([X,Y],Z,T)\\
	&& + \kosz(X,Z,\cocontr)\kosz(Y,T,\cocontr) - \kosz(Y,Z,\cocontr)\kosz(X,T,\cocontr).
\end{array}
\end{equation}
\end{proposition}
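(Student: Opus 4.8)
The plan is to reduce the statement to a direct substitution, since once the definitions are unfolded no genuine computation remains. Starting from Definition~\ref{def_riemann_curvature_operator} together with \eqref{eq_riemann_curvature}, I would first write
\[
R(X,Y,Z,T) = \derc X {\lderb Y Z} T - \derc Y {\lderb X Z} T - \lderc {[X,Y]} Z T .
\]
The only thing to verify before proceeding is that the two covariant derivatives on the right are actually defined: Definition~\ref{def_cov_der_covect} applies to a $1$-form only when it is {\rannih}. Here the relevant $1$-forms are $\lderb Y Z$ and $\lderb X Z$, and since a {\semireg} manifold is in particular {\rstationary} (the remark following Definition~\ref{def_semi_regular_semi_riemannian}), Remark~\ref{rem_rad_stat_lower_der} guarantees $\lderb Y Z,\lderb X Z\in\annihforms M$. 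Thus the expansion above is legitimate.

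Next I would apply Definition~\ref{def_cov_der_covect} to each of the first two terms, taking $\omega=\lderb Y Z$ and then $\omega=\lderb X Z$. This gives
\[
\derc X {\lderb Y Z} T = X\!\left(\lderc Y Z T\right) - \annihprod{\lderb X T,\lderb Y Z},
\]
and the analogous identity for the second term, while the third term needs no rewriting. Substituting and collecting signs produces the formula of the proposition up to the ordering of the two inner-product arguments; to match the stated form exactly I would invoke the symmetry of $\annihg$ (Definition~\ref{def_co_inner_product}), rewriting $\annihprod{\lderb X T,\lderb Y Z}$ as $\annihprod{\lderb Y Z,\lderb X T}$ and similarly for the other term. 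This establishes the first displayed equation.

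Finally, the ``equivalently'' version \eqref{eq_riemann_curvature_tensor_koszul_formula} is a pure translation of notation. By Definition~\ref{def_l_cov_der} one has $\lderc X Y Z=\kosz(X,Y,Z)$, and the pairing of two {\rannih} $1$-forms under $\annihg$ is precisely the metric covariant contraction of \sref{s_tensors_contraction_sign_const}, so $\annihprod{\lderb X Z,\lderb Y T}=\kosz(X,Z,\cocontr)\kosz(Y,T,\cocontr)$. Rewriting every term accordingly yields \eqref{eq_riemann_curvature_tensor_koszul_formula}.

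I do not expect a substantive obstacle: the result is essentially the standard expansion of the curvature in terms of the connection, transcribed into the degenerate setting. The only delicate points are bookkeeping ones — confirming that Definition~\ref{def_cov_der_covect} is applicable (which rests on the {\rstationary} condition, hence on {\semiregularity}), and using the symmetry of $\annihg$ to align the inner-product terms with the exact sign and argument order in the statement.
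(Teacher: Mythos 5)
Your proposal is correct and follows essentially the same route as the paper: expand $\derc X{\lderb YZ}T$ via Definition \ref{def_cov_der_covect}, substitute into the definition of $\curv XY$, and then translate into the {\koszulname} notation via Definition \ref{def_l_cov_der}. Your extra remarks — that the {\rstationary} property (implied by {\semireg}ity) is what licenses applying Definition \ref{def_cov_der_covect} to $\lderb YZ$, and that the symmetry of $\annihg$ aligns the argument order — are correct bookkeeping points that the paper leaves implicit.
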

\begin{proof}
From Definition \ref{def_cov_der_covect} we obtain
\begin{equation}
	\derc X {\lderb YZ} T  = X\left(\lderc Y Z T\right) - \annihprod{\lderb X T,\lderb YZ},
\end{equation}
hence, for any vector fields $X,Y,Z,T\in\fivect{M}$
\begin{equation}
\begin{array}{lll}
	R(X,Y,Z,T) &=& \derc X {{\ldera Y}Z}T - \derc Y {{\ldera X}Z}T -
 \lderc {[X,Y]}ZT \\
&=& X\left(\lderc Y Z T\right) - Y\left(\lderc X Z T\right) - \lderc {[X,Y]}ZT \\
&& + \annihprod{\lderb XZ,\lderb Y T} - \annihprod{\lderb YZ,\lderb X T}. \\
\end{array}
\end{equation}
The formula \eqref{eq_riemann_curvature_tensor_koszul_formula} follows from Definition \ref{def_l_cov_der}.
\end{proof}

\begin{remark}
In a coordinate basis, the components of Riemann's curvature tensor are
\begin{equation}
\label{eq_riemann_curvature_tensor_coord}
	R_{abcd}= \partial_a \kosz_{bcd} - \partial_b \kosz_{acd} + \annihg^{st}(\kosz_{acs}\kosz_{bdt} - \kosz_{bcs}\kosz_{adt}).
\end{equation}\end{remark}
\begin{proof}
\begin{equation}
\begin{array}{lll}
	R(\partial_a,\partial_b,\partial_c,\partial_d)
	&=& \partial_a \kosz(\partial_b,\partial_c,\partial_d) - \partial_b \kosz(\partial_a,\partial_c,\partial_d) - \kosz([\partial_a,\partial_b],\partial_c,\partial_d)\\
	&& + \kosz(\partial_a,\partial_c,\cocontr)\kosz(\partial_b,\partial_d,\cocontr) - \kosz(\partial_b,\partial_c,\cocontr)\kosz(\partial_a,\partial_d,\cocontr)\\
	&=& \partial_a \kosz_{bcd} - \partial_b \kosz_{acd} + \annihg^{st}(\kosz_{acs}\kosz_{bdt} - \kosz_{bcs}\kosz_{adt})
\end{array}
\end{equation}
\end{proof}

%~~~~~~~~~~~~~~~~~~~~~~~~~~~~~~~~~~~~~~~~~~~~~~~~~~~~~~~~~~~~~~~~~~~~~~~%
\subsection{Relation with Kupeli's curvature function}
\label{s_koszul_deriv_curv_funct}

Demir Kupeli  showed that for a {\rstationary} {\semiriem} manifold with constant signature $(M,g)$, there is always a {\em Koszul derivative} $\der$\cite{Kup87b}. From its curvature function $R_\der$ one can construct a tensor field $\metric{R_\der(\_,\_)\_,\_}$. We will see that, for a {\rstationary} {\semiriem} manifold, $\metric{R_\der(\_,\_)\_,\_}$ coincides to the Riemann curvature tensor from Definition \ref{def_riemann_curvature}.

\begin{definition}[Koszul derivative, {\cf} Kupeli \citep{Kup87b}{261}]
\label{def_Koszul_derivative}
Let $(M,g)$ be a {\rstationary} {\semiriem} manifold with constant signature.
A {\em Koszul derivative} on $(M,g)$ is an operator $\der:\fivect M\times\fivect M\to \fivect M$ which satisfies
\begin{equation}
\label{eq_Koszul_formula}
\begin{array}{llll}
	\metric{\der_X Y,Z} &=& \kosz(X,Y,Z).
\end{array}
\end{equation}
\end{definition}

\begin{remark}[{\cf} Kupeli \citep{Kup87b}{262}]
The Koszul derivative corresponds, for the {\nondeg} case, to the Levi-Civita connection. If $g$ is degenerate, the Koszul derivative is not unique.
\end{remark}

\begin{definition}[Curvature function, {\cf} Kupeli \citep{Kup87b}{266}]
\label{def_curvature_function}
Let $\nabla$ be a Koszul derivative on a {\rstationary} {\semiriem} manifold $(M,g)$ with constant signature.
Then, the map $R_\der : \fivect M\times\fivect M\times\fivect M\to \fivect M$, defined by
\begin{equation}
\label{eq_curvature_function}
R_\der(X,Y)Z:=\der_X\der_Y Z - \der_Y \der_X Z - \der_{[X,Y]}Z
\end{equation}
is called the {\em curvature function} of $\der$.
\end{definition}

\begin{remark}
In \citep{Kup87b}{266-268}, it is shown that $\metric{R_\der(\_,\_)\_,\_}\in\tensors 0 4 {M}$ and it has the same symmetry properties as the Riemann curvature tensor of a Levi-Civita connection.
\end{remark}

\begin{theorem}
Let $(M,g)$ be a {\rstationary} {\semiriem} manifold with constant signature, and $\der$ a Koszul derivative on $M$. The Riemann curvature tensor is related to the curvature function, for any $X,Y,Z,T\in\fivect{M}$, by
\begin{equation}
	\metric{R_\der(X,Y)Z,T} = R(X,Y,Z,T).
\end{equation}
\end{theorem}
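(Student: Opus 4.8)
The plan is to compute $\metric{R_\der(X,Y)Z,T}$ directly from the definition of the curvature function and reduce it to the Koszul-form expression for $R$ already established in Proposition \ref{thm_riemann_curvature_tensor_koszul_formula}. First I would expand, using \eqref{eq_curvature_function} and the linearity of $g$, into $\metric{\der_X\der_Y Z,T} - \metric{\der_Y\der_X Z,T} - \metric{\der_{[X,Y]}Z,T}$. The last term is immediately $\kosz([X,Y],Z,T)$ by the defining property \eqref{eq_Koszul_formula} of the Koszul derivative, so the real work lies in the two second-order terms.

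The key tool is that every Koszul derivative is metric: combining $\metric{\der_X Y,Z}=\kosz(X,Y,Z)$ with the ``metric'' property of the Koszul form (Theorem \ref{thm_Koszul_form_props}, property \eqref{thm_Koszul_form_props_commutYZ}) and the symmetry of $g$ gives $X\metric{Y,Z}=\metric{\der_X Y,Z}+\metric{Y,\der_X Z}$. Writing $W=\der_Y Z$ and applying this, together with $\metric{W,T}=\kosz(Y,Z,T)$, yields
\begin{equation*}
	\metric{\der_X\der_Y Z,T} = X\big(\kosz(Y,Z,T)\big) - \metric{\der_Y Z,\der_X T},
\end{equation*}
and symmetrically $\metric{\der_Y\der_X Z,T} = Y\big(\kosz(X,Z,T)\big) - \metric{\der_X Z,\der_Y T}$. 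It then remains to evaluate the inner products $\metric{\der_Y Z,\der_X T}$ of pairs of vector fields. Here I would invoke Corollary \ref{thm_contracted_metric_w_metric}, $\metric{\der_Y Z,\der_X T}=\metric{\der_Y Z,\cocontr}\metric{\der_X T,\cocontr}$, and note that $\metric{\der_Y Z,\cocontr}=\kosz(Y,Z,\cocontr)$ is a radical-annihilator $1$-form (since $M$ is {\rstationary}). This gives $\metric{\der_Y Z,\der_X T}=\kosz(Y,Z,\cocontr)\kosz(X,T,\cocontr)$, and substituting all the pieces reproduces exactly the right-hand side of \eqref{eq_riemann_curvature_tensor_koszul_formula}, hence equals $R(X,Y,Z,T)$.

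The main obstacle is this last step, because a Koszul derivative is not unique: $\der_Y Z$ is determined only up to a radical-valued field, so one must check that the inner products are insensitive to this ambiguity. This is guaranteed precisely because $\metric{\der_Y Z,\cdot}=\kosz(Y,Z,\cdot)$ depends only on the Koszul form and lands in $\annihforms M$, so the covariant contraction of Corollary \ref{thm_contracted_metric_w_metric} is well-defined on it and yields the same value for every representative. As a byproduct, this argument also shows that $\metric{R_\der(X,Y)Z,T}$ is independent of the chosen Koszul derivative, consistent with its coinciding with the invariantly-defined Riemann tensor $R$.
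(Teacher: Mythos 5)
Your proposal is correct and follows essentially the same route as the paper: expand the curvature function, use the metric property $X\metric{Y,Z}=\metric{\der_X Y,Z}+\metric{Y,\der_X Z}$ (a consequence of the Koszul formula together with Theorem \ref{thm_Koszul_form_props}\eqref{thm_Koszul_form_props_commutYZ}) to handle the second-order terms, convert the resulting inner products $\metric{\der_Y Z,\der_X T}$ into contractions $\kosz(Y,Z,\cocontr)\kosz(X,T,\cocontr)$ via the covariant contraction lemma, and match the result against \eqref{eq_riemann_curvature_tensor_koszul_formula}. Your closing remark on independence of the choice of Koszul derivative is a worthwhile observation that the paper leaves implicit, but it does not change the argument.
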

\begin{proof}
We apply Theorem \ref{thm_Koszul_form_props}, Definition \ref{def_curvature_function}, Lemma \ref{thm_contraction_with_metric}, and the Koszul formula for the Riemann curvature tensor \eqref{eq_riemann_curvature_tensor_koszul_formula}, and we obtain
\begin{equation*}
\begin{array}{lll}
\metric{R_\der(X,Y)Z,T} &=&\metric{\der_X \der_Y Z,T} - \metric{\der_Y \der_X Z,T} - \metric{\der_{[X,Y]}Z,T}\\
&=&X \metric{\der_YZ,T} - \metric{\der_Y Z, \der_X T} \\
	&& - Y \metric{\der_X Z,T} + \metric{\der_X Z, \der_Y T} - \metric{\der_{[X,Y]} Z,T} \\
	&=& X \kosz(Y,Z,T) - \kosz(Y,Z,\cocontr)\kosz(X,T,\cocontr) \\
	&& - Y \kosz(X,Z,T) + \kosz(X,Z,\cocontr)\kosz(Y,T,\cocontr)\\
	&& - \kosz([X,Y],Z,T) \\
&=& R(X,Y,Z,T).
\end{array}
\end{equation*}
\end{proof}

%~~~~~~~~~~~~~~~~~~~~~~~~~~~~~~~~~~~~~~~~~~~~~~~~~~~~~~~~~~~~~~~~~~~~~~~%
\section{Examples of {\semireg} {\semiriem} manifolds}
\label{s_semi_reg_semi_riem_man_example}

%~~~~~~~~~~~~~~~~~~~~~~~~~~~~~~~~~~~~~~~~~~~~~~~~~~~~~~~~~~~~~~~~~~~~~~~%
\subsection{Diagonal metric}
\label{s_semi_reg_semi_riem_man_example_diagonal}

Let $(M,g)$ be a singular {\semiriem} manifold. We assume that, around each point $p\in M$, there is a local coordinate system in which the metric is diagonal, $g=\diag{(g_{11},\ldots,g_{nn})}$. From equation \eqref{eq_Koszul_form_coord}, $2\kosz_{abc}=\partial_a g_{bc} + \partial_b g_{ca} - \partial_c g_{ab}$. Because $g$ is diagonal, remain only the possibilities $\kosz_{baa} = \kosz_{aba} = -\kosz_{aab} = \frac 1 2\partial_b g_{aa}$, for $a\neq b$, and $\kosz_{aaa} = \frac 1 2\partial_a g_{aa}$. The condition that the manifold $(M,g)$ is {\rstationary} is equivalent to the condition that, if $g_{aa}(q)=0$, $\partial_b g_{aa}(q) = \partial_a g_{bb}(q) = 0$, for any $q\in M$.

By Proposition $\ref{thm_sr_cocontr_kosz}$, $(M,g)$ is a {\semireg} manifold if and only if
\begin{equation}
\label{eq_diag_g_contr_kosz}
	\sum_{\genfrac{}{}{0pt}{}{s\in\{1,\ldots,n\}}{g_{ss}\neq 0}} \dsfrac{\partial_a g_{ss}\partial_b g_{ss}}{g_{ss}},
	\sum_{\genfrac{}{}{0pt}{}{s\in\{1,\ldots,n\}}{g_{ss}\neq 0}} \dsfrac{\partial_s g_{aa}\partial_s g_{bb}}{g_{ss}},
	\sum_{\genfrac{}{}{0pt}{}{s\in\{1,\ldots,n\}}{g_{ss}\neq 0}} \dsfrac{\partial_a g_{ss}\partial_s g_{bb}}{g_{ss}}
\end{equation}
are smooth. It is easy to check that this can be ensured for example if $\sqrt{\abs{g_{aa}}}$ and the functions $u,v:M\to\R$ defined as
\begin{equation}
	u(p):=\left\{
\begin{array}{ll}
\dsfrac{\partial_b g_{aa}}{\sqrt{\abs{g_{aa}}}} & g_{aa}\neq 0 \\
0 & g_{aa}= 0 \\
\end{array}
\right.
\tn{ and }
	v(p):=\left\{
\begin{array}{ll}
\dsfrac{\partial_a g_{bb}}{\sqrt{\abs{g_{aa}}}} & g_{aa}\neq 0 \\
0 & g_{aa}= 0 \\
\end{array}
\right.
\end{equation}
are smooth for all $a,b\in\{1,\ldots,n\}$.

\movedfrom{1105.3404-body}

If the metric has the form $g=\sum_a\varepsilon_a\alpha_a^2\de x^a\otimes \de x^a$, where $\varepsilon_a\in\{-1,1\}$, then $g$ is {\semireg} if there is a function $f_{abc}\in\fiscal{M}$ with $\supp{f_{abc}}\subseteq\supp{\alpha_c}$ for any $c\in\{a,b\}\subset\{1,\ldots,n\}$, and
\begin{equation}
\label{eq_diagonal_metric:semireg}
\partial_a\alpha_b^2=f_{abc}\alpha_c.
\end{equation}
If $c=b$, from $\partial_a\alpha_b^2=2\alpha_b\partial_a\alpha_b$ follows that the function is  $f_{abb}=2\partial_a\alpha_b$. This has to satisfy, in addition, the condition $\partial_a\alpha_b=0$ whenever $\alpha_b=0$. The condition $\supp{f_{abc}}\subseteq\supp{\alpha_c}$ is required because in order to be {\semireg}, $(M,g)$ has to be {\rstationary}.

%~~~~~~~~~~~~~~~~~~~~~~~~~~~~~~~~~~~~~~~~~~~~~~~~~~~~~~~~~~~~~~~~~~~~~~~%
\subsection{Isotropic singularities}
\label{s_semi_reg_semi_riem_man_example_conformal}

\begin{definition}
Let $(M,g)$ be a singular {\semiriem} manifold. If there is a {\nondeg} {\semiriem} metric $\tilde g$ on $M$ and a smooth function $\Omega\in\fiscal M$, $\Omega\geq 0$, so that $g(X,Y)=\Omega^2\tilde g(X,Y)$ for any $X,Y\in\fivect M$, $(M,g)$ is said to be {\em conformally {\nondeg}}, and is alternatively denoted by $(M,\tilde g, \Omega)$.
\end{definition}

\begin{proposition}
\label{thm_conformal_koszul_form}[Generalizing a proposition \cfeg{HE95}{42} to the degenerate case]
Let $(M,\tilde g, \Omega)$ be a conformally {\nondeg} singular {\semiriem} manifold with the {\koszulname} of $g=\Omega^2\tilde g$ denoted by $\kosz$, and that of $\tilde g$ by $\tilde \kosz$. Then,
\begin{equation}
\label{eq_conformal_koszul_form}
\kosz(X,Y,Z) = \Omega^2\tilde \kosz(X,Y,Z) + \Omega\left[\tilde g(Y,Z)X + \tilde g(X,Z)Y - \tilde g(X,Y)Z\right](\Omega)
\end{equation}
\end{proposition}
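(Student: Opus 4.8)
The plan is to expand the {\koszulname} $\kosz$ of $g=\Omega^2\tilde g$ directly from its Definition \ref{def_Koszul_form}, separating the terms that reconstruct $\tilde\kosz$ from those that produce the gradient correction. Writing $\metric{\cdot,\cdot}=g(\cdot,\cdot)=\Omega^2\tilde g(\cdot,\cdot)$, I would substitute into
\[
2\kosz(X,Y,Z) = X\metric{Y,Z} + Y\metric{Z,X} - Z\metric{X,Y} - \metric{X,[Y,Z]} + \metric{Y,[Z,X]} + \metric{Z,[X,Y]}.
\]
The first observation is that the three Lie-bracket terms are literally $\Omega^2$ times the corresponding terms built from $\tilde g$, since $\metric{X,[Y,Z]}=\Omega^2\tilde g(X,[Y,Z])$ and similarly for the others; no derivative of $\Omega$ enters there.

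The key step is to apply the Leibniz rule for the action of a vector field on a product of scalars to the first three terms, for instance $X\metric{Y,Z}=X\bigl(\Omega^2\tilde g(Y,Z)\bigr)=X(\Omega^2)\,\tilde g(Y,Z)+\Omega^2\,X(\tilde g(Y,Z))$, and likewise for the $Y$- and $Z$-terms. The pieces of the form $\Omega^2 X(\tilde g(Y,Z))$, together with the bracket terms noted above, assemble exactly into $2\Omega^2\tilde\kosz(X,Y,Z)$ by the definition of $\tilde\kosz$. What remains is
\[
X(\Omega^2)\,\tilde g(Y,Z) + Y(\Omega^2)\,\tilde g(Z,X) - Z(\Omega^2)\,\tilde g(X,Y).
\]

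Finally I would apply the chain rule $X(\Omega^2)=2\Omega\,X(\Omega)$ to each remaining piece, factor out $2\Omega$, and use the symmetry $\tilde g(Z,X)=\tilde g(X,Z)$ to recognize the vector field $\bigl[\tilde g(Y,Z)X+\tilde g(X,Z)Y-\tilde g(X,Y)Z\bigr]$ applied to $\Omega$. Dividing the whole identity by $2$ then yields the claimed formula \eqref{eq_conformal_koszul_form}. There is no genuine obstacle beyond careful bookkeeping: the only points demanding attention are keeping the derivation rule straight when each of $X,Y,Z$ differentiates the scalar $\Omega^2$, and correctly matching the three resulting gradient terms to the symmetric bracket in the statement. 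Crucially, the computation never invokes non-degeneracy of $\tilde g$, nor does it use any inverse metric, so the argument applies verbatim when $g$ is degenerate — which is exactly why it extends the classical conformal-change identity \cfeg{HE95}{42} to the singular setting.
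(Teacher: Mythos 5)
Your proposal is correct and follows essentially the same route as the paper's proof: expand the Koszul formula for $g=\Omega^2\tilde g$ via the Leibniz rule, observe that the Lie-bracket terms carry only a factor $\Omega^2$, reassemble $\Omega^2\tilde\kosz$, and convert the leftover $X(\Omega^2)$, $Y(\Omega^2)$, $Z(\Omega^2)$ terms using $X(\Omega^2)=2\Omega\,X(\Omega)$. Nothing is missing.
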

\begin{proof}
Follows from the Koszul formula,
\begin{equation*}
\begin{array}{llll}
	\kosz(X,Y,Z) &=&\ds{\frac 1 2} \{ \Omega^2X (\tilde g(Y,Z)) + \tilde g(Y,Z)X(\Omega^2) + \Omega^2Y (\tilde g(X,Z)) \\
	&& + \tilde g(X,Z)Y(\Omega^2) - \Omega^2Z (\tilde g(X,Y)) - \tilde g(X,Y)Z(\Omega^2) \\
	&&\ - \Omega^2\tilde g(X,[Y,Z]) + \Omega^2\tilde g(Y, [Z,X]) + \Omega^2\tilde g(Z, [X,Y])\} \\
	&=& \Omega^2 \tilde \kosz(X,Y,Z) + \ds{\frac 1 2} \{ \tilde g(Y,Z)X(\Omega^2) \\
	&& + \tilde g(X,Z)Y(\Omega^2) - \tilde g(X,Y)Z(\Omega^2)\} \\
	&=& \Omega^2\tilde \kosz(X,Y,Z) + \Omega\big[\tilde g(Y,Z)X \\
	&& + \tilde g(X,Z)Y - \tilde g(X,Y)Z\big](\Omega)
\end{array}
\end{equation*}
\end{proof}

\begin{theorem}
\label{thm_conformal_semi_regular}
Let $(M,\tilde g, \Omega)$ be a conformally {\nondeg} singular {\semiriem} manifold. Then, $(M,g=\Omega^2\tilde g)$ is a {\semireg} {\semiriem} manifold.
\end{theorem}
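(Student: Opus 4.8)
The plan is to verify the two conditions that, by Proposition~\ref{thm_sr_cocontr_kosz}, characterize semi-regularity among {\rstationary} manifolds: first, that $(M,g)$ is itself {\rstationary}, and second, that the covariant contraction $\kosz(X,Y,\cocontr)\kosz(Z,T,\cocontr)=\annihprod{\lderb X Y,\lderb Z T}$ is a smooth function for all $X,Y,Z,T\in\fivect M$. Throughout I would work from the conformal expression \eqref{eq_conformal_koszul_form} for the {\koszulname} established in Proposition~\ref{thm_conformal_koszul_form}, exploiting that $\tilde g$ is {\nondeg} everywhere, so its inverse is a smooth $(2,0)$-tensor on all of $M$; the only place the construction can misbehave is the degeneracy locus $\{\Omega=0\}$, where the cometric $\annihg$ develops an $\Omega^{-2}$ blow-up.

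To establish the {\rstationary} property, let $W\in\fivectnull M$. Where $\Omega\neq 0$ the metric $g=\Omega^2\tilde g$ is {\nondeg}, so $\radix{(T_pM)}=\{0\}$ and hence $W_p=0$ there. Where $\Omega=0$, I would read off from \eqref{eq_conformal_koszul_form} that every term of $\kosz(X,Y,W)$ carries an explicit factor of $\Omega$ --- the leading term $\Omega^2\tilde\kosz$ and the bracketed derivative term $\Omega[\ldots](\Omega)$ --- so $\kosz(X,Y,W)$ vanishes on $\{\Omega=0\}$ as well. Thus $\kosz(X,Y,\_)\in\annihforms M$ for all $X,Y$, which is exactly Definition~\ref{def_radical_stationary_manifold}.

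The substantive step is the smoothness of $\annihprod{\lderb X Y,\lderb Z T}$. On $\{\Omega\neq0\}$ the cometric is $\annihg=\Omega^{-2}\tilde g^{-1}$, and substituting \eqref{eq_conformal_koszul_form} writes each factor $\kosz(X,Y,\_)$ as $\Omega^2\tilde\kosz(X,Y,\_)$ plus $\Omega$ times a term $A$ assembled from $\tilde g(X,Y)$ and the first derivatives $X(\Omega),Y(\Omega),\de\Omega$. Expanding the pairing, the overall $\Omega^{-2}$ is absorbed: the product of the two $\Omega^2$-terms leaves an $\Omega^2$-factor, the cross terms an $\Omega$-factor, and the product of the two derivative-terms a factor $\Omega^0$. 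Every coefficient is a contraction of smooth tensors against the smooth inverse $\tilde g^{-1}$, so the whole expression is smooth on all of $M$; hence the contraction agrees, on $\{\Omega\neq0\}$, with a globally smooth function.

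It remains to reconcile this with the locus $\{\Omega=0\}$, and this is where the hypothesis $\Omega\ge 0$ does the essential work --- and is, I expect, the only genuinely delicate point. Since $\Omega$ is smooth and nonnegative, each of its zeros is a global minimum, so $\de\Omega=0$ there; consequently the surviving $\Omega^0$-term, built entirely from $X(\Omega),Y(\Omega),\de\Omega$ and $\tilde g(X,Y)$, vanishes on $\{\Omega=0\}$, matching the value $0$ dictated by the {\rstationary} property (where $\annihprod{\cdot,\cdot}$ pairs two forms that are themselves zero). Thus the globally smooth expression coincides with $\kosz(X,Y,\cocontr)\kosz(Z,T,\cocontr)$ on all of $M$, so the contraction lies in $\fiscal M$, and Proposition~\ref{thm_sr_cocontr_kosz} gives that $(M,g)$ is {\semireg}. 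Were $\Omega$ allowed to change sign (or merely smooth without the sign constraint), $\de\Omega$ could be nonzero on $\{\Omega=0\}$ and this last matching would fail; the nonnegativity is precisely what tames the $\Omega^{-2}$ blow-up of the cometric.
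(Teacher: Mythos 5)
Your proposal is correct and follows essentially the same route as the paper's proof: both use Proposition~\ref{thm_conformal_koszul_form} to factor the {\koszulname} so that the $\Omega^{2}$ it carries cancels the $\Omega^{-2}$ of the cometric (the paper does this in a $\tilde g$-orthonormal frame via Theorem~\ref{thm_contraction_orthogonal}, you do it invariantly with $\tilde g^{-1}$), and both close the argument at the degeneracy locus by observing that zeros of $\Omega\ge 0$ are minima, so $\de\Omega$ and hence the residual term vanish there. Your closing remark on why the sign condition is essential is a nice addition but does not change the substance.
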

\begin{proof}
At any point, the metric $g$ is either {\nondeg}, or $0$. Hence, $(M,g)$ is {\rstationary}.

Let $(E_a)_{a=1}^n$ be a local frame of vector fields orthonormal with respect to the {\nondeg} metric $\tilde g$, on an open $U\subseteq M$. Then, $g$ is diagonal in $(E_a)_{a=1}^n$. 

From proposition \ref{thm_conformal_koszul_form}, the {\koszulname} is of the form $\kosz(X,Y,Z) = \Omega h(X,Y,Z)$, where
\begin{equation}
	h(X,Y,Z) = \Omega \tilde \kosz(X,Y,Z) + \left[\tilde g(Y,Z)X + \tilde g(X,Z)Y - \tilde g(X,Y)Z\right](\Omega)
\end{equation}
is a smooth function of $X,Y,Z$. If $\Omega=0$, then $h(X,Y,Z)=0$, because $\Omega \tilde \kosz(X,Y,Z)=0$, and the other term is a sum of partial derivatives of $\Omega$, which vanish when $\Omega=0$, being a minimum.

From Theorem \ref{thm_contraction_orthogonal}, for any $X,Y,Z,T\in U$, on regions of constant signature,
\begin{equation}
\begin{array}{lll}
\kosz(X,Y,\cocontr)\kosz(Z,T,\cocontr)
&=& \sum_{a=r}^n \dsfrac{\kosz(X,Y,E_a)\kosz(Z,T,E_a)}{g(E_a,E_a)} \\
&=& \sum_{a=r}^n \dsfrac{\Omega^2 h(X,Y,E_a) h(Z,T,E_a)}{\Omega^2\tilde g(E_a,E_a)} \\
&=& \sum_{a=1}^n \dsfrac{h(X,Y,E_a) h(Z,T,E_a)}{\tilde g(E_a,E_a)}, \\
\end{array}
\end{equation}
where $r=n-\rank g+1$.
If $\Omega=0$, then $h(X,Y,Z)=0$, that's why the last member is independent on  $r$. Hence, $\kosz(X,Y,\cocontr)\kosz(Z,T,\cocontr)\in\fiscal M$, and Proposition \ref{thm_sr_cocontr_kosz} saids that $(M,g)$ is {\semireg}.
\end{proof}

 % Cartan equations
%----------------------------------------------------------------------------------------
\section{Cartan's structural equations for degenerate metric}
\label{s_cartan}

In {\semiriem} geometry (with {\nondeg} metric), there is an important relation between a connection and its curvature, in terms of the moving frames, captured in a compact way in Cartan's structural equations. 
Cartan's first structural equation expresses, by the means of the connection, the rotation of a moving coframe, due to the displacement in one direction.

But if the fundamental tensor becomes degenerate, we have to avoid the metric connection and its curvature operator, and the local orthonormal frames and coframes, which no longer exist.

In this section we show that we can construct in a canonical way geometric objects similar to Cartan's connection and curvature forms, based only on the metric. We obtain structure equations similar to those of Cartan, which are identical to them if the metric is {\nondeg}. Along the way to this goal, we will obtain a compact version of the Koszul formula.

In \sref{s_cartan_structure_i}, the connection forms are introduced, and from them is derived the first structural equation for {\rstationary} manifolds. In \sref{s_cartan_structure_ii}, the curvature forms are defined, and the second structural equation for {\rstationary} manifolds is obtained.

%~~~~~~~~~~~~~~~~~~~~~~~~~~~~~~~~~~~~~~~~~~~~~~~~~~~~~~~~~~~~~~~~~~~~~~~%
\subsection{The first structural equation}
\label{s_cartan_structure_i}

%~~~~~~~~~~~~~~~~~~~~~~~~~~~~~~~~~~~~~~~~~~~~~~~~~~~~~~~~~~~~~~~~~~~~~~~%
\subsubsection{The decomposition of the {\koszulname}}
\label{s_koszul_form_compact}

\begin{lemma}
\label{thm_Koszul_form_compact}
Let $(M,g)$ be a singular {\semiriem} manifold.
The {\koszulname} \eqref{eq_Koszul_form}
can be decomposed as
\begin{equation}
\label{eq_Koszul_form_compact}
	2\kosz(X,Y,Z) = (\de Y^{\flat})(X, Z) + (\lie_Y g)(X,Z).
\end{equation}
\end{lemma}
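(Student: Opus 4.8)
The plan is to compute the two terms on the right-hand side of \eqref{eq_Koszul_form_compact} separately from their invariant definitions, add them, and then read off that the sum equals the six-term expression \eqref{eq_Koszul_form} defining $2\kosz(X,Y,Z)$. No assumption that $g$ is {\nondeg} is needed anywhere, so the identity will hold on an arbitrary singular {\semiriem} manifold.

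First I would expand the exterior derivative of the $1$-form $Y^\flat$ by the intrinsic formula for a $1$-form,
\[
(\de Y^\flat)(X,Z) = X\big(Y^\flat(Z)\big) - Z\big(Y^\flat(X)\big) - Y^\flat([X,Z]),
\]
and then replace each evaluation $Y^\flat(\cdot)$ by $\metric{Y,\cdot}$ using Definition \ref{def_inner_morphism}; this yields $X\metric{Y,Z} - Z\metric{Y,X} - \metric{Y,[X,Z]}$. Next I would expand $(\lie_Y g)(X,Z)$ by the formula already recorded in Theorem \ref{thm_Koszul_form_props}, property \eqref{thm_Koszul_form_props_commutZX}, obtaining $Y\metric{X,Z} - \metric{[Y,X],Z} - \metric{X,[Y,Z]}$. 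Adding the two and simplifying with the symmetry of $g$ (so that $\metric{Y,X}=\metric{X,Y}$ and $\metric{X,Z}=\metric{Z,X}$) and the antisymmetry of the bracket (so that $[X,Z]=-[Z,X]$ and $[Y,X]=-[X,Y]$) converts the six summands precisely into $X\metric{Y,Z}+Y\metric{Z,X}-Z\metric{X,Y}-\metric{X,[Y,Z]}+\metric{Y,[Z,X]}+\metric{Z,[X,Y]}$, which is $2\kosz(X,Y,Z)$ by \eqref{eq_Koszul_form}.

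The underlying structural reason, and the organizing principle I would keep in view while doing the bookkeeping, is that \eqref{eq_Koszul_form_compact} is just the splitting of $2\kosz(\cdot,Y,\cdot)$ into its antisymmetric and symmetric parts under the interchange $X\leftrightarrow Z$: the $2$-form $\de Y^\flat$ supplies the antisymmetric part, while $\lie_Y g$ supplies the symmetric part (it is symmetric because $g$ is). Indeed, the symmetric-part identity is exactly property \eqref{thm_Koszul_form_props_commutZX} of Theorem \ref{thm_Koszul_form_props}, which is already proved. The computation presents no real obstacle beyond keeping the signs straight; the only place that invites error is the simultaneous use of bracket antisymmetry and of the symmetry of $g$ when aligning the three bracket terms, where a single sign slip would misplace a term.
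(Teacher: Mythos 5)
Your proposal is correct and follows essentially the same route as the paper: expand $(\de Y^{\flat})(X,Z)$ by the intrinsic formula for the exterior derivative of a $1$-form, expand $(\lie_Y g)(X,Z)$ from its definition, and observe that the sum reproduces the six terms of \eqref{eq_Koszul_form}. The symmetric/antisymmetric splitting you mention is a nice organizing remark but is the same computation.
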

\begin{proof}
Follows immediately from the Lie derivative of $g$, and from the exterior derivative of $Y^\flat$,
\begin{equation*}
\begin{array}{lll}
(\de Y^{\flat})(X, Z) &=& X\(Y^{\flat}(Z)\) - Z\(Y^{\flat}(X)\) - Y^{\flat}([X,Z]) \\
&=&  X\metric{Y,Z} - Z\metric{X,Y} + \metric{Y,[Z,X]}. \\
\end{array}
\end{equation*}
\end{proof}

\begin{corollary} The following property of the {\koszulname} \eqref{eq_Koszul_form} holds
\begin{equation}
(\de Y^{\flat})(X, Z) =	\kosz(X,Y,Z) - \kosz(Z,Y,X).
\end{equation}
\end{corollary}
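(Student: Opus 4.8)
The plan is to read this off directly from the decomposition of the {\koszulname} just established in Lemma \ref{thm_Koszul_form_compact}, by exploiting the contrast in symmetry type between the two terms on its right-hand side: the exterior derivative $\de Y^{\flat}$ is a $2$-form and hence antisymmetric in $(X,Z)$, whereas the Lie derivative $\lie_Y g$ inherits the symmetry of $g$ and is therefore symmetric in $(X,Z)$.

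First I would invoke Lemma \ref{thm_Koszul_form_compact} on the triple $(X,Y,Z)$ and again on the swapped triple $(Z,Y,X)$, obtaining
\begin{equation*}
2\kosz(X,Y,Z) = (\de Y^{\flat})(X,Z) + (\lie_Y g)(X,Z), \qquad 2\kosz(Z,Y,X) = (\de Y^{\flat})(Z,X) + (\lie_Y g)(Z,X).
\end{equation*}
Next I would subtract the second identity from the first. Since $\de Y^{\flat}$ is a $2$-form we have $(\de Y^{\flat})(Z,X) = -(\de Y^{\flat})(X,Z)$, so the two exterior-derivative contributions add; since $g$ is symmetric its Lie derivative satisfies $(\lie_Y g)(Z,X) = (\lie_Y g)(X,Z)$, so the two Lie-derivative contributions cancel. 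This leaves $2\left[\kosz(X,Y,Z) - \kosz(Z,Y,X)\right] = 2(\de Y^{\flat})(X,Z)$, and dividing by two yields the stated formula.

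A shorter alternative, which I would mention as the slicker route, is to combine Lemma \ref{thm_Koszul_form_compact} with property \eqref{thm_Koszul_form_props_commutZX} of Theorem \ref{thm_Koszul_form_props}: the latter identifies $(\lie_Y g)(Z,X)$ with $\kosz(X,Y,Z) + \kosz(Z,Y,X)$, and substituting this into the decomposition and solving for $(\de Y^{\flat})(X,Z)$ delivers the conclusion in a single step. There is no genuine obstacle in either approach; the only point demanding care is the bookkeeping of which slot pair is symmetric and which is antisymmetric, after which the result is pure linear algebra and requires no assumption on the degeneracy of $g$.
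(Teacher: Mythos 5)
Your proposal is correct and rests on exactly the same ingredients as the paper's proof, which simply states that the corollary ``follows immediately from Lemma \ref{thm_Koszul_form_compact} and Theorem \ref{thm_Koszul_form_props}'' --- your ``slicker alternative'' (substituting property \eqref{thm_Koszul_form_props_commutZX} into the decomposition) is precisely that argument, and your primary subtract-two-instances route is only a cosmetic variant of it. Both versions are sound; nothing is missing.
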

\begin{proof}
Follows immediately from Lemma \ref{thm_Koszul_form_compact} and Theorem \ref{thm_Koszul_form_props}.
\end{proof}

%~~~~~~~~~~~~~~~~~~~~~~~~~~~~~~~~~~~~~~~~~~~~~~~~~~~~~~~~~~~~~~~~~~~~~~~%
\subsubsection{The connection forms}
\label{s_conn_form}

Let $(M,g)$ be a {\semiriem} manifold. If $(E_a)_{a=1}^n$ is a local orthonormal frame on $M$ with respect to $g$, its dual $(\omega^b)_{b=1}^n$, where $\omega^b(E_a)=\delta^b_a$, is orthonormal. 
The $1$-forms $\omega_a{}^b$, $1\leq a,b\leq n$, where
\begin{equation}
\label{eq_connection_forms_nondeg}
	\omega_a{}^b(X) := \omega^b(\nabla_X E_a)
\end{equation}
are named the \textit{connection forms} (\cf \eg \cite{ONe95})~\footnote{Here the indices $a,b$ don't represent the components, they label the connection $1$-forms $\omega_a{}^b$.}.

For a degenerate {\metricname} $g$, the Levi-Civita connection $\nabla_X$ with respect to $g$, and hence $\nabla_X E_a$, don't exist. Moreover, a frame $(E_a)_{a=1}^n$ cannot be orthonormal (with respect to $g$), but it can be orthogonal. But even so, the dual frame $(\omega^b)_{b=1}^n$ cannot be orthogonal, because the {\metricname} $\annihg(\omega,\tau)$ is defined only for $\annih{T}M$, and not for the full $T^*M$. We will see here a way to define connection $1$-forms for the degenerate case.

\begin{definition}
\label{def_connection_form}
Let $(M,g)$ be a singular manifold, and $X,Y\in\fivect{M}$. The $1$-form
\begin{equation}
\label{eq_connection_forms}
	\omega_{XY}(Z) := \kosz(Z,X,Y)
\end{equation}
is named the \textit{connection form} associated to the metric $g$ and the vector fields $X,Y$.
We also define $\omega_{ab}$ by
\begin{equation}
\label{eq_connection_forms_frame}
	\omega_{ab}(Z) := \omega_{E_a E_b}(Z).
\end{equation}
\end{definition}

\begin{remark}
It is easy to see that $\omega_{XY}$ is $1$-form, because ${\kosz}$ is linear, and $\fiscal M$-linear in the first argument.
\end{remark}

%~~~~~~~~~~~~~~~~~~~~~~~~~~~~~~~~~~~~~~~~~~~~~~~~~~~~~~~~~~~~~~~~~~~~~~~%
\subsubsection{The first structural equation}
\label{ss_cartan_structure_i}

Let $(M,g)$ be a {\rstationary} manifold.

\begin{lemma}
\label{thm_cartan_structure_i}
The following equation holds
\begin{equation}
\label{eq_cartan_structure_i}
	\de X^{\flat} = \omega_{X\cocontr} \wedge \cocontr^{\flat},
\end{equation}
where $\omega_{X\cocontr} \wedge \cocontr^{\flat}$ is the metric contraction of $\omega_{XY}\wedge Z^{\flat}$ in $Y,Z$. Equation \eqref{eq_cartan_structure_i} is called \textit{the first structural equation} determined by the metric $g$.
\end{lemma}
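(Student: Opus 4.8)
The plan is to prove the identity pointwise, by evaluating both sides on an arbitrary ordered pair of vector fields $(U,V)\in\fivect M^2$ and checking that the resulting scalar fields agree; since a $2$-form is determined by its values on such pairs, this suffices. First I would unwind the right-hand side. By Definition \ref{def_connection_form} and the definition of the exterior product, the object $\omega_{XY}\wedge Z^{\flat}$, carrying the two free ``parameter'' slots $Y$ and $Z$ that are to be contracted, evaluates to
\begin{equation*}
(\omega_{XY}\wedge Z^{\flat})(U,V) = \kosz(U,X,Y)\metric{Z,V} - \kosz(V,X,Y)\metric{Z,U},
\end{equation*}
so that $(\omega_{X\cocontr}\wedge\cocontr^{\flat})(U,V)$ is the metric contraction of this expression in the indices $Y$ and $Z$.

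Before contracting I must check the contraction is well defined, which by Definition \ref{def_contraction_covariant_ct_sign} requires the tensor to be {\rannih} in both the $Y$-slot and the $Z$-slot. The $Z$-slot is automatic, since $Z\mapsto Z^{\flat}$ takes values in $\annihforms M$ by construction (Definition \ref{def_radical_annihilator}). For the $Y$-slot I would invoke the hypothesis that $(M,g)$ is {\rstationary}: Corollary \ref{thm_Koszul_null_props_rad_stat} gives $\kosz(U,X,W)=0$ for every $W\in\fivectnull M$, so $Y\mapsto\kosz(U,X,Y)$ vanishes on the radical and hence lies in $\annihforms M$. This verification is the one genuine subtlety of the argument; everything else is short.

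Next I would carry out the contraction term by term using Example \ref{ex_contraction_sign_change_smooth}, which states that contracting a {\rannih} $1$-form against an index-lowered vector returns the form evaluated on that vector. Applying it to the first term --- where $Y\mapsto\kosz(U,X,Y)$ is the {\rannih} form and $Z\mapsto\metric{Z,V}=V^{\flat}(Z)$ supplies the $V^{\flat}$ --- gives $\kosz(U,X,V)$; the second term likewise gives $\kosz(V,X,U)$. Hence
\begin{equation*}
(\omega_{X\cocontr}\wedge\cocontr^{\flat})(U,V) = \kosz(U,X,V) - \kosz(V,X,U).
\end{equation*}

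Finally I would identify this with $\de X^{\flat}$. The corollary to Lemma \ref{thm_Koszul_form_compact} gives $(\de Y^{\flat})(X,Z)=\kosz(X,Y,Z)-\kosz(Z,Y,X)$; specializing $Y$ to the fixed field $X$ and relabelling the form arguments as $U,V$ yields exactly $(\de X^{\flat})(U,V)=\kosz(U,X,V)-\kosz(V,X,U)$. Since $U,V$ were arbitrary, the two $2$-forms coincide, establishing \eqref{eq_cartan_structure_i}. The main obstacle is thus not computational but bookkeeping: confirming well-definedness of the covariant contraction (where the {\rstationary} hypothesis is essential) and keeping straight which of the two parameter slots is contracted against which, after which the collapse via Example \ref{ex_contraction_sign_change_smooth} and the corollary is immediate.
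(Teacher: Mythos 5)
Your proposal is correct and follows essentially the same route as the paper's proof: both rest on the identity $(\de X^{\flat})(U,V)=\kosz(U,X,V)-\kosz(V,X,U)$ from Lemma \ref{thm_Koszul_form_compact} and on the contraction identity $\omega(\cocontr)\metric{V,\cocontr}=\omega(V)$, with radical-stationarity guaranteeing the contraction is legitimate; you merely run the computation from the right-hand side toward the left instead of the reverse. Your explicit check that the $Y$-slot of $\kosz(U,X,Y)$ is {\rannih} makes precise a point the paper only gestures at, which is a welcome addition but not a different argument.
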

\begin{proof}
From the formula \eqref{eq_Koszul_form} and the Lemma \ref{thm_Koszul_form_compact},
\begin{equation}
\label{eq_cartan_structure_i_kosz}
	(\de X^{\flat})(Y,Z) = \kosz(Y,X,Z) - \kosz(Z,X,Y).
\end{equation}
By substituting \eqref{eq_connection_forms_nondeg},
\begin{equation}
\label{eq_cartan_structure_i_expanded}
	(\de X^{\flat})(Y,Z) = \omega_{XZ}(Y) - \omega_{XY}(Z).
\end{equation}
The properties of the metric contraction, and the property of $(M,g)$ of being {\rstationary}, allow us to write
\begin{equation}
\begin{array}{lll}
\omega_{YZ}(X) &=& \omega_{Y\cocontr}(X)\metric{\cocontr,Z} = \omega_{Y\cocontr}(X)\(\cocontr^{\flat}(Z)\) \\
&=& \(\omega_{Y\cocontr}\otimes\cocontr^{\flat}\)(X,Z).
\end{array}
\end{equation}
Hence, equation \eqref{eq_cartan_structure_i_expanded} writes
\begin{equation}
\begin{array}{lll}
	(\de X^{\flat})(Y,Z) &=& \(\omega_{X\cocontr}\otimes\cocontr^{\flat}\)(Y,Z) - \(\omega_{X\cocontr}\otimes\cocontr^{\flat}\)(Z,Y) \\
	&=& \(\omega_{X\cocontr} \wedge \cocontr^{\flat}\)(Y,Z).
\end{array}
\end{equation}
\end{proof}

\begin{corollary}
\label{thm_cartan_structure_i_std}
Let $(M,g)$ be a {\semiriem} manifold, $(E_a)_{a=1}^n$ an orthonormal frame, and $(\omega^a)_{a=1}^n$ its dual. Then
\begin{equation}
\label{eq_cartan_structure_i_std}
	\de \omega^a = -\omega_s{}^a \wedge \omega^s.
\end{equation}
\end{corollary}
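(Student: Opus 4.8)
The plan is to specialize the general first structural equation of Lemma \ref{thm_cartan_structure_i} to the frame vector fields $X = E_a$ and then translate the resulting identity, which is phrased in terms of the $\flat$ of the frame and of the metric connection forms $\omega_{ab}$ of \eqref{eq_connection_forms_frame}, into the classical coframe $\omega^a$ and the classical connection forms $\omega_a{}^b$ of \eqref{eq_connection_forms_nondeg}. Two translation rules are needed: one converting $E_b^\flat$ into $\omega^b$, and one converting $\omega_{ab}$ into $\omega_a{}^b$. Since the metric is now {\nondeg}, I may freely use the Koszul formula $\kosz(Z,E_a,E_b) = \metric{\nabla_Z E_a,E_b}$ and the Levi-Civita connection $\nabla$.

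Write $\varepsilon_a := \metric{E_a,E_a}\in\{-1,1\}$, so that $\metric{E_a,E_b} = \varepsilon_a\delta_{ab}$ and $E_a^\flat = \varepsilon_a\omega^a$ (no sum). First I would apply Lemma \ref{thm_cartan_structure_i} to obtain $\de E_a^\flat = \omega_{E_a\cocontr}\wedge\cocontr^\flat$ and evaluate the covariant contraction in the orthonormal frame by Theorem \ref{thm_contraction_orthogonal} (here $\rank g = n$, so the sum runs over all $b$). The point is that the two sign factors cancel: $1/\metric{E_b,E_b} = \varepsilon_b$ and $E_b^\flat = \varepsilon_b\omega^b$, whence $\de E_a^\flat = \sum_b \varepsilon_b^2\,\omega_{ab}\wedge\omega^b = \sum_b \omega_{ab}\wedge\omega^b$. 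Since $\varepsilon_a$ is locally constant, $\de\omega^a = \varepsilon_a\,\de E_a^\flat$, giving $\de\omega^a = \varepsilon_a\sum_b\omega_{ab}\wedge\omega^b$.

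Next I would set up the second translation rule. Expanding $\nabla_Z E_a = \sum_c\omega_a{}^c(Z)E_c$ in the definition \eqref{eq_connection_forms_nondeg} and pairing with $E_b$ gives, via the Koszul formula, $\omega_{ab}(Z) = \kosz(Z,E_a,E_b) = \metric{\nabla_Z E_a,E_b} = \varepsilon_b\,\omega_a{}^b(Z)$; that is, $\omega_{ab} = \varepsilon_b\,\omega_a{}^b$ (no sum). Substituting this directly into the previous display would leave a mixture of $\omega_a{}^b$ and $\omega_b{}^a$, so a final symmetry step is required to reach the stated form.

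The main obstacle, and the step that makes the signs conspire, is the antisymmetry $\omega_{ab} = -\omega_{ba}$. This follows from Theorem \ref{thm_Koszul_form_props}, property \eqref{thm_Koszul_form_props_commutYZ}, since $\omega_{ab}(Z) + \omega_{ba}(Z) = \kosz(Z,E_a,E_b) + \kosz(Z,E_b,E_a) = Z\metric{E_a,E_b} = Z(\varepsilon_a\delta_{ab}) = 0$, the frame being orthonormal so that $\metric{E_a,E_b}$ is locally constant. Using $\omega_{ab} = -\omega_{ba}$ to rewrite $\de\omega^a = \varepsilon_a\sum_b\omega_{ab}\wedge\omega^b = -\varepsilon_a\sum_b\omega_{ba}\wedge\omega^b$, and recognizing $\varepsilon_a\omega_{ba} = \omega_b{}^a$ from the second translation rule, I would obtain $\de\omega^a = -\sum_b\omega_b{}^a\wedge\omega^b = -\omega_s{}^a\wedge\omega^s$, which is \eqref{eq_cartan_structure_i_std}. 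As a sanity check, in the Riemannian case all $\varepsilon_a = 1$, the two connection forms coincide, and the identity reduces to the familiar Cartan equation with antisymmetric connection forms.
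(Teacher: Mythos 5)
Your proof is correct and follows essentially the same route as the paper's: specialize Lemma \ref{thm_cartan_structure_i} to $X=E_a$, use the antisymmetry $\omega_{ab}=-\omega_{ba}$ coming from Theorem \ref{thm_Koszul_form_props}, property \eqref{thm_Koszul_form_props_commutYZ}, and translate back to the classical coframe and connection forms. You are in fact more careful than the paper, which tacitly writes $\omega^a=E_a^\flat$ and $\metric{E_a,E_b}=\delta_{ab}$; your explicit bookkeeping of the signs $\varepsilon_a$ and the translation rule $\omega_{ab}=\varepsilon_b\,\omega_a{}^b$ shows the identity genuinely holds in the indefinite case.
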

\begin{proof}
From Theorem \ref{thm_Koszul_form_props}:\eqref{thm_Koszul_form_props_commutYZ}, follows that
\begin{equation}
	\omega_{E_a E_b}(X) + \omega_{E_b E_a}(X) = X\metric{E_a,E_b} = X(\delta_{ab}) = 0,
\end{equation}
and hence
\begin{equation}
	\omega_{E_a E_b} = -\omega_{E_b E_a}.
\end{equation}
From equation \eqref{eq_cartan_structure_i},
\begin{equation}
	\de E_a^{\flat} = \omega_{E_a E_s} \wedge \omega^s.
\end{equation}
Equation \eqref{eq_cartan_structure_i_std} follows from $\omega_{E_a E_s} =-\omega_{E_s E_a}$ and $\omega^a=E_a^{\flat}$.
\end{proof}

\begin{remark}
At points where the metric changes its signature, continuity is not ensured, unless the manifold $(M,g)$ is {\semireg}.
\end{remark}

%~~~~~~~~~~~~~~~~~~~~~~~~~~~~~~~~~~~~~~~~~~~~~~~~~~~~~~~~~~~~~~~~~~~~~~~%
\subsection{The second structural equation}
\label{s_cartan_structure_ii}

%~~~~~~~~~~~~~~~~~~~~~~~~~~~~~~~~~~~~~~~~~~~~~~~~~~~~~~~~~~~~~~~~~~~~~~~%
\subsubsection{The curvature forms}
\label{s_curvature_form}

\begin{definition}
\label{def_curvature_form}
Let $(M,g)$ be a radical-stationary manifold, and $X,Y\in\fivect{M}$ two vector fields. Then, the $2$-form
\begin{equation}
\label{eq_curvature_form}
	\Omega_{XY}(Z,T) := R(X,Y,Z,T),
\end{equation}
where $Z,T\in\fivect{M}$, is called the \textit{Riemann curvature form} associated to the metric $g$ and the vector fields $X,Y$.
If $(E_a)_{a=1}^n$ is a frame field, we make the notation
\begin{equation}
\label{eq_curvature_form_frame}
	\Omega_{ab}(Z,T) := \Omega_{E_a E_b}(Z,T).
\end{equation}
\end{definition}

%~~~~~~~~~~~~~~~~~~~~~~~~~~~~~~~~~~~~~~~~~~~~~~~~~~~~~~~~~~~~~~~~~~~~~~~%
\subsubsection{The second structural equation}
\label{ss_cartan_structure_ii}

\begin{lemma}
\label{thm_cartan_structure_ii}
Let $(M,g)$ be a radical-stationary manifold, and $X,Y\in\fivect{M}$ two vector fields. Then, the following equation (which we call \textit{the second structural equation} determined by the metric $g$) holds
\begin{equation}
\label{eq_cartan_structure_ii}
	\Omega_{XY} = \de\omega_{XY} + \omega_{X\cocontr} \wedge \omega_{Y\cocontr}.
\end{equation}
\end{lemma}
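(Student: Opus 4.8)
The plan is to reduce the identity \eqref{eq_cartan_structure_ii} to the pair symmetry of the curvature together with the Koszul-form expression for the Riemann tensor obtained in Proposition \ref{thm_riemann_curvature_tensor_koszul_formula}. Starting from Definition \ref{def_curvature_form}, we have $\Omega_{XY}(Z,T) = R(X,Y,Z,T)$. The conceptual crux is that the exterior derivative on the right of \eqref{eq_cartan_structure_ii} differentiates in the directions $Z,T$, whereas the defining formula \eqref{eq_riemann_curvature_tensor_koszul_formula} for $R(X,Y,Z,T)$ differentiates in $X,Y$; the two sides can therefore match only after interchanging the pairs $(X,Y)$ and $(Z,T)$. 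Accordingly, the first step is to invoke $R(X,Y,Z,T) = R(Z,T,X,Y)$ from Proposition \ref{thm_curv_symm}, property \eqref{thm_curv_symm_xy_zt}. (These symmetries, like formula \eqref{eq_riemann_curvature_tensor_koszul_formula}, rest only on the lower covariant derivative and the {\koszulname}, so they are available on a {\rstationary} manifold.)

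Having rewritten $\Omega_{XY}(Z,T) = R(Z,T,X,Y)$, I would expand the right-hand side using \eqref{eq_riemann_curvature_tensor_koszul_formula} with the roles of the arguments permuted, obtaining
\begin{equation*}
\begin{array}{lll}
R(Z,T,X,Y) &=& Z\kosz(T,X,Y) - T\kosz(Z,X,Y) - \kosz([Z,T],X,Y)\\
&& + \kosz(Z,X,\cocontr)\kosz(T,Y,\cocontr) - \kosz(T,X,\cocontr)\kosz(Z,Y,\cocontr).
\end{array}
\end{equation*}
The next step is to recognize the first three terms as $\de\omega_{XY}$. By Definition \ref{def_connection_form}, $\omega_{XY}(W) = \kosz(W,X,Y)$, so the standard formula $(\de\omega_{XY})(Z,T) = Z(\omega_{XY}(T)) - T(\omega_{XY}(Z)) - \omega_{XY}([Z,T])$ reproduces exactly $Z\kosz(T,X,Y) - T\kosz(Z,X,Y) - \kosz([Z,T],X,Y)$.

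It remains to identify the two quadratic terms with $(\omega_{X\cocontr}\wedge\omega_{Y\cocontr})(Z,T)$. Writing the wedge as $\omega_{X\cocontr}(Z)\,\omega_{Y\cocontr}(T) - \omega_{X\cocontr}(T)\,\omega_{Y\cocontr}(Z)$ and substituting $\omega_{XW}(Z) = \kosz(Z,X,W)$, the metric contraction over the $\cocontr$ slots — legitimate since $\kosz(Z,X,\_)$ and $\kosz(T,Y,\_)$ are {\rannih} in their third slot on a {\rstationary} manifold — yields precisely $\kosz(Z,X,\cocontr)\kosz(T,Y,\cocontr) - \kosz(T,X,\cocontr)\kosz(Z,Y,\cocontr)$. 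I expect the main bookkeeping obstacle to lie here: one must check that the antisymmetrization built into the wedge pairs the arguments $Z,T$ with the correct connection forms, and that the single contracted index in each factor lines up with the $\cocontr\cocontr$ contraction in \eqref{eq_riemann_curvature_tensor_koszul_formula} rather than generating a spurious cross term. Once these three identifications are assembled, the two displayed expressions coincide term by term, and \eqref{eq_cartan_structure_ii} follows.
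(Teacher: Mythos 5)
Your proof is correct and follows essentially the same route as the paper's: compute $\de\omega_{XY}(Z,T)$ and $\(\omega_{X\cocontr}\wedge\omega_{Y\cocontr}\)(Z,T)$ in terms of the {\koszulname} and match them against formula \eqref{eq_riemann_curvature_tensor_koszul_formula}. The only difference is that you make explicit the pair-interchange symmetry $R(X,Y,Z,T)=R(Z,T,X,Y)$ needed to align the differentiated arguments, a step the paper leaves implicit when it ``plugs in'' the two identities.
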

\begin{proof}
The exterior derivative of $\omega_{XY}$ is
\begin{equation}
\label{eq_cartan_structure_ii_a}
\begin{array}{lll}
\de\omega_{XY}(Z,T) &=& Z\(\omega_{XY}(T)\) - T\(\omega_{XY}(Z)\) - \omega_{XY}([T,Z]) \\
&=& Z\kosz(T,X,Y) - T\kosz(Z,X,Y) - \kosz([T,Z],X,Y).
\end{array}
\end{equation}
Also, 
\begin{equation}
\label{eq_cartan_structure_ii_b}
\begin{array}{lll}
\(\omega_{X\cocontr} \wedge \omega_{Y\cocontr}\)(Z,T) &=& \omega_{X\cocontr}(Z) \omega_{Y\cocontr}(T) - \omega_{X\cocontr}(T) \omega_{Y\cocontr}(Z) \\
&=& \kosz(Z,X,\cocontr) \kosz(T,Y,\cocontr) - \kosz(T,X,\cocontr) \kosz(Z,Y,\cocontr).
\end{array}
\end{equation}
Equation \eqref{eq_cartan_structure_ii} follows by plugging the identities \eqref{eq_cartan_structure_ii_a} and \eqref{eq_cartan_structure_ii_b} in \eqref{eq_riemann_curvature_tensor_koszul_formula}.
\end{proof}

%----------------------------------------------------------------------------------------
\section{Degenerate warped products}
\label{s_warped}

 % Degenerate warped products
A large and important class of {\semireg} singularities is given by degenerate warped products of manifolds, with fundamental tensor allowed to be degenerate. We show that a degenerate warped product of semi-regular manifolds is a semi-regular manifold, provided that the warping function satisfies a certain condition. We express the main geometric objects on the warped product in terms of those of the factor manifolds. We provide examples of {\semireg} manifolds, obtained as as warped products. The techniques developed here will be applied to singularities in General Relativity, in the remaining part of the Thesis.

%~~~~~~~~~~~~~~~~~~~~~~~~~~~~~~~~~~~~~~~~~~~~~~~~~~~~~~~~~~~~~~~~~~~~~~~%
\subsection{Introduction}

The warped product is used to construct new examples of semi-Riemannian manifolds from known ones \cite{BON69,BEP82,ONe83,pripoae05srg}. In General Relativity, it is used in the study of black holes and cosmological models. When the warping function becomes $0$, the {\metricname} of the product manifold becomes degenerate, singularities appear, and semi-Riemannian geometry can't be applied. Luckily, we will see that the tools developed here can be applied at those singularities.

The degenerate warped products of singular manifolds are defined in \sref{ss_deg_wp_ssr_gen}. We derive simple properties of the {\koszulname} of the warped product in terms of the {\koszulname} of the factors. In \sref{s_semi_reg_semi_riem_man_warped} we prove that the warped products of {\rstationary} manifolds are also {\rstationary}, if the warping function satisfies a certain condition. Then, we show a similar result for {\semireg} manifolds. We express, in \sref{s_riemann_wp_deg}, the curvature of {\semireg} warped products in terms of the factor manifolds.

Before starting, let's recall some notions about the {\em product manifold} $B\times F$ of two differentiable manifolds, $B$ and $F$ \cfeg{ONe83}{24--25}.

Let $p=(p_1,p_2)\in M_1\times M_2$. At $p$, the tangent space decomposes as
\begin{equation}
	T_{(p_1,p_2)}(M_1\times M_2)\cong T_{(p_1,p_2)}(M_1)\oplus T_{(p_1,p_2)}(M_2),
\end{equation}
where $T_{(p_1,p_2)}(M_1):=T_{(p_1,p_2)}(M_1\times p_2)$ and $T_{(p_1,p_2)}(M_2):=T_{(p_1,p_2)}(p_1\times M_2)$.

Let $\pi_i:M_1\times M_2\to M_i$, for $i\in\{1,2\}$, be the canonical projections. The following definitions apply to each $i\in\{1,2\}$. Let $f_i\in\fiscal{M_i}$. The scalar field $\tilde f_i:=f_i\circ\pi_i\in\fivect{M_1\times M_2}$ is called the {\em lift of the scalar field} $f_i$.
Let $X_i\in\fivect{M_i}$ be a vector field. Then, the unique vector field $\tilde X_i$ on $M_1\times M_2$ satisfying $\de \pi_i(\tilde X_i)=X_i$ is called the {\em lift of the vector field}. Let $\fivectlift{M,M_i}$ denote the set of all vector fields $X\in \fivect{M_1\times M_2}$ which are lifts of vector fields $X_i\in \fivect{M_i}$.
%The {\em lift of a covariant tensor} $T\in\tensors 0 s {M_i}$ is given by $\tilde T\in \tensors 0 s {(M_1 \times M_2)}$, $\tilde T:=\pi^*_i(T)$. The {\em lift of a tensor} $T\in\tensors 1 s {M_i}$ is given, for any $X_1,\ldots,X_s\in \fivect{M_1 \times M_2}$, by $\tilde T\in \tensors 1 s {(M_1 \times M_2)}$, $\tilde T(X_1,\ldots,X_s) =\tilde X$, where $\tilde X \in \fivect{M_1 \times M_2}$ is the lifting of the vector field $X\in\fivect{M_i}$, $X=T(\pi_i(X_1),\ldots,\pi_i(X_s))$.

%~~~~~~~~~~~~~~~~~~~~~~~~~~~~~~~~~~~~~~~~~~~~~~~~~~~~~~~~~~~~~~~~~~~~~~~%
\subsection{General properties}
\label{ss_deg_wp_ssr_gen}

\begin{definition}[generalizing \citep{ONe83}{204}]
\label{def_wp}
Let $(B,g_B)$ and $(F,g_F)$ be two singular manifolds, and $f\in\fiscal{B}$ a smooth function. The manifold
\begin{equation}
	B\times_f F:=\big(B\times F, \pi^*_B(g_B) + (f\circ \pi_B)\pi^*_F(g_F)\big),
\end{equation}
where $\pi_B:B\times F \to B$ and $\pi_F: B \times F \to F$ are the canonical projections, is called the {\em warped product} of $B$ and $F$ with {\em warping function} $f$. We call $B$ the {\em base} and $F$ the {\em fiber} of the warped product $B\times_f F$.

For all vector fields $X_B,Y_B\in\fivect{B}$ and $X_F,Y_F\in\fivect{F}$, we will use the notations $\metric{X_B,Y_B}_B := g_B(X_B,Y_B)$ and $\metric{X_F,Y_F}_F := g_F(X_F,Y_F)$.
For any point $p\in B\times F$ and for any pair of tangent vectors $x,y\in T_p(B\times F)$, the metric is
\begin{equation}
\label{eq_wp_metric}
	\metric{x,y}=\metric{\de \pi_B(x),\de \pi_B(y)}_B + f^2(p)\metric{\de \pi_F(x),\de \pi_F(y)}_F,
\end{equation}
or
\begin{equation}
\de s_{B\times F}^2 = \de s_B^2 + f^2\de s_F^2.
\end{equation}
\end{definition}

\begin{remark}
The Definition \ref{def_wp} generalizes the usual warped product definition, given for the case when both $g_B$ and $g_F$ are {\nondeg} and $f>0$ (see \cite{BON69}, \cite{BEP82} and \cite{ONe83}), to the case of singular {\semiriem} manifolds. We import some terms from \citep{ONe83}{204--205}. If $p_B\in B$, the {\semiriem} manifold $\pi_B^{-1}(p_B)=p_B\times F$ is named the {\em fiber} through $p_B$. If $p_F\in F$, the {\semiriem} manifold $\pi_F^{-1}(p_F)=B\times p_F$ is called the {\em leave} through $p_F$. $\pi_B|_{B\times p_F}$ is an isometry onto $B$, and, if $f(p_B)\neq 0$, $\pi_F|_{p_B\times F}$ is a homothety onto $F$. For any $(p_B,p_F)\in B\times F$, $B \times p_F$ and $p_B \times F$ are orthogonal at $(p_B,p_F)$. For simplicity, we will use sometimes the same notation for the vector and its lift, if they can be distinguished from the context. For example, $\metric{V,W}_F:=\metric{\pi_F(V),\pi_F(W)}_F$ for $V,W\in\fivectlift{B \times F,F}$.
\end{remark}

Some of the properties of the Levi-Civita connection for the warped product of ({\nondeg}) {\semiriem} manifolds \cfeg{ONe83}{206} can be rewritten in terms of the {\koszulname}, being thus extended to the degenerate case. We need this, because the Levi-Civita connection is not defined for degenerate {\metricname}.
\begin{proposition}
\label{thm_wp_deg_koszul}
Let $B \times_f F$ be a warped product, let $X,Y,Z\in\fivectlift{B \times F,B}$, and $U,V,W\in\fivectlift{B \times F,F}$. The {\koszulname} $\kosz$ on $B \times_f F$ can be expressed, in terms of the lifts $\kosz_B, \kosz_F$ of the {\koszulname}s on $B$, respectively $F$, by the following formulae
\begin{enumerate}
	\item \label{thm_wp_deg_koszul:BBB}
	$\kosz(X,Y,Z)=\kosz_B(X,Y,Z)$.
	\item \label{thm_wp_deg_koszul:BBF}
	$\kosz(X,Y,W) = \kosz(X,W,Y) = \kosz(W,X,Y) = 0$.
	\item \label{thm_wp_deg_koszul:BFF}
	$\kosz(X,V,W) = \kosz(V,X,W) = -\kosz(V,W,X) = f \metric{V,W}_F X(f)$.
	\item \label{thm_wp_deg_koszul:FFF}
	$\kosz(U,V,W)=f^2\kosz_F(U,V,W)$.
\end{enumerate}
\end{proposition}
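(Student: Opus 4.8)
The plan is to substitute each triple directly into the definition \eqref{eq_Koszul_form} of the {\koszulname} and simplify using three structural facts about the product $B\times F$ together with the shape of the warped metric \eqref{eq_wp_metric}. The facts I need are: (i) the mixed Lie brackets vanish, $[X,W]=0$ whenever $X\in\fivectlift{B\times F,B}$ and $W\in\fivectlift{B\times F,F}$, while $[X,Y]$ is the lift of $[\cdot,\cdot]_B$ for two base lifts and $[U,V]$ the lift of $[\cdot,\cdot]_F$ for two fiber lifts; (ii) a base lift $X$ annihilates every lifted fiber function and a fiber lift $W$ annihilates every lifted base function (in particular $W\metric{X,Y}=0$ and $X\metric{U,V}_F=0$, and $f$ is a base function so $W(f)=0$); and (iii) the metric values $\metric{X,Y}=\metric{X,Y}_B$, $\metric{X,W}=0$, and $\metric{V,W}=f^2\metric{V,W}_F$, the middle one being the orthogonality of leaves and fibers. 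All three are immediate from a product chart and from \eqref{eq_wp_metric}, and none uses non-degeneracy, so they survive in the singular setting.

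With these in hand each identity is a bookkeeping exercise. For the first formula every term of \eqref{eq_Koszul_form} reduces to its $B$-analogue — the metrics become $\metric{\cdot,\cdot}_B$, the derivatives become base derivatives, and the brackets become base brackets — so the right-hand side collapses to $\kosz_B(X,Y,Z)$. For the second formula I would compute $\kosz(X,Y,W)$ directly: every derivative term vanishes because it either differentiates the orthogonality relation $\metric{X,W}=0$ or applies a fiber lift to a base function, and every bracket term vanishes because a mixed bracket is zero while a base bracket is orthogonal to $W$; hence $\kosz(X,Y,W)=0$. For the third formula the one surviving term in $\kosz(X,V,W)$ is $\tfrac12 X\metric{V,W}=\tfrac12 X(f^2\metric{V,W}_F)$, which by the product rule and fact (ii) equals $f\metric{V,W}_F X(f)$, all remaining terms dying by orthogonality and the vanishing of mixed brackets as before. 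Finally, for the fourth formula every term carries a common factor $f^2$ — the derivatives of $f^2$ drop out since fiber lifts annihilate base functions — and what is left inside the braces is exactly the $F$-Koszul object, giving $f^2\kosz_F(U,V,W)$.

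Rather than recomputing the remaining orderings in the second and third formulas from scratch, I would propagate them from the single base computation using Theorem \ref{thm_Koszul_form_props}. For the second formula, property \eqref{thm_Koszul_form_props_commutYZ} gives $\kosz(X,Y,W)+\kosz(X,W,Y)=X\metric{Y,W}=0$ and property \eqref{thm_Koszul_form_props_commutXY} gives $\kosz(X,W,Y)-\kosz(W,X,Y)=\metric{[X,W],Y}=0$, which together force all three orderings to vanish; for the third, the torsionless property yields $\kosz(V,X,W)=\kosz(X,V,W)$ (mixed bracket) and the metric property yields $\kosz(V,W,X)=-\kosz(V,X,W)$, producing the two remaining equalities. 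The calculation has no genuine obstacle; the only point demanding care is fact (ii) — keeping straight which functions are pulled back from $B$ and which from $F$ — since this is what makes the product rule separate $f$ from $\metric{V,W}_F$ correctly and is the sole place the warping factor enters.
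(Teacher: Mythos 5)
Your proposal is correct and follows essentially the same route as the paper: establish the structural facts about lifts ($[X,W]=0$, orthogonality of base and fiber lifts, $X\metric{V,W}=2f\metric{V,W}_F X(f)$, and that lifts annihilate functions pulled back from the other factor), compute one ordering of each identity directly from Definition \ref{def_Koszul_form}, and propagate to the remaining orderings via the metric and torsionless properties of Theorem \ref{thm_Koszul_form_props}. The paper writes those propagation steps as explicit rearrangements of the Koszul formula rather than citing the theorem by name, but the content is identical.
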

\begin{proof}
For $X,Y,Z\in\fivectlift{B \times F,B}$ and $U,V,W\in\fivectlift{B \times F,F}$, it is easy to check that
$\metric{X,V}=0$, $[X,V]=0$, $V\metric{X,Y}=0$, and $X\metric{V,W}=2f\metric{V,W}_F X(f)$ (similar to \cite{ONe83}).

The properties \eqref{thm_wp_deg_koszul:BBB} and \eqref{thm_wp_deg_koszul:FFF} follow immediately from the properties of the lifts of vector fields, Definition \ref{def_Koszul_form}, and equation \eqref{eq_wp_metric}.

\eqref{thm_wp_deg_koszul:BBF}
From $\metric{Y,W}=\metric{W,X}=\metric{W,[X,Y]}=0$, $[Y,W]=[W,X]=0$, and $W \metric{X,Y}=0$, follows that $\kosz(X,Y,W)=0$.

Since $\kosz(X,W,Y) = X \metric{W,Y} - \kosz(X,Y,W)$, follows that $\kosz(X,W,Y) = 0$.

Because $\kosz(W,X,Y) = \kosz(X,W,Y) - \metric{[X,W],Y}$, we have $\kosz(W,X,Y) = 0$.

\eqref{thm_wp_deg_koszul:BFF} Property \eqref{thm_wp_deg_koszul:BBF} implies that $\kosz(X,V,W) = \dsfrac 1 2 X \metric{V,W} = f \metric{V,W}_F X(f)$. From $\kosz(V,X,W)=\kosz(X,V,W)-\metric{[X,V],W}$ and $[X,V]=0$, follows that $\kosz(V,X,W) = f \metric{V,W}_F X(f)$.

It is simple to see that $\kosz(V,W,X) = V\metric{W,X} - \kosz(V,X,W)$, but since $\metric{W,X}=0$, it follows that $\kosz(V,W,X) = -f\metric{V,W}_F X(f)$.
\end{proof}

In the following, we study some properties of the degenerate warped products, while allowing the warping function $f$ to vanish or change sign, and $(B,g_B)$ and $(F,g_F)$ to be singular and with variable signature.

%~~~~~~~~~~~~~~~~~~~~~~~~~~~~~~~~~~~~~~~~~~~~~~~~~~~~~~~~~~~~~~~~~~~~~~~%
\subsection{Warped products of {\semireg} manifolds}
\label{s_semi_reg_semi_riem_man_warped}

\begin{theorem}
\label{thm_rad_stat_semi_riem_man_warped}
Let $(B,g_B)$ and $(F,g_F)$ be {\rstationary} manifolds, and $f\in\fiscal{B}$ so that $\de f\in\annihforms B$. Then, the degenerate warped product $B \times_f F$ is a {\rstationary} manifold.
\end{theorem}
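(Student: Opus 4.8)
The plan is to verify the pointwise characterization of radical-stationarity in Definition \ref{def_radical_stationary_manifold}: for every $p=(p_B,p_F)\in B\times F$ and every $\xi\in\radix{(T_p(B\times_f F))}$, show that $\kosz(P,Q,\xi)=0$ for all vector fields $P,Q$ on $B\times_f F$, where $\kosz$ is the Koszul object of the warped metric. First I would pin down the radical itself. Splitting a tangent vector as $\xi=\xi_B+\xi_F$ along the orthogonal decomposition $T_p(B\times F)\cong T_{p_B}B\oplus T_{p_F}F$, the condition $\metric{\xi,\,\cdot\,}=0$ separates, using \eqref{eq_wp_metric}, into $\xi_B\in\radix{(T_{p_B}B)}$ together with $f(p_B)^2\metric{\xi_F,\,\cdot\,}_F=0$. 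Hence, where $f(p_B)\neq 0$ the radical is $\radix{(T_{p_B}B)}\oplus\radix{(T_{p_F}F)}$, whereas where $f(p_B)=0$ the fiber summand swells to the entire $T_{p_F}F$. This dichotomy is the structural fact driving the argument.

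Next I would reduce to lifted fields. Since $\kosz$ is $\fiscal{M}$-linear in its first and third arguments (Theorem \ref{thm_Koszul_form_props}, properties \eqref{thm_Koszul_form_props_flinearX} and \eqref{thm_Koszul_form_props_flinearZ}), the value $\kosz(P,Q,\xi)_p$ depends on $P$ and $\xi$ only through $P_p$ and $\xi$, so both may be expanded in a local frame of lifts. The second slot obeys only the Leibniz rule \eqref{thm_Koszul_form_props_flinearY}, but its correction term has the form $P(h)\metric{Q,\xi}$, which vanishes at $p$ since $\xi$ is radical; hence the second slot may be expanded in lifts as well. It therefore suffices to evaluate $\kosz$ when the first two slots run over base-lifts $X,Y\in\fivectlift{B\times F,B}$ and fiber-lifts $U,V\in\fivectlift{B\times F,F}$, with $\xi$ in the third slot.

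Then I would run the case analysis via Proposition \ref{thm_wp_deg_koszul}. If $\xi$ is a base-lift $Z$ with $Z_p\in\radix{(T_{p_B}B)}$: the all-base pattern gives $\kosz(X,Y,Z)=\kosz_B(X,Y,Z)=0$ by radical-stationarity of $B$; the patterns with exactly one fiber-lift vanish identically by \eqref{thm_wp_deg_koszul:BBF}; and the two-fiber pattern gives $\kosz(U,V,Z)=-f\metric{U,V}_F\,Z(f)$ by \eqref{thm_wp_deg_koszul:BFF}, which is zero precisely because $\de f\in\annihforms{B}$ forces $Z(f)=\de f(Z)=0$ on the radical of $B$. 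If $\xi$ is a fiber-lift $W$: the all-base pattern vanishes by \eqref{thm_wp_deg_koszul:BBF}; the one-base patterns give $f\metric{V,W}_F X(f)$ by \eqref{thm_wp_deg_koszul:BFF}; and the all-fiber pattern gives $f^2\kosz_F(U,V,W)$ by \eqref{thm_wp_deg_koszul:FFF}. In the latter two, where $f(p_B)=0$ the factor of $f$ (respectively $f^2$) annihilates the expression, and where $f(p_B)\neq 0$ the dichotomy forces $W_p\in\radix{(T_{p_F}F)}$, so that $\metric{V,W}_F=0$ at $p$ (respectively $\kosz_F(U,V,W)=0$ by radical-stationarity of $F$). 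Linearity in the third slot then handles a general radical $\xi=\xi_B+\xi_F$, finishing the proof.

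I expect the main obstacle to be correctly tracking the radical structure at the zero locus of $f$ and matching each argument pattern to the right clause of Proposition \ref{thm_wp_deg_koszul}. In particular, the hypothesis $\de f\in\annihforms{B}$ is genuinely used only in the two-fiber/one-base-radical pattern, and recognizing that the otherwise dangerous term $-f\metric{U,V}_F\,Z(f)$ is killed exactly by this hypothesis is the crux; everywhere else one leans on the vanishing factor of $f$ or on the radical-stationarity already assumed for the factors.
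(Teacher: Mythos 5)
Your proof is correct and follows essentially the same route as the paper's: reduce to lifted vector fields and dispatch the five argument patterns via Proposition \ref{thm_wp_deg_koszul}, using $\de f\in\annihforms B$ to kill the $\kosz(X_F,Y_F,W_B)=-f\metric{X_F,Y_F}_F W_B(f)$ term. If anything, your explicit dichotomy between the zero locus of $f$ (where the whole fiber tangent space is radical and the overall factor of $f$ or $f^2$ does the work) and its complement (where $W_F$ is genuinely radical in $F$) is slightly more careful than the paper's write-up, which tacitly conflates these two cases.
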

\begin{proof}
We need to prove that $\kosz(X,Y,W)=0$, for any $X,Y\in\fivect{B \times_f F}$ and $W\in\fivectnull{B \times_f F}$. It is enough to do this for the lifts of vector fields $X_B,Y_B,W_B\in \fivectlift{B \times F,B}$, $X_F,Y_F,W_F\in \fivectlift{B \times F,F}$, where $W_B,W_F\in\fivectnull{B \times_f F}$.

Proposition \ref{thm_wp_deg_koszul} implies
	
	\noindent$\bullet$ $\kosz(X_B,Y_B,W_B)=\kosz_B(X_B,Y_B,W_B)=0$,
	
	\noindent$\bullet$ $\kosz(X_B,Y_B,W_F) = \kosz(X_B,Y_F,W_B) = \kosz(X_F,Y_B,W_B) = 0$,
	
	\noindent$\bullet$ $\kosz(X_B,Y_F,W_F) = \kosz(Y_F,X_B,W_F) = f \metric{Y_F,W_F}_F X_B(f) = 0$, since $\metric{Y_F,W_F}_F=0$,
	
	\noindent$\bullet$ $\kosz(X_F,Y_F,W_B) = -f \metric{X_F,Y_F}_F W_B(f) = 0$, from $W_B(f)=0$,
	
	\noindent$\bullet$ $\kosz(X_F,Y_F,W_F)=f^2\kosz_F(X_F,Y_F,W_F) = 0$.
\end{proof}

\begin{theorem}
\label{thm_semi_reg_semi_riem_man_warped}
Let $(B,g_B)$ and $(F,g_F)$ be two {\semireg} manifolds, and let $f\in\fiscal{B}$ be a smooth function so that $\de f\in\srformsk 1 B$. Then, the warped product $B \times_f F$ is a {\semireg} manifold.
\end{theorem}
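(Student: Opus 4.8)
The plan is to secure radical-stationarity first and then verify the Koszul criterion of Proposition~\ref{thm_sr_cocontr_kosz}. Since $\de f\in\srformsk 1 B\subseteq\annihforms B$ and every {\semireg} manifold is {\rstationary}, Theorem~\ref{thm_rad_stat_semi_riem_man_warped} already gives that $B\times_f F$ is {\rstationary}. By Proposition~\ref{thm_sr_cocontr_kosz} it then remains to show that $\kosz(X,Y,\cocontr)\kosz(Z,T,\cocontr)\in\fiscal{B\times_f F}$ for all $X,Y,Z,T\in\fivect{B\times_f F}$. Using that $\kosz$ is $\fiscal{}$-linear in its first and third arguments and obeys the Leibniz rule in the second (Theorem~\ref{thm_Koszul_form_props}), with the surplus Leibniz terms reduced to smooth quantities of the form $\kosz(Z,T,Y)$ via Lemma~\ref{thm_contraction_with_metric}, it suffices to treat the case where each of $X,Y,Z,T$ is a lift from $B$ or from $F$.

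Next I would choose a local orthogonal frame adapted to the product, $(E_i)$ tangent to $B$ and $(E_\alpha)$ tangent to $F$, so that $\metric{E_\alpha,E_\alpha}=f^2\metric{E_\alpha,E_\alpha}_F$. Theorem~\ref{thm_contraction_orthogonal} then splits the covariant contraction $\kosz(X,Y,\cocontr)\kosz(Z,T,\cocontr)$ into a base sum over the $E_i$, weighted by $\dsfrac{1}{\metric{E_i,E_i}}$, and a fiber sum over the $E_\alpha$, weighted by $\dsfrac{1}{f^2\metric{E_\alpha,E_\alpha}_F}$. Proposition~\ref{thm_wp_deg_koszul} evaluates every $\kosz(\cdot,\cdot,E_i)$ and $\kosz(\cdot,\cdot,E_\alpha)$ in terms of $\kosz_B$, $\kosz_F$, the fiber metric, and the functions $X_B(f),Y_B(f),\ldots$.

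For the fiber sum, every Koszul value having a fiber vector in its last slot carries an explicit factor $f$ (Proposition~\ref{thm_wp_deg_koszul}, parts \ref{thm_wp_deg_koszul:BFF} and \ref{thm_wp_deg_koszul:FFF}), so each product supplies an $f^2$ that cancels the $\dsfrac{1}{f^2}$ in the weight; what remains is a smooth function of $f$ times fiber contractions of $\kosz_F(\cdot,\cdot,\_)$ and of lowered fiber vectors $V^\flat$, which are smooth by Proposition~\ref{thm_sr_cocontr_kosz} applied to the {\semireg} $F$, by Lemma~\ref{thm_contraction_with_metric}, and by Corollary~\ref{thm_contracted_metric_w_metric}. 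For the base sum, the one-form sitting in the last slot is $\kosz_B(X_B,Y_B,\_)-f\metric{X_F,Y_F}_F\,\de f$, so expanding the pairing yields three kinds of base contractions: the term $\kosz_B(X_B,Y_B,\cocontr)\kosz_B(Z_B,T_B,\cocontr)$, which is smooth because $B$ is {\semireg}; the mixed terms $\kosz_B(X_B,Y_B,\cocontr)(\de f)(\cocontr)$; and the term $(\de f)(\cocontr)(\de f)(\cocontr)$.

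The mixed base term is smooth by the Hessian identity
\[
\kosz_B(X_B,Y_B,\cocontr)(\de f)(\cocontr)=X_B\big(Y_B(f)\big)-\big(\der_{X_B}\de f\big)(Y_B),
\]
which follows from Definition~\ref{def_cov_der_covect}; its right-hand side is smooth because $f$ is smooth and $\der_{X_B}\de f\in\annihforms B$ is smooth by the defining property of $\srformsk 1 B$ (Notation~\ref{def_cov_der_smooth}). The genuine obstacle is the last term $(\de f)(\cocontr)(\de f)(\cocontr)$: unlike the fiber contractions it is not a pairing of a $\kosz$-form with a lowered vector, and unlike the mixed term it is not reached by the Hessian trick, so its smoothness across points where $g_B$ changes signature is not furnished by {\semireg}ity of $B$ alone. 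This is exactly where the hypothesis $\de f\in\srformsk 1 B$ must be used in full strength, namely through the fact that a {\rannih} one-form with smooth covariant derivative pairs smoothly with itself under the base cometric. Establishing this smoothness is the main point; once it is available, collecting the base and fiber contributions shows $\kosz(X,Y,\cocontr)\kosz(Z,T,\cocontr)\in\fiscal{B\times_f F}$, and Proposition~\ref{thm_sr_cocontr_kosz} concludes that $B\times_f F$ is {\semireg}.
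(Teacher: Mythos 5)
Your strategy coincides with the paper's: establish radical-stationarity via Theorem \ref{thm_rad_stat_semi_riem_man_warped}, invoke Proposition \ref{thm_sr_cocontr_kosz}, reduce to lifts, split each contraction $\kosz(X,Y,\cocontr)\kosz(Z,T,\cocontr)$ into base and fiber parts, and evaluate with Proposition \ref{thm_wp_deg_koszul}. You are in fact more explicit than the paper, whose proof writes out only a couple of the cases and dismisses the rest with ``the other cases are obtained similarly''; your fiber-sum analysis and your treatment of the mixed base term via the Hessian identity are both correct. The problem is the term you single out at the end. Note first that what actually occurs in the base sum is $f^2\metric{X_F,Y_F}_F\metric{Z_F,T_F}_F\,\annihprod{\de f,\de f}$, not the bare $\annihprod{\de f,\de f}$. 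More importantly, the ``fact'' you appeal to --- that a {\rannih} one-form with smooth covariant derivative pairs smoothly with itself --- is nowhere proved in the paper, and it is false. Take $B=(\R^2,\de t^2+t^4\de x^2)$, which is {\semireg}, and $f=1+t^2x$. Then $\de f=2tx\,\de t+t^2\,\de x$ lies in $\srformsk 1 B$ (one computes $\der_{\partial_t}\de f=2x\,\de t$ and $\der_{\partial_x}\de f=4t^4x\,\de x$, both smooth and {\rannih}), yet $\annihprod{\de f,\de f}=4t^2x^2+1$ for $t\neq 0$ while it equals $0$ at $t=0$; and since $f(0,x)=1$, the prefactor $f^2$ does not cure the discontinuity.

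This is not merely a gap in your write-up. With $F=(S^1,\de\theta^2)$ the above data satisfy every hypothesis of the theorem, yet $\kosz(\partial_\theta,\partial_\theta,\cocontr)\kosz(\partial_\theta,\partial_\theta,\cocontr)$ equals $f^2(4t^2x^2+1)$ for $t\neq 0$ and $0$ on $\{t=0\}$, so by Proposition \ref{thm_sr_cocontr_kosz} the warped product is not {\semireg}. The conclusion therefore requires a stronger hypothesis on $f$ (for instance $f^2\annihprod{\de f,\de f}\in\fiscal{B}$, or $g_B$ {\nondeg} as in Corollary \ref{thm_reg_warped}, which is the only case actually used later in the Thesis). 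So you correctly located the one step that resists the argument --- the paper's own proof hides exactly this case behind the word ``similarly'' --- but it cannot be closed from $\de f\in\srformsk 1 B$ alone.
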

\begin{proof}
By Theorem \ref{thm_rad_stat_semi_riem_man_warped}, all contractions of the form $\kosz(X,Y,\cocontr)\kosz(Z,T,\cocontr)$ are well defined. 
Proposition \ref{thm_sr_cocontr_kosz} implies that it is enough to show that they are smooth.
It is enough to prove it for lifts of vector fields $X_B,Y_B,Z_B,T_B\in \fivectlift{B \times F,B}$,
$X_F,Y_F,Z_F,T_F\in \fivectlift{B \times F,F}$.
Let $\cocontr_B$ and $\cocontr_F$ denote the symbol for the contraction on $B$, respectively $F$.
Then, Proposition \ref{thm_wp_deg_koszul} implies:
\begin{equation*}
\begin{array}{lll}
\kosz(X_B,Y_B,\cocontr)\kosz(Z_B,T_B,\cocontr)&=&\kosz(X_B,Y_B,\cocontr_B)\kosz(Z_B,T_B,\cocontr_B) \\
&&+\kosz(X_B,Y_B,\cocontr_F)\kosz(Z_B,T_B,\cocontr_F) \\
&=&\kosz_B(X_B,Y_B,\cocontr_B)\kosz_B(Z_B,T_B,\cocontr_B) \in \fiscal{B \times_f F}, \\
\kosz(X_B,Y_B,\cocontr)\kosz(Z_F,T_B,\cocontr)&=&\kosz(X_B,Y_B,\cocontr)\kosz(Z_B,T_F,\cocontr)\\
&=&\kosz(X_B,Y_B,\cocontr_B)\kosz(Z_B,T_F,\cocontr_B) \\
&&+\kosz(X_B,Y_B,\cocontr_F)\kosz(Z_B,T_F,\cocontr_F) = 0. \\
%\kosz(X_B,Y_B,\cocontr)\kosz(Z_F,T_F,\cocontr)&=&\kosz(X_B,Y_B,\cocontr_B)\kosz(Z_F,T_F,\cocontr_B) \\
%&&+\kosz(X_B,Y_B,\cocontr_F)\kosz(Z_F,T_F,\cocontr_F) \\
%&=&-\kosz_B(X_B,Y_B,\cocontr_B)f \metric{Z_F,T_F}_F \de f(\cocontr_B) \\
%&=&-f\metric{Z_F,T_F}_F (\nabla^B_{X_B}{Y_B})(\de f) \in \fiscal{B \times_f F}. \\
%\kosz(X_B,Y_F,\cocontr)\kosz(T_F,Z_B,\cocontr)&=&\kosz(X_B,Y_F,\cocontr)\kosz(Z_B,T_F,\cocontr)\\
%&=&\kosz(X_B,Y_F,\cocontr_B)\kosz(Z_B,T_F,\cocontr_B) \\
%&&+\kosz(X_B,Y_F,\cocontr_F)\kosz(Z_B,T_F,\cocontr_F) \\
%&=&f \metric{Y_F,\cocontr_F}_F X_B(f)\kosz(Z_B,T_F,\cocontr_F) \\
%&=& f^3 X_B(f)\kosz_F(Z_B,T_F,Y_F) \in \fiscal{B \times_f F}. \\
%\kosz(X_B,Y_F,\cocontr)\kosz(Z_F,T_F,\cocontr)&=&\kosz(X_B,Y_F,\cocontr_B)\kosz(Z_F,T_F,\cocontr_B) \\
%&&+\kosz(X_B,Y_F,\cocontr_F)\kosz(Z_F,T_F,\cocontr_F) \\
%&=&f^3 X_B(f) \metric{Y_F,\cocontr_F}_F\kosz_F(Z_F,T_F,\cocontr_F) \\
%&=&f^3 X_B(f) \kosz_F(Z_F,T_F,Y_F) \in \fiscal{B \times_f F}. 
\end{array}
\end{equation*}
The other cases are obtained similarly.
\end{proof}

\begin{corollary}
\label{thm_reg_warped}
Let $(B,g_B)$ be a {\nondeg} manifold, and $f\in\fiscal{B}$. If $(F,g_F)$ is a {\rstationary} ({\semireg}) manifold, then $B\times_f F$ is a {\rstationary} ({\semireg}) manifold. In particular, if $(B,g_B)$ and $(F,g_F)$ are {\nondeg}, and $f\in\fiscal{B}$, then $B\times_f F$ is {\semireg}.
\end{corollary}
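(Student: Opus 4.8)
The plan is to deduce this corollary directly from the two preceding results, Theorem \ref{thm_rad_stat_semi_riem_man_warped} and Theorem \ref{thm_semi_reg_semi_riem_man_warped}, by checking that their hypotheses are automatically met once $(B,g_B)$ is {\nondeg}. The single observation that does all the work is that when $g_B$ is {\nondeg}, the index-lowering morphism $\flat$ is an isomorphism at every point (Remark \ref{thm_radix_ker}), so the {\rannih} space fills out the whole cotangent space: $\annihforms B = \fiformk 1 B$. Consequently every $1$-form on $B$ is {\rannih}, and in particular $\de f \in \annihforms B$ for any $f\in\fiscal B$, with no condition imposed on $f$.

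First I would dispatch the {\rstationary} claim. A {\nondeg} manifold is trivially {\rstationary}: since $\kosz(X,Y,\_)$ is always a $1$-form (Corollary \ref{thm_Koszul_form}) and $\annihforms B = \fiformk 1 B$, the defining condition \eqref{eq_radical_stationary_manifold} holds vacuously, so $(B,g_B)$ is {\rstationary}. Combining this with the hypothesis that $(F,g_F)$ is {\rstationary} and with the remark $\de f \in \annihforms B$ above, all three hypotheses of Theorem \ref{thm_rad_stat_semi_riem_man_warped} are satisfied, and it yields at once that $B\times_f F$ is {\rstationary}.

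For the {\semireg} claim I would argue analogously, now checking $\de f \in \srformsk 1 B$. When $g_B$ is {\nondeg}, the covariant derivative of Definition \ref{def_cov_der_covect} coincides with the ordinary Levi-Civita covariant derivative, which is smooth and carries $1$-forms to $1$-forms; hence every element of $\annihforms B = \fiformk 1 B$ has a smooth covariant derivative lying again in $\annihforms B$, giving $\srformsk 1 B = \fiformk 1 B$ and therefore $\de f \in \srformsk 1 B$ for free. Moreover $(B,g_B)$ is itself {\semireg}, since its lower covariant derivative $\ldera X Y$ is a genuine $1$-form lying in $\srformsk 1 B = \fiformk 1 B$. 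Thus, with $(B,g_B)$ and $(F,g_F)$ both {\semireg} and $\de f \in \srformsk 1 B$, Theorem \ref{thm_semi_reg_semi_riem_man_warped} delivers that $B\times_f F$ is {\semireg}. The final ``in particular'' assertion then follows by specializing to the case where $(F,g_F)$ is {\nondeg}, hence {\semireg}, so the {\semireg} conclusion applies.

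There is no genuine obstacle here: the substance is entirely contained in the two cited theorems, and the corollary is the observation that their side conditions on the warping function evaporate in the {\nondeg} base case. The only point requiring care is articulating clearly the chain of identifications $\annihforms B = \srformsk 1 B = \fiformk 1 B$ for a {\nondeg} metric, which is exactly what makes the hypotheses $\de f \in \annihforms B$ and $\de f \in \srformsk 1 B$ vacuous.
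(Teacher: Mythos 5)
Your proof is correct and follows essentially the same route as the paper: the paper's own argument is simply the observation that for {\nondeg} $g_B$ every $f\in\fiscal B$ automatically satisfies $\de f\in\annihforms B$ and $\de f\in\srformsk 1 B$, after which Theorems \ref{thm_rad_stat_semi_riem_man_warped} and \ref{thm_semi_reg_semi_riem_man_warped} apply. Your version just spells out the identifications $\annihforms B=\srformsk 1 B=\fiformk 1 B$ and the (tacitly assumed) fact that a {\nondeg} manifold is itself {\rstationary} and {\semireg}, which the paper leaves implicit.
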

\begin{proof}
If $(B,g_B)$ is {\nondeg}, any function $f\in\fiscal{B}$ satisfies $\de f\in\annihforms B$ and $\de f\in\srformsk 1 B$. Theorems \ref{thm_rad_stat_semi_riem_man_warped} and \ref{thm_semi_reg_semi_riem_man_warped} imply the desired result.
\end{proof}

The warped product of ({\nondeg}) {\semiriem} manifolds stays {\nondeg} for $f>0$. If $f\to 0$, the known formulate show that the connection $\nabla$ (\citep{ONe83}{206--207}), the Riemann curvature $R_\nabla$ (\citep{ONe83}{209--210}), the Ricci tensor $\ric$ and the scalar curvature $s$ (\citep{ONe83}{211}) diverge in general.

%~~~~~~~~~~~~~~~~~~~~~~~~~~~~~~~~~~~~~~~~~~~~~~~~~~~~~~~~~~~~~~~~~~~~~~~%
\subsection{Riemann curvature of {\semireg} warped products}
\label{s_riemann_wp_deg}

Let $(B,g_B)$ and $(F,g_F)$ be two {\semireg} manifolds, $f\in\fiscal{B}$ a smooth function so that $\de f\in\srformsk 1 B$, and $B\times_f F$ the warped product of $B$ and $F$.
In this section we will find the relation between the Riemann curvature of the warped product in terms of that on the factors. To work in the degenerate case, the resulting formulae have to be in terms of the curvature tensor $R_{abcd}$. They became, in the {\nondeg} case, similar to those in \citepcf{ONe83}{210--211} for the curvature operator $R^a{}_{bcd}$..

\begin{definition}
\label{def_hessian}
Let $(M,g)$ be a {\semireg} manifold, and $f\in\fiscal{B}$, so that $\de f\in\metricformsk 1 M$. 
The smooth tensor field $H^f\in\tensors 0 2{M}$ defined by $H^f(X,Y) := \left(\der_X\de f\right)(Y)$, for any $X,Y\in\fivect{M}$, is called the {\em Hessian} of $f$.
\end{definition}

\begin{theorem}
\label{thm_wp_nondeg_riemm_tens}
Let $(B,g_B)$ and $(F,g_F)$ be two {\semireg} manifolds, $f\in\fiscal{B}$ a smooth function so that $\de f\in\srformsk 1 B$. Let $R_B, R_F$ be the lifts of the curvature tensors of $B$ and $F$. Let $X,Y,Z,T\in\fivectlift{B \times F,B}$, $U,V,W,Q\in\fivectlift{B \times F,F}$, and let $H^f$ be the {\em Hessian} of $f$. Then:
\begin{enumerate}
	\item
	$R(X,Y,Z,T) = R_B(X,Y,Z,T)$,
	\item
	$R(X,Y,Z,Q) = 0$,
	$R(X,Y,W,Q) = 0$,
	$R(U,V,Z,Q) = 0$,
	\item
	$R(X,V,W,T) = -fH^f(X,T)\metric{V,W}_F$
	\item
	$\begin{aligned}[t]
          R(U,V,W,Q)=&R_F(U,V,W,Q) \\
					& + f^2 \annihprod{\de f,\de f}_B\big(\metric{U,W}_F\metric{V,Q}_F- \metric{V,W}_F\metric{U,Q}_F\big),
       \end{aligned}$
\end{enumerate}
the other cases following from the symmetries of the curvature tensor.
\end{theorem}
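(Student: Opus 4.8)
The plan is to read off $R$ directly from the Koszul-formula expression of Proposition~\ref{thm_riemann_curvature_tensor_koszul_formula},
\[
R(A,B,C,D)= A\kosz(B,C,D) - B\kosz(A,C,D) - \kosz([A,B],C,D) + \kosz(A,C,\cocontr)\kosz(B,D,\cocontr) - \kosz(B,C,\cocontr)\kosz(A,D,\cocontr),
\]
substituting into it the warped-product values of $\kosz$ supplied by Proposition~\ref{thm_wp_deg_koszul} together with the elementary lift identities recalled in that proof: $[X,V]=0$, $\metric{X,V}=0$, $V\metric{X,Y}=0$, $X\metric{V,W}=2f\metric{V,W}_F X(f)$, and the facts that $\metric{V,W}_F$ is constant along base directions while $U(f)=0$ for fibre lifts $U$. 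Each entry of the theorem is one of the ways of distributing base and fibre arguments among $A,B,C,D$, and I would treat them in turn, the remaining orderings following from the symmetries of $R$ in Proposition~\ref{thm_curv_symm}.

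The single structural ingredient needed beyond these substitutions is the behaviour of the covariant contraction $\cocontr$ on the product. Because base and fibre directions are mutually orthogonal, the contraction splits as $\cocontr=\cocontr_B+\cocontr_F$; and since the fibre block of the warped metric is $f^2 g_F$, a fibre contraction of two radical-annihilator fibre forms carries an extra factor $f^{-2}$ relative to the fibre's own product $\annihprod{\cdot,\cdot}_F$. I would record this splitting first, as it both organises the computation factor by factor and is the place where the powers of $f$ must be tracked. A recurring sub-computation is that, for base $X,T$ and fibre $V,W$, the fibre form $\kosz(X,W,\cocontr)=fX(f)\metric{W,\cocontr}_F$ contracts against $\kosz(V,T,\cocontr)$ (the $f^{2}$ from the two Koszul values against the $f^{-2}$ of the inverse fibre metric) to give the clean, singularity-free value $\kosz(X,W,\cocontr)\kosz(V,T,\cocontr)=X(f)T(f)\metric{V,W}_F$.

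With this in hand the cases are short. (i) For four base arguments the third slots of both contractions are base--base--fibre, hence zero by Proposition~\ref{thm_wp_deg_koszul}, so only $\cocontr_B$ survives and the formula collapses to the Koszul formula on $B$, i.e.\ $R_B$. (ii) The three vanishing mixed entries follow by substitution: the derivative terms cancel through the Leibniz identities above, and the surviving contractions either vanish by parity or cancel in antisymmetric pairs; the case $R(U,V,Z,Q)$ additionally uses the metric and torsion properties of $\kosz_F$ (Theorem~\ref{thm_Koszul_form_props}) to match the first-derivative block against the fibre contraction. (iii) For $R(X,V,W,T)$ the term $X\kosz(V,W,T)=-\metric{V,W}_F X\!\left(fT(f)\right)$, the fibre contraction $X(f)T(f)\metric{V,W}_F$, and the base part $\kosz(V,W,\cocontr)\kosz(X,T,\cocontr)=-f\metric{V,W}_F\,\annihprod{\de f,\kosz_B(X,T,\cdot)}_B$ combine; here I would invoke Definition~\ref{def_cov_der_covect} in the form $\annihprod{\de f,\kosz_B(X,T,\cdot)}_B=X(T(f))-H^f(X,T)$ (Definition~\ref{def_hessian}), after which all $X(f)T(f)$ and $fX(T(f))$ contributions cancel and leave exactly $-fH^f(X,T)\metric{V,W}_F$. (iv) For four fibre arguments the first three Koszul terms carry a global $f^2$ and, recombined with the fibre part of the two contractions, reproduce $f^{2}$ times the Koszul formula on $F$, giving the fibre curvature contribution $f^2R_F(U,V,W,Q)$; the base part of the contractions is the fibre--fibre--base component $-f\metric{\cdot,\cdot}_F\de f$ paired with itself, producing precisely $f^2\annihprod{\de f,\de f}_B\big(\metric{U,W}_F\metric{V,Q}_F-\metric{V,W}_F\metric{U,Q}_F\big)$.

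The hard part is the $f$-bookkeeping in case (iv): one must verify that the $f^4$ from two fibre Koszul values against the $f^{-2}$ of the inverse fibre metric leaves a single non-negative power of $f$, and that no case produces a surviving $1/f$. That no negative power survives is exactly the content of semi-regularity — the contractions remain smooth even where $f$ vanishes — so this step is where the hypothesis $\de f\in\srformsk 1 B$ and the orthogonal splitting of $\cocontr$ genuinely do the work, rather than a formal manipulation valid only away from the singular set.
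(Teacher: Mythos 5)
Your proof is correct and follows essentially the same route as the paper's: the Koszul-formula expression for $R$ from Proposition~\ref{thm_riemann_curvature_tensor_koszul_formula}, the warped-product Koszul values of Proposition~\ref{thm_wp_deg_koszul}, and the orthogonal splitting of the contraction with its $f^{-2}$ fibre weight. The only point to reconcile is in case (iv), where your bookkeeping yields $f^2\,R_F^{g_F}(U,V,W,Q)$ for the fibre contribution while the statement writes $R_F(U,V,W,Q)$; this is consistent only if $R_F$ is read as the curvature of the fibres carrying their induced metric $f^2 g_F$ (which rescales the $(0,4)$-tensor by $f^2$), so your explicit tracking of that factor is, if anything, the more careful rendering.
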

\begin{proof}
We will use the formula \eqref{eq_riemann_curvature_tensor_koszul_formula} for the curvature.
Let $\cocontr$ denote the covariant metric contraction on $B \times_f F$, and $\stackrel B{\cocontr}$, $\stackrel F{\cocontr}$, the contractions on $B$, respectively $F$.
From Proposition \ref{thm_wp_deg_koszul}, \eqref{thm_wp_deg_koszul:BBF},
\begin{equation*}
\begin{array}{lll}
	R(X,Y,Z,T)&=&  X \kosz(Y,Z,T) - Y \kosz(X,Z,T) - \kosz([X,Y],Z,T)\\
	&& + \kosz(X,Z,\stackrel B{\cocontr})\kosz(Y,T,\stackrel B{\cocontr}) - \kosz(Y,Z,\stackrel B{\cocontr})\kosz(X,T,\stackrel B{\cocontr}) \\
	&=& R_B(X,Y,Z,T).
	\end{array}
\end{equation*}
Similarly, $R(X,Y,Z,Q)=0$ and $R(X,Y,W,Q) = 0$.
From Proposition \ref{thm_wp_deg_koszul}, \eqref{thm_wp_deg_koszul:BFF} and  \eqref{thm_wp_deg_koszul:FFF}, equation \eqref{eq_Koszul_form}, and since the contraction on $F$ cancels the coefficient $f^2$ of $\kosz(U,V,W)_F$, we obtain
\begin{equation*}
\begin{array}{lll}
	R(U,V,Z,Q)&=& U \left(f\metric{V,Q}_FZ(f)\right) - V \left(f\metric{U,Q}_FZ(f)\right) \\
	&&- f\metric{[U,V],Q}_FZ(f) \\
	&& + \kosz(U,Z,\stackrel B{\cocontr})\kosz(V,Q,\stackrel B{\cocontr}) - \kosz(V,Z,\stackrel B{\cocontr})\kosz(U,Q,\stackrel B{\cocontr}) \\
	&& + \kosz(U,Z,\stackrel F{\cocontr})\kosz(V,Q,\stackrel F{\cocontr}) - \kosz(V,Z,\stackrel F{\cocontr})\kosz(U,Q,\stackrel F{\cocontr}) \\
	&=& f Z(f) \left(U \metric{V,Q}_F - V \metric{U,Q}_F - \metric{[U,V],Q}_F\right) \\
	&& + \kosz(U,Z,\stackrel F{\cocontr})\kosz(V,Q,\stackrel F{\cocontr})_F - \kosz(V,Z,\stackrel F{\cocontr})\kosz(U,Q,\stackrel F{\cocontr})_F \\
	&=& f Z(f) \left(U \metric{V,Q}_F - V \metric{U,Q}_F - \metric{[U,V],Q}_F\right) \\
	&& + f\metric{U,\stackrel F{\cocontr}}_FZ(f) \kosz(V,Q,\stackrel F{\cocontr})_F  -f\metric{V,\stackrel F{\cocontr}}_FZ(f)\kosz(U,Q,\stackrel F{\cocontr})_F \\
	&=& f Z(f) (U \metric{V,Q}_F - V \metric{U,Q}_F - \metric{[U,V],Q}_F ) \\
	&& + \kosz(V,Q,U)_F - \kosz(U,Q,V))_F = 0. \\
	\end{array}
\end{equation*}
From Definition \ref{def_hessian} and Proposition \ref{thm_wp_deg_koszul},
\begin{equation*}
\begin{array}{lll}
	R(X,V,W,T)&=& - X \left(f T(f) \metric{V,W}_F\right) \\
	&& + f\metric{V,W}_F\de f(\cocontr) \kosz(X,T,\stackrel B{\cocontr})_B + X(f) \metric{W,\stackrel F{\cocontr}}_F T(f) \metric{V,\stackrel F{\cocontr}}_F \\
	&=& - X(f) T(f) \metric{V,W}_F - fX(T(f)) \metric{V,W}_F \\
	&& + f\metric{V,W}_F \kosz(X,T,\stackrel B{\cocontr})_B\de f(\stackrel B{\cocontr}) + X(f) T(f) \metric{W,V}_F \\
	&=& f\metric{V,W}_F \left[\kosz(X,T,\stackrel B{\cocontr})_B\de f(\stackrel B{\cocontr}) - X\metric{T,\grad f}_B\right] \\
	&=& -f H^f(X,T)\metric{V,W}_F, \\
	\end{array}
\end{equation*}

\begin{equation*}
\begin{array}{lll}
	R(U,V,W,Q)&=& R_F(U,V,W,Q)\\
	&& + \kosz(U,W,\stackrel B{\cocontr})\kosz(V,Q,\stackrel B{\cocontr}) - \kosz(V,W,\stackrel B{\cocontr})\kosz(U,Q,\stackrel B{\cocontr}) \\		
	&=& R_F(U,V,W,Q) + f^2 \metric{U,W}_F\de f(\stackrel B{\cocontr})\metric{V,Q}_F\de f(\stackrel B{\cocontr}) \\
	&& - f^2 \metric{V,W}_F\de f(\stackrel B{\cocontr})\metric{U,Q}_F\de f(\stackrel B{\cocontr}) \\
	&=& R_F(U,V,W,Q)\\
	&& + f^2 \annihprod{\de f,\de f}_B\big(\metric{U,W}_F\metric{V,Q}_F - \metric{V,W}_F\metric{U,Q}_F\big). \\
\end{array}
\end{equation*}
\end{proof}

\begin{remark}
The Riemann tensor $R^{a}{}_{bcd}$ is divergent when the warping function $f\to 0$, even in the {\nondeg} case (\citep{ONe83}{209--210}). But, since the warped product manifold $B \times_f F$ is {\semireg},  the curvature tensor $R_{abcd}$ is smooth, and Theorem \ref{thm_wp_nondeg_riemm_tens} confirms this.
\end{remark}

%%%%%%%%%%%%%%%%%%%%%%%%%%%%%%%%%%%%%%%%%%%%%%%%%%%%%%%%%%%%%%%%%%%%%%%%%%%%%%%%%%%%%%%%%
%\part{Singular general relativity}
%\label{part_sgr}

%----------------------------------------------------------------------------------------
\chap{ch_einstein_equation}{Einstein equation at singularities}

This chapter relies on author's original results, communicated in the papers \cite{Sto11a} and \cite{Sto12b}.
We apply the results from chapter \ref{ch_ssr} to  construct two versions of Einstein's equation, which are equivalent to Einstein's if the metric is {\nondeg}, but remain valid even in cases when the metric becomes degenerate.

The first version is constructed in \sref{s_einstein_semireg}, on {\semireg} spacetimes, where the Einstein tensor's density of weight 2 remains smooth even in the presence of {\semireg} singularities. We can thus write a densitized version of Einstein's equation, which is smooth, and which is equivalent to the standard Einstein equation if the metric is {\nondeg}.

Section \sref{s_einstein_quasireg} contains another version, which applies to a special class of {\semireg} spacetimes, which are called {\quasireg}, and whose Riemann curvature tensor admits smooth Ricci decomposition. This class of singularities will turn out to be important in problems involving the Weyl curvature tensor, in section \ref{s_wch} and chapter \ref{ch_quantum_gravity}.

The {\quasireg} singularities include, isotropic singularities, and a class of warped product singularities, the Schwarzschild singularity (see section \sref{s_black_hole_schw}), the Friedmann-Lema\^itre-Robertson-Walker Big Bang singularity (see section \sref{s_qreg_examples_warped}). This equation is constructed in terms of the Ricci part of the Riemann curvature (as the Kulkarni-Nomizu product between Einstein's equation and the metric tensor).

%----------------------------------------------------------------------------------------
\section{Einstein equation at {\semireg} singularities}
\label{s_einstein_semireg}

% Einstein equation at singularities semi-regular

%~~~~~~~~~~~~~~~~~~~~~~~~~~~~~~~~~~~~~~~~~~~~~~~~~~~~~~~~~~~~~~~~~~~~~~~%
\subsection{Einstein's equation on {\semireg} spacetimes}
\label{ss_einstein_tensor_densitized}

The Einstein tensor of a {\semireg} {\semiriem} manifold is defined as
\begin{equation}
\label{eq_einstein_tensor}
	G:=\ric-\frac 1 2 s g,
\end{equation}
in terms of the Ricci tensor and the scalar curvature, which can be defined even if the metric is degenerate metric, as long as its signature is constant (see \sref{s_ricci_tensor_scalar}). Unfortunately, at the points where signature changes, they may become infinite. Since the singularities from General Relativity are like this, we need to find something else.

\begin{definition}
\label{def_semi_reg_spacetime}
Let $(M,g)$ be a {\semireg} manifold of dimension $4$, having the signature $(0,3,1)$ at the points where it is {\nondeg}. Then, $(M,g)$ is called \textit{{\semireg} spacetime}.
\end{definition}

\begin{theorem}
\label{thm_densitized_einstein}
Let $(M,g)$ be a {\semireg} spacetime. Then, it has smooth Einstein density tensor of weight $2$, $G\det g$.
\end{theorem}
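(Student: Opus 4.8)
The plan is to reduce everything to the smooth Riemann tensor $R_{abcd}$ (smooth by Theorem \ref{thm_riemann_curvature_semi_regular}, and radical-annihilator in every slot by Corollary \ref{thm_curvature_tensor_radical}) and to show that multiplying by $\det g$ exactly absorbs the inverse-metric factors that make $\ric$ and $s$ diverge where the signature changes. First I would fix, around a point of $M$, a local orthogonal frame $(E_a)_{a=1}^n$ diagonalizing the metric, $g_{ab}=\alpha_a\delta_{ab}$, so that $\det g=\prod_{c}\alpha_c$ and, by \eqref{thm_cometric_in_basis}, $\annihg^{ab}=\dsfrac{1}{\alpha_a}\delta^{ab}$ wherever $\alpha_a\neq 0$. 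Writing out \eqref{eq_einstein_tensor} with $\ric_{bd}=\annihg^{ac}R_{bacd}$ (Definition \ref{def_ricci_curvature_tensor}) and $s=\annihg^{ac}\annihg^{bd}R_{bacd}$ (Definition \ref{def_scalar_curvature}), the only obstruction to smoothness is the appearance of the factors $1/\alpha_a$; the whole point is that each such factor will be paired, after multiplication by $\det g=\prod_c\alpha_c$, with enough surviving eigenvalues to cancel it.

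For the Ricci part I would compute
\begin{equation*}
\ric_{bd}\det g=\sum_{a\,:\,\alpha_a\neq 0}\dsfrac{\det g}{\alpha_a}R_{bada}=\sum_{a}\Big(\prod_{c\neq a}\alpha_c\Big)R_{bada},
\end{equation*}
a finite sum of products of (smooth) elementary symmetric expressions in the eigenvalues with the (smooth) components $R_{bada}$, hence smooth; invariantly, the passage to the second form is the statement that $\annihg^{ac}\det g$ agrees on the nondegenerate locus with the cofactor matrix $\tn{cof}(g)^{ac}$, which is polynomial in the entries $g_{ij}$ and therefore smooth on all of $M$. Thus $\ric\det g$ extends smoothly across the singular locus.

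The delicate term, and the main obstacle, is the scalar part $\dsfrac12 s\,g_{bd}\det g$: naively $s$ carries \emph{two} inverse-metric factors, so a single $\det g$ might leave an uncancelled power $1/\alpha$. The resolution is the antisymmetry of the Riemann tensor. In the diagonal frame the double contraction reads $s=\sum_{a,b}\dsfrac{1}{\alpha_a\alpha_b}R_{baba}$, but $R_{baba}=0$ whenever $a=b$ (antisymmetry in the first, hence also the last, pair of slots), so only terms with $a\neq b$ survive. For those terms both $\alpha_a$ and $\alpha_b$ occur as distinct factors of $\det g$, whence
\begin{equation*}
s\det g=\sum_{a\neq b}\Big(\prod_{c\neq a,\,b}\alpha_c\Big)R_{baba},
\end{equation*}
which is manifestly smooth. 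Since $g_{bd}$ is smooth, $\dsfrac12 s\,g_{bd}\det g=\dsfrac12\big(s\det g\big)g_{bd}$ is smooth, and therefore $G\det g=\ric\det g-\dsfrac12 s\,g\det g$ is smooth, as claimed.

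I would close with the remark that it is precisely this antisymmetry-driven vanishing of the coincident-index terms that keeps the required density weight down to $2$ (a single factor of $\det g$) rather than $4$: without it, the $1/\alpha_a^2$ contributions to $s$ would force two factors of $\det g$ to restore smoothness.
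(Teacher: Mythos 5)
Your proof is correct in its essential mechanism, but it follows a genuinely different route from the paper. The paper's proof is a single algebraic identity: it writes the Einstein tensor as a double Hodge dual of the Riemann tensor, $G_{ab}=g^{st}(\ast R\ast)_{asbt}$, substitutes $\varepsilon_{abcd}=\epsilon_{abcd}\sqrt{-\det g}$, and obtains $G^{ab}\det g=g_{kl}\epsilon^{akst}\epsilon^{blpq}R_{stpq}$, which is manifestly polynomial in $g_{ab}$ and the smooth $R_{abcd}$. Your argument unpacks exactly the cancellation that makes this identity work: the contraction of the antisymmetrized pair $\epsilon^{akst}\epsilon^{blpq}$ with $\det g$ is precisely the cofactor (first adjugate) for the Ricci term and the second adjugate for the scalar term, and your observation that $R_{baba}=0$ for $a=b$ is the reason one factor of $\det g$ suffices for the doubly-contracted scalar part. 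What the paper's version buys is brevity and manifest covariance in one step; what yours buys is an explanation of \emph{why} weight $2$ is the right weight (your closing remark), and an argument that is dimension-agnostic, whereas the paper's $\epsilon\epsilon$ identity is written specifically for $n=4$.

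One caveat you should patch: a smooth symmetric bilinear form of variable signature need not admit a \emph{smooth} local orthogonal frame $(E_a)$ diagonalizing it near a point where an eigenvalue crosses zero, so the frame computation by itself does not establish smoothness across the degenerate locus. You already supply the fix for the Ricci term, namely that $\annihg^{ac}\det g$ coincides on the {\nondeg} locus with the cofactor matrix $\tn{cof}(g)^{ac}$, which is polynomial in the $g_{ij}$ of an arbitrary smooth chart. State the analogous invariant fact for the scalar term: because $R_{abcd}$ is antisymmetric in each pair, only the combination
\begin{equation*}
\det g\left(\annihg^{ac}\annihg^{bd}-\annihg^{ad}\annihg^{bc}\right)
\end{equation*}
enters $s\det g$, and this is the second adjugate of $g$, again polynomial in the $g_{ij}$. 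With that substitution the diagonal frame becomes a heuristic rather than a load-bearing step, and the proof is complete.
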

\begin{proof}
Let $p\in M$ be a point where the metric $g$ is {\nondeg}.
The Hodge dual of $R_{abcd}$ with respect to the first and the second pairs of indices \cfeg{PeR87}{234} is
\begin{equation}
	(\ast R\ast)_{abcd} = \varepsilon_{ab}{}^{st}\varepsilon_{cd}{}^{pq}R_{stpq},
\end{equation}
where $\varepsilon_{abcd}$ are the components of the volume form associated to the metric.

The Einstein tensor \eqref{eq_einstein_tensor} can be written using the Hodge $\ast$ operator as
\begin{equation}
\label{eq_einstein_tensor_hodge}
	G_{ab} = g^{st}(\ast R\ast)_{asbt}.
\end{equation}
Since, in coordinates, the volume form is, in terms of the Levi-Civita symbol,
\begin{equation}
\varepsilon_{abcd} = \epsilon_{abcd}\sqrt{-\det g},
\end{equation}
we can rewrite the Einstein tensor as 
\begin{equation}
\label{eq_einstein_tensor_hodge_lc}
	G^{ab} = \dsfrac{g_{kl}\epsilon^{akst}\epsilon^{blpq} R_{stpq}}{\det g}.
\end{equation}
If $\det g\to 0$, the Einstein tensor becomes divergent. But the tensor density of weight $2$
\begin{equation}
\label{eq_einstein_tensor_density}
	G^{ab}\det g = g_{kl}\epsilon^{akst}\epsilon^{blpq} R_{stpq},
\end{equation}
remains smooth, because it is constructed only from the Riemann curvature tensor, which is smooth (see Theorem \ref{thm_riemann_curvature_semi_regular}), and from the Levi-Civita symbol, which is constant in the particular coordinate system. Even if $G^{ab}$ diverges, $\det g\to 0$ as needed to make the tensor density $G^{ab}\det g$ smooth. The tensor density $G_{ab}\det g$, obtained by lowering its indices, is also smooth.
\end{proof}

\begin{remark}
The densitized curvature scalar is smooth, since
\begin{equation}
\label{eq_curvature_scalar_density}
	s\det g = -g_{ab}G^{ab}\det g,
\end{equation}
and the densitized Ricci tensor is also smooth, because
\begin{equation}
\label{eq_ricci_density}
	R_{ab}\det g = g_{as}g_{bt}G^{st}\det g + \dsfrac 1 2 s g_{ab}\det g.
\end{equation}
\end{remark}

\begin{definition}
In General Relativity on a {\semireg} spacetime, let $T$ be the stress-energy tensor. The equation 
\begin{equation}
\label{eq_einstein:densitized}
	G\det g + \Lambda g\det g = \kappa T\det g,
\end{equation}
or, in coordinates or local frames,
\begin{equation}
\label{eq_einstein_idx:densitized}
	G_{ab}\det g + \Lambda g_{ab}\det g = \kappa T_{ab}\det g,
\end{equation}
where $\kappa:=\dsfrac{8\pi \mc G}{c^4}$, $\mc G$ and $c$ being Newton's constant and the speed of light,
is called the \textit{densitized Einstein equation}.
\end{definition}

%----------------------------------------------------------------------------------------
\section{Einstein equation at {\quasireg} singularities}
\label{s_einstein_quasireg}

% Einstein equation at singularities quasi-regular
In section \sref{s_einstein_exp_qreg} we introduce the \textit{expanded Einstein equation}, by taking the Kulkarni-Nomizu product between Einstein's equation and the metric tensor.
The expanded Einstein equation holds on a special type of {\semireg} spacetimes, named {\em \quasireg}. Given that we already have the densitized version of Einstein's equation, which holds on the full class of {\semireg} spacetimes, why would we need a new set of equations generalizing those of Einstein? The reason is that some applications require a smooth Weyl curvature tensor, which is ensured in {\quasireg} spacetimes (see section \sref{s_wch}, and chapter \ref{ch_quantum_gravity}). 

The {\quasireg} singularities, on which the extended Einstein equations hold, include isotropic singularities (section \sref{s_qreg_examples_isotropic}), and a class of warped product singularities, which includes the {\FLRW} Big Bang singularity (section \sref{s_qreg_examples_warped}). Also, the {\schw} singularity is {\quasireg} (section \sref{s_black_hole_schw}).

%--------------------------------------------------------
\subsection{Expanded Einstein equation and {\quasireg} spacetimes}
\label{s_einstein_exp_qreg}

The \textit{expanded Einstein equation} is
\begin{equation}
\label{eq_einstein_expanded}
	(G\circ g)_{abcd} + \Lambda (g\circ g)_{abcd} = \kappa (T\circ g)_{abcd}
\end{equation}
where the operation
\begin{equation}
\label{eq_kulkarni_nomizu}
	(h\circ k)_{abcd} := h_{ac}k_{bd} - h_{ad}k_{bc} + h_{bd}k_{ac} - h_{bc}k_{ad}
\end{equation}
is the \textit{Kulkarni-Nomizu product} of two symmetric bilinear forms $h$ and $k$.

If $g$ is {\nondeg}, it can be removed from equation \eqref{eq_einstein_expanded}, which is therefore equivalent to the Einstein equation. If $\det g\to 0$, it's inverse becomes singular, and in general the Einstein tensor $G_{ab}$, blows up (because the Ricci and scalar curvatures blow up). But the metric term from the Kulkarni-Nomizu product $G\circ g$ tends to $0$, and in some cases this is enough to cancel the blow up of the Einstein tensor.

Let $(M,g)$ be {\semiriem} manifold of dimension $n$. The Riemann curvature tensor can be decomposed algebraically (see {\eg} \cite{ST69,BESS87,GHLF04,pripoae2011connections}) as
\begin{equation}
	R_{abcd} = S_{abcd} + E_{abcd} + C_{abcd},
\end{equation}
where 
\begin{equation}
	S_{abcd} = \dsfrac{1}{2n(n-1)}R(g\circ g)_{abcd}
\end{equation}
is the scalar part of the Riemann curvature, and 
\begin{equation}
	E_{abcd} = \dsfrac{1}{n-2}(S \circ g)_{abcd}
\end{equation}
is the \textit{semi-traceless part} of the Riemann curvature. Here
\begin{equation}
\label{eq_ricci_traceless}
S_{ab} := R_{ab} - \dsfrac{1}{n}Rg_{ab}
\end{equation}
is the traceless part of the Ricci curvature.

The \textit{traceless part} of the Riemann curvature is called the \textit{Weyl curvature tensor},
\begin{equation}
\label{eq_weyl_curvature}
	C_{abcd} = R_{abcd} - S_{abcd} - E_{abcd}.
\end{equation}

Equations \eqref{eq_einstein_tensor}
 and \eqref{eq_ricci_traceless} allow us to write the Einstein tensor in terms of the traceless part of the Ricci tensor and the scalar curvature:
\begin{equation}
G_{ab} = S_{ab} - \dsfrac{1}{4}R g_{ab}.
\end{equation}
From this equation, we can define an \textit{expanded Einstein tensor}:
\begin{equation}
\label{eq_einstein_tensor_expanded}
\begin{array}{lrl}
G_{abcd} &:=& (G\circ g)_{abcd} \\
&=& (S \circ g)_{abcd} - \dsfrac{1}{4}R (g\circ g)_{abcd}\\
&=& 2 E_{abcd} - 6 S_{abcd},
\end{array}
\end{equation}
and rewrite the expanded Einstein equation in terms of it
\begin{equation}
\label{eq_einstein_expanded_explicit}
	2 E_{abcd} - 6 S_{abcd} + \Lambda (g\circ g)_{abcd} = \kappa (T\circ g)_{abcd}.
\end{equation}

%--------------------------------------------------------
\subsection{Quasi-regular spacetimes}
\label{s_qreg_spacetimes}

A {\semireg} manifold has a smooth Riemann curvature tensor $R_{abcd}$. In addition, we want to impose the condition that the tensors $E_{abcd}$ and $S_{abcd}$ are smooth, so that the expanded Einstein equation \eqref{eq_einstein_expanded_explicit} makes sense and is smooth.

\begin{definition}
\label{def_quasi_regular}
Let $(M,g_{ab})$ be a {\semireg} manifold. If the tensors $S_{abcd}$ and $E_{abcd}$ are smooth, we say that the manifold $(M,g)$ is \textit{{\quasireg}}, and that $g_{ab}$ is a \textit{{\quasireg} metric}.
\end{definition}

%--------------------------------------------------------
\subsection{Examples of {\quasireg} spacetimes}
\label{s_qreg_examples}

The {\quasireg} spacetimes include the ones with with {\nondeg} metric, but are more general than them. In the following we will see that this class is very large, and contains some relevant kinds of singularities encountered in General Relativity.

%--------------------------------------------------------
\subsubsection{Isotropic singularities}
\label{s_qreg_examples_isotropic}

\textit{Isotropic singularities} can be obtained by conformal rescalings of {\nondeg} metrics, when the scaling function vanishes \cite{Tod87,Tod90,Tod91,Tod92,Tod02,Tod03,CN98,AT99i,AT99ii}.

\begin{theorem}
\label{thm_quasireg_example_conformal}
Let $(M,\widetilde g_{ab})$ be a {\semiriem} manifold, and let $\Omega:M\to\R$ be a smooth function. Then, the manifold $(M,g_{ab} :=\Omega^2 \widetilde g_{ab})$ is {\quasireg}.
\end{theorem}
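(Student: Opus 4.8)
The plan is to split the conclusion into its two ingredients, exactly as demanded by Definition \ref{def_quasi_regular}: first that $(M,g)$ is \semireg{}, and then that the scalar part $S_{abcd}$ and the semi-traceless part $E_{abcd}$ of the Ricci decomposition of $R_{abcd}$ are smooth. The first is immediate: since $g=\Omega^2\widetilde g$ with $\widetilde g$ \nondeg{}, the manifold is conformally \nondeg{} in the sense of the previous subsection, so Theorem \ref{thm_conformal_semi_regular} gives that $(M,g)$ is \semireg{}, and in particular its fully covariant Riemann tensor $R_{abcd}$ is smooth. It remains to control $S_{abcd}$ and $E_{abcd}$ across the zero set of $\Omega$.

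On the open set $U:=\{p\in M:\Omega(p)\neq0\}$ the metric $g$ is \nondeg{}, so there $S_{abcd}$, $E_{abcd}$ and $C_{abcd}$ are the classical objects and I may use the usual conformal transformation laws relating the curvatures of $g$ and of $\widetilde g$. The key is a bookkeeping of powers of $\Omega$: from $g=\Omega^2\widetilde g$ one has $(g\circ g)_{abcd}=\Omega^4(\widetilde g\circ\widetilde g)_{abcd}$ and $(h\circ g)_{abcd}=\Omega^2(h\circ\widetilde g)_{abcd}$ for any symmetric $h$, so that
\begin{equation*}
S_{abcd}=\tfrac{1}{2n(n-1)}\,(\Omega^4 R)\,(\widetilde g\circ\widetilde g)_{abcd},\qquad E_{abcd}=\tfrac{1}{n-2}\,\big((\Omega^2 S_{ab})\circ\widetilde g\big)_{abcd},
\end{equation*}
where $R$ is the scalar curvature and $S_{ab}=R_{ab}-\tfrac1n R\,g_{ab}$ the traceless Ricci tensor of $g$. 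Since $\Omega^2 S_{ab}=\Omega^2 R_{ab}-\tfrac1n(\Omega^4 R)\,\widetilde g_{ab}$, the whole statement reduces to proving that the two scalar/tensor quantities $\Omega^4 R$ and $\Omega^2 R_{ab}$ extend smoothly to all of $M$.

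To establish these two facts I would insert the standard conformal formulas for $\Omega=e^{\varphi}$ and re-expand everything in terms of $\Omega$ and its $\widetilde\nabla$-derivatives. A direct computation yields
\begin{equation*}
\Omega^4 R=\Omega^2\widetilde R-2(n-1)\,\Omega\,\widetilde g^{ab}\widetilde\nabla_a\widetilde\nabla_b\Omega-(n-1)(n-4)\,\widetilde g^{ab}\,\partial_a\Omega\,\partial_b\Omega,
\end{equation*}
\begin{equation*}
\Omega^2 R_{ab}=\Omega^2\widetilde R_{ab}-(n-2)\,\Omega\,\widetilde\nabla_a\widetilde\nabla_b\Omega+2(n-2)\,\partial_a\Omega\,\partial_b\Omega-\widetilde g_{ab}\Big(\Omega\,\widetilde g^{cd}\widetilde\nabla_c\widetilde\nabla_d\Omega+(n-3)\,\widetilde g^{cd}\partial_c\Omega\,\partial_d\Omega\Big).
\end{equation*}
Every term on the right is built from $\Omega$, its first and second $\widetilde\nabla$-covariant derivatives, and the curvatures $\widetilde R$, $\widetilde R_{ab}$ of $\widetilde g$; all of these are smooth on $M$ because $\widetilde g$ is \nondeg{} (its inverse and Levi-Civita connection are smooth everywhere). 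Hence $\Omega^4 R$ and $\Omega^2 R_{ab}$ are smooth including where $\Omega$ vanishes, and substituting back into the displayed formulas shows $S_{abcd}$ and $E_{abcd}$ are smooth, which is precisely the \quasireg{} condition. (As a cross-check one may note that $C_{abcd}=\Omega^2\widetilde C_{abcd}$ by conformal invariance of the Weyl tensor, so that $S_{abcd}+E_{abcd}=R_{abcd}-C_{abcd}$ is smooth; but the \emph{individual} smoothness of $S$ and $E$ still requires the two scalar/Ricci identities above.)

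The main obstacle is exactly this power-counting: both $R$ and $R_{ab}$ diverge like negative powers of $\Omega$ as $\Omega\to0$, and the content of the theorem is that the factors $\Omega^4$ and $\Omega^2$ supplied by the Kulkarni--Nomizu products $(g\circ g)$ and $(S_{ab}\circ g)$ cancel those divergences and leave a polynomial expression in $\Omega$ and $\widetilde\nabla\Omega$. For this reason I would never manipulate $R$ or $R_{ab}$ separately, but work throughout with the manifestly smooth combinations $\Omega^4 R$ and $\Omega^2 R_{ab}$, deriving them once and for all on the dense open set $U$ and extending by continuity to the zeros of $\Omega$.
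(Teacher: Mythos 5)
Your proposal is correct and follows essentially the same route as the paper: semi-regularity via Theorem \ref{thm_conformal_semi_regular}, then power-counting in $\Omega$ using the conformal transformation laws of $R_{ab}$ and $R$ together with the factors $\Omega^2$ and $\Omega^4$ supplied by the Kulkarni--Nomizu products, to show that $\Omega^2 R_{ab}$ and $\Omega^4 R$ (equivalently $\Ric\circ g$ and $Rg\circ g$) extend smoothly across the zero set of $\Omega$. The only differences are cosmetic: you work in general dimension $n$ and write the fully re-expanded identities, whereas the paper fixes $n=4$ and isolates the single potentially singular term $\Omega^3(\Omega^{-1})_{;ab}=2\Omega_{;a}\Omega_{;b}-\Omega\Omega_{;ab}$.
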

\begin{proof}
From Theorem \ref{thm_conformal_semi_regular}, $(M,g_{ab})$ is {\semireg}.

At points $p\in M$ where $\Omega\neq 0$, the Ricci and scalar curvatures are (\cite{HE95}, p. 42.):
\begin{equation}
\label{eq_conformal_ricci_curv_ud}
R^a{}_b = \Omega^{-2}\widetilde R^a{}_b + 2\Omega^{-1}(\Omega^{-1})_{;bs}\widetilde g^{as}-\dsfrac 1 2\Omega^{-4}(\Omega^2)_{;st}\widetilde g^{st}\delta^a{}_b
\end{equation}
\begin{equation}
\label{eq_conformal_scalar_curv}
R=\Omega^{-2}\widetilde R-6\Omega^{-3}\Omega_{;st}\widetilde g^{st}
\end{equation}
where the semicolon is the covariant derivative associated to $\widetilde g$.
From equation \eqref{eq_conformal_ricci_curv_ud},
\begin{equation}
R_{ab}=\Omega^2 \widetilde g_{as} R^s{}_b=\widetilde R_{ab} + 2\Omega(\Omega^{-1})_{;ab}-\dsfrac 1 2\Omega^{-2}(\Omega^2)_{;st}\widetilde g^{st}\widetilde g_{ab},
\end{equation}
which may tend to infinity as $\Omega\to 0$. Let's show that the Kulkarni-Nomizu product $\Ric\circ g$ is smooth. The term $g$ contributes with a factor $\Omega^2$, and
\begin{equation}
\Omega^2 R_{ab}=\Omega^2 \widetilde R_{ab} + 2\Omega^3(\Omega^{-1})_{;ab}-\dsfrac 1 2(\Omega^2)_{;st}\widetilde g^{st}\widetilde g_{ab},
\end{equation}
is smooth, as follows from
\begin{equation}
\begin{array}{lll}
\Omega^3(\Omega^{-1})_{;ab} &=& \Omega^3\((\Omega^{-1})_{;a}\)_{;b} = \Omega^3\(-\Omega^{-2}\Omega_{;a}\)_{;b} \\
&=& \Omega^3\(2\Omega^{-3}\Omega_{;b}\Omega_{;a} - \Omega^{-2}\Omega_{;ab}\) \\
&=& 2\Omega_{;a}\Omega_{;b} - \Omega\Omega_{;ab} \\
\end{array}
\end{equation}
Hence, the tensor $\Ric\circ g$ is smooth. The smoothness of $R g\circ g$ follows by noticing that $g\circ g$ introduces a factor $\Omega^4$, while in the expression \eqref{eq_conformal_scalar_curv} of $R$, the power of $\Omega$ is $\geq-3$.

Therefore, $E_{abcd}$ and $S_{abcd}$ are smooth, and the spacetime $(M,g_{ab})$ is {\quasireg}.
\end{proof}

%--------------------------------------------------------
\subsubsection{{\qquasireg} warped products}
\label{s_qreg_examples_warped}

We recall from Corollary \ref{thm_reg_warped} that the degenerate warped product of two {\semiriem} manifolds is a {\semireg} manifold. Now we show that, under some assumptions which correspond to the {\FLRW} spacetime, it is also {\quasireg}.

\begin{theorem}[{\qquasireg} warped product]
\label{thm_quasireg_example_wp}
Let $(B,g_B)$ and $(F,g_F)$ be two {\semireg} manifolds, with $\dim B=1$ and $\dim F=3$, and let $f\in\fiscal{B}$. Then, the degenerate warped product $(B\times_f F,g=g_B+f^2 g_F)$  is {\quasireg}.
\end{theorem}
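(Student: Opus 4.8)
The plan is to verify the two defining conditions of \quasireg{}ity (Definition \ref{def_quasi_regular}) for $g=g_B+f^2g_F$ on the four-dimensional manifold $M=B\times_f F$: that $M$ is \semireg{}, and that the scalar part $S_{abcd}=\dsfrac{1}{2n(n-1)}R\,(g\circ g)_{abcd}$ and the semi-traceless part $E_{abcd}=\dsfrac{1}{n-2}(S\circ g)_{abcd}$, with $S_{ab}=R_{ab}-\dsfrac14 R\,g_{ab}$ and $n=4$, are smooth. Semi-regularity is immediate from the warped-product machinery: since $\dim B=1$, at each point $g_B$ is non-degenerate or zero, so in the relevant case (non-degenerate $g_B$, as for the \FLRW{} time axis) Corollary \ref{thm_reg_warped} applies directly, and in general Theorem \ref{thm_semi_reg_semi_riem_man_warped} gives the conclusion once $\de f\in\srformsk 1 B$. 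Hence $R_{abcd}$ is a smooth tensor (Theorem \ref{thm_riemann_curvature_semi_regular}), and the whole difficulty sits in the factors $R$ and $R_{ab}$, which involve covariant contractions and therefore blow up as $f\to 0$.

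First I would compute the Ricci and scalar curvature of $M$ by contracting the explicit Riemann components of Theorem \ref{thm_wp_nondeg_riemm_tens} in the slots of Definition \ref{def_ricci_curvature_tensor}. Writing $0$ for the base direction and $i,j,\dots\in\{1,2,3\}$ for fiber directions, and recalling that a covariant contraction over a fiber index carries a factor $\annihg^{ij}\sim f^{-2}$ while a contraction over the base index is $O(1)$, the components $R(X,V,W,T)=-fH^f(X,T)\metric{V,W}_F$ and $R(U,V,W,Q)$ give, after contraction,
\begin{equation*}
\ric_{00}\ \sim\ \dsfrac{1}{f}H^f_{00},\qquad \ric_{ij}\ \text{smooth in the }g_F\text{-frame},\qquad \ric_{0i}=0,
\end{equation*}
the fiber curvature entering $\ric_{ij}$ through $R_F$ together with $\annihprod{\de f,\de f}_B$-terms. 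The scalar curvature $R$ then behaves like $f^{-2}$, but $f^2R=(g_F)^{ij}\ric_{ij}+f^2\annihg^{00}\ric_{00}$ is smooth. Consequently the traceless Ricci $S_{ab}$ satisfies $S_{ij}=O(1)$ (smooth), $S_{0i}=0$, and $S_{00}=O(f^{-2})$ with $f^2S_{00}$ smooth.

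The key step is the power-counting in the Kulkarni-Nomizu products. Every metric factor $g_{ab}$ appearing in $g\circ g$ and $S\circ g$ is $O(1)$ if both indices equal the base index $0$, is $f^2(g_F)_{ij}$ (hence $O(f^2)$) if both are fiber, and vanishes if they are mixed. Because $\dim B=1$, a factor can be $O(1)$ only as $g_{00}$, and any term in which both Kulkarni-Nomizu metric factors are $g_{00}$ is killed by the antisymmetry of $\circ$ (its four terms cancel in pairs). Thus in each surviving component every singular factor is multiplied by at least one fiber metric factor $f^2$: for $S_{abcd}$ the product $R\,(g\circ g)_{abcd}$ is assembled from the smooth $f^2R$ times smooth pieces, and for $E_{abcd}$ the only potentially singular contribution is $S_{00}g_{ij}=f^2S_{00}\,(g_F)_{ij}$, smooth because $f^2S_{00}$ is, while $S_{ij}g_{00}$ is a product of smooth factors and the mixed and all-base components vanish. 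Therefore $S_{abcd}$ and $E_{abcd}$ are smooth and $M$ is \quasireg.

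The main obstacle is not the strategy but the rigorous bookkeeping: I must confirm that the cancellation of the singular parts is exact, giving smoothness rather than mere boundedness, which reduces to checking that $f^2R$, $f^2S_{00}$, $f\,\ric_{00}$, and the fiber Ricci $\ric_{ij}$ are genuinely smooth. This rests on expressing these quantities through the Hessian $H^f$ (Definition \ref{def_hessian}), $\annihprod{\de f,\de f}_B$, and the lifted curvatures $R_B,R_F$, all smooth on \semireg{} factors, so that no uncancelled negative power of $f$ survives. The specialization $\dim B=1$, $\dim F=3$ (hence $n=4$) is exactly what closes the argument: the single base direction forces the dangerous all-base Kulkarni-Nomizu term to vanish by antisymmetry, and the balance between the number of fiber contractions ($f^{-2}$ each) and the compensating fiber metric factors ($f^{2}$ each) is precisely right.
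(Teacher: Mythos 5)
Your proposal is correct and follows essentially the same route as the paper's proof: semi-regularity from the warped-product theorems, the warped-product Ricci and scalar curvature formulas (which you obtain by contracting Theorem \ref{thm_wp_nondeg_riemm_tens}, where the paper simply cites O'Neill's formulas), and then the observation that because $\dim B=1$ every surviving component of the Kulkarni--Nomizu products pairs the singular factors $\ric_{00}\sim f^{-1}$ and $R\sim f^{-2}$ with a fiber metric factor $f^2g_F$. Your explicit power-counting of $f^2R$, $f^2S_{00}$ and the antisymmetry cancellation of the all-base term is exactly the bookkeeping the paper leaves implicit.
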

\begin{proof}
From Corollary \ref{thm_reg_warped}, $B\times_f F$ is {\semireg}.

From \cite{ONe83}, p. 211, we know that, for horizontal vector fields $X,Y\in\fivectlift{B \times F,B}$ and vertical vector fields $V,W\in\fivectlift{B \times F,F}$,
\begin{enumerate}
	\item $\ric(X,Y) = \ric_B(X,Y) + \dsfrac{\dim F}{f}H^f(X,Y)$
	\item $\ric(X,V) = 0$
	\item $\ric(V,W) = \ric_F(V,W) + \(f\Delta f + (\dim F-1)g_B(\grad f,\grad f)\)g_F(V,W)$
\end{enumerate}
where $H^f$ is the Hessian, $\Delta f$ is the Laplacian, and $\grad f$ the gradient. It is clear that $\ric(X,V)$ and $\ric(V,W)$ are smooth, and we want to prove that $\ric(X,Y)$ is smooth too. Since $\dim B=1$, the only terms in $\Ric\circ g$ containing $\Ric(X,Y)$ have the form 
\begin{equation*}
\Ric(X,Y)g(V,W)=f^2\Ric(X,Y)g_F(V,W).
\end{equation*}
Hence, $\Ric\circ g$ is smooth.

The tensor $S_{abcd}$ is smooth too, because the scalar curvature is
\begin{equation}
	R = R_B + \frac {R_F}{f^2} + 2\dim F\dsfrac{\Delta f}{f} + \dim F(\dim F - 1)\dsfrac{g_B(\grad f,\grad f)}{f^2}.
\end{equation}
Hence, $B\times_f F$ is {\quasireg}.
\end{proof}

From Theorem \ref{thm_quasireg_example_wp} follows that the {\FLRW} spacetime with smooth {\em scale factor} (the warping function) is {\quasireg}.
In fact, from Theorem \ref{thm_quasireg_example_wp} this follows directly and more generally: 

\begin{corollary}
\label{thm_flrw}
The {\FLRW} spacetime, with smooth $a: I\to \R$, is {\quasireg}.
\end{corollary}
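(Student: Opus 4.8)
The plan is to exhibit the \FLRW{} spacetime as a degenerate warped product of exactly the kind handled in Theorem \ref{thm_quasireg_example_wp}, and then apply that theorem directly. First I would write the \FLRW{} line element in warped-product form,
\begin{equation*}
\de s^2 = -\de t^2 + a^2(t)\,\de s_F^2,
\end{equation*}
where $(F,g_F)$ is the three-dimensional spacelike slice of constant sectional curvature $k$ and $\de s_F^2$ is its metric. This identifies the base as $B = (I, -\de t^2)$, a one-dimensional interval carrying a Lorentzian metric, the fiber as $(F, g_F)$, and the warping function as the scale factor $f = a \in \fiscal{B}$. Thus $(M, g) = B \times_a F$ in the sense of Definition \ref{def_wp}, with $\dim B = 1$ and $\dim F = 3$, as required by Theorem \ref{thm_quasireg_example_wp}.

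Next I would verify the remaining hypotheses of that theorem, namely that both factors are \semireg{} and that the warping function is smooth. Since $B$ is a one-dimensional (hence {\nondeg}) Lorentzian manifold and $F$ is a three-dimensional Riemannian space form, both are ordinary {\semiriem} manifolds; and a {\semiriem} manifold is automatically \semireg, because on it the {\koszulname} reduces to the usual (smooth) Levi-Civita connection, so every lower covariant derivative $\ldera X Y$ lies in $\srformsk 1 {\,\cdot\,}$. The smoothness of $f = a$ is the standing hypothesis $a \colon I \to \R$ smooth. With all hypotheses in place, Theorem \ref{thm_quasireg_example_wp} gives at once that $B \times_a F$ — that is, the \FLRW{} spacetime — is \quasireg.

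The only genuine point of care, and hence the main (if modest) obstacle, is the reduction itself: confirming that the \FLRW{} geometry really does fit Definition \ref{def_wp} as a warped product with one-dimensional base and three-dimensional fiber, and that both factors qualify as \semireg{} manifolds so that Theorem \ref{thm_quasireg_example_wp} is applicable. Once this identification is secured, no curvature computation is needed: the corollary is an immediate specialization of Theorem \ref{thm_quasireg_example_wp}, and in particular the smoothness of $S_{abcd}$ and $E_{abcd}$ at the Big Bang, where $a \to 0$ and the metric degenerates, is already guaranteed by that theorem rather than requiring separate verification here.
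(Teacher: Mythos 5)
Your proposal is correct and is essentially the paper's own argument: the paper likewise defines the \FLRW{} spacetime as the warped product of $(I,-\de t^2)$ with the space form $(\Sigma,g_\Sigma)$ and warping function $a$, and proves the corollary as "a direct consequence of Theorem \ref{thm_quasireg_example_wp}." Your additional verification that both factors, being {\nondeg} {\semiriem} manifolds, are automatically {\semireg} is a correct and welcome elaboration of the same route.
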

\begin{proof}
This is a direct consequence of Theorem \ref{thm_quasireg_example_wp}. 
\end{proof}

%----------------------------------------------------------------------------------------
\chap{ch_big_bang}{The Big-Bang singularity}
This chapter is based on author's original results from the papers \cite{Sto11h}, \cite{Sto12a}, and \cite{Sto12c}.
\section{Introduction}
\label{s_big_bang_intro}

The \textit{cosmological principle} states that our expanding universe, can be considered at very large scale homogeneous and isotropic. This explains the success of the solution proposed by A. Friedmann \cite{FRI22de,FRI99en,FRI24}, which is an exact solution to Einstein's equation, describing a homogeneous, isotropic universe. It is known under the name of {\flrw} (\FLRW) spacetime, after the name of Georges Lema\^itre \cite{LEM27}, H. P. Robertson \cite{ROB35I,ROB35II,ROB35III} and A. G. Walker \cite{WAL37}, who rediscovered and made important contributions to this solution.

From the {\FLRW} model follows that the universe should be, at any time, either in expansion, or in contraction. Hubble's observations showed that the universe is currently expanding, and more recent observations showed that the expansion is accelerated \cite{PER99}. Modern cosmology saids that, long time ago, there was a huge concentration of matter which exploded, thus the name \textit{Big Bang}. It was suspected that General Relativity implies that the density of matter at the beginning of the universe was infinite, and Einstein's equation was singular. This was shown to be true, under general hypotheses, by Hawking's singularity theorem \cite{Haw66i,Haw66ii,Haw67iii,HP70,HE95} (who applied Penrose's method for the black hole singularities \cite{Pen65} backwards in time, to the past singularity of the Big Bang).

\image{flrw-std}{0.6}{The universe originated from a very dense state, probably a singularity, and expanded, with a very brief period of very high acceleration.}

According to the standard cosmological model, the universe started with the Big Bang, possibly singular, and expanded. It begins with a very short period of exponentially accelerated expansion, called \textit{inflation} (Fig. \ref{flrw-std}).

The extreme conditions which present at the Big Bang are very far from the range of our experiments, and even of our theoretical models. Therefore, we don't know very well what happened then. Maybe the singularity was avoided, by some quantum effect which circumvented the energy condition from the hypothesis of the singularity theorem. Such a possibility is explored in the \textit{loop quantum cosmology} \cite{Boj01,ABL03,bojowald2003absenceLQC,Boj05,ashtekar2011LQC,visinescu2009bianchi,visinescu2012bianchi}, leading to a discrete model of the universe with a Big Bounce.

We will not explore here the possibility that the Big Bang singularity is avoided by quantum or other sort of effects, which may follow from a yet to be discovered unification of General Relativity and Quantum Theory. Instead, we will apply the tools of singular {\semiriem} geometry developed in this Thesis, to push the limits of General Relativity beyond the Big Bang singularity. We will see that the {\FLRW} singularities with smooth scale factor behave well, being {\semireg}, and even {\quasireg}.

In \sref{s_flrw_intro}, we recall some elements of the {\FLRW} spacetime.
In \sref{s_flrw_semireg}, we will prove that the {\FLRW} metric is {\semireg}, and it obeys a densitized version of Einstein equation, with density of weight $1$.
Section \sref{s_wch} discusses the Weyl Curvature Hypothesis, and the importance of the {\quasireg} singularities, which satisfy it automatically. It also shows that a large class of spacetimes which are not necessarily homogeneous and isotropic are {\quasireg}.

%~~~~~~~~~~~~~~~~~~~~~~~~~~~~~~~~~~~~~~~~~~~~~~~~~~~~~~~~~~~~~~~~~~~~~~~%
\section{The {\flrw} spacetime}
\label{s_flrw_intro}

In the {\FLRW} cosmological model, the $3$-space at any moment of time is modeled, up to a scale factor $a(t)$, by a three-dimensional Riemannian space $(\Sigma,g_\Sigma)$. Time is modeled by an interval $I\subseteq \R$, with the natural metric $-\de t^2$. At any moment $t\in I$, the space $\Sigma_t$ is obtained by scaling $(\Sigma,g_\Sigma)$ with $a^2(t)$, where $a: I\to \R$ is called the \textit{warping function}. The spacetime $I\times\Sigma$, with the metric
\begin{equation}
\label{eq_flrw_metric}
\de s^2 = -\de t^2 + a^2(t)\de\Sigma^2,
\end{equation}
is called the {\FLRW} spacetime.
It is the \textit{warped product} between the manifolds $(\Sigma,g_\Sigma)$ and $(I,-\de t^2)$, with the warping function $a: I\to \R$.

For the typical space $\Sigma$ one can use any Riemannian manifold one may need, but in most cases, because of the cosmological principle, one considers solutions which satisfy the homogeneity and isotropy conditions. Hence, in most cases, $\Sigma$ is taken to be, at least at large scale, one of the homogeneous spaces $S^3$, $\R^3$, and $H^3$, having, in spherical coordinates $(r,\theta,\phi)$, the metric
\begin{equation}
\label{eq_flrw_sigma_metric}
\de\Sigma^2 = \dsfrac{\de r^2}{1-k r^2} + r^2\(\de\theta^2 + \sin^2\theta\de\phi^2\),
\end{equation}
where $k=1$ for the $3$-sphere $S^3$, $k=0$ for the Euclidean space $\R^3$, and $k=-1$ for the hyperbolic space $H^3$.

%~~~~~~~~~~~~~~~~~~~~~~~~~~~~~~~~~~~~~~~~~~~~~~~~~~~~~~~~~~~~~~~~~~~~~~~%
\subsection{The Friedman equations}

After we choose the $3$-space $\Sigma$, the only unknown part of the {\FLRW} metric remains the function $a(t)$. This can be determined by making assumptions about matter. For simplicity, it is in general considered that the universe is filled with a fluid with mass density $\rho(t)$ and pressure density $p(t)$. These quantities depend on $t$ only, because we assume homogeneity and isotropy. The stress-energy tensor is
\begin{equation}
\label{eq_friedmann_stress_energy}
T^{ab} = \(\rho+p\)u^a u^b + p g^{ab},
\end{equation}
where $u^a$ is the timelike vector field $\partial_t$, normalized.

The energy density component of the Einstein equation leads to the \textit{Friedmann equation}
\begin{equation}
\label{eq_friedmann_density}
\rho = \dsfrac{3}{\kappa}\dsfrac{\dot{a}^2 + k}{a^2}.
\end{equation}
Here, $\kappa:=\dsfrac{8\pi \mc G}{c^4}$ \cite{osimion05gr}. We will use in the following units in which $\mc G=1$ and $c=1$.
The trace of the Einstein equation gives the \textit{acceleration equation}
\begin{equation}
\label{eq_acceleration}
\rho + 3p = -\dsfrac{6}{\kappa}\dsfrac{\ddot{a}}{a}.
\end{equation}
From these equations we get the \textit{fluid equation} (the conservation of mass-energy):
\begin{equation}
\label{eq_fluid}
\dot{\rho} = -3 \dsfrac{\dot{a}}{a}\(\rho + p\).
\end{equation}

Knowing the function $a$, determines uniquely $\rho$, from \eqref{eq_friedmann_density}, and then $p$, from \eqref{eq_acceleration}.

Recent observations on supernovae reveal that the expansion is accelerated, hence there is a positive cosmological constant $\Lambda$ \cite{RIE98,PER99}. The equations \eqref{eq_friedmann_density},  \eqref{eq_acceleration}, and  \eqref{eq_fluid} assume $\Lambda=0$, but the generality is not lost, because the version of the equations having $\Lambda\neq 0$ are equivalent to the above ones, by the substitution
\begin{equation}
\label{eq_friedmann_lambda}
\begin{array}{l}
\left\{
\begin{array}{lll}
\rho &\to& \rho + \kappa^{-1}\Lambda  \\
p &\to& p - \kappa^{-1}\Lambda  \\
\end{array}
\right..
\\
\end{array}
\end{equation}
To simplify calculations, we will ignore $\Lambda$ in the following, without any loss of generality.

%~~~~~~~~~~~~~~~~~~~~~~~~~~~~~~~~~~~~~~~~~~~~~~~~~~~~~~~~~~~~~~~~~~~~~~~%
\subsection{Distance separation {\vs} topological separation}

We can understand the singularities in the {\FLRW} model, by making distinction between topology and geometry. A manifold $M$ is a topological space. If, in addition, there is a metric tensor $g$ on $M$, we obtain a geometry. The metric $g$ defines a distance, and the distance between two points $p\neq q\in M$ is zero only if the metric is Riemannian. If the metric is {\semiriem}, it is possible that the distance between two distinct points $p$ and $q$ is zero, provided that they are separated by a lightlike interval. If the metric $g$ is allowed to be degenerate, then there are more possibilities to have zero distance between distinct points. For example, consider a surface in $\R^3$, defined locally as the image of a map $f:U\to\R^3$, where $U\in\R^2$ is an open subset of $\R^2$. If the function $f$ is not injective, the resulting surface has self-intersections. Sometimes the surface can be defined implicitly, as the set of solutions of an equation. In this case too it may have self-intersections. The typical example is the cone
\begin{equation}
\label{eq_cone}
	x^2-y^2-z^2=0.
\end{equation}
It has a singularity at $x=0$. To resolve it, we make the transformation
\begin{equation}
\label{eq_desing_cone}
\begin{array}{l}
\left\{
\begin{array}{ll}
	x&=u \\
	y&=uv \\
	z&=uw \\
\end{array}
\right.
\\
\end{array}
\end{equation}
which maps the cylinder $v^2+w^2=1$ to the cone \eqref{eq_cone}. This procedure was studied starting with Isaac Newton \cite{newt60}, and is very used in mathematics, especially in \textit{algebraic geometry}.

\image{cone2cylinder}{0.8}{The old method of resolution of singularities ``unties'' the cone into a cylinder, removing the singularity.}

The natural metric on the space $(x,y,z)$ induces, by pull-back through the map \eqref{eq_desing_cone}, a metric on the the cylinder $v^2+w^2=1$. The induced metric is degenerate on the circle $u=0$ of the cylinder -- the distance between any two distinct points of the circle $u=0$ is zero.

\section{Densitized Einstein equation on the {\FLRW} spacetime}
\label{s_flrw_semireg}

% FLRW
We show that the Friedmann equations and the Einstein equation can be written in an equivalent form, which in addition avoids the infinities in a natural way, being thus valid at the singularity $a(t)=0$.

%~~~~~~~~~~~~~~~~~~~~~~~~~~~~~~~~~~~~~~~~~~~~~~~~~~~~~~~~~~~~~~~~~~~~~~~%
\subsection{What happens when the density becomes infinite?}
\label{s_infinite_density}

When $a\to 0$,  from \eqref{eq_friedmann_density} follows that $\rho\to\infty$, because a finite quantity of matter occupies a volume equal to $0$. From \eqref{eq_acceleration},  the pressure density $p$ may become infinite. So, is there a way to avoid the infinities? The answer is yes, provided that we realize that not $\rho$ and $p$ are the physical quantities, but the densities $\rho\sqrt{-g}$ and $p\sqrt{-g}$, which remain smooth.

In the Friedmann equations, $\rho$ and $p$ are scalar fields representing densities. But if we change the coordinates, $\rho$ and $p$ change, so they are not in fact scalars, but the components of another type of object. The adequate, invariant quantities representing densities involve the \textit{volume form}
\begin{equation}
\label{eq_dvol}
	\vol := \sqrt{-g}\de t\wedge\de x\wedge\de y\wedge\de z,
\end{equation}
where $\sqrt{-g}:=\sqrt{-\det g_{ab}}$.

The correct densities are not the scalars $\rho$ and $p$, but the differential $4$-forms $\rho\vol$ and $p\vol$. Their components in a coordinate system are $\rho\sqrt{-g}$ and respectively $p\sqrt{-g}$. They become identical to $\rho$ and $p$ only if $\det g=-1$, {\eg} in an orthonormal frame, like the comoving coordinate system of the {\FLRW} model. But an orthonormal frame doesn't make sense when $a\to 0$, because $\det g\to 0$. 

The determinant of the metric \eqref{eq_flrw_metric}, in the {\FLRW} coordinates, is
\begin{equation}
\label{eq_det_g_flrw}
	\det g = -a^6 \det{}_3 g_{\Sigma},
\end{equation}
where $\det_3 g_{\Sigma}$ is determinant of the metric of the $3$-dimensional typical space $\Sigma$, and is constant.
Hence, the metric's determinant in the comoving coordinates is 
\begin{equation}
\label{eq_sqrt_det_g_flrw}
	\sqrt{-g} = a^3 \sqrt{g_{\Sigma}}.
\end{equation}
Since $\sqrt{-g}\to 0$ when $a\to 0$, we will see that $\sqrt{-g}$ cancels the singularities introduced by $\rho$ and $p$ in $\rho\vol$, respectively $p\vol$.

The conservation of energy takes the form
\begin{equation}
	-a^3\dot\rho = 3 a^2 \dot a \rho + \dsfrac 3 {c^2} a^2\dot a p,
\end{equation}
and is valid even when the volume $a^3\to 0$.

%~~~~~~~~~~~~~~~~~~~~~~~~~~~~~~~~~~~~~~~~~~~~~~~~~~~~~~~~~~~~~~~~~~~~~~~%
\subsection{The Big Bang singularity resolution}
\label{s_bb_sing_resolved_generic}

Let's make the following substitution:
\begin{equation}
\label{eq_substitution_densities}
\begin{array}{l}
\left\{
\begin{array}{ll}
	\widetilde\rho = \rho \sqrt{-g} = \rho a^3 \sqrt{g_{\Sigma}} \\
	\widetilde p = p \sqrt{-g} = p a^3 \sqrt{g_{\Sigma}} \\
\end{array}
\right.
\\
\end{array}
\end{equation}

\begin{theorem}
\label{thm_bb_sing_resolved}
Let $a:I\to \R$ be a smooth function. Then, the densities $\widetilde\rho$, $\widetilde p$, and the densitized stress-energy tensor $T_{ab}\sqrt{-g}$ are smooth (and hence nonsingular), including when $a(t)=0$.
\end{theorem}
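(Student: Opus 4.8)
The plan is to substitute the Friedmann equations directly into the definitions \eqref{eq_substitution_densities} of $\widetilde\rho$ and $\widetilde p$, and to observe that the factor $a^3$ supplied by $\sqrt{-g}$ in \eqref{eq_sqrt_det_g_flrw} cancels exactly the powers of $a$ appearing in the denominators of \eqref{eq_friedmann_density} and \eqref{eq_acceleration}. First I would treat the density. From \eqref{eq_friedmann_density}, on the region where $a\neq 0$,
\begin{equation*}
\widetilde\rho = \rho\,a^3\sqrt{g_{\Sigma}} = \frac{3}{\kappa}\,(\dot a^2 + k)\,a\,\sqrt{g_{\Sigma}},
\end{equation*}
and the right-hand side is a product of smooth functions of $t$ (since $a$ and $\dot a$ are smooth, while $k$ and $\sqrt{g_{\Sigma}}$ are constant in $t$). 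Hence this expression defines a smooth function on all of $I$, providing the desired smooth extension of $\widetilde\rho$ across the locus $a=0$.

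Next I would treat the pressure density. Solving \eqref{eq_acceleration} for $p$ and multiplying by $a^3\sqrt{g_{\Sigma}}$, and substituting the expression for $\rho\,a^3$ obtained above, I get, again where $a\neq 0$,
\begin{equation*}
\widetilde p = p\,a^3\sqrt{g_{\Sigma}} = \left(-\frac{2}{\kappa}\,\ddot a\,a^2 - \frac{1}{\kappa}\,(\dot a^2 + k)\,a\right)\sqrt{g_{\Sigma}}.
\end{equation*}
Since $a$, $\dot a$, $\ddot a$ are all smooth, the bracketed expression is smooth on all of $I$, so $\widetilde p$ likewise extends smoothly across $a=0$.

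Finally, for the densitized stress-energy tensor I would compute the components of $T_{ab}$ in the comoving frame. Lowering indices in \eqref{eq_friedmann_stress_energy} gives $T_{ab} = (\rho+p)u_a u_b + p\,g_{ab}$; since $u=\partial_t$ is unit timelike, one has $u_a = -\delta^t_a$, so that $T_{tt} = \rho$, the spatial block is $T_{ij} = p\,g_{ij} = p\,a^2 (g_{\Sigma})_{ij}$, and the mixed components vanish. Multiplying by $\sqrt{-g}$ then yields $T_{tt}\sqrt{-g} = \widetilde\rho$ and $T_{ij}\sqrt{-g} = \widetilde p\,(g_{\Sigma})_{ij}\,a^2$, each of which is smooth by the previous two paragraphs. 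I expect no serious computational obstacle here; the only point requiring care is conceptual, namely that $\rho$ and $p$ are a priori defined only where $a\neq 0$, so the conclusion must be phrased as the existence of a smooth extension across the singular locus rather than smoothness of an already-defined function. The algebraic identities above make this extension manifest, since they exhibit $\widetilde\rho$, $\widetilde p$, and the components of $T_{ab}\sqrt{-g}$ as genuine polynomial combinations of $a$, $\dot a$, $\ddot a$ with coefficients smooth in the spatial variables.
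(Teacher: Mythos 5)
Your proposal is correct and follows essentially the same route as the paper: substitute the Friedmann equation and the acceleration equation into the definitions of $\widetilde\rho$ and $\widetilde p$, and observe that the factor $a^3$ coming from $\sqrt{-g}$ cancels the negative powers of $a$, after which smoothness of $T_{ab}\sqrt{-g}$ follows immediately. The only differences are cosmetic — you solve explicitly for $\widetilde p$ where the paper leaves the combination $\widetilde\rho+3\widetilde p$, and you write out the components of $T_{ab}\sqrt{-g}$ rather than the tensorial expression $(\widetilde\rho+\widetilde p)u_au_b+\widetilde p\,g_{ab}$ — and your closing remark about smooth extension across the locus $a=0$ is a worthwhile clarification the paper leaves implicit.
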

\begin{proof}
From the Friedmann equation \eqref{eq_friedmann_density}, which becomes now
\begin{equation}
\label{eq_friedmann_density_tilde}
\widetilde\rho = \dsfrac{3}{\kappa}a\(\dot a^2 + k\) \sqrt{g_{\Sigma}},
\end{equation}
if $a$ is smooth, $\widetilde\rho$ is smooth too.
Similarly, $\widetilde p$ is smooth too, from the acceleration equation \eqref{eq_acceleration}
\begin{equation}
\label{eq_acceleration_tilde}
\widetilde\rho + 3\widetilde p = -\dsfrac{6}{\kappa}a^2\ddot{a} \sqrt{g_{\Sigma}}.
\end{equation}

Since $\widetilde\rho$ and $\widetilde p$ are smooth functions, the densitized stress-energy tensor 
\begin{equation}
\label{eq_friedmann_stress_energy_densitized}
T_{ab}\sqrt{-g} = \(\widetilde\rho+\widetilde p\)u_a u_b + \widetilde p g_{ab},
\end{equation}
is smooth too. 
\end{proof}

Due to Theorem \ref{thm_bb_sing_resolved}, the smooth densitized version of the Einstein equation
\begin{equation}
\label{eq_einstein_idx:densitized_w1}
	G_{ab}\sqrt{-g} + \Lambda g_{ab}\sqrt{-g} = \kappa T_{ab}\sqrt{-g},
\end{equation}
makes sense and is smooth.

The Lagrangian density for General Relativity, proposed by Hilbert and Einstein, is
\begin{equation}
\label{eq_lagrangian}
	\dsfrac{1}{2\kappa}\(R\sqrt{-g} - 2\Lambda\sqrt{-g}\) + \mc L\sqrt{-g},
\end{equation}
where the Lagrangian density $\mc L\sqrt{-g}$ describes matter. While, for example, the scalar curvature $R$ is singular at $a(t)\to 0$, the density $R\sqrt{-g}$ is smooth.

For smooth $a$ so that $a(0)=0$, if $\dot a(0)\neq 0$, the {\FLRW} solution looks as in figure \ref{flrw-nonsing} \textbf{A}. If $\dot a(0)=0$, it behaves like a Big Bounce universe, except that the bounce is singular (fig. \ref{flrw-nonsing} \textbf{B}).

\image{flrw-nonsing}{1.0}{A schematic representation of a generic Big Bang singularity.\\\textbf{A.} $\dot a(0)\neq 0$. \textbf{B.} $\dot a(0)=0$, $\ddot a(0)>0$.}

%----------------------------------------------------------------------------------------
%\section{FLRW Big-Bang singularity is {\quasireg}}
%\label{s_flrw_quasireg}

% FLRWex (to keep and check later)
%\input{./sections/1203.1819-bb-flrw-ex-body}

%----------------------------------------------------------------------------------------
\section{The Weyl curvature hypothesis}
\label{s_wch}

% Weyl curvature hypothesis
Penrose, while attempting to explain the high homogeneity and isotropy, and the very low entropy of the early universe, stated the {\em \Wch} (\WCH), conjecturing the vanishing of the Weyl tensor at the Big Bang singularity.

The {\quasireg} metrics provide a large class of singularities satisfying the \Wch. 
We find a very general cosmological model, which generalizes the {\FLRW} model, but also isotropic singularities, by dropping the isotropy and homogeneity constraints. We show that it is {\quasireg}, and therefore satisfies the {\WCH}.

%--------------------------------------------------------
\subsection{Introduction}
\label{s_intro}

R. Penrose's {\WCH} originates in several distinct problems, including the search for an explanation of the second law of Thermodynamics, and of the high homogeneity and isotropy of the universe \cite{Pen79}. He analyzed the flow of energy in the Universe, and concluded that the second law of Thermodynamics is caused by a very high homogeneity near the Big-Bang. To explain this homogeneity, he wrote (\cite{Pen79}, p. 614)

\begin{quote}
In terms of spacetime curvature, the absence of clumping corresponds, very roughly, to the absence of Weyl conformal curvature (since absence of clumping implies spatial-isotropy, and hence no gravitational principal null-directions).
\end{quote}

He then emitted the \textit{\Wch} (\cite{Pen79}, p. 630)
\begin{quote}
this restriction on the early geometry should be something like: the Weyl curvature $C_{abcd}$ vanishes at any initial singularity
\end{quote}

There is also a motivation from the {\WCH}, coming from Quantum Gravity. It is expected that, near the Big-Bang, the quantum effects of gravity become relevant, but gravity is perturbatively nonrenormalizable at two loops \cite{HV74qg,GS86uvgr}. If the Weyl tensor vanishes, local degrees of freedom, hence gravitons, vanish too, removing some problems of Quantum Gravity, at least at the Big-Bang \cite{Car95}.

The Weyl tensor $C_{abcd}$ is, from gravitational viewpoint, responsible for the tidal forces. It is the traceless part of the Riemann curvature tensor $R_{abcd}$. The tensor $C_{abc}{}^d$ is invariant at a conformal rescaling $g_{ab}\mapsto\Omega^2 g_{ab}$. It vanishes on an open set $U\subset M$, where $\dim M\geq 4$, if and only if the metric is conformally flat on $U$.

{\em Isotropic singularities}, studied by Tod \cite{Tod87,Tod90,Tod91,Tod92,Tod02,Tod03}, Claudel \& Newman \cite{CN98}, Anguige \& Tod \cite{AT99i,AT99ii}, are obtained from {\nondeg} metrics $\widetilde g$ on $M$, by a conformal rescaling $g_{ab}=\Omega^2 \widetilde g_{ab}$, when $\Omega\to 0$. They have finite Weyl curvature tensor $C_{abc}{}^d=\widetilde C_{abc}{}^d$, and 
\begin{equation}
	C_{abcd}=g_{sd}C_{abc}{}^s=\Omega^2 \widetilde g_{sd}\widetilde C_{abc}{}^s\to 0.
\end{equation}

Another simple example of vanishing Weyl tensor comes from the {\FLRW} cosmological model. In fact, the Weyl tensor of the {\FLRW} metric vanishes identically, and is not relevant for {\WCH}.

Both these types of singularities are particular cases of the {\quasireg} singularities (\sref{s_qreg_examples}). Moreover, any {\quasireg} singularity in spacetime satisfies {\WCH}, as we will show. From \eqref{eq_weyl_curvature} we know that the Weyl curvature $C_{abcd}$ is smooth. At {\semireg} singularities (therefore at {\quasireg} too), the metric behaves as if it loses one or more dimensions, and the Weyl curvature lives, from algebraic viewpoint, in a space of dimension lower than $4$, therefore it vanishes, because of its algebraic symmetries.

In \sref{s_wch_ex} we propose a very general cosmological model which, unlike the {\FLRW} model, does not assume that the space slices have constant metric up to the overall scaling factor $a^2(t)$. We will keep the overall scaling factor $a^2(t)$, but we will allow the space part of the metric to change freely in time. We allow this generality because the Universe is not perfectly isotropic and homogeneous. Due to the general conditions we assume, we will not be concerned at this point with the particular matter content of this universe.

%--------------------------------------------------------
\subsection{The Weyl tensor vanishes at {\quasireg} singularities}
\label{s_wch_thm}

\begin{theorem}
\label{thm_wch}
Let $(M,g)$ be a {\quasireg} manifold of dimension $4$. Then, the Weyl curvature tensor $C_{abcd}$ vanishes at singularities.
\end{theorem}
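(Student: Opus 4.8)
The plan is to reduce the statement to a pointwise algebraic fact at each degenerate point and then exploit the drop in effective dimension there. First I would record what quasi-regularity gives for free: since $(M,g)$ is in particular {\semireg}, the Riemann tensor $R_{abcd}$ is smooth by Theorem \ref{thm_riemann_curvature_semi_regular}, and by Definition \ref{def_quasi_regular} the scalar part $S_{abcd}$ and the semi-traceless part $E_{abcd}$ are smooth as well. Hence, from $C_{abcd} = R_{abcd} - S_{abcd} - E_{abcd}$, the Weyl tensor $C_{abcd}$ is a smooth $(0,4)$-tensor field, finite even where $g$ degenerates. So the content of the theorem is not finiteness but actual vanishing at the singular points.

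Next I would check that $C$ is {\rannih} in all four of its slots. The Riemann tensor is {\rannih} in every slot by Corollary \ref{thm_curvature_tensor_radical}, the metric $g$ is {\rannih} in both slots by Example \ref{thm_metric_radical_annihilator}, and the Ricci tensor, being a metric contraction of $R$, inherits the same property. Therefore the Kulkarni-Nomizu products $S\circ g$ and $g\circ g$ entering $S_{abcd}$ and $E_{abcd}$ are {\rannih} in all slots, and so is their combination $C_{abcd}$.

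Now fix a point $p$ where $g_p$ is degenerate. Being {\rannih} in all slots, $C_p$ vanishes whenever one argument lies in $\radix{T_pM}$, so it descends to an algebraic tensor living on $\annih{T_pM}$, whose dimension is $m = \rank g_p < 4$; on that space $\annihg$ is a nondegenerate inner product. Moreover $C_p$ carries all the algebraic symmetries established in Proposition \ref{thm_curv_symm}: the two antisymmetries, the pair-interchange symmetry, and the first Bianchi identity. The concluding step is the classical fact that a totally trace-free algebraic curvature tensor on an inner-product space of dimension $\leq 3$ must vanish, which would give $C_p = 0$.

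The hard part will be precisely this last step: one must verify that $C_p$, viewed as a tensor on the $m$-dimensional space $\annih{T_pM}$, is genuinely totally trace-free with respect to $\annihg$, so that the low-dimensional vanishing applies. The tracelessness holds on the nondegenerate region, where the $n=4$ normalization of the Weyl tensor is the correct one, and the delicate issue is to transport it to $p$ using the smoothness of $C$ together with the {\rannih} structure, rather than the contraction $\annihg^{ac}$, which may itself blow up as the signature changes. In the situations most relevant to the applications, where the metric degenerates to rank $\leq 1$ — as for the isotropic (rank $0$) and {\FLRW} (rank $1$) singularities — this obstacle disappears, since then $\annih{T_pM}$ has dimension $0$ or $1$ and supports no nonzero curvature tensor at all; the general argument is the dimensional-reduction version of this observation.
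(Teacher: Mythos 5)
Your proposal takes essentially the same route as the paper's proof: smoothness of $C_{abcd}$ from the quasi-regular decomposition, the observation that at a degenerate point $p$ the curvature lives in $\otimes^4\annih{T}_pM$ with $\dim\annih{T}_pM=\rank g_p<4$, and the vanishing of Weyl-type tensors in dimension $\leq 3$. The step you flag as ``the hard part'' --- verifying that $C_p$, viewed on the {\nondeg} inner product space $(\annih{T}_pM,\annihg)$, is genuinely totally trace-free, given that the contraction $\annihg^{ac}$ is discontinuous where the signature changes --- is exactly the step the paper's proof passes over by citing the bare fact that ``any tensor having the symmetries of the Weyl tensor vanishes'' in dimension $\leq 3$; since that fact does require total tracelessness (the space of algebraic curvature tensors without the trace condition is $6$-dimensional in dimension $3$), your worry is legitimate and is not addressed by the paper either. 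Note that tracelessness at $p$ does not follow by naive continuity: in an orthogonal frame with $g_{11}(p)=0$, the identity $\sum_{a\geq 2}g_{aa}^{-1}C_{abad}=-g_{11}^{-1}C_{1b1d}$ on the {\nondeg} side has a right-hand side of the indeterminate form $0/0$, so the limit need not vanish without further input. As you observe, for the singularities actually used in the applications (isotropic, {\FLRW}, {\schw}) the rank drops to $0$ or $1$ and the issue is vacuous; in the general rank-$2$ or rank-$3$ case your proposal is as complete as the paper's proof, no more and no less.
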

\begin{proof}
The smoothness of $C_{abcd}$ follows from the smoothness of $R_{abcd}$, $E_{abcd}$, $S_{abcd}$, and from equation \eqref{eq_weyl_curvature}.

From Corollary \ref{thm_curvature_tensor_radical}, the Riemann curvature tensor $R_{abcd}$ of a {\semireg} {\semiriem} manifold satisfies at any $p\in M$
\begin{equation}
\left(R_{abcd}\right)_p \in \otimes^4 \annih{T}_pM.
\end{equation}
If at a point $p$ the metric $g$ is degenerate, then $\dim\left(\annih{T}_pM\right)<4$.
But in dimension $\leq 3$, any tensor having the symmetries of the Weyl tensor vanishes (see {\eg} \cite{BESS87}), hence, at any point $p$ where $g$ is degenerate,
\begin{equation}
\left(C_{abcd}\right)_p=0.
\end{equation}
\end{proof}

%--------------------------------------------------------
\subsection{Example: a general cosmological model}
\label{s_wch_ex}

The universe is not homogeneous and isotropic at all scales, as it is assumed in the {\FLRW} model. Thus, we are motivated to study a spacetime $(M,g)$ which is allowed to be inhomogeneous and anisotropic.

We consider that the spacetime is a manifold of the form $M=I\times \Sigma$, where $I\subseteq \R$ is an interval, and $\Sigma$ is a three-dimensional manifold. Let $\tau:M\to I$, defined by $\tau(t,x)=t$, be a global time coordinate. Let's consider that on each slice $\Sigma_t=\tau^{-1}(t)$ there is a Riemannian (hence {\nondeg}) metric $h_{ij}(t,x)$ (where $1\leq i,j\leq 3$), which depends smoothly on $(t,x)\in I\times\Sigma$. Let's consider on $I$ a smooth function $N:I\to\R$, and the metric $N^2(t)\de t^2$, which is allowed to be degenerate.

We represent the metric $h(t)$ as an arc element by $\de\sigma_t^2$, and assume that the metric $g$ on the total manifold $M$ is
\begin{equation}
	g_{ij}(t,x):=-N^2(t)\de t^2 + a^2(t)h_{ij}(t,x),
\end{equation}
where $a:I\to\R$ is a smooth function. The function $a(t)$ is allowed to vanish, the Big-Bang singularity is obtained for $a(t)=0$.

As an arc element, the metric is
\begin{equation}
\label{eq_metric_a_Nh}
\de s^2 = -N^2(t)\de t^2 + a^2(t)\de\sigma_t^2.
\end{equation}

The {\FLRW} model is obtained if $N(t)=1$, and $h_{ij}(t)$ is time independent, and of constant curvature. But we will work in full generality, allowing $\de\sigma_t^2$ to vary freely in time, to obtain therefore a much more general solution. When $a(t)=0$, our Big-Bang singularity is much more general than the {\FLRW} one, because the geometry of space slices $\(\Sigma_t,h(t)\)$ may be inhomogeneous and variable in time.

If $N(t)\neq 0$ for any $t\in I$, then $I$ can be reparameterized to obtain a constant metric, so that we can consider $N(t)=1$. Hence, the relevant differences introduced by using a non-constant $N$ become important only when $N(t)$ vanishes, together with $a(t)$.
For reasons which will become apparent, we require that
\begin{equation}
\label{eq_f_aN}
	f(t):=\dsfrac{a(t)}{N(t)}
\end{equation}
is not singular. For example, if $f(t)=1$, then $N(t)=a(t)$, and the resulting singularities are just isotropic singularities.

We are here interested in the most general case.

\begin{theorem}
\label{thm_metric_a_Nh}
Let $(M,g)$ be a spacetime, where $M=I\times \Sigma$, and the metric is given by \eqref{eq_metric_a_Nh}. Then,  $(M,g)$ is {\quasireg}.
\end{theorem}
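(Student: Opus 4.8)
The plan is to work directly with the {\koszulname}, as in the proofs of Theorems \ref{thm_quasireg_example_conformal} and \ref{thm_quasireg_example_wp}, exploiting the structural observation that, with $f=a/N$, the metric \eqref{eq_metric_a_Nh} factors as $g=N^2\tilde g$ where $\tilde g=-\de t^2+f^2h$. Thus $g$ degenerates conformally where $N\to 0$ (as for the isotropic singularities of Theorem \ref{thm_quasireg_example_conformal}) and in warped-product fashion where $f\to 0$ (as in Theorem \ref{thm_quasireg_example_wp}), the two being tied together by $a=Nf$. On the open set where $f\neq 0$ the metric $\tilde g$ is smooth and {\nondeg}, so there $g=N^2\tilde g$ is {\quasireg} at once by Theorem \ref{thm_quasireg_example_conformal}; the real work is therefore a neighborhood of $\{f=0\}\cup\{N=0\}$, which I handle by a uniform computation in the coordinates $(t,x^i)$, $1\le i\le 3$.

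First I would establish semi-regularity. In these coordinates the only nonvanishing components of the {\koszulname} are
\begin{equation*}
\kosz_{000}=-N\dot N, \qquad \kosz_{0ij}=\kosz_{i0j}=-\kosz_{ij0}=\tfrac12\,\partial_t\!\left(a^2h_{ij}\right), \qquad \kosz_{ijk}=a^2\,\kosz^{h}_{ijk},
\end{equation*}
where $\kosz^{h}_{ijk}$ is the {\koszulname} of the spatial metric $h(t,\cdot)$. Where $N=0$ (so the $\partial_t$-direction is radical) or where $a=0$ (so the spatial directions are radical) all of these vanish, since $\kosz_{0ij}$ and $\kosz_{ij0}$ carry a factor $a=Nf$ and $\kosz_{ijk}$ a factor $a^2$; hence $(M,g)$ is {\rstationary}. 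To conclude semi-regularity I would verify the criterion of Proposition \ref{thm_sr_cocontr_kosz}, namely that $\kosz(X,Y,\cocontr)\kosz(Z,T,\cocontr)\in\fiscal M$. Expanding the contraction as $\annihg^{00}\kosz_{XY0}\kosz_{ZT0}+\annihg^{ij}\kosz_{XYi}\kosz_{ZTj}$ with $\annihg^{00}=-N^{-2}$ and $\annihg^{ij}=a^{-2}h^{ij}$, every potential pole is cancelled: the factor $a^2$ in $\kosz_{ijk}$ kills $a^{-2}$, while each occurrence of $\kosz_{ij0}\propto a=Nf$ divided by $N^2$ leaves a power of $f$, which is finite by hypothesis. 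A short case analysis over $X,Y,Z,T\in\{\partial_t,\partial_i\}$ then shows the product is smooth throughout.

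Next I would upgrade this to {\quasireg}. Being {\semireg}, $(M,g)$ has a smooth curvature tensor $R_{abcd}$, from which $\ric_{ab}=\annihg^{ef}R_{aebf}$ and $R=\annihg^{ab}\ric_{ab}$ are formed; these may blow up as $a,N\to 0$ because of the inverse-metric factors. I would then show that the Kulkarni--Nomizu products $\ric\circ g$ and $R\,(g\circ g)$ are nevertheless smooth: each such product reintroduces two metric factors ($g_{00}\sim N^2$, $g_{ij}\sim a^2$) which, together with the $a^2$ and $Nf$ factors already carried by the {\koszulname}, reduce every inverse power of $a$ and $N$ to a nonnegative power of $f$. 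Smoothness of $R\,(g\circ g)$ gives $S_{abcd}=\tfrac1{24}R\,(g\circ g)_{abcd}$, and smoothness of $\ric\circ g$ together with that of $R\,(g\circ g)$ gives $E_{abcd}=\tfrac12(S\circ g)_{abcd}$ (with $S_{ab}=R_{ab}-\tfrac14Rg_{ab}$, so that $S\circ g=\ric\circ g-\tfrac14R\,(g\circ g)$); hence both $S_{abcd}$ and $E_{abcd}$ are smooth and $(M,g)$ is {\quasireg} by Definition \ref{def_quasi_regular}.

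The main obstacle is the bookkeeping of the two simultaneous degeneracies together with the time dependence of $h$. Unlike a genuine warped product, the fiber metric here varies in $t$, producing the extra terms $\tfrac12a^2\dot h_{ij}$ in $\kosz_{0ij}$; these are smooth and, crucially, still carry the factor $a$ (through $\tfrac12\partial_t(a^2h_{ij})=a\dot a\,h_{ij}+\tfrac12a^2\dot h_{ij}$), so they do not spoil the cancellations above. The genuinely delicate point is the mixed time-space curvature: contractions that naively pair $\annihg^{ij}$ with $g_{00}$ would produce a factor $N^2/a^2=f^{-2}$, and one must check that such combinations occur only accompanied by compensating factors, so that finiteness of $f=a/N$ --- the one standing hypothesis --- is exactly what guarantees the result.
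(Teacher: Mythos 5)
Your proposal is correct and follows essentially the same route as the paper: radical-stationarity and semi-regularity from the explicit coordinate components of the Koszul object together with the cancellation $a/N=f$, and quasi-regularity from tracking how the Kulkarni--Nomizu factors $g_{00}\sim N^2$, $g_{ij}\sim a^2$ compensate the inverse-metric powers --- the paper carries out this last bookkeeping explicitly in Tables \ref{tab_kn_ricci} and \ref{tab_kn_scalar}, where the danger term you flag (pairing $\annihg^{ij}$ with $g_{00}$, giving $N^2/a^2=f^{-2}$) is indeed seen to be cancelled because the accompanying curvature terms carry factors $a^2$ or $f^2$. The only cosmetic differences are that you work directly with $\kosz_{abc}$ rather than with the raw derivatives $g_{ab,c}$ and $g_{ab,cd}$, and that you first dispose of the region $f\neq 0$ via the factorization $g=N^2\left(-\de t^2+f^2h\right)$ and Theorem \ref{thm_quasireg_example_conformal}.
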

\begin{proof}
We will prove first that the metric is {\semireg}, by showing that the terms in the Riemann curvature tensor \eqref{eq_riemann_curvature_tensor_coord} are smooth. Then, we will show that the Ricci decomposition
\begin{equation}
\label{eq_ricci_decomposition}
	R_{abcd} = E_{abcd} + S_{abcd} + C_{abcd}.
\end{equation}
is smooth.

The metric $g$ is
\begin{equation}
	g(t,x) =
%	\left(
%\begin{array}{cc}
%	g_{00} & g_{0i} \\
%	g_{j0} &  g_{ij} \\
%\end{array}
%\right)
%=
	\left(
\begin{array}{cc}
	-N^2(t) & 0 \\
	0 &  a^2(t) h_{ij}(t,x) \\
\end{array}
\right),
\end{equation}
and its reciprocal is
\begin{equation}
	g^{-1}(t,x) =
%	\left(
%\begin{array}{cc}
%	g^{00} & g^{0i} \\
%	g^{j0} &  g^{ij} \\
%\end{array}
%\right)
%=
	\left(
\begin{array}{cc}
	-N^{-2}(t) & 0 \\
	0 &  a^{-2}(t) h^{ij}(t,x) \\
\end{array}
\right).
\end{equation}

The partial derivatives of the metric are
\begin{equation}
\label{eq_g_der}
\begin{array}{ll}
	g_{00,0} &= -2N\dot N, \\
	g_{00,k} &= 0, \\
	g_{ij,0} &= a(2\dot a h_{ij} + a \dot h_{ij}), \\
	g_{ij,k} &= a^2 \partial_k h_{ij}. \\
\end{array}
\end{equation}

The second order partial derivatives of the metric are

\begin{equation}
\label{eq_g_der_der}
\begin{array}{ll}
	g_{00,00} &= -2\(\dot N^2 + N\ddot N\), \\
	g_{00,k0} &= g_{00,0k} = g_{00,kl} =  0, \\
	g_{ij,00} &= 2\dot a^2 h_{ij} + 2a\ddot a h_{ij}  + 4a\dot a \dot h_{ij} + a^2 \ddot h_{ij}, \\
	g_{ij,k0} &= a\(2 \dot a \partial_k h_{ij} + a \partial_k \dot h_{ij}\), \\
	g_{ij,kl} &= a^2 \partial_k \partial_l h_{ij}.\\
\end{array}
\end{equation}

To check that $g$ is {\semireg}, it is enough to check the smoothness of the terms of the form $g_{ab,\cocontr}g_{cd,\cocontr}$. From the equations \eqref{eq_g_der},

\begin{equation}
\begin{array}{lll}
	g_{00,\cocontr}g_{00,\cocontr} &=& -N^{-2}g_{00,0}g_{00,0} + a^{-2}h^{cd}g_{00,c}g_{00,d} \\
	&=& -4\dot N^2, \\
\end{array}
\end{equation}

\begin{equation}
\begin{array}{lll}
	g_{00,\cocontr}g_{ij,\cocontr} &=& -N^{-2}g_{00,0}g_{ij,0} + a^{-2}h^{cd}g_{00,c}g_{ij,d} \\
	&=& 2\dsfrac{\dot N a}{N}\(2\dot a h_{ij} + a\dot h_{ij}\),  \\
	\end{array}
\end{equation}
and
\begin{equation}
\begin{array}{lll}
	g_{ij,\cocontr}g_{kl,\cocontr} &=& -N^{-2}g_{ij,0}g_{kl,0} + a^{-2}h^{cd}g_{ij,c}g_{kl,d} \\
	&=& -\dsfrac{a^2}{N^2}(2\dot a h_{ij} + a \dot h_{ij})(2\dot a h_{kl} + a \dot h_{kl})
	+ a^2h^{cd}\partial_c h_{ij}\partial_d h_{kl}. \\
	\end{array}
\end{equation}

But for a smooth function $f(t,x)$
\begin{equation}
a(t,x) = f(t,x)N(t),
\end{equation}
therefore, the terms calculated above are smooth, as we can see:
\begin{equation}
\label{eq_g_cocontr}
\begin{array}{ll}
	g_{00,\cocontr}g_{00,\cocontr} &= -4\dot N^2 \\
	g_{00,\cocontr}g_{ij,\cocontr} &= 2\dot N f \(2\dot a h_{ij} + a\dot h_{ij}\)  \\
	g_{ij,\cocontr}g_{kl,\cocontr} &= f^2\( - (2\dot a h_{ij} + a \dot h_{ij})(2\dot a h_{kl} + a \dot h_{kl}).
		+ N^2h^{cd}\partial_c h_{ij}\partial_d h_{kl}\)\\
\end{array}
\end{equation}
Hence, the metric $g$ is {\semireg}.

To prove that $\Ric\circ g$ and $R g \circ g$ are smooth, we have to contract the terms from \eqref{eq_g_der_der} and \eqref{eq_g_cocontr}, and see what happens when taking Kulkarni-Nomizu products with $g$. The tensor $\Ric\circ g$ is a sum of products between $g_{ab,cd}$ or $g_{ab,\cocontr}g_{cd,\cocontr}$, and $g^{ef}$, and $g_{gh}$. The tensor $R g \circ g$ is a sum of products between $g_{ab,cd}$ or $g_{ab,\cocontr}g_{cd,\cocontr}$, $g^{ef}g_{gh}$, and $g\circ g$, which is of the form $N^4(t)f^2(t)q_{abcd}$. We resume these products in the tables \ref{tab_kn_ricci} and \ref{tab_kn_scalar}, where we can see that they are smooth. To write these tables, we used the following facts:
\begin{itemize}
	\item 
only some particular combinations of indices are allowed when contracting, and in the Kulkarni-Nomizu products,
	\item 
$g\circ g$ is of the form $N^4f^2q_{abcd}$, with $q_{abcd}$ smooth,
	\item 
$g_{ab}=N^2\tilde{g}_{ab}$, where $\tilde{g}_{ab}$ is smooth; $g^{ab}=N^{-2}f^{-2}\hat{g}^{ab}$, with $\hat{g}^{ab}$ smooth
	\item 
$g_{ij}=a^2h_{ij}=f^2N^2h_{ij}$; $g^{ij}=a^{-2}h^{ij}=f^{-2}N^{-2}h^{ij}$.
\end{itemize}

\begin{table}[!htbp]
\centering
\caption{Terms from $\Ric\circ g$.}
\label{tab_kn_ricci}
\begin{tabular}{|c|c|c|c|}
\toprule
Term &  \multicolumn{2}{c|}{Multiply with} & Term from $\Ric\circ g$ \\
\midrule
$u_{abcd}$ &  $g^{ef}$ & $g_{pq}$ & $u_{abcd} g^{ef} g_{pq}$ \\
{} &  ${}_{\{e,f\}\subset\{a,b,c,d\}}$ & ${}_{\{p,q\}\neq \{a,b,c,d\}-\{e,f\}}$ & {} \\\toprule
$g_{00,00}$ or $g_{00,\cocontr}g_{00,\cocontr}$ & $-N^{-2}$ & $a^2h_{ij}$ &  $-u_{0000}f^2h_{ij}$ \\\midrule
$g_{ij,kl}=a^2 \partial_k \partial_l h_{ij}$ & $a^{-2}h^{ef}$ & $g_{pq}$ & $h^{ef}\partial_k \partial_l h_{ij}g_{pq}$ \\\midrule
$g_{ij,\cocontr}g_{kl,\cocontr}=f^2 \tilde{u}_{ijkl}$ & $f^{-2}N^{-2}h^{ef}$ & $N^2\tilde{g}_{pq}$ & $\tilde{u}_{ijkl}h^{ef}\tilde{g}_{pq}$ \\\midrule
$g_{ij,00}$ or $g_{ij,\cocontr}g_{00,\cocontr}$ & $g^{i0}=0$ & $g_{pq}$ & $0$ \\\midrule
$g_{ij,00}$ or $g_{ij,\cocontr}g_{00,\cocontr}$ & $g^{00}=N^{-2}$ & $N^2\tilde{g}_{pq}$ & $\tilde{g}_{pq}u_{abcd}$ \\\midrule
$g_{ij,00}$ or $g_{ij,\cocontr}g_{00,\cocontr}$ & $g^{ij}=a^{-2}h^{ij}$ & $a^2h_{pq}$ ($p,q\neq 0$) & $h^{ik}h_{pq}u_{abcd}$ \\\midrule
$g_{ij,k0}=a v_{ijk}$ & $g^{ef}=a^{-2}h^{ef}$ & $a^2h_{pq}$ ($p,q\neq 0$) & $a h^{ef}h_{pq} v_{ijk}$ \\\bottomrule
\end{tabular}
\end{table}

\begin{table}[!htbp]
\centering
\caption{Terms from $R g\circ g$.}
\label{tab_kn_scalar}
\begin{tabular}{|c|c|c|c|}
\toprule
Term &  \multicolumn{2}{c|}{Multiply with} & Term from $R(g\circ g)$ \\
\midrule
$u_{abcd}$ &  $g^{ef}g^{gh}$ & $(g\circ g)_{pqrs}$ & $u_{abcd} g^{ef}g^{gh} (g\circ g)_{pqrs}$ \\
{} &  ${}_{\{e,f,g,h\}=\{a,b,c,d\}}$ & {} & {} \\\toprule
$g_{00,00}$ or $g_{00,\cocontr}g_{00,\cocontr}$ & $-N^{-4}$ & $N^4f^2q_{abcd}$ &  $-u_{0000}f^2q_{abcd}$ \\\midrule
$g_{ij,kl}=a^2 \partial_k \partial_l h_{ij}$ & $a^{-4}h^{ef}h^{gh}$ & $N^4f^2q_{abcd}$ & $N^2q_{abcd} h^{ef}h^{gh}\partial_k \partial_l h_{ij}$ \\\midrule
$g_{ij,\cocontr}g_{kl,\cocontr}=f^2 \tilde{u}_{ijkl}$ & $f^{-4}N^{-4}h^{ef}h^{gh}$ & $N^4f^2q_{abcd}$ & $\tilde{u}_{ijkl}h^{ef}h^{gh}q_{abcd}$ \\\midrule
$g_{ij,00}$ or $g_{ij,\cocontr}g_{00,\cocontr}$ & $g^{00}g^{ij}=N^{-4}f^{-2}h^{ij}$ & $N^4f^2q_{abcd}$ & $h^{ij}u_{abcd}q_{abcd}$ \\\midrule
$g_{ij,k0}=a v_{ijk}$ & $0$ & $N^4f^2q_{abcd}$ & $0$ \\\bottomrule
\end{tabular}
\end{table}

By inspecting the tables, we see that all terms contained in $R_{abcd}$, $E_{abcd}$, and $S_{abcd}$ are smooth. Hence, the metric \eqref{eq_metric_a_Nh} is {\quasireg}.
\end{proof}

%--------------------------------------------------------
\subsection{Conclusion}

The {\quasireg} singularities offer a nice surprise, since in dimension $4$ they have vanishing Weyl curvature $C_{abcd}$. Therefore, any {\quasireg} Big-Bang singularity also satisfies the {\Wch} (\sref{s_wch_thm}).

As a main application, we studied in \sref{s_wch_ex} a cosmological model which is much more general than {\FLRW}, because it drops the isotropy and homogeneity conditions. This generality is more realistic from physical viewpoint, since our Universe appears homogeneous and isotropic only at very large scales. This model contains as particular cases, in addition to {\FLRW}, also the isotropic singularities.

%----------------------------------------------------------------------------------------
\chap{ch_black_hole}{Black hole singularity resolution}

In this chapter are presented the original results published by the author in the papers \cite{Sto11e}, \cite{Sto11f}, \cite{Sto11g}, and \cite{Sto12e}.

At least in classical General Relativity, black hole singularities are usually considered to break the time evolution.

For each of the standard black hole solutions, we find coordinates which make the metric smooth. This is somehow similar to the method used by Eddington \cite{eddington1924comparison} and Finkelstein \cite{finkelstein1958past}, to show that the event horizon singularity is due to the coordinates. In our case, the singularities remain, but the metric becomes degenerate and analytic, without singular components.

We show that the {\schw} singularity can be made analytic, and in fact {\semireg}, by such a coordinate transformation, in section \sref{s_black_hole_schw}. Sections \sref{s_black_hole_rn} and \sref{s_black_hole_kn} present coordinates in which the {\rn}, respectively the {\kn} singularities are made analytic. The latter involve the presence of an electromagnetic field, and we show that this field and its potential are analytic in our coordinates, thus being {\nonsing}, as they appear in the standard, but singular coordinates.

In section \sref{s_black_hole_hyper} we show that these solutions can be utilized to construct spacetimes with more general black hole singularities, which are created and then vanish by Hawking evaporation.

The results are applied to non-stationary black holes, including evaporating ones.

%%%%%%%%%%%%%%%%%%%%%%%%%%%%%%%%%%%%%%%%%%%%%%%%%%%%%%%%%%%%%%%%%%%%%%%%%%%%%%%%%%%%%%%%%%%%%
\section{Introduction}

%%%%%%%%%%%%%%%%%%%%%%%%%%%%%%%%%%%%%%%%%%%%%%%%%%%%%%%%%%%%%%%%%%%%%%%%%%%%%%%%%%%%%%%%%%%%%
\subsection{The singularity theorems}

Despite the successes of General Relativity, one of its own consequences seems to question it: the occurrence of singularities in the black holes. It is often said that General Relativity predicts, because of these singularities, it's own breakdown \cite{HP70,Haw76,ASH91,HP96,Ash08,Ash09}. Such singularities follow from the \textit{singularity theorems} of Penrose and Hawking \cite{Pen65,Haw66i,Haw66ii,Haw67iii,HP70,HE95}. The conditions leading to singularities were found to be common (Christodoulou \cite{Chr09}), and then even more common (Klainerman and Rodnianski \cite{KR09}).

Initially, there was some confusion regarding the singularities. In 1916, when {\schw} proposed \cite{Scw16a,Scw16b} his solution to Einstein's equation, representing a black hole, it was believed that the event horizon is singular. Only after 1924, when Eddington proposed another coordinate system which removed the singularity at the event horizon \cite{eddington1924comparison}, and 1958, with the work of D. Finkelstein \cite{finkelstein1958past}, it was understood that the event horizon's singularity was only apparent,  being due to the choice of the coordinate system. But the singularity at the center of the black hole remained independent of the particular coordinates, and the singularity theorems showed that any black hole would have such a singularity. In \cite{Sto11e,Sto11f,Sto11g} we have shown that, although the genuine singularities cannot be removed, they can at least be made manageable -- there are coordinate changes which make the metric degenerate, but smooth.

%%%%%%%%%%%%%%%%%%%%%%%%%%%%%%%%%%%%%%%%%%%%%%%%%%%%%%%%%%%%%%%%%%%%%%%%%%%%%%%%%%%%%%%%%%%%%
\subsection{The black hole information paradox}

Black holes have interesting properties similar to the entropy and temperature in thermodynamics, which were studied in \cite{bardeen1973four,bekenstein1973black,udriste2010black,udricste2013controllability,udriste2013black,udricste2013optimal}. 

Soon (in its proper time) after an object passes through the event horizon, it reaches the singularity of the black hole. All the information contained in it seems to vanish in the singularity.

On the other hand, the equations governing the physical laws are in general reversible, guaranteeing that no information can be lost. But according to Hawking \cite{Haw75,Haw76} the black hole may emit radiation and evaporate. If the black hole evaporates completely, it seems to leave behind no trace of the information it swallowed. Moreover, it seems to be possible for an originally pure state to end up being mixed, because the density matrix of the particles in the black hole's exterior is obtained by tracing over the particles lost in the black hole with which they were entangled. This means that the unitarity appears to be violated, and the problem becomes even more acute.

%%%%%%%%%%%%%%%%%%%%%%%%%%%%%%%%%%%%%%%%%%%%%%%%%%%%%%%%%%%%%%%%%%%%%%%%%%%%%%%%%%%%%%%%%%%%%
\subsection{The meaning of singularities}

The singularities in General Relativity are places where the evolution equations cannot work, because the involved fields become infinite. If the Cauchy surface on which the fields are defined is affected by singularities, then the equations cannot be developed in time.

From geometric viewpoint, these singularities are points where the metric becomes singular, and the geodesics become \textit{incomplete}. Since we don't know to extend the fields to such points, normally we remove them from the spacetime.

Actually, we can rewrite the fields involved, and the equations defining them, so that the fields remain finite at any point \cite{ER35, Sto11a, Sto11e, Sto11f, Sto11g}. At the points where the metric is non-singular, the equations remain equivalent to Einstein's equation. 

With this modification, the spacetime can be extended to the singular points.

Once we have the fields and the topology repaired, we have to check that we can choose a maximal globally hyperbolic spacetime (or equivalently, admitting a Cauchy foliation) so that the evolution equations can be defined.

Consequently, a natural interpretation of the singularities emerges, which makes them harmless for the physical law, in particular for the information conservation.

We will illustrate this approach on the black hole solutions known from the literature.

% Schwarzschild singularity is semi-regularizable
%----------------------------------------------------------------------------------------
\section{Schwarzschild singularity is semi-regularizable}
\label{s_black_hole_schw}

We show that the {\schw} metric can be made analytic at the singularity, by a proper coordinate transformation. The singularity remains, but it is made degenerate and smooth, and the infinities are removed. We find a family of analytic extensions, and one of them is {\semireg}. A degenerate singularity doesn't break the topology. We prove that the metric is {\semireg}, hence the densitized version of Einstein's equation can be used, avoiding thus the infinities. Moreover, we show that the singularity is {\quasireg}. In the new coordinates, the {\schw} solution extends beyond the singularity, suggesting the possibility that the information is not destroyed by the singularity, and can be restored after the evaporation.

%~~~~~~~~~~~~~~~~~~~~~~~~~~~~~~~~~~~~~~~~~~~~~~~~~~~~~~~~~~~~~~~~~~~~~~~%
\subsection{Introduction}

%~~~~~~~~~~~~~~~~~~~~~~~~~~~~~~~~~~~~~~~~~~~~~~~~~~~~~~~~~~~~~~~~~~~~~~~%

The {\schw} metric, expressed in the {\schw} coordinates, is
\begin{equation}
\label{eq_schw_schw}
\de s^2 = -\(1-\dsfrac{2m}{r}\)\de t^2 + \(1-\dsfrac{2m}{r}\)^{-1}\de r^2 + r^2\de\sigma^2,
\end{equation}
where $m$ the mass of the body, and the units were chosen so that $c=1$ and $G=1$, and
\begin{equation}
\label{eq_sphere}
\de\sigma^2 = \de\theta^2 + \sin^2\theta \de \phi^2
\end{equation}
is the metric of the unit sphere $S^2$ (see {\eg} \citep{HE95}{149}).

From \eqref{eq_schw_schw} follows that this spacetime is a warped product between a two-dimensional {\semiriem} space and the sphere $S^2$ with the metric \eqref{eq_sphere}. This allows us to change the coordinates $r$ and $t$ independently on the other coordinates, and to ignore in calculations the term $r^2\de\sigma^2$, which we reintroduce at the end.

As $r\to 2m$, $g_{tt}=-\(1-\dsfrac{2m}{r}\)^{-1}\to\infty$. This is a coordinate singularity, and not a genuine one, as shown by the Eddington-Finkelstein coordinates (\cite{eddington1924comparison,finkelstein1958past}, \citep{HE95}{150}).

On the other hand, $r\to 0$ is a genuine singularity, because $R_{abcd}R^{abcd}\to\infty$. This seems to suggest that the {\schw} metric cannot be made smooth at $r=0$. We will see that, in fact, there are coordinate systems in which the components of the metric are are analytic, hence finite, even $r=0$. The singularity remains, because the metric is degenerate. We find analytic and \textit{\semireg} extension of the {\schw} spacetime.

%~~~~~~~~~~~~~~~~~~~~~~~~~~~~~~~~~~~~~~~~~~~~~~~~~~~~~~~~~~~~~~~~~~~~~~~%
\subsection{Analytic extension of the {\schw} spacetime}
\label{s_schw_analytic}

\begin{theorem}
\label{thm_schw_analytic}
Let's consider the {\schw} spacetime, with the metric \eqref{eq_schw_schw}. There is a singular semi-Riemannian spacetime, which extends it analytically beyond the singularity $r=0$.
\end{theorem}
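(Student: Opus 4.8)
The plan is to exploit the warped-product structure noted after equation \eqref{eq_schw_schw}: since the Schwarzschild metric is a warped product of the two-dimensional $(t,r)$-metric with the round sphere $(S^2,\de\sigma^2)$, I would first work only with the two-dimensional part
$$\de s_2^2 = \dsfrac{2m-r}{r}\de t^2 - \dsfrac{r}{2m-r}\de r^2,$$
which carries the entire divergence at $r=0$, and reintroduce $r^2\de\sigma^2$ at the very end, where it will be harmless once $r$ is written analytically. The core idea is to absorb the offending inverse powers of $r$ by passing to new coordinates in which $r$ vanishes to high order. Concretely, I would set
$$r = \tau^2, \qquad t = \xi\,\tau^T,$$
with $T$ a positive integer to be fixed, so that $\de r = 2\tau\,\de\tau$ and $\de t = T\xi\tau^{T-1}\de\tau + \tau^T\de\xi$.

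Substituting, the radial term becomes $-\dsfrac{r}{2m-r}\de r^2 = -\dsfrac{4\tau^4}{2m-\tau^2}\de\tau^2$, which is manifestly analytic near $\tau=0$ since the denominator tends to $2m\neq 0$. The time term becomes
$$\dsfrac{2m-r}{r}\de t^2 = (2m-\tau^2)\Big(T^2\xi^2\tau^{2T-4}\de\tau^2 + 2T\xi\tau^{2T-3}\de\tau\,\de\xi + \tau^{2T-2}\de\xi^2\Big).$$
The main obstacle — the step that dictates the whole construction — is exactly this last expression: the prefactor $1/r=\tau^{-2}$ competes with the powers of $\tau$ coming from $\de t^2$, and the most dangerous coefficient is the one multiplying $\de\tau^2$, scaling as $\tau^{2T-4}$. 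Hence analyticity at $\tau=0$ forces $T\geq 2$; every other coefficient is then automatically free of negative powers. Choosing the minimal value $T=2$ yields the fully analytic two-dimensional metric
$$\de s_2^2 = \Big[{-}\dsfrac{4\tau^4}{2m-\tau^2} + 4\xi^2(2m-\tau^2)\Big]\de\tau^2 + 4\xi\tau(2m-\tau^2)\de\tau\,\de\xi + \tau^2(2m-\tau^2)\de\xi^2,$$
and reintroducing the fiber term as $r^2\de\sigma^2 = \tau^4\de\sigma^2$, which is analytic, gives an everywhere-analytic metric on the chart $(\tau,\xi,\theta,\phi)$.

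Finally I would check that this is genuinely an analytic extension, not merely a reparametrization. For $\tau\neq 0$ the map $(\tau,\xi)\mapsto(r,t)=(\tau^2,\xi\tau^T)$ has Jacobian $2\tau^{T+1}\neq 0$, so it is an analytic diffeomorphism onto the original Schwarzschild region, under which the new metric pulls back to \eqref{eq_schw_schw}; thus the construction reproduces Schwarzschild away from the singularity. At $\tau=0$ (that is, $r=0$) the components stay finite but the metric degenerates — its rank drops, as one sees from the displayed coefficients collapsing — so the infinities of \eqref{eq_schw_schw} have been traded for a degenerate, analytic metric. The resulting pair $(M,g)$ is therefore a singular {\semiriem} manifold in the sense of Definition \ref{def_sing_semiRiemm_man} that extends the Schwarzschild spacetime analytically across $r=0$; and because $r=\tau^2$ admits both signs of $\tau$, the two branches $\tau>0$ and $\tau<0$ are glued along $\tau=0$, so the extension actually continues \emph{beyond} the singularity. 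A family of such extensions is obtained by letting $T$ range over the integers with $T\geq 2$, among which one can subsequently single out a {\semireg} (and in fact {\quasireg}) representative, as the later statements of this section assert.
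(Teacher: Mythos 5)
Your proposal is correct and follows essentially the same route as the paper: the same warped-product reduction to the $(t,r)$-block, the same substitution $r=\tau^2$, $t=\xi\tau^{T}$, the same coefficient computation, and the same conclusion that $T\geq 2$ yields an analytic, degenerate extension across $\tau=0$. The only cosmetic differences are that you fix the minimal value $T=2$ while the paper keeps the whole family $\CT\geq 2$ (needed later to select the {\semireg} case $\CT=4$), and you make explicit the diffeomorphism check for $\tau\neq 0$ that the paper leaves implicit.
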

\begin{proof}
It suffices to find new coordinates in the region $r<2m$, which is a neighborhood of the singularity. On the region $r<2m$, the coordinate $t$ is spacelike, and $r$ is timelike. We use the transformation

\begin{equation}
\label{eq_coordinate_analytic}
\begin{array}{l}
\left\{
\begin{array}{ll}
r &= \tau^2 \\
t &= \xi\tau^\CT \\
\end{array}
\right.,
\\
\end{array}
\end{equation}
where $\CT\in\N$, to be conveniently determined.
Then, we have
\begin{equation}
\label{eq_coordinate_jacobian_schw}
\dsfrac{\partial r}{\partial \tau} = 2\tau,\,
\dsfrac{\partial r}{\partial \xi} = 0,\,
\dsfrac{\partial t}{\partial \tau} = \CT\xi\tau^{\CT-1},\,
\dsfrac{\partial t}{\partial \xi} = \tau^\CT.
\end{equation}
Recall that the metric coefficients in the {\schw} coordinates are
\begin{equation}
\label{eq_metric_coeff_schw}
g_{tt} = \dsfrac{2m-r}{r},\,
g_{rr} = \dsfrac{r}{r-2m},\,
g_{tr} = g_{rt} = 0.
\end{equation}
Let's calculate the metric coefficients in the new coordinates. We will do explicitly one such calculation:
\begin{equation*}
\begin{array}{lll}
g_{\tau\tau} &=& \(\dsfrac{\partial r}{\partial \tau}\)^2\dsfrac{r}{r-2m} + \(\dsfrac{\partial t}{\partial \tau}\)^2\dsfrac{2m-r}{r} \\
&=& 4\tau^2\dsfrac{\tau^2}{\tau^2 - 2m} + \CT^2\xi^2\tau^{2\CT-2}\dsfrac{2m-\tau^2}{\tau^2}\\
\end{array}
\end{equation*}
Hence
\begin{equation}
\label{eq_metric_coeff_analytic_tau_tau_tau_xi}
g_{\tau\tau} = -\dsfrac{4\tau^4}{2 m - \tau^2} + \CT^2\xi^2(2m - \tau^2) \tau^{2\CT-4}
\end{equation}
Similarly,
\begin{equation}
\label{eq_metric_coeff_analytic_tau_xi_tau_xi}
g_{\tau\xi} = \CT\xi(2m - \tau^2) \tau^{2\CT-3},
\end{equation}
\begin{equation}
\label{eq_metric_coeff_analytic_xi_xi_tau_xi}
g_{\xi\xi} = (2m - \tau^2) \tau^{2\CT-2}.
\end{equation}
The determinant is
\begin{equation}
\label{eq_metric_analytic_det_g_tau_xi}
\det g = - 4 \tau^{2\CT+2}.
\end{equation}

The four-metric is
\begin{equation}
\label{eq_schw_analytic_tau_xi}
\de s^2 = -\dsfrac{4\tau^4}{2m-\tau^2}\de \tau^2 + (2m-\tau^2)\tau^{2\CT-4}\(\CT\xi\de\tau + \tau\de\xi\)^2 + \tau^4\de\sigma^2,
\end{equation}
and it is easy to see that it is analytic for $\CT\geq 2$.
\end{proof}

%~~~~~~~~~~~~~~~~~~~~~~~~~~~~~~~~~~~~~~~~~~~~~~~~~~~~~~~~~~~~~~~~~~~~~~~%
\subsection{{\ssemireg} extension of the {\schw} spacetime}
\label{s_schw_semireg}

While proving Theorem \ref{thm_schw_analytic}, we found an infinite family of coordinates which make the metric analytic. We prove now that among these solutions there is a {\semireg} one.

\begin{theorem}
\label{thm_schw_semireg}
The {\schw} metric admits an analytic extension in which the singularity at $r=0$ is {\semireg}.
\end{theorem}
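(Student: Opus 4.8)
The plan is to single out, within the analytic family produced by Theorem \ref{thm_schw_analytic}, the value of the exponent $\CT$ for which the extension is {\semireg}, and to verify {\semireg}ity through the {\koszulname} criterion of Proposition \ref{thm_sr_cocontr_kosz}. Since the {\schw} metric is a warped product of the $(\tau,\xi)$-plane $B$ (with metric \eqref{eq_metric_coeff_analytic_tau_tau_tau_xi}--\eqref{eq_metric_coeff_analytic_xi_xi_tau_xi}) and the round sphere $(S^2,\de\sigma^2)$, with warping function $f=r=\tau^2$, by Theorem \ref{thm_semi_reg_semi_riem_man_warped} it suffices to show that the two-dimensional base $B$ is {\semireg} and that $\de f=2\tau\,\de\tau\in\srformsk 1 B$. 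First I would record that at $\tau=0$ every metric coefficient (including the sphere factor $\tau^4$) vanishes for $\CT\geq 3$, so the whole tangent space is radical there; since the coordinate coefficients $\kosz_{abc}$ computed from \eqref{eq_Koszul_form_coord} all carry a strictly positive power of $\tau$, they vanish at $\tau=0$, the manifold is {\rstationary}, and $\de f$ is {\rannih}.

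The heart of the proof is the explicit computation of the contractions $\kosz(X,Y,\cocontr)\kosz(Z,T,\cocontr)$ on $B$. Writing $A:=2m-\tau^2$ (smooth, with $A(0)=2m\neq0$), one finds for instance $\kosz(\partial_\xi,\partial_\xi,\partial_\tau)=2m\,\tau^{2\CT-3}$ and $\kosz(\partial_\xi,\partial_\xi,\partial_\xi)=0$, while the co-metric has $\annihg^{\tau\tau}=-\dsfrac{A}{4\tau^4}$, so that $\kosz(\partial_\xi,\partial_\xi,\cocontr)\kosz(\partial_\xi,\partial_\xi,\cocontr)=-m^2 A\,\tau^{4\CT-10}$, which is smooth already for $\CT\geq 3$. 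The decisive combination is $\kosz(\partial_\tau,\partial_\tau,\cocontr)\kosz(\partial_\tau,\partial_\tau,\cocontr)$, whose dominant piece is $\annihg^{\tau\tau}\kosz_{\tau\tau\tau}^2$ with $\kosz_{\tau\tau\tau}=O\(\tau^{\min(3,\,2\CT-5)}\)$; since $\annihg^{\tau\tau}\sim\tau^{-4}$ and the remaining terms (through $\annihg^{\tau\xi}\sim\tau^{-5}$, $\annihg^{\xi\xi}\sim\tau^{-6}$) sit at different orders and cannot cancel it, smoothness forces $\CT\geq 4$. I would therefore fix $\CT=4$, i.e. $t=\xi\tau^4$, and check that every contraction is then a genuine power series in $\tau,\xi$, so that $B$ is {\semireg} (and the analytic but non-{\semireg} choices $\CT=2,3$ are excluded).

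It remains to reinstate the sphere. Using Proposition \ref{thm_wp_deg_koszul}, the only new contractions involve the warping term $\kosz(X,V,W)=f\metric{V,W}_F X(f)$, and by Theorem \ref{thm_wp_nondeg_riemm_tens} the sphere couples to the base only through $f^2\annihprod{\de f,\de f}_B$. Here $\annihprod{\de f,\de f}_B=\annihg^{\tau\tau}(2\tau)^2=-\dsfrac{A}{\tau^2}$ is singular, yet the warping weight $f^2=\tau^4$ cancels the divergence, giving $f^2\annihprod{\de f,\de f}_B=-A\tau^2$, which is smooth. The same cancellation makes the Hessian term $fH^f$ smooth, confirming $\de f\in\srformsk 1 B$; Theorem \ref{thm_semi_reg_semi_riem_man_warped} then yields that the full four-dimensional extension with $\CT=4$ is {\semireg}.

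The main obstacle is precisely this interplay of vanishing orders: the inverse-type form $\annihg$ diverges as $\tau\to 0$, and {\semireg}ity survives only because each {\koszulname} contraction, and each sphere-coupling term, vanishes to an order that exactly defeats the divergence. Getting this balance right is what selects $\CT=4$; the routine but lengthy remainder is tracking the $\tau$-powers for all index combinations and verifying that no contraction of the warped metric leaves a residual negative power of $\tau$.
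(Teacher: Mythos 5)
Your proposal is correct and follows essentially the same route as the paper: restrict to the family $r=\tau^2$, $t=\xi\tau^{\CT}$, apply the criterion of Proposition \ref{thm_sr_cocontr_kosz} by power-counting the orders of $\tau$ in the inverse metric against those in the Christoffel symbols of the first kind to single out $\CT=4$, and then reinstate the sphere via the degenerate warped product Theorem \ref{thm_semi_reg_semi_riem_man_warped} with warping function $\tau^2$. The extra checks you include (radical-stationarity, $\de f\in\srformsk 1 B$, the cancellation in $f^2\annihprod{\de f,\de f}_B$) are consistent with, and slightly more explicit than, the paper's own treatment.
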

\begin{proof}
We start with the coordinate transformation \eqref{eq_coordinate_analytic}.
To show that the metric is {\semireg}, {\cf} Proposition \ref{thm_sr_cocontr_kosz}, it is enough to find $\CT\in\N$ so that there is a coordinate system in which the products of the form
\begin{equation}
\label{eq_semireg_condition_coord}
g^{st}\Gamma_{abs}\Gamma_{cdt}
\end{equation}
are all smooth \cite{Sto11a}, where $\Gamma_{abc}$ are Christoffel's symbols of the first kind. In a coordinate system in which the metric is smooth, as in \sref{s_schw_analytic}
, Christoffel's symbols of the first kind are also smooth. But the inverse metric $g^{st}$ is not smooth for $r=0$. We will show that the products from the expression \eqref{eq_semireg_condition_coord} are smooth.

We use the solution from Theorem \ref{thm_schw_analytic}, and try to find a value for $\CT$, so that the metric is {\semireg}.

The components of the inverse of the metric are given by $g^{\tau\tau} = g_{\xi\xi}/{\det g}$, $g^{\xi\xi} = g_{\tau\tau}/{\det g}$, and $g^{\tau\xi} = g^{\xi\tau} = -g_{\tau\xi}/{\det g}$. From (\ref{eq_metric_coeff_analytic_tau_tau_tau_xi}--\ref{eq_metric_coeff_analytic_xi_xi_tau_xi}), follows that

\begin{equation}
\label{eq_metric_inv_coeff_semireg_tau_tau}
g^{\tau\tau} = -\dsfrac{1}{4}(2m - \tau^2)\tau^{-4},
\end{equation}
\begin{equation}
\label{eq_metric_inv_coeff_semireg_tau_xi}
g^{\tau\xi} = \dsfrac{1}{4}\CT\xi(2m - \tau^2)\tau^{-5},
\end{equation}
\begin{equation}
\label{eq_metric_inv_coeff_semireg_xi_xi}
g^{\xi\xi} = \dsfrac{\tau^{-2\CT+2}}{2 m - \tau^2} - \dsfrac{1}{4}\CT^2 \xi^2 (2m - \tau^2)\tau^{-6}.
\end{equation}
Christoffel's symbols of the first kind are given by
\begin{equation}
\label{eq_christoffel_schw}
\Gamma_{abc} = \dsfrac 1 2 \(\partial_a g_{bc} + \partial_b g_{ca} - \partial_c g_{ab}\),
\end{equation}
so we have to calculate the partial derivatives of the coefficients of the metric.

From \eqref{eq_coordinate_jacobian_schw} and (\ref{eq_metric_coeff_analytic_tau_tau_tau_xi}--\ref{eq_metric_coeff_analytic_xi_xi_tau_xi}) we have:
\begin{equation*}
\begin{array}{lll}
\partial_{\tau}g_{\tau\tau} &=& \partial_{\tau}\(-\dsfrac{4\tau^4}{2 m - \tau^2} + \xi^2\CT^2(2m - \tau^2) \tau^{2\CT-4}\) \\
&=& -4\dsfrac{4\tau^3(2 m - \tau^2) + 2\tau^5}{(2 m - \tau^2)^2} + 2\CT^2(2\CT-4) m \xi^2\tau^{2\CT-5} \\
&&- \CT^2 (2\CT-2)\xi^2\tau^{2\CT-3}, \\
\end{array}
\end{equation*}
hence
\begin{equation}
\label{eq_pd_tau_tau_tau}
\partial_{\tau}g_{\tau\tau} = 8\dsfrac{\tau^5 - 4m\tau^3}{(2 m - \tau^2)^2} + 2\CT^2(2\CT-4) m \xi^2\tau^{2\CT-5}- \CT^2 (2\CT-2)\xi^2\tau^{2\CT-3}.
\end{equation}
Similarly,
\begin{equation}
\label{eq_pd_tau_tau_xi}
\partial_{\tau}g_{\tau\xi} = 2 \CT (2\CT-3) m\xi\tau^{2\CT-4} - \CT (2\CT-1) \xi\tau^{2\CT-2},
\end{equation}
\begin{equation}
\label{eq_pd_tau_xi_xi}
\partial_{\tau}g_{\xi\xi} = 2 m (2\CT-2)\tau^{2\CT-3} - 2\CT\tau^{2\CT-1},
\end{equation}
\begin{equation}
\label{eq_pd_xi_tau_tau}
\partial_{\xi}g_{\tau\tau} = 2 \CT^2 \xi (2m - \tau^2)\tau^{2\CT-4},
\end{equation}
\begin{equation}
\label{eq_pd_xi_tau_xi}
\partial_{\xi}g_{\tau\xi} = \CT(2m - \tau^2)\tau^{2\CT-3},
\end{equation}
and
\begin{equation}
\label{eq_pd_xi_xi_xi}
\partial_{\xi}g_{\xi\xi} = 0.
\end{equation}

To ensure that the expression \eqref{eq_semireg_condition_coord} is smooth, we try to find a value of $\CT$ for which it doesn't contain negative powers of $\tau$. The least power of $\tau$ in equations (\ref{eq_pd_tau_tau_tau}--\ref{eq_pd_xi_xi_xi}) is $\min(3,2\CT-5)$, as we can see by inspection. The least power of $\tau$ in equations (\ref{eq_metric_inv_coeff_semireg_tau_tau}--\ref{eq_metric_inv_coeff_semireg_xi_xi}) is $\min(-6,-2\CT+2)$. Since $\min(-6,-2\CT+2) = -3 - \max(3,2\CT-5)$, the conjunction of the two conditions is
\begin{equation}
	-1 - 2\CT + 3\min(3,2\CT-5) \geq 0
\end{equation}
with the unique solution
\begin{equation}
	\CT = 4.
\end{equation}
which ensures the smoothness of \eqref{eq_semireg_condition_coord}, and by this, the {\semireg}ity of the metric in two dimensions $(\tau,\xi)$.
The coordinate transformation becomes
\begin{equation}
\label{eq_coordinate_semireg}
\begin{array}{l}
\left\{
\begin{array}{ll}
r &= \tau^2 \\
t &= \xi\tau^\CT \\
\end{array}
\right.,
\\
\end{array}
\end{equation}

When going back to four dimensions, we have to take the {\semireg} warped product between the two-dimensional extension $(\tau,\xi)$ and the sphere $S^2$, with warping function $\tau^2$, which is {\semireg}, {\cf} Theorem \ref{thm_semi_reg_semi_riem_man_warped}.
\end{proof}

The metric in the proof is
\begin{equation}
\label{eq_schw_semireg}
\de s^2 = -\dsfrac{4\tau^4}{2m-\tau^2}\de \tau^2 + (2m-\tau^2)\tau^4\(4\xi\de\tau + \tau\de\xi\)^2 + \tau^4\de\sigma^2,
\end{equation}
which is a {\semireg} analytic extension of the {\schw} metric.

It is easy to see that it is also {\quasireg}.
\begin{corollary}
\label{thm_schw_quasireg}
The {\schw} spacetime is {\quasireg} (in any atlas compatible with the coordinates \eqref{eq_coordinate_semireg}).
\end{corollary}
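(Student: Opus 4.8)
The plan is to verify Definition \ref{def_quasi_regular} directly. Theorem \ref{thm_schw_semireg} already gives that the metric \eqref{eq_schw_semireg} is \semireg{}, so its Riemann curvature tensor $R_{abcd}$ is smooth across the singularity $\tau=0$; it therefore only remains to show that the scalar part $S_{abcd}$ and the semi-traceless part $E_{abcd}$ of the Ricci decomposition are smooth there as well. I would first point out that Theorem \ref{thm_quasireg_example_wp} cannot be invoked for this: the \schw{} metric is a warped product of a two-dimensional base (the $(\tau,\xi)$ factor) with the sphere $S^2$, so the dimension split is $2+2$ rather than the $1+3$ required by that theorem. This is why I route the argument through a different, solution-specific property.

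The key observation is that the \schw{} spacetime is a vacuum solution. On the open region $\{\tau\neq 0\}$ the metric \eqref{eq_schw_semireg} is \nondeg{} and, via \eqref{eq_coordinate_semireg}, isometric to the ordinary \schw{} metric, which is Ricci-flat. Hence $\ric=0$ on $\{\tau\neq 0\}$, so both the scalar curvature $R$ and the traceless Ricci tensor $S_{ab}=R_{ab}-\tfrac 1 n R g_{ab}$ vanish there. Consequently $S_{abcd}=\tfrac{1}{2n(n-1)}R(g\circ g)_{abcd}=0$ and $E_{abcd}=\tfrac{1}{n-2}(S\circ g)_{abcd}=0$ identically on the non-degenerate region, where the decomposition reduces to $R_{abcd}=C_{abcd}$.

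It then remains to promote these identities across the singular hypersurface $\{\tau=0\}$. Since $\{\tau\neq 0\}$ is open and dense, the components of $S_{abcd}$ and $E_{abcd}$ vanish on a dense set, so their unique continuous extension to $\tau=0$ is the zero tensor, which is manifestly smooth; equivalently, setting $S_{abcd}:=0$, $E_{abcd}:=0$, and $C_{abcd}:=R_{abcd}$ everywhere yields a smooth decomposition extending the one valid off the singularity. By Definition \ref{def_quasi_regular}, $(M,g)$ is therefore \quasireg{}, and since this is a coordinate-independent property it holds in any atlas compatible with \eqref{eq_coordinate_semireg}.

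The step to be careful about is precisely this extension: the scalar curvature and the Ricci tensor are assembled using the covariant contraction $\annihg$, which diverges as $\det g\to 0$, so neither object is \emph{a priori} smooth (or even defined) at $\tau=0$. The resolution is that quasi-regularity only demands smoothness of the particular combinations $R(g\circ g)$ and $S\circ g$, and the vacuum condition makes these combinations vanish on a dense open set, which forces a smooth zero extension rather than a genuine blow-up. One could instead bypass the argument entirely by substituting the explicit inverse-metric components \eqref{eq_metric_inv_coeff_semireg_tau_tau}--\eqref{eq_metric_inv_coeff_semireg_xi_xi} together with the metric derivatives \eqref{eq_pd_tau_tau_tau}--\eqref{eq_pd_xi_xi_xi} into $S_{abcd}$ and $E_{abcd}$ and checking that no negative powers of $\tau$ remain (as was done for the \semireg{}ity in Theorem \ref{thm_schw_semireg}), but the vacuum argument makes the computation unnecessary.
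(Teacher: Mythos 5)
Your proof is correct and follows essentially the same route as the paper, which simply notes that the spacetime is {\semireg} and Ricci-flat, so that $S_{abcd}$ and $E_{abcd}$ are smooth (indeed zero). Your version merely spells out what the paper leaves implicit --- the continuous extension of the vanishing tensors across the dense open non-degenerate region --- and correctly observes that Theorem \ref{thm_quasireg_example_wp} is not applicable because of the $2+2$ warped-product split.
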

\begin{proof}
We know from \cite{Sto11e} that the {\schw} spacetime is {\semireg}. It is also Ricci flat, hence $S_{abcd}$ and $E_{abcd}$ are smooth too, and the metric is {\quasireg}.
\end{proof}

% Analytic Reissner-Nordstrom Singularity
\section{Analytic Reissner-Nordstrom singularity}
\label{s_black_hole_rn}

In this section, we derive an analytic extension of the Reissner-Nordstr\"om solution beyond the singularity, by using new coordinates. The metric's components are made finite and analytic, but degenerate at $r=0$.

%~~~~~~~~~~~~~~~~~~~~~~~~~~~~~~~~~~~~~~~~~~~~~~~~~~~~~~~~~~~~~~~~~~~~~~~%
\subsection{Introduction}

The {\rn} is a solution to the Einstein-Maxwell equations, describing a static, spherically symmetric, electrically charged, non-rota\-ting black hole \cite{reiss16,nord18}. The metric is
\begin{equation}
\label{eq_rn_metric}
\de s^2 = -\left(1-\dsfrac{2m}{r} + \dsfrac{q^2}{r^2}\right)\de t^2 + \left(1-\dsfrac{2m}{r} + \dsfrac{q^2}{r^2}\right)^{-1}\de r^2 + r^2\de\sigma^2,
\end{equation}
where $q$ is the electric charge of the body, $m$ the mass of the body, $\de\sigma^2$ is that from equation \eqref{eq_sphere}, and the units were chosen so that $c=1$ and $G=1$ ({\cf} \citep{HE95}{156}).

The number of the event horizons of a {\rn} black hole equals the number of real zeros of $r^2 - 2mr + q^2$. These are apparent singularities, that can be removed for example by Eddington-Finkelstein coordinates \cite{eddington1924comparison,finkelstein1958past}. But in all cases, there is an irremovable singularity at $r=0$, which at least we will make degenerate.

%~~~~~~~~~~~~~~~~~~~~~~~~~~~~~~~~~~~~~~~~~~~~~~~~~~~~~~~~~~~~~~~~~~~~~~~%
\subsection{Extending the {\rn} spacetime at the singularity}
\label{s_rn_ext_ext}

The main result of this section is contained in the following theorem.

\begin{theorem}
\label{thm_rn_ext_ext}
Let's consider the {\rn} spacetime, with the metric \eqref{eq_rn_metric}. There is a larger spacetime, which extends it analytically beyond the singularity $r=0$.
\end{theorem}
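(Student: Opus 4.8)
The plan is to mirror the construction used for the \schw\ singularity in Theorem \ref{thm_schw_analytic}. First I would use the warped-product structure of \eqref{eq_rn_metric}: exactly as in the \schw\ case, the \rn\ metric is a warped product of a two-dimensional $(t,r)$-block with the sphere $S^2$, the warping function being $r^2$. This lets me change only the coordinates $t,r$, carry out every computation inside the two-dimensional block, and reattach the sphere term $r^2\de\sigma^2$ at the end. I would then try the same family of transformations as in \eqref{eq_coordinate_analytic}, namely $r=\tau^2$ and $t=\xi\tau^{\CT}$ with $\CT\in\N$ to be fixed, whose Jacobian is given by \eqref{eq_coordinate_jacobian_schw}.

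Writing $\Delta(r):=1-\dsfrac{2m}{r}+\dsfrac{q^2}{r^2}=\dsfrac{r^2-2mr+q^2}{r^2}$, the two-dimensional block is $g_{rr}=\Delta^{-1}$, $g_{tt}=-\Delta$, $g_{tr}=0$. Substituting $r=\tau^2$ makes $g_{rr}=\tau^4/(\tau^4-2m\tau^2+q^2)$ already analytic near $\tau=0$, since the denominator tends to $q^2\neq 0$; however $g_{tt}=-(\tau^4-2m\tau^2+q^2)/\tau^4$ now carries a pole of order four in $\tau$, against the order-two pole of the \schw\ case. Transforming the block yields
\begin{equation*}
g_{\tau\tau}=\dsfrac{4\tau^6}{\tau^4-2m\tau^2+q^2}-\CT^2\xi^2\tau^{2\CT-6}(\tau^4-2m\tau^2+q^2),
\end{equation*}
\begin{equation*}
g_{\tau\xi}=-\CT\xi\tau^{2\CT-5}(\tau^4-2m\tau^2+q^2),\qquad
g_{\xi\xi}=-\tau^{2\CT-4}(\tau^4-2m\tau^2+q^2),
\end{equation*}
while the sphere contributes the manifestly analytic $\tau^4\de\sigma^2$, and the block determinant is $-4\tau^{2\CT+2}$, so the extended metric is degenerate (hence genuinely singular) yet finite at $\tau=0$.

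The only obstruction to analyticity is the appearance of negative powers of $\tau$. Since each factor $\tau^4-2m\tau^2+q^2$ is analytic and nonzero at $\tau=0$, it suffices that every exponent of $\tau$ be nonnegative. The binding condition comes from the second term of $g_{\tau\tau}$, which carries $\tau^{2\CT-6}$ and therefore demands $\CT\geq 3$; the requirements $2\CT-5\geq 0$ from $g_{\tau\xi}$ and $2\CT-4\geq 0$ from $g_{\xi\xi}$ are then automatic. Hence any integer $\CT\geq 3$ produces an analytic extension. The one genuinely new feature compared with \schw\ is precisely this stronger, order-four pole introduced by the Coulomb term $q^2/r^2$, which is what forces $\CT\geq 3$ rather than $\CT\geq 2$; the main thing to verify carefully is that this single worst exponent $2\CT-6$ can be cleared without producing a pole in any other component, and that the determinant continues to vanish so that the construction yields a bona fide degenerate singularity rather than a mere reparametrization. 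As in Theorem \ref{thm_schw_semireg}, one would then expect a suitable choice of $\CT$ to upgrade the analytic extension to a \semireg\ one, though the statement here claims only analyticity.
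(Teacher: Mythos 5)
Your proposal is correct and follows essentially the same route as the paper: a power-law change of the $(t,r)$ coordinates in the two-dimensional block of the warped product, chosen so that the zero introduced by the Jacobian cancels the pole of $g_{tt}$ while $g_{rr}=r^2/\Delta$ stays analytic because $\Delta(0)=q^2\neq 0$. The paper works with the two-parameter family $t=\tau\rho^{\CT}$, $r=\rho^{\CS}$ and obtains the conditions $\CS\geq 1$, $\CT\geq\CS+1$; your transformation is exactly the case $\CS=2$, for which the paper's condition reduces to your $\CT\geq 3$.
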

\begin{proof}
The metric \eqref{eq_rn_metric} is a warped product between a two-dimensional {\semiriem} space of coordinates $(t,r)$, and the sphere $\CS^2$, parameterized by $\phi$ and $\theta$. As in the {\schw} case, we will use coordinate transformations which affect only the coordinates $r$ and $t$, in the neighborhood $r\in[0,M)$ of the singularity, where $M=\infty$ if the black hole is naked, and $M$ is the smallest real zero of $r^2 - 2mr + q^2$ otherwise. We take the coordinates $\rho$ and $\tau$ given by
\begin{equation}
\label{eq_coordinate_ext_ext}
\begin{array}{l}
\left\{
\begin{array}{ll}
t &= \tau\rho^\CT \\
r &= \rho^\CS \\
\end{array}
\right.
\\
\end{array}
\end{equation}
and search for $\CS,\CT\in\N$ which make the metric analytic. This choice is motivated by the need to smoothen the divergent components of the metric.
Then,
\begin{equation}
\label{eq_coordinate_jacobian_rn}
\dsfrac{\partial t}{\partial \tau} = \rho^\CT,\,
\dsfrac{\partial t}{\partial \rho} = \CT\tau\rho^{\CT-1},\,
\dsfrac{\partial r}{\partial \tau} = 0,\,
\dsfrac{\partial r}{\partial \rho} = \CS\rho^{\CS-1}.
\end{equation}
The following notation is standard
\begin{equation}
	\Delta := r^2 - 2m r + q^2 \tn{ (hence $\Delta = \rho^{2\CS} - 2m \rho^{\CS} + q^2$)}.
\end{equation}
We see that $\Delta> 0$ for $\rho\in[0,M)$.
The metric components from \eqref{eq_rn_metric} become
\begin{equation}
\label{eq_metric_coeff_rn}
g_{tt} = - \dsfrac{\Delta}{\rho^{2\CS}},\,
g_{rr} = \dsfrac{\rho^{2\CS}}{\Delta},\,
g_{tr} = g_{rt} = 0.
\end{equation}
We calculate the metric components in the new coordinates. 
\begin{equation*}
\begin{array}{lll}
g_{\tau\tau} &=& \left(\dsfrac{\partial r}{\partial \tau}\right)^2\dsfrac{\rho^{2\CS}}{\Delta} - \left(\dsfrac{\partial t}{\partial \tau}\right)^2\dsfrac{\Delta}{\rho^{2\CS}} \\
&=& 0 - \rho^{2\CT}\dsfrac{\Delta}{\rho^{2\CS}}. \\
\end{array}
\end{equation*}
Hence
\begin{equation}
\label{eq_metric_coeff_ext_ext_tau_tau}
g_{\tau\tau} = -\Delta\rho^{2\CT-2\CS}.
\end{equation}
\begin{equation*}
\begin{array}{lll}
g_{\rho\tau} &=& \dsfrac{\partial r}{\partial \rho}\dsfrac{\partial r}{\partial \tau}\dsfrac{\rho^{2\CS}}{\Delta} - \dsfrac{\partial t}{\partial \rho}\dsfrac{\partial t}{\partial \tau}\dsfrac{\Delta}{\rho^{2\CS}} \\
&=& 0 - \CT\tau\rho^{2\CT-1}\dsfrac{\Delta}{\rho^{2\CS}}.
\end{array}
\end{equation*}
Then
\begin{equation}
\label{eq_metric_coeff_ext_ext_rho_tau}
g_{\rho\tau} = - \CT\Delta\tau\rho^{2\CT-2\CS-1},
\end{equation}
and
\begin{equation}
\label{eq_metric_coeff_ext_ext_rho_rho}
g_{\rho\rho} = \CS^2\dsfrac{\rho^{4\CS-2}}{\Delta} - \CT^2\Delta\tau^2\rho^{2\CT-2\CS-2}.
\end{equation}

From the term $\CS^2\dsfrac{\rho^{4\CS-2}}{\Delta}$ in equation \eqref{eq_metric_coeff_ext_ext_rho_rho}, to have smooth $g_{\rho\rho}$, $\CS$ has to be an integer so that $\CS\geq 1$. This condition also makes this term analytic at $\rho=0$.

To avoid negative powers of $\rho$, the other term in equation \eqref{eq_metric_coeff_ext_ext_rho_rho}, and those from \eqref{eq_metric_coeff_ext_ext_tau_tau} and \eqref{eq_metric_coeff_ext_ext_rho_tau}, require that $2\CT - 2\CS - 2 \geq 0$. Hence, to remove the infinity of the metric at $r=0$ by a coordinate transformation like \eqref{eq_coordinate_ext_ext}, $\CS$ and $\CT$ has to be integers, and
\begin{equation}
\label{eq_metric_smooth_cond}
\begin{array}{l}
\left\{
\begin{array}{ll}
\CS \geq 1 \\
\CT \geq \CS + 1	
\end{array}
\right.
\\
\end{array}
\end{equation}
These conditions ensure that all metric's components are analytic at the singularity.

Now we take the warped product between the space $(\tau,\rho)$, and the sphere $S^2$, with warping function $\rho^{\CS}$, to return to four dimensions. This degenerate warped product results in a $4D$ manifold whose metric is analytic and degenerate at $\rho=0$.
\end{proof}

\begin{corollary}
In the coordinates from theorem \ref{thm_rn_ext_ext}, the {\rn} metric is
\begin{equation}
\label{eq_rn_ext_ext}
\de s^2 = - \Delta\rho^{2\CT-2\CS-2}\left(\rho\de\tau + \CT\tau\de\rho\right)^2 + \dsfrac{\CS^2}{\Delta}\rho^{4\CS-2}\de\rho^2 + \rho^{2\CS}\de\sigma^2.
\end{equation}
\end{corollary}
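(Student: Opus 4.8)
The plan is to recognize that this corollary merely repackages the metric coefficients already obtained inside the proof of Theorem~\ref{thm_rn_ext_ext}, so no new geometric input is needed; the only content is an algebraic completion of a square. The quickest route I would take is to avoid recomputing the individual components $g_{\tau\tau}$, $g_{\rho\tau}$, $g_{\rho\rho}$ and instead pull the coordinate one-forms $\de t$ and $\de r$ directly through the substitution \eqref{eq_coordinate_ext_ext}.

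First I would differentiate the relations $t=\tau\rho^\CT$ and $r=\rho^\CS$ to obtain
\begin{equation*}
\de t = \rho^\CT\de\tau + \CT\tau\rho^{\CT-1}\de\rho = \rho^{\CT-1}\left(\rho\,\de\tau + \CT\tau\,\de\rho\right),\qquad \de r = \CS\rho^{\CS-1}\de\rho.
\end{equation*}
The decisive observation is that $\de t$ factors as $\rho^{\CT-1}(\rho\,\de\tau+\CT\tau\,\de\rho)$, which is exactly the one-form appearing squared in \eqref{eq_rn_ext_ext}; this factorization is what produces the perfect square and explains why the cross term and the ``timelike'' part of $g_{\rho\rho}$ are absorbed together.

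Next I would substitute these differentials into the two-dimensional part $g_{tt}\,\de t^2 + g_{rr}\,\de r^2$, using the coefficients $g_{tt}=-\Delta/\rho^{2\CS}$ and $g_{rr}=\rho^{2\CS}/\Delta$ from \eqref{eq_metric_coeff_rn}. The first term gives $-\Delta\rho^{2\CT-2\CS-2}(\rho\,\de\tau+\CT\tau\,\de\rho)^2$ and the second gives $\dsfrac{\CS^2}{\Delta}\rho^{4\CS-2}\de\rho^2$. Finally, since the full metric is the warped product with fiber $\CS^2$ and warping function $\rho^\CS$ (so the fiber contributes $\rho^{2\CS}\de\sigma^2$, as in the closing step of the proof of Theorem~\ref{thm_rn_ext_ext}), appending this angular part yields \eqref{eq_rn_ext_ext} verbatim.

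As a consistency check --- and the closest thing here to an obstacle, though it is purely clerical --- I would expand the perfect square and match the three resulting coefficients against \eqref{eq_metric_coeff_ext_ext_tau_tau}, \eqref{eq_metric_coeff_ext_ext_rho_tau} and \eqref{eq_metric_coeff_ext_ext_rho_rho}; the $\de\tau^2$ and $\de\tau\,\de\rho$ coefficients match immediately, while the $\de\rho^2$ coefficient reproduces $g_{\rho\rho}$ precisely because the $\de r^2$ contribution survives only through $\CS^2\rho^{4\CS-2}/\Delta$. There is no genuine difficulty: the statement is a corollary of computations already carried out, and the entire argument reduces to a one-line substitution once the factorization of $\de t$ is noticed.
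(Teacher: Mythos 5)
Your proposal is correct and matches the paper's own proof essentially verbatim: the paper also differentiates \eqref{eq_coordinate_ext_ext} to get $\de t = \rho^{\CT-1}(\rho\,\de\tau + \CT\tau\,\de\rho)$ and $\de r = \CS\rho^{\CS-1}\de\rho$, then substitutes these into \eqref{eq_rn_metric}. The added consistency check against \eqref{eq_metric_coeff_ext_ext_tau_tau}--\eqref{eq_metric_coeff_ext_ext_rho_rho} is a harmless extra.
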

\begin{proof}
From \eqref{eq_coordinate_jacobian_rn} follows
\begin{equation}
\label{eq_de_t}
	\de t = \dsfrac{\partial t}{\partial \tau}\de\tau + \dsfrac{\partial t}{\partial \rho}\de\rho = \rho^\CT\de\tau + \CT\tau\rho^{\CT-1}\de\rho = \rho^{\CT-1}(\rho\de\tau + \CT\tau\de\rho)
\end{equation}
and
\begin{equation}
\label{eq_de_r}
	\de r = \dsfrac{\partial r}{\partial \tau}\de\tau + \dsfrac{\partial r}{\partial \rho}\de\rho = \CS\rho^{\CS-1}\de\rho.
\end{equation}
We substitute them in the {\rn} equation \eqref{eq_rn_metric}, and obtain
\begin{equation*}
\begin{array}{lll}
	\de s^2 &=& - \dsfrac{\Delta}{\rho^{2\CS}}\de t^2 + \dsfrac{\rho^{2\CS}}{\Delta}\de r^2 + r^2\de\sigma^2 \\
	&=& -\Delta\rho^{2\CT-2\CS-2}(\rho\de\tau + \CT\tau\de\rho)^2 + \dsfrac{\CS^2}{\Delta}\rho^{4\CS-2}\de\rho^2 + \rho^{2\CS}\de\sigma^2.
\end{array}
\end{equation*}
\end{proof}

%~~~~~~~~~~~~~~~~~~~~~~~~~~~~~~~~~~~~~~~~~~~~~~~~~~~~~~~~~~~~~~~~~~~~~~~%
\subsection{Null geodesics in the proposed solution}
\label{s_rn_ext_null_geodesics}

Let's discuss the geometric meaning of the extension proposed here, from the viewpoint of the lightcones and the null geodesics. In the coordinates $(\tau,\rho)$, the metric is analytic near the singularity $\rho=0$ and has the form
\begin{equation}
\label{eq_rn_metric_tau_rho_matrix}
g = -\Delta\rho^{2\CT-2\CS-2}\left(
\begin{array}{ll}
    \rho^2 & \CT\tau\rho \\
    \CT\tau\rho & \CT^2\tau^2 - \dsfrac{\CS^2}{\Delta^2}\rho^{6\CS-2\CT} \\
\end{array}
\right)
\end{equation}

Let's find the null directions at each point, {\ie} the tangent vectors $u\neq 0$ satisfying $g(u, u)=0$. If we consider $u=(\sin\alpha,\cos\alpha)$, we have to find $\alpha$. We obtain the equation
\begin{equation}
	\rho^2\sin^2\alpha + 2\CT\tau\rho\sin\alpha\cos\alpha + \left(\CT^2\tau^2 - \dsfrac{\CS^2}{\Delta^2}\rho^{6\CS-2\CT}\right)\cos^2\alpha = 0,
\end{equation}
which is quadratic in $\tan\alpha$
\begin{equation}
	\rho^2 \tan^2\alpha + 2\CT\tau\rho \tan\alpha + \left(\CT^2\tau^2 - \dsfrac{\CS^2}{\Delta^2}\rho^{6\CS-2\CT}\right) = 0,
\end{equation}
and has to the solutions
\begin{equation}
\label{eq_null_vectors}
	\tan\alpha_\pm = -\dsfrac{\CT\tau}{\rho} \pm \dsfrac{\CS}{\Delta}\rho^{3\CS-\CT-1}.
\end{equation}
Hence, the incoming and outgoing null geodesics satisfy the differential equation
\begin{equation}
\label{eq_null_geodesics}
	\dsfrac{\de\tau}{\de\rho} = -\dsfrac{\CT\tau}{\rho} \pm \dsfrac{\CS}{\Delta}\rho^{3\CS-\CT-1}.
\end{equation}
For the coordinate $\rho$ to remain spacelike, it has to satisfy $g_{\rho\rho}>0$. From equation \eqref{eq_rn_metric_tau_rho_matrix}, this requires that
\begin{equation}
\label{eq_rho_spacelike_condition}
	\dsfrac{\CS^2}{\Delta^2}\rho^{6\CS-2\CT} > \CT^2\tau^2.
\end{equation}
To ensure \eqref{eq_rho_spacelike_condition} in a neighborhood of $(0,0)$, it is required that
\begin{equation}
\label{eq_rho_spacelike_condition_T}
	\CT \geq 3\CS.
\end{equation}

The null geodesics are integral curves of the null vectors \eqref{eq_null_vectors}. In coordinates $(\tau,\rho)$, the null geodesics are oblique everywhere, except at $\rho=0$, where they become tangent to the axis $\rho=0$. The degeneracy of the metric stretch the lightcones as approaching $\rho=0$, where they become degenerate (figure \ref{lightcones}). At $\rho=0$, the incoming and outgoing null geodesics become tangent (figure \ref{null-geodesics}).

\image{lightcones}{0.7}{For $\CT\geq 3\CS$ and even $\CS$, as $\rho\to 0$, the lightcones become more and more degenerate along the axis $\rho=0$.}

\image{null-geodesics}{0.7}{The null geodesics, in the $(\tau,\rho)$ coordinates, for $\CT\geq 3\CS$ and even $\CS$.}

%~~~~~~~~~~~~~~~~~~~~~~~~~~~~~~~~~~~~~~~~~~~~~~~~~~~~~~~~~~~~~~~~~~~~~~~%
\subsection{The electromagnetic field}
\label{s_rn_ext_ext_electromagnetic}

In the standard {\rn} solution, in coordinates $(t,r,\phi,\theta)$, the potential of the electromagnetic field is singular at $r=0$, being
\begin{equation}
A = -\dsfrac q r \de t.
\end{equation}
On the other hand, in our solution both the electromagnetic potential, and the electromagnetic field, have better behavior.
\begin{corollary}
\label{rem_rn_electromagnetic_potential}
In the coordinates $(\tau,\rho,\phi,\theta)$, given by the coordinate change \eqref{eq_coordinate_ext_ext}, under the condition \eqref{eq_metric_smooth_cond}, the electromagnetic potential is
\begin{equation}
A = -q\rho^{\CT-\CS-1}\left(\rho\de\tau + \CT\tau\de\rho\right),
\end{equation}
the electromagnetic field is
\begin{equation}
F = q(2\CT-\CS)\rho^{\CT-\CS-1}\de\tau \wedge\de\rho.
\end{equation}
They are finite, and analytic everywhere, including at the singularity $\rho=0$.
\end{corollary}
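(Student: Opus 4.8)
The plan is to exploit the fact that both the electromagnetic potential $A$ and the field strength $F=\de A$ are coordinate-independent differential forms, so passing to the new coordinates $(\tau,\rho,\phi,\theta)$ is purely a matter of substituting the change of variables \eqref{eq_coordinate_ext_ext} into the standard expression $A=-\dsfrac q r\de t$. First I would record the two ingredients already available: $r=\rho^\CS$, hence $\dsfrac 1 r=\rho^{-\CS}$, and the differential $\de t=\rho^{\CT-1}(\rho\de\tau+\CT\tau\de\rho)$ computed in \eqref{eq_de_t}. Multiplying $-q\rho^{-\CS}$ by this expression immediately collapses the powers of $\rho$ and yields
\[
A=-q\rho^{\CT-\CS-1}\left(\rho\de\tau+\CT\tau\de\rho\right),
\]
which is the first claimed formula.

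Next I would obtain $F$ by taking the exterior derivative $F=\de A$ directly in the $(\tau,\rho)$ coordinates. Writing $A=f\,\de\tau+g\,\de\rho$ with $f=-q\rho^{\CT-\CS}$ and $g=-q\CT\tau\rho^{\CT-\CS-1}$, the only surviving contributions to $\de A$ come from $\partial_\rho f$ and $\partial_\tau g$, since $\de\rho\wedge\de\rho=0$ and $f$ carries no $\tau$-dependence. Collecting these produces a single term proportional to $\rho^{\CT-\CS-1}\,\de\tau\wedge\de\rho$, with a constant coefficient built from $\CS$ and $\CT$. Equivalently, one may transform the manifestly covariant expression $F=-\dsfrac{q}{r^2}\de t\wedge\de r$ using $\de r=\CS\rho^{\CS-1}\de\rho$ from \eqref{eq_de_r}; both routes must agree, which provides a convenient internal check on the coefficient and its sign.

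Finally I would verify the analyticity claim at $\rho=0$. The lowest power of $\rho$ appearing in either $A$ or $F$ is $\rho^{\CT-\CS-1}$, and the smoothness condition \eqref{eq_metric_smooth_cond} imposes $\CT\geq\CS+1$, so that $\CT-\CS-1\geq 0$. Hence every component is a nonnegative integer power of $\rho$ multiplied by a polynomial in $\tau$, and both forms extend analytically across the singularity $\rho=0$, precisely where the original $-\dsfrac q r\de t$ diverges. There is essentially no deep obstacle here: the argument is the routine observation that tensorial objects transform by substitution, and the only point requiring genuine care is the exterior-derivative bookkeeping, together with confirming that the exponent $\CT-\CS-1$ is nonnegative under the stated hypotheses, which is what guarantees the removal of the divergence and thus the whole corollary.
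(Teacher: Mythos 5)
Your method is exactly the paper's: substitute $r=\rho^{\CS}$ and $\de t=\rho^{\CT-1}(\rho\,\de\tau+\CT\tau\,\de\rho)$ into $A=-\frac{q}{r}\de t$ to get the first formula, take $F=\de A$ for the second, and read off analyticity from $\CT-\CS-1\geq 0$, which follows from \eqref{eq_metric_smooth_cond}. All of that is fine. However, the one step you leave unevaluated --- ``collecting'' the coefficient of $\de\tau\wedge\de\rho$ --- is precisely where care is needed, and the cross-check you propose would in fact expose a discrepancy with the stated formula. Writing $A=-q\left(\rho^{\CT-\CS}\de\tau+\CT\tau\rho^{\CT-\CS-1}\de\rho\right)$ and differentiating gives $\de A=-q\left[(\CT-\CS)\,\de\rho\wedge\de\tau+\CT\,\de\tau\wedge\de\rho\right]\rho^{\CT-\CS-1}=-q\CS\,\rho^{\CT-\CS-1}\de\tau\wedge\de\rho$, since the first term picks up a sign when reordered; the covariant route $F=\frac{q}{r^{2}}\de r\wedge\de t=q\CS\,\rho^{\CT-\CS-1}\de\rho\wedge\de\tau$ agrees. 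So the correct coefficient is $-q\CS$, not $q(2\CT-\CS)$: the corollary's second display (and the paper's own derivation, which mishandles the antisymmetry when converting $\de\rho\wedge\de\tau$ to $\de\tau\wedge\de\rho$) is in error. This does not affect the conclusions that actually matter --- $F$ is still a single monomial in $\rho$ of nonnegative degree $\CT-\CS-1$, hence finite and analytic at $\rho=0$ --- but if you complete your proof as written you will prove the corrected formula, not the one stated.
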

\begin{proof}
The electromagnetic potential is obtained directly from the proof of theorem \ref{thm_rn_ext_ext} and equation \eqref{eq_de_t}. To obtain the electromagnetic field, we apply the exterior derivative:
\begin{equation*}
\begin{array}{lll}
F &=& \de A = -q\de\left(\rho^{\CT-\CS}\de\tau + \CT\tau\rho^{\CT-\CS-1}\de\rho\right) \\
&=& -q \left(\dsfrac{\partial\rho^{\CT-\CS}}{\partial \rho} \de\rho\wedge\de\tau + \CT\dsfrac{\partial\tau\rho^{\CT-\CS-1}}{\partial\tau}\de\tau\wedge\de\rho\right) \\
&=& -q \left((\CT-\CS)\rho^{\CT-\CS-1} \de\rho\wedge\de\tau + \CT \rho^{\CT-\CS-1}\de\tau\wedge\de\rho\right) \\
&=& q(2\CT-\CS)\rho^{\CT-\CS-1}\de\tau\wedge\de\rho
\end{array}
\end{equation*}
\end{proof}

% Kerr-Newman Solutions with Analytic Singularity and no Closed Timelike Curves
\section{Kerr-Newman solutions with analytic singularity}
\label{s_black_hole_kn}

The stationary and axisymmetric solutions of the Einstein-Maxwell equations, representing charged rotating black holes, are called {\kn} solutions \cite{new65,Wal84}.
According to the {\em no-hair theorem}, the non-stationary black holes tend asymptotically to {\kn} solutions, which makes them representative for all the black holes.
Therefore, it is important to understand them, especially the properties in general considered undesirable. The first is the existence of the singularity, which is in general ring-shaped. Another problem is that, passing through the ring, one can reach inside another universe, in which there are {\em closed timelike curves}, {\ie} time machines (which fortunately don't affect the causality in the region $r>0$). A no less important problem is the {\em black hole information paradox}, which will be discussed later.

We show that the {\kn} solution, representing a charged and rotating stationary black hole, admits analytic extension beyond the singularity. We use again new coordinates, in which the metric becomes smooth, although degenerate, on the ring singularity. Unlike fort the standard solution, one can choose the extension so that there are no closed timelike curves.

%%%%%%%%%%%%
\subsection{Introduction}

The {\kn} space is $\R\times\R^3$, where $\R$ is the time coordinate, and $\R^3$ is parameterized by the spherical coordinates $(r,\phi,\theta)$. The parameter $a\geq 0$ characterizes the rotation, $m\geq 0$ the mass, $q\in\R$ the charge. Let's define the functions
\begin{equation*}
\Sigma(r,\theta) := r^2 + a^2 \cos^2 \theta,\,
\Delta(r) := r^2 - 2 m r + a^2 + q^2.
\end{equation*}
The {\kn} {\metricname} is
\begin{equation*}
g_{tt} = - \frac{\Delta(r) - a^2\sin^2\theta}{\Sigma(r,\theta)},\,	
g_{rr} = \frac{\Sigma(r,\theta)}{\Delta(r)},\,
g_{\theta\theta} = \Sigma(r,\theta),
\end{equation*}
\begin{equation}
\label{eq_metric_kn}
g_{\phi\phi} = \frac{(r^2 + a^2)^2 -\Delta(r) a^2\sin^2\theta}{\Sigma(r,\theta)}\sin^2\theta,\,
g_{t\phi} = g_{\phi t} = - \frac{2a\sin^2\theta(r^2 + a^2 - \Delta(r))}{\Sigma(r,\theta)}.
\end{equation}
All other components of the {\metricname} vanish \cite{Wal84}.

If $q=0$, we obtain the Kerr solution \cite{kerr63,ks65}; if $a=0$ we get the {\rn} solution \cite{reiss16,nord18}. If both $q=0$ and $a=0$, we obtain the {\schw} solution; if, in addition, $m=0$, the result is the empty Minkowski spacetime ({\cf} Table \ref{tab_static_black_holes}).
\begin{table}[htb!]
\centering
\begin{tabular}{|l|l|l|}
\hline
& $\mathbf{a>0}$ & $\mathbf{a=0}$ \\\hline
$\mathbf{q\neq 0}$ & {\kn} & {\rn} \\\hline
$\mathbf{q=0}$ & Kerr & {\schw} \\\hline
\end{tabular}
\vspace{0.15in}
\caption{Other stationary black hole solutions are obtained from the {\kn}.}
\label{tab_static_black_holes}
\end{table}

%~~~~~~~~~~~~~~~~~~~~~~~~~~~~~~~~~~~~~~~~~~~~~~~~~~~~~~~~~~~~~~~~~~~~~~~%
\subsection{Extending the {\kn} spacetime at the singularity}
\label{s_kn_ext_ext}

\begin{theorem}
\label{thm_kn_ext_ext}
Let's consider the {\kn} spacetime, with the metric \eqref{eq_metric_kn}. There is a larger spacetime, which extends it analytically beyond the singularity $r=0$.
\end{theorem}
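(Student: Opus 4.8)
The plan is to follow the strategy that succeeded for the {\schw} and {\rn} singularities in Theorems \ref{thm_schw_analytic} and \ref{thm_rn_ext_ext}: replace $t$, $r$, and (here) the azimuthal variable by new coordinates built from integer powers of an auxiliary variable, and choose the exponents large enough that the Jacobian supplies exactly the positive powers needed to cancel the divergent denominators. The decisive structural difference from the two earlier cases is that the {\kn} metric is not a warped product, so the convenient reduction to a two-dimensional $(t,r)$ problem times an $S^2$ factor is unavailable; moreover its curvature singularity is the ring $r=0,\ \theta=\pi/2$ rather than an isolated point.

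First I would isolate the source of the divergence. Writing $\Sigma=r^2+a^2\cos^2\theta$ and $\Delta=r^2-2mr+a^2+q^2$, a direct rearrangement of \eqref{eq_metric_kn} shows that $g_{rr}=\Sigma/\Delta$ and $g_{\theta\theta}=\Sigma$ are already analytic near $r=0$ (since $\Delta\to a^2+q^2>0$), while the remaining three components become analytic after multiplication by $\Sigma$: indeed $\Sigma g_{tt}=-(\Sigma-2mr+q^2)$, $\Sigma g_{t\phi}=-2a\sin^2\theta\,(2mr-q^2)$ and $\Sigma g_{\phi\phi}=\big[(r^2+a^2)^2-\Delta a^2\sin^2\theta\big]\sin^2\theta$ are all polynomials in $r$ and $\cos\theta$. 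Hence the entire obstruction is the single factor $1/\Sigma$ multiplying the $t,\phi$ block, and $\Sigma$ vanishes, to second order, exactly on the ring.

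Next I would impose a coordinate change that turns $\Sigma$ into a monomial. Since $\Sigma$ is the squared Euclidean distance to the origin in the plane with coordinates $(r,a\cos\theta)$, I would blow up that plane, introducing $(\rho,\chi)$ with $r$ and $a\cos\theta$ proportional to $\rho^{\CS}\cos\chi$ and $\rho^{\CS}\sin\chi$, so that $\Sigma=\rho^{2\CS}$, and simultaneously set $t=\tau\rho^{\CT}$ together with a corotating, rescaled azimuthal coordinate chosen to absorb the frame-dragging cross term. Substituting the new differentials into \eqref{eq_metric_kn}, each component of the $t,\phi$ block acquires factors of $\rho$ from the Jacobian against the pole $\rho^{-2\CS}$. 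I would then read off the lowest surviving power of $\rho$ in each component and impose inequalities on the integer exponents $\CS,\CT\in\N$, exactly as the conditions $\CS\geq 1$, $\CT\geq\CS+1$ arose for the {\rn} metric; solving them yields explicit admissible values for which every component is analytic at $\rho=0$, while the region $\rho<0$ furnishes the analytic extension beyond $r=0$. Reassembling with the transverse directions gives the full four-dimensional analytic, degenerate metric.

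The main obstacle is precisely the ring. Unlike the point singularities of the previous sections, a plain power substitution in $r$ alone cannot work: with $\Sigma=\rho^{2\CS}+a^2\cos^2\theta$ the singular term $\rho^{2\CT}/(\rho^{2\CS}+a^2\cos^2\theta)$ is generically not even smooth at $(\rho,\cos\theta)=(0,0)$, its low-order jet depending on the direction of approach. The genuinely hard steps are therefore, first, choosing the blow-up of the $(r,a\cos\theta)$ plane so that it remains compatible with the analyticity of the polynomial tensors $\Sigma g_{tt}$, $\Sigma g_{t\phi}$, $\Sigma g_{\phi\phi}$; and second, handling the azimuthal direction, where the $2\pi$-periodicity of $\phi$ constrains how one may shift and rescale it while still cancelling the poles in $g_{\phi\phi}$ and $g_{t\phi}$ — this is also where control over the closed timelike curves must be built in. Once analyticity of every component at the ring is verified, the extension is complete.
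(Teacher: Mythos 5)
Your overall strategy is recognizably the one the paper uses --- integer-power substitutions $t=\tau\rho^{\CT}$, $r=\rho^{\CS}$, plus a rescaled azimuthal coordinate, followed by power counting against the pole coming from $\Sigma^{-1}$ --- but you depart from the paper at the decisive point. The paper's proof keeps $\theta$ untouched, sets $\phi=\mu\rho^{\CM}$, reads off the least power of $\rho$ of each singular component \emph{on} the ring (where $\Sigma=\rho^{2\CS}$), and concludes analyticity from the conditions \eqref{eq_metric_smooth_cond_kn}. You instead insist on first blowing up the $(r,a\cos\theta)$ plane so that $\Sigma$ becomes the exact monomial $\rho^{2\CS}$. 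Your reason --- that $\rho^{2\CT}/(\rho^{2\CS}+a^2\cos^2\theta)$ is not jointly analytic, indeed not even $C^\infty$, at $(\rho,\cos\theta)=(0,0)$, since its Taylor coefficient in $(\cos\theta)^{2k}$ scales like $\rho^{2\CT-2(k+1)\CS}$ and eventually carries a negative power of $\rho$ --- is a substantive mathematical point, and it is exactly the step that counting orders along the slice $\cos\theta=0$ does not control. So your route is genuinely different from the paper's, and the difference is not cosmetic.

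That said, your proposal is a plan rather than a proof, and the gap sits precisely where you place all the weight. After the blow-up $r=\rho^{\CS}\cos\chi$, $a\cos\theta=\rho^{\CS}\sin\chi$, you never verify: (i) that the full change $(t,r,\phi,\theta)\to(\tau,\rho,\mu,\chi)$ is a diffeomorphism of the relevant block onto its image away from $\rho=0$ (note that $r=0$ now also occurs at $\cos\chi=0$, and covering the physical region $r>0$ near the ring forces $\chi\in(-\pi/2,\pi/2)$); (ii) that the previously regular pieces $g_{rr}\,\de r^2$ and $g_{\theta\theta}\,\de\theta^2$, re-expressed through $\de r$ and $\de\theta=\de\bigl(\arccos(\rho^{\CS}\sin\chi/a)\bigr)$, remain analytic and introduce no new poles; (iii) what the ``corotating, rescaled azimuthal coordinate'' actually is, and that it is compatible with the $2\pi$-periodicity of $\phi$; and (iv) the explicit inequalities on $\CS,\CT,\CM$ together with the resulting list of analytic metric components and the identification of the region $\rho<0$ as the extension. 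These computations \emph{are} the theorem; asserting that they ``yield explicit admissible values'' is not the same as producing them. You have correctly isolated the difficulty at the ring and proposed a plausible way around it, but as written you have not proved the statement.
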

\begin{proof}
We will find a coordinate system in the neighborhood of the singularity usually called the block III (\cite{ONe95}, p. 66), defined as follows. If the equation $\Delta(r)=0$ has no real solutions, the region is $r<\infty$, otherwise, it is $r<r_-$, where $r_-$ is the smallest solution.

The new coordinates are $\tau$, $\rho$, and $\mu$, defined by
\begin{equation}
\label{eq_coordinate_ext_ext_kn}
\begin{array}{l}
\left\{
\begin{array}{ll}
t &= \tau\rho^{\CT}, \\
r &= \rho^{\CS}, \\
\phi &= \mu\rho^{\CM}, \\
\theta &= \theta, \\
\end{array}
\right.
\\
\end{array}
\end{equation}
where ${\CS},{\CT},{\CM}\in\N$ are natural numbers, chosen so that the {\metricname} is analytic.
When passing from coordinates $(x^{a})$ to new coordinates $(x^{a'})$, the metric becomes
\begin{equation}
\label{eq_metric_coord_change}
g_{a'b'} = \dsfrac{\partial x^{a}}{\partial x^{a'}}\dsfrac{\partial x^{b}}{\partial x^{b'}}	g_{ab}.
\end{equation}
For our transformation \eqref{eq_coordinate_ext_ext_kn}, the Jacobian is
\begin{equation}
\label{eq_coordinate_change}
\begin{array}{l}
\dsfrac{\partial(t,r,\phi,\theta)}{\partial(\tau,\rho,\mu,\theta)}=
\left(
\begin{array}{llll}
\dsfrac{\partial t}{\partial \tau} & \dsfrac{\partial t}{\partial \rho} & \dsfrac{\partial t}{\partial \mu} & \dsfrac{\partial t}{\partial \theta}  \\
\dsfrac{\partial r}{\partial \tau} & \dsfrac{\partial r}{\partial \rho} & \dsfrac{\partial r}{\partial \mu} & \dsfrac{\partial r}{\partial \theta}  \\
\dsfrac{\partial \phi}{\partial \tau} & \dsfrac{\partial \phi}{\partial \rho} & \dsfrac{\partial \phi}{\partial \mu} & \dsfrac{\partial \phi}{\partial \theta}  \\
\dsfrac{\partial \theta}{\partial \tau} & \dsfrac{\partial \theta}{\partial \rho} & \dsfrac{\partial \theta}{\partial \mu} & \dsfrac{\partial \theta}{\partial \theta}  \\
\end{array}
\right)
=
\left(
\begin{array}{llll}
\rho^{\CT} & {\CT} \tau \rho^{{\CT}-1} & 0 & 0 \\
0 & {\CS}\rho^{{\CS}-1} & 0 & 0  \\
0 & {\CM} \mu \rho^{{\CM}-1} & \rho^{\CM} & 0 \\
0 & 0 & 0 & 1 \\
\end{array}
\right).
\end{array}
\end{equation}
We arrange the components in Table \ref{tab_jacobian}.
\begin{table}[htb!]
\centering
\begin{tabular}{|p{1.5 cm}||p{1.5 cm}|p{1.5 cm}|p{1.5 cm}|p{1.5 cm}|}
\hline
& $\cdot /\partial \tau$ & $\cdot /\partial \rho$ & $\cdot /\partial \mu$ & $\cdot /\partial \theta$ \\\hline
\hline
$\partial t/\cdot$ & $\rho^{\CT}$ & ${\CT} \tau \rho^{{\CT}-1}$ & $0$ & $0$ \\\hline
$\partial r/\cdot$ & $0$ & ${\CS}\rho^{{\CS}-1}$ & $0$ & $0$ \\\hline
$\partial \phi/\cdot$ & $0$ & ${\CM} \mu \rho^{{\CM}-1}$ & $\rho^{\CM}$ & $0$ \\\hline
$\partial \theta/\cdot$ & $0$ & $0$ & $0$ & $1$ \\\hline
\end{tabular}
\vspace{0.15in}
\caption{The Jacobian components.}
\label{tab_jacobian}
\end{table}

To make sure that the new expression of the {\metricname} becomes smooth even on the ring singularity, we want that all the terms in the right hand side of equation \eqref{eq_metric_coord_change} are smooth. For this, that the Jacobian components has to cancel the singularities of the {\metricname}'s components, even when $\cos \theta=0$.

Here are the least powers of $\rho$ on the ring singularity, in each of the {\metricname}'s components
\begin{equation}
\label{eq_ord_g_t_t}
\rhord{g_{tt}} = - 2{\CS},
\end{equation}
\begin{equation}
\label{eq_ord_g_phi_phi}
\rhord{g_{\phi\phi}} = -2{\CS},
\end{equation}
\begin{equation}
\label{eq_ord_g_t_phi}
\rhord{g_{t\phi}} = \rhord{g_{\phi t}} = - 2{\CS},
\end{equation}
obtained by dividing polynomial expressions in $\rho$ by $\Sigma(r,\theta)$. All other components are non-singular on the ring singularity.

Table \ref{tab_jacobian_ord} contains the least power of $\rho$ present in Table \ref{tab_jacobian}.
\begin{table}[htb!]
\centering
\begin{tabular}{|p{1.5 cm}||p{1.5 cm}|p{1.5 cm}|p{1.5 cm}|p{1.5 cm}|}
\hline
& $\cdot /\partial \rho$ & $\cdot /\partial \tau$ & $\cdot /\partial \mu$ & $\cdot /\partial \theta$ \\\hline
\hline
$\partial t/\cdot$ & ${\CT}$ & ${\CT}-1$ & $0$ & $0$ \\\hline
$\partial r/\cdot$ & $0$ & ${\CS}-1$ & $0$ & $0$ \\\hline
$\partial \phi/\cdot$ & $0$ & ${\CM}-1$ & ${\CM}$ & $0$ \\\hline
$\partial \theta/\cdot$ & $0$ & $0$ & $0$ & $0$ \\\hline
\end{tabular}
\vspace{0.15in}
\caption{The least power of $\rho$ in the Jacobian components of the coordinate change.}
\label{tab_jacobian_ord}
\end{table}

We now check the {\metricname}'s components, to see if the singularities are canceled by the components of the Jacobian.

Each component $g_{ab}$ of the {\metricname} is checked by looking up the rows labeled by $\partial x^a/\cdot$ and $\partial x^b/\cdot$ in Table \ref{tab_jacobian_ord}.
For example,
\begin{equation}
	\dsfrac{\partial t}{\partial \rho}\dsfrac{\partial t}{\partial \tau}g_{tt}
\end{equation}
satisfies
\begin{equation}
	\rhord{\dsfrac{\partial t}{\partial \rho}\dsfrac{\partial t}{\partial \tau}g_{tt}} = ({\CT}  - 1) + {\CT} - 2{\CS},
\end{equation}
hence ${\CT}$ has to satisfy $2{\CT}\geq 2{\CS}+1$.

From \eqref{eq_ord_g_t_t}, \eqref{eq_ord_g_phi_phi}, and \eqref{eq_ord_g_t_phi}, follows that we have to do this only for the components of the {\metricname} with indices $t$ and $\phi$. From the equation \eqref{eq_metric_coord_change}, we need to check only the rows $\partial t/\cdot$ and $\partial \phi/\cdot$ from Table \ref{tab_jacobian_ord}. Hence, to cancel the singularities of the {\metricname}'s components, each component of the Jacobian of the form $\partial t/\cdot$ and $\partial \phi/\cdot$ has to contain $\rho$ to at least the power ${\CS}$. Therefore, the smoothness (and the analyticity for that matter) of the {\metricname} on the ring singularity, in the new coordinates, is ensured by the conditions
\begin{equation}
\label{eq_metric_smooth_cond_kn}
\begin{array}{l}
\left\{
\begin{array}{ll}
{\CS} &\geq 1 \\
{\CT} &\geq {\CS} + 1 \\
{\CM} &\geq {\CS} + 1,
\end{array}
\right.
\\
\end{array}
\end{equation}
where ${\CS},{\CT},{\CM}\in\N$.
Hence, in the new coordinates, all of the {\metricname}'s components are analytic at the singularity.
\end{proof}

%~~~~~~~~~~~~~~~~~~~~~~~~~~~~~~~~~~~~~~~~~~~~~~~~~~~~~~~~~~~~~~~~~~~~~~~%
\subsection{Electromagnetic field in \kn}
\label{s_kn_em}

Another feature of our extension is that the electromagnetic potential and electromagnetic field become smooth.

In standard coordinates, the electromagnetic potential of the {\kn} solution is the $1$-form
\begin{equation}
\label{eq_kn_electromagnetic_potential}
A = -\dsfrac{qr}{\Sigma(r,\theta)}(\de t - a\sin^2\theta\de\phi).
\end{equation}
In our coordinates, it is
\begin{equation}
\label{eq_kn_electromagnetic_potential_smooth}
A = -\dsfrac{q\rho^{\CS}}{\Sigma(r,\theta)}(\rho^{\CT}\de\tau + {\CT}\tau\rho^{{\CT}-1}\de\rho - a\sin^2\theta\rho^{\CM}\de\mu),
\end{equation}
because from the Table \ref{tab_jacobian} it follows that
\begin{equation}
\label{eq_de_t_kn}
	\de t = \rho^{\CT}\de\tau + {\CT}\tau\rho^{{\CT}-1}\de\rho
\end{equation}
\begin{equation}
\label{eq_de_r_kn}
	\de r = {\CS}\rho^{{\CS}-1}\de\rho
\end{equation}
and
\begin{equation}
\label{eq_de_phi}
	\de \phi = {\CM}\mu\rho^{{\CM}-1}\de\rho + \rho^{\CM}\de\mu.
\end{equation}

In our coordinates, there is no singularity of the electromagnetic potential $A$ at $\rho=0$ and $\cos\theta=0$, because ${\CT}> {\CS}$ and ${\CM}> {\CS}$, from the conditions \eqref{eq_metric_smooth_cond}. Because the electromagnetic field $F = \de A$, it is smooth too.

% Global hyperbolicity and black hole singularities
%----------------------------------------------------------------------------------------
\chap{ch_global_hyperbolic}{Global hyperbolicity and black hole singularities}

The source of this chapter is author's original conference talk \cite{Sto12e}.
\label{s_black_hole_hyper}
Black hole singularities are usually considered to break the time evolution, in (classical) General Relativity. It is shown here that, in fact, the standard black hole solutions are compatible with global hyperbolicity. For this, globally hyperbolic spacetimes containing singularities are constructed.

The first step was to extend solutions of the Einstein equation to the singularities, in a way which avoids infinities. The stationary black hole solutions are then analytically extended beyond the singularities. Next, the topology of the spacetime at the singularity is repaired, by using the new extensions. Then, solutions are constructed, which are shown to be globally hyperbolic by foliating the spacetime with Cauchy hypersurfaces. The foliations are constructed by applying a Schwarz-Christoffel mapping to the Penrose-Carter representations of the black hole solutions.

The results are applied to non-stationary black holes, including evaporating ones.

%%%%%%%%%%%%%%%%%%%%%%%%%%%%%%%%%%%%%%%%%%%%%%%%%%%%%%%%%%%%%%%%%%%%%%%%%%%%%%%%%%%%%%%%%%%%%
\section{Canceling the singularities of the field equations}
\label{s_field_equations_singularity_removal}

Some of the tensor fields involved in Einstein's field equation become infinite at the singularity points. We need a method to replace them with other fields which obey equations equivalent to Einstein's at the non-singular points, but remain in the same time smooth at the singularity. Our proposed approach generalizes the {\semiriem} geometry, so that it works at an important class of singularity points \cite{Sto11a,Sto11b,Sto11d}.

Other method, used by Einstein and Rosen (for which they credited Mayer \cite{ER35}) is to multiply the Einstein equations with by a well-chosen power of $\det g$. This way, all instances of $g^{ab}$ in the expression of $\det g^2\ric$ are replaced by $\det g g^{ab}$, the adjugate matrix of $g_{ab}$ (\cite{ER35}, p. 74), and the singularities of the Riemann and the Ricci tensors can be canceled out for certain cases. The resulting equations still make sense as partial differential equations in the components of the metric tensor. Of course, the geometric objects of {\ssemiriem} Geometry have geometric meanings so long as the metric tensor is {\nonsing}. But once we allow it to become singular, the geometric objects like covariant derivative and curvature become undefined. The quantities which replace them if we multiply the equations with $\det g$ or other factors cannot give them this meaning. Maybe a new kind of geometry is required to restore the ideas of covariant derivative and curvature for singular metrics.

We preferred to have a generalization of  {\ssemiriem} Geometry, which allows us to define geometric objects like covariant derivative, Riemann curvature and Ricci tensor even when the metric becomes degenerate.

In the following, we will explore the arena on which are defined the fields -- the spacetime -- to make sure that the singularities don't alter the topology in a way which breaks down the evolution equations.

%%%%%%%%%%%%%%%%%%%%%%%%%%%%%%%%%%%%%%%%%%%%%%%%%%%%%%%%%%%%%%%%%%%%%%%%%%%%%%%%%%%%%%%%%%%%%
\subsection{The topology of singularities}
\label{s_singularities_topology}

In the following we shall see that the main black hole solutions of Einstein's equations can be interpreted so that the time evolution is not jeopardized.
To do this, we
\begin{enumerate}
	\item Remove the singularities from the field equations.
	\item Make sure that the solution is globally hyperbolic.
\end{enumerate}
In this section, we shall see that these conditions can be ensured for the typical black hole solutions of Einstein's equation.

%~~~~~~~~~~~~~~~~~~~~~~~~~~~~~~~~~~~~~~~~~~~~~~~~~~~~~~~~~~~~~~~~~~~~~~~%
\section{Globally hyperbolic spacetimes}
\label{s_globally_hyperbolic_spacetimes}

A global solution to the Einstein equation is well-behaved when the initial data at a given moment of time determine the solution for the entire future and past. This condition is ensured by the \textit{global hyperbolicity}, which is expressed by the requirements that
\begin{enumerate}
	\item 
for any two points $p$ and $q$, the intersection between the causal future of $p$, and the causal past of $q$, $J^+(p)\cap J^-(q)$, is a compact subset of the spacetime;
	\item 
	there are no closed timelike curves (\citep{HE95}{206}).
\end{enumerate}

The property of global hyperbolicity is equivalent to the existence of a \textit{Cauchy hypersurface} -- a spacelike hypersurface $\mathfrak S$ that, for any point $p$ in the future (past) of $\mathfrak S$, is intersected by all past-directed (future-directed) inextensible causal ({\ie} timelike or null) curves through the point $p$ (\citep{HE95}{119, 209--212}).

A method we will use to construct hyperbolic spacetimes out of our solutions describing black holes is the following. We will use Penrose-Carter coordinates, and prove the global hyperbolicity by explicitly constructing foliations in Cauchy hypersurfaces.

To obtain explicitly the foliations of the Penrose-Carter diagrams, we map to our solutions represented in coordinates $(\tau,\rho)$ the product $(0,1)\times\R$. To do this, we will use a version of the Schwarz-Christoffel mapping that maps the strip
\begin{equation}
\label{eq_strip}
\mc S:=\{z\in\CC|\tn{Im}(z)\in[0,1]\}
\end{equation}
to a polygonal region from $\CC$, with the help of the formula
\begin{equation}
	\label{eq_s_c_map}
	f(z)=A + C\int^{\mc S}\exp\left[\frac\pi 2(\alpha_--\alpha_+)\zeta\right]\prod_{k=1}^n\left[\sinh \frac\pi 2(\zeta-z_k)\right]^{\alpha_k-1}\de\zeta,
\end{equation}
where $z_k\in\partial\mc S:=\R\times\{0,i\}$ are the prevertices of the polygon, and $\alpha_-,\alpha_+,\alpha_k$ are the measures of the angles of the polygon, divided by $\pi$ (\cf\ \eg\ \cite{dri02}). The vertices having the angles $\alpha_-$ and $\alpha_+$ correspond to the ends of the strip, which are at infinity. The level curves $\{\tn{Im}(z)=\tn{const.}\}$ give our foliation \cite{Sto12e}.

%~~~~~~~~~~~~~~~~~~~~~~~~~~~~~~~~~~~~~~~~~~~~~~~~~~~~~~~~~~~~~~~~~~~~~~~%
\section{Globally hyperbolic {\schw} spacetime}
\label{s_globally_hyperbolic_spacetimes_schw}

%~~~~~~~~~~~~~~~~~~~~~~~~~~~~~~~~~~~~~~~~~~~~~~~~~~~~~~~~~~~~~~~~~~~~~~~%
\subsection{Penrose-Carter diagram for the {\semireg} solution}
\label{s_schw_semireg_penrose_carter}

The coordinates for the Penrose-Carter diagram (fig. \ref{std-schwarzschild}) of the {\schw} metric in its standard form are
\begin{equation}
\label{eq_schw_penrose_coord}
\begin{array}{l}
\left\{
\begin{array}{lll}
v''&=&\arctan\(\phantom{-}(2m)^{-1/2}\exp\(\phantom{-}\dsfrac{v}{4m}\)\) \\
w''&=&\arctan\(-(2m)^{-1/2}\exp\(-\dsfrac{w}{4m}\)\) \\
\end{array}
\right.
\\
\end{array}
\end{equation}
where $v,w$ are the Eddington-Finkelstein lightlike coordinates \cite{eddington1924comparison,finkelstein1958past}
\begin{equation}
\label{eq_schw_lightlike_coord}
\begin{array}{l}
\left\{
\begin{array}{lll}
v &=& t+r+2m\ln(r-2m) \\
w &=& t-r-2m\ln(r-2m). \\
\end{array}
\right.
\\
\end{array}
\end{equation}

\image{std-schwarzschild}{0.5}{Penrose-Carter diagram for the standard maximally extended {\schw} solution.}

The maximal analytic extension is limited by the conditions $v''+w''\in(-\pi,\pi)$ and $v'',w''\in\(-\dsfrac{\pi}{2},\dsfrac{\pi}{2}\)$ (see Fig. \ref{std-schwarzschild}), because we have to stop at the singularity $r=0$, where we obtain infinite values, which prevent the analytic continuation.

But out coordinates allow the extension beyond these boundaries. To get the Penrose-Carter coordinates corresponding to our solution, we substitute \eqref{eq_coordinate_semireg} and get
\begin{equation}
\label{eq_schw_semireg_lightlike_coord}
\begin{array}{l}
\left\{
\begin{array}{lll}
v &=& \xi\tau^4 + \tau^2 + 2m\ln(2m - \tau^2) \\
w &=& \xi\tau^4 - \tau^2 - 2m\ln(2m - \tau^2). \\
\end{array}
\right.
\\
\end{array}
\end{equation}

From equation \eqref{eq_schw_semireg}, we see that our solution extends to negative values for $\tau$, and from \eqref{eq_schw_semireg_lightlike_coord}, the extension is symmetric with respect to the hypersurface $\tau=0$. We obtain the Penrose-Carter diagram from Fig. \ref{semireg-schwarzschild}.

\image{semireg-schwarzschild}{0.5}{Penrose-Carter diagram for the {\semireg} maximally extended {\schw} solution.}

%%%%%%%%%%%%%%%%%%%%%%%%%%%%%%%%%%%%%%%%%%%%%%%%%%%%%%%%%%%%%%%%%%%%%%%%%%%%%%%%%%%%%%%%%%%%%
\subsection{Globally hyperbolic {\schw} spacetime}

The {\schw} singularity is spacelike. Although in the original {\schw} coordinates the singularity is apparently timelike and one dimensional, it is in fact spacelike, as we can see from the Kruskal-Szekeres coordinates or Penrose-Carter coordinates.

Let's take the Penrose-Carter diagram of a {\schw} black hole (Figure \ref{std-schwarzschild}).
This diagram actually represents the maximally extended {\schw} solution, in Penrose-Carter coordinates. This extended solution is interpreted to include, together with the universe containing the black hole, another universe, in which there is a white hole.

We can foliate this spacetime with Cauchy hypersurfaces, as we can see in Figure \ref{hexagon}.

\image{hexagon}{0.5}{Space-like foliation of the maximally extended {\schw} solution.}

This foliation is obtained with the help of the version of the Schwarz-Christoffel mapping from equations \eqref{eq_strip} and \eqref{eq_s_c_map} (\cf \eg \cite{dri02}).

To obtain the foliation from Figure \ref{hexagon}, we take the prevertices to be
\begin{equation}
\label{eq_prevertices_hexagon}
	\left(-\infty,-a, a, +\infty, a+i,-a+i\right),
\end{equation}
where $a>0$ is a real number. The angles are respectively
\begin{equation}
\label{eq_angles_hexagon}
\left(\frac {\pi}{2},\frac {3\pi}{4},\frac {3\pi}{4},\frac {\pi}{2},\frac {3\pi}{4},\frac {3\pi}{4}\right).
\end{equation}

This foliation contains the past white hole, which one may consider unphysical. We can make a similar foliation, but without the white hole (fig. \ref{3oo3s}), if we use the prevertices
\begin{equation}
\label{eq_prevertices_3oo3s}
\left(-\infty,-a, 0, a, +\infty, b+i,-b+i\right),
\end{equation}
where $0<b<a$ are positive real numbers. The angles are respectively
\begin{equation}
\label{eq_angles_3oo3s}
\left(\frac {\pi}{2},\frac {\pi}{2},\frac {3\pi}{2},\frac {\pi}{2},\frac {\pi}{2},\frac {3\pi}{4},\frac {3\pi}{4}\right).
\end{equation}

\image{3oo3s}{0.5}{Space-like foliation of the {\schw} solution, without white hole.}

%%%%%%%%%%%%%%%%%%%%%%%%%%%%%%%%%%%%%%%%%%%%%%%%%%%%%%%%%%%%%%%%%%%%%%%%%%%%%%%%%%%%%%%%%%%%%
\section{Globally hyperbolic {\rn} spacetime}

%~~~~~~~~~~~~~~~~~~~~~~~~~~~~~~~~~~~~~~~~~~~~~~~~~~~~~~~~~~~~~~~~~~~~~~~%
\subsection{The Penrose-Carter diagrams for our solution}
\label{s_rn_ext_ext_penrose_carter}

The Penrose-Carter diagrams for the standard {\rn} spacetimes are presented in figure \ref{std-rn} (\citep{HE95}{157-161}).

\image{std-rn}{0.6}{Reissner-Nordstr\"om black holes. A. Naked solutions ($q^2>m^2$). B. Extremal solution ($q^2=m^2$). C. Solutions with $q^2<m^2$.}

To obtain the Penrose-Carter coordinates for our solution from section \sref{s_black_hole_rn}, we add our coordinate transformation before the steps leading to the Penrose-Carter coordinates. For odd $\CS$, there is a region $\rho<0$, and the Penrose-Carter diagrams are the same as the standard diagrams ({\eg} \citep{HE95}{165}). If $\CS$ is even, the diagram repeats periodically, not only vertically, but also horizontally, symmetrical to the singularity.

The diagram for the naked {\rn} black hole ($q^2>m^2$) is obtained by taking the symmetric of the standard {\rn} diagram with respect to the singularity (figure \ref{ext-ext-rn-naked}).

\image{ext-ext-rn-naked}{0.3}{Penrose-Carter diagram for our naked {\rn} black hole ($q^2>m^2$), which is the analytic extension beyond the singularity. It is symmetric with respect to the timelike singularity.}

The resulting diagram for the extremal {\rn} black hole ($q^2=m^2$) is a strip symmetric about the singularity (figure \ref{ext-ext-rn-extremal}). 

\image{ext-ext-rn-extremal}{0.5}{Penrose-Carter diagram for our analitic extension of the extremal {\rn} black hole ($q^2=m^2$) beyond the singularity. It is symmetric with respect to the timelike singularity, and repeats periodically in time.}

The Penrose-Carter diagram for the non-extremal {\rn} black hole ($q^2<m^2$) repeats, in the planar representation, both vertically and horizontally, and has overlapping parts (figure \ref{ext-ext-rn}).

\image{ext-ext-rn}{0.6}{The Penrose-Carter diagram our extension of  for the non-extremal {\rn} black hole with $q^2<m^2$, analytically extended beyond the singularity. When represented in plane, it repeats periodically along both the vertical and the horizontal directions, and it has overlaps. In the diagram, there is an intentional small shift between the two copies, to make the overlapping visible.}

In the Penrose-Carter diagrams of the extensions of the {\rn} solution beyond the singularities, the null geodesics continue through the singularity, because they always make angle equal to $\pm \frac {\pi}{4}$.

%~~~~~~~~~~~~~~~~~~~~~~~~~~~~~~~~~~~~~~~~~~~~~~~~~~~~~~~~~~~~~~~~~~~~~~~%
\subsection{Globally hyperbolic {\rn} spacetime}
\label{s_rn_globally_hyperbolic}

Two reasons prevent the standard {\rn} solution to admit a Cauchy hypersurface. First, it cannot be extended beyond the singularity. Second, it has Cauchy horizons.

Our solution solves the first problem, because extends analytically the {\rn} spacetime beyond the singularity. We will use this to construct solutions that admit \textit{foliations} with Cauchy hypersurfaces.

Let's consider the coordinates $(\tau,\rho)$ as in \eqref{eq_coordinate_ext_ext}, with $\CT$ so that the condition \eqref{eq_rho_spacelike_condition_T} holds. They give a spacelike foliation, by the hypersurfaces $\tau=\tn{const}$. This foliation is global only for naked singularities; otherwise it is defined locally, in a neighborhood of $(\tau,\rho)=(0,0)$ given by $r<r_-$. Equation \eqref{eq_rn_metric} saids that the solution is \textit{stationary}, {\ie}, it is time independent, in the coordinates $(t,r)$. Hence, we can choose any value for the time origin. By this, we can cover a neighborhood of the entire axis $\rho=0$ with coordinate patches like $(\tau,\rho)$. To obtain global foliations, we use the Penrose-Carter diagrams of section \sref{s_rn_ext_ext_penrose_carter}, figures \ref{ext-ext-rn-naked}, \ref{ext-ext-rn-extremal} and \ref{ext-ext-rn}. Recall that in the Penrose-Carter diagram, the null directions are represented as straight lines inclined at an angle of $\pm \frac {\pi}{4}$.

It is easy to see that the naked {\rn} solution (figure \ref{ext-ext-rn-naked}) has a global foliation, because the Penrose-Carter diagram is the same as for the Minkowski spacetime (fig. \ref{diamond-rn-naked}). Hence, we use the same foliation as that for the Minkowski spacetime will.

To foliate it, we use the Schwarz-Christoffel mapping \eqref{eq_s_c_map}. By taking as prevertices
\begin{equation}
\label{eq_prevertices_diamond-s}
\left(-\infty,0, +\infty, i\right),
\end{equation}
and as angles
\begin{equation}
\label{eq_angles_diamond-s}
\left(\frac {\pi}{2},\frac {\pi}{2},\frac {\pi}{2},\frac {\pi}{2}\right),
\end{equation}
we obtain the foliation represented in fig. \ref{diamond-rn-naked}.

\image{diamond-rn-naked}{0.5}{Spacelike foliation with Cauchy hypersurfaces, for the naked Reissner-Nordstr\"om solution extended beyond singularity.}

The solutions whose singularity is not naked have \textit{Cauchy horizons} (hypersurfaces which are boundaries for the Cauchy development of the data on a spacelike hypersurface), and don't admit globally hyperbolic maximal extension. To obtain a globally hyperbolic solution, we have to remove the regions beyond the Cauchy horizons, obtaining in a natural way a subset of the Penrose-Carter diagram, symmetric about the singularity $r=0$, which admits foliation (Figures \ref{up-big} and \ref{up-small-f}).

\image{up-big}{0.6}{Foliation with Cauchy hypersurfaces, for the non-extremal Reissner-Nordstr\"om solution.}

\image{up-small-f}{0.4}{Foliation with Cauchy hypersurfaces, for the extremal Reissner-Nordstr\"om solution.}

We take as prevertices for the Schwarz-Christoffel mapping \eqref{eq_s_c_map} the set
\begin{equation}
\label{eq_prevertices_rn-kerr}
\left(-\infty,-a, 0, a, +\infty, i\right),
\end{equation}
where $0<a$ is real. The corresponding angles are
\begin{equation}
\label{eq_angles_rn-kerr}
\left(\frac {\pi}{2},\frac {\pi}{2},\frac {3\pi}{2},\frac {\pi}{2},\frac {\pi}{2},\frac {\pi}{2}\right).
\end{equation}
By choosing $a$ properly, we obtain the foliations represented in diagrams \ref{up-big} and \ref{up-small-f}, for the non-extremal, respectively the extremal solutions. Given that $\alpha_-=\alpha_+$ and the edges are inclined at most at $\frac{\pi}{4}$, and alternate in such a way that the level curves with $\tn{Im}(z)\in(0,1)$ always make an angle strictly between $-\frac{\pi}{4}$ and $\frac{\pi}{4}$, our foliations are spacelike everywhere.

Each of the figures \ref{diamond-rn-naked}, \ref{up-big}, and \ref{up-small-f}, contained a highlighted spacelike hypersurface. This hypersurface is intersected by all past (future) directed inextensible causal curves through a point $p$ from its future (past), therefore it is Cauchy.

%%%%%%%%%%%%%%%%%%%%%%%%%%%%%%%%%%%%%%%%%%%%%%%%%%%%%%%%%%%%%%%%%%%%%%%%%%%%%%%%%%%%%%%%%%%%%
\section{Globally hyperbolic {\kn} spacetime}
\label{s_kn_global}

\subsection{Removing the closed timelike curves}

For the {\kn} black holes, a new problem seems to forbid the existence of globally hyperbolic solutions: the existence of closed timelike curves. We will see that they can be removed, by a proper choice of ${\CS}$, ${\CT}$, and ${\CM}$.

The {\kn} ring singularity is at $r=0$ and $\cos\theta=0$. In Kerr-Schild coordinates, the solution can be analytically extended through the disk $r=0$, to another spacetime region which has $r<0$, hence is not isometric to the region $r>0$ (see Fig. \ref{kerr-schild}). On the other hand, in our coordinates with even ${\CS}$, ${\CT}$, and ${\CM}$, the analytic extension to $\rho<0$ gives a region which is isometric to that with $\rho>0$, and we can make the identification $(\rho,\tau,\mu,\theta)=(-\rho,\tau,\mu,\theta)$. In addition to removing the wormhole beyond $r=0$, also the closed timelike curves vanish, because there will be no region $r<0$. This allows us to find foliations similar to those for the {\rn} solution.

\image{kerr-schild}{0.95}{The standard Kerr and {\kn} solutions, in Kerr-Schild coordinates, admit analytic continuation beyond the disk $r=0$, into a spacetime which contains closed timelike curves. In our solution with even ${\CS}$, ${\CT}$, and ${\CM}$, the regions $\rho<0$ and $\rho>0$ can be identified, removing the wormhole and closed timelike curves.}

%~~~~~~~~~~~~~~~~~~~~~~~~~~~~~~~~~~~~~~~~~~~~~~~~~~~~~~~~~~~~~~~~~~~~~~~%
\subsection{The Penrose-Carter diagrams for our solution}
\label{s_kn_ext_ext_penrose_carter}

The Penrose-Carter diagrams the standard Kerr and {\kn} spacetimes have Penrose-Carter diagrams similar to those for the {\rn} spacetimes, although their symmetry is axial, not spherical, the singularity is ring-shaped, and closed timelike curves are present. Our approach can eliminate the closed timelike curves, and the Penrose-Carter diagrams for the {\rn} spacetimes are even more appropriate for the Kerr and {\kn} spacetimes. Therefore, we will use spacelike foliations similar to those for the {\rn} case, the difference being that the singularity is ring-shaped (see Figure \ref{kn-ext}). In particular, we can use the same Schwarz-Christoffel mappings, and the same globally hyperbolic subsets obtained by removing what's beyond the Cauchy horizons (fig. \ref{kn-ext}).

\image{kn-ext}{0.95}{Space-like foliation of the extended {\kn} solutions. \textbf{A.} Naked singularity. \textbf{B.} Extremal solution. \textbf{C.} Non-extremal solution.}

%%%%%%%%%%%%%%%%%%%%%%%%%%%%%%%%%%%%%%%%%%%%%%%%%%%%%%%%%%%%%%%%%%%%%%%%%%%%%%%%%%%%%%%%%%%%%
\section{Non-stationary black holes}
{\label{s_non_stationary_bh}

The stationary solutions are idealizations. In reality, a black hole doesn't necessarily exist from the beginning of the universe, it may be younger. Also, it may evaporate completely after a finite time interval.

Non-stationary black hole solutions can be obtained by considering the parameters $a,q,m$ to be variable. This allows us to patch solutions together, and construct more general solution, which change in time.

For example, a spacelike foliation of a non-rotating and electrically neutral black hole which formed after the beginning of the Universe, and which continues to exist forever, is represented in Figure \ref{superman}.

\image{superman}{0.75}{The spacelike foliation of a non-rotating and electrically neutral black hole formed after the beginning of the Universe, which continues to exist forever.}

The prevertices of the Schwarz-Christoffel mapping \eqref{eq_s_c_map} whose image is represented in Figure \ref{superman} are given by the set
\begin{equation}
\label{eq_prevertices_superman}
\left(-\infty,0, +\infty, a+i, -a+i\right),
\end{equation}
where $0<a$ is a positive real number. The angles are respectively
\begin{equation}
\label{eq_angles_superman}
\left(\frac {\pi}{2},\frac {\pi}{2},\frac {\pi}{2},\frac {3\pi}{4},\frac {3\pi}{4}\right).
\end{equation}

We see that, because the typical spacelike hypersurface in the foliation in Figure \ref{3oo3s} is diffeomorphic with the space $\R^3$ of the Minkowski spacetime $\R^3\times\R$, the topology doesn't change because of the occurrence of a neutral non-rotating black hole. In fact, the typical spacelike hypersurface of the foliation of a charged and rotating black hole also has the same topology as $\R^3$, as we have seen. So they can appear and evaporate as well in an $\R^3$ space, without disrupting the topology. This condition is required to have a good time evolution.

If the non-rotating and electrically neutral black hole is primordial (exists from the beginning of the universe), but evaporates completely after a finite time, the spacelike foliation is as represented in Figure \ref{up-small-s}. The prevertices and the angles are identical to those for Figure \ref{up-small-f}.

\image{up-small-s}{0.65}{The spacelike foliation for a non-rotating and electrically neutral primordial black hole, which evaporates after a finite time.}

If this non-rotating and electrically neutral black hole is not primordial and it evaporates completely after a finite time, then at large distances the spacetime is very close to the Minkowski spacetime, being assymptotically flat. Consequently, a spacelike foliation looks like that in Figure \ref{diamond-s}.

\image{diamond-s}{0.45}{The spacelike foliation for a non-rotating and electrically neutral black hole formed after the beginning of the Universe, and which evaporates after a finite time.}

The prevertices of the diagram represented in Figure \ref{diamond-s} are the same as equations \eqref{eq_prevertices_diamond-s} and \eqref{eq_angles_diamond-s}.

In fig. \ref{evaporating-bh-s}, we compare the standard interpretation of a spacelike singularity, which is assumed to destroy information, and the interpretation resulting from the possibility to extend the solution beyond the singularity, given by our coordinates which make the {\schw} solution analytic at the singularity. We use the case of reference in discussions about the black hole information paradox, which is the evaporating Oppenheimer-Snyder black hole.
\image{evaporating-bh-s}{0.7}{The Penrose-Carter diagram for an evaporating Oppenheimer-Snyder black hole.
\textbf{A.} The standard view is that the singularity destroys the information.
\textbf{B.} Our approach suggests that the singularity doesn't destroy the information.}

Let's see now what happens if the singularity is timelike, as in the case of the charged and/or rotating black holes. If the black hole is primordial and evaporates, the corresponding spacelike foliation is represented in Figure \ref{up-small}.

\image{up-small}{0.65}{Primordial evaporating black hole with timelike singularity.}

The prevertices and the angles are the same as those for the Figure \ref{up-small-f}.

The spacelike foliation of a spacetime containing a black hole which is not primordial and does not evaporate is represented in Figure \ref{diamond-t-f}. The prevertices and the angles are again those from equations \eqref{eq_prevertices_diamond-s} and \eqref{eq_angles_diamond-s}.

\image{diamond-t-f}{0.45}{Non-primordial non-evaporating black hole with timelike singularity.}

The same prevertices and angles are used to construct the spacelike foliation for a non-primordial evaporating black hole, represented in Figure \ref{diamond-t}. 

\image{diamond-t}{0.5}{Non-primordial evaporating black hole with timelike singularity. The fact that the points of the singularity become visible to distant observers is not a problem for the global hyperbolicity, because the null geodesics can be extended beyond the singularity.}

If the black hole rotates, the singularity is ring-shaped (as in Fig. \ref{kn-ext}), but the diagrams are similar.

%%%%%%%%%%%%%%%%%%%%%%%%%%%%%%%%%%%%%%%%%%%%%%%%%%%%%%%%%%%%%%%%%%%%%%%%%%%%%%%%%%%%%%%%%%%%%
\section{Conclusions}

This chapter shows how one can extend smoothly the black hole solutions at the singularities, then find an appropriate foliation of the spacetime into spacelike hypersurfaces, and an appropriate extension of the spacetime at singularities, so that the topology of the spacelike hypersurfaces of the foliations is preserved. In our typical examples, the time evolution in the presence of singularities is restored. The examples given here showed this explicitly for the neutral and charged, rotating or non-rotating, primordial or not, evaporating or not black holes. These models have been shown here to have Penrose-Carter diagrams conformally equivalent to that of the Minkowski spacetime, inheriting therefore from the latter the global hyperbolicity. Consequently, the Cauchy data is preserved, and the information loss is avoided. This allows the construction of Quantum Field Theories in such curved spacetimes (\cite{HP96}, p. 9), and the unitarity is restored.

%----------------------------------------------------------------------------------------
\chap{ch_quantum_gravity}{Quantum gravity from metric dimensional reduction at singularities}

This chapter is based on author's original research, from article \cite{Sto12d}, and author's talk given in May 2012 at JINR, Dubna \cite{Sto12f}.
% Quantum Gravity from Metric Dimensional Reduction at Singularities

Old and recent theoretical investigations suggest that the quantization of gravity would be possible, if a dimensional reduction takes place at high energy scales. This would also solve other problems of Quantum Field Theory. But what is missing is a mechanism which would cause the dimensional reduction.

The mathematical approach to singularities in General Relativity presented in this Thesis leads to the conclusion that at singularities, the metric tensor becomes degenerate. Geometry, and some of the fields, become independent of the directions in which the metric is degenerate. Effectively, this is as if one or more spacetime dimensions vanish at singularities. Consequently, it appears that singularities leads in a natural way to the spontaneous dimensional reduction needed by Quantum Gravity.

%~~~~~~~~~~~~~~~~~~~~~~~~~~~~~~~~~~~~~~~~~~~~~~~~~~~~~~~~~~~~~~~~~~~~~~~%
\section{Introduction}

The most successful theories in fundamental theoretical physics are \textit{General Relativity} (GR) and \textit{Quantum Field Theory} (QFT). They offer accurate and complementary descriptions of the physical reality, and their predictions were confirmed with very high precision.

But they are not without problems, especially with infinities. In GR, infinities are present in the form of singularities. In QFT, infinities appear in the \textit{perturbative expansion}. In this Thesis we tried to offer a solution to the infinities in GR. The infinities in QFT are considered much less problematic now, due to renormalization techniques, which are shown to apply to the entire \textit{Standard Model} of particle physics \cite{HV72dimreg,tHooft73dimreg,tHooft98glorious}, and the \textit{renormalization group} \cite{SP53RG,GML54RG,BS56RG,BS80intro,Shirkov96,Shirkov99}.

But it seems that, when one tries to combine GR and QFT, infinities reappear, and renormalization can't remove them. GR without matter fields is perturbatively {\nonrenormalizable} at two loops \cite{HV74qg,GS86uvgr}. It required number of higher derivative counterterms with their coupling constants is infinite. The main cause of the problem is the dimension of the Newton constant, which is $[\mathcal G_N]=2-D=-2$ in mass units.

Many theoretical\footnote{Experimental data is not available for such small distances.} investigations were made in understanding \textit{ultraviolet} (UV) limit in QFT, the \textit{small scale}. Particularly in various approaches to Quantum Gravity (QG), the evidence accumulated so far suggests, or even requires, that in the UV limit there is a \textit{dimensional reduction} to two dimensions. What is under debate is the meaning, the nature, the explicit cause of this spontaneous dimensional reduction.

In this chapter we suggest that our approach to singularities predicts in a very concrete way the dimensional reduction.

The usual position regarding the apparent incompatibility between QFT and GR, manifested as the {\nonrenormalizability} of GR, is that General Relativity is to be blamed, and has to be modified. It is hoped that one of the many known approaches to QG will cure not only the {\nonrenormalizability}, but also the problem of singularities, if we are willing to give up up one or more fundamental principles of GR.

We will take here an opposite position, namely that solving the problem of singularities in GR, has as effect the dimensional reduction in the UV limit, which would make GR renormalizable.

According to 't Hooft \cite{tH09fundST},
\begin{quote}
Quantum Gravity is usually thought of as a theory, under construction, where the postulates of quantum mechanics are to be reconciled with those of general relativity, without allowing for any compromise in either of the two. \end{quote}
In this chapter, we will try to go as far as we can in reconciliating QFT and GR, without making compromises.

In section \sref{s_hints_of_dimensional_reduction}, we will review some of the hints suggesting that  if the spacetime becomes $2$-dimensional at small scales, the quantization of gravity will become possible. In section \sref{s_dimensional_reduction_singularities}, we will explain how singularities studied in this Thesis cause dimensional reduction. Then we will show some connections between dimensional reduction caused by singularities, and other types of dimensional reduction, proposed so far in some of the approaches to QG.

%~~~~~~~~~~~~~~~~~~~~~~~~~~~~~~~~~~~~~~~~~~~~~~~~~~~~~~~~~~~~~~~~~~~~~~~%
\section{Suggestions of dimensional reduction coming from other approaches}
\label{s_hints_of_dimensional_reduction}

The original method of regularization through dimensional reduction appeared as a method of regularization in  QFT. It is rather formal, apparently with no implications to the actual physical dimensions. It was suggested by the observations that the loop integrations depend continuously on the number of dimensions. The poles can be avoided by replacing the dimension $4$ by $4-\varepsilon$. At the end we just make $\varepsilon\to 0$ \cite{BG72dimreg,HV72dimreg,tHooft73dimreg}.

On the other hand, Quantum Gravity works fine in two dimensions. This justifies the consideration of the possibility that, at small scales, the number of dimensions is somehow reduced, while maintaining four dimensions at large scales. We review in the following some of the indications suggesting the dimensional reduction.

%~~~~~~~~~~~~~~~~~~~~~~~~~~~~~~~~~~~~~~~~~~~~~~~~~~~~~~~~~~~~~~~~~~~~~~~%
\subsection{Dimensional reduction in Quantum Field Theory}

%~~~~~~~~~~~~~~~~~~~~~~~~~~~~~~~~~~~~~~~~~~~~~~~~~~~~~~~~~~~~~~~~~~~~~~~%
\subsubsection{Several hints of dimensional reduction from QFT}

Several different situations in QFT suggest that the two-dimensionality plays an important role. Since the discovery of the first exactly solvable QFT model \cite{Thi58}, the two-dimensional QFT proved to be very productive, leading in a direct and non-perturbative way to interesting results, which suggested hypotheses for four-dimensions (see \cite{AAR91} and \cite{frishman2010non} and references therein).

%There are strong suggestions that a two-dimensional field theory can give the scattering amplitudes in QCD  \cite{Lip88,Lip89,Lip91}. In the context of high energy Regge regime were discovered models in which the ``two-dimensional nature of the interactions is manifest'' \cite{VV93qcd,Aref94regge}.

In the Standard Model, one problem is that, to avoid the destabilization of the electroweak symmetry breaking scale, the mass of the Higgs boson needs to be fine-tuned, to an accuracy of $10^{-32}$ \cite{ADFLS10}. By fixing the cutoff at the electroweak scale, the SM works well, and it is believed that this indicates new physics beyond this scale. In \cite{ADFLS10} is explored the possibility of keeping unmodified the structure of the SM, and reduce the dimension. One main idea is that the dimensional reduction to $d=2$ space dimensions makes the Higgs terms linearly divergent, and to $d=1$ makes them logarithmically divergent. This would make unnecessary the fine-tuning of the Higgs mass.

%~~~~~~~~~~~~~~~~~~~~~~~~~~~~~~~~~~~~~~~~~~~~~~~~~~~~~~~~~~~~~~~~~~~~~~~%
\subsubsection{Fractal universe and measure dimensional reduction}
\label{s_dim_red_fractal}

G. Calcagni developed the \textit{fractal universe program}, with the purpose to keep from the {\HL} gravity the nice feature that, in the UV regime, it leads to a two-dimensional phase, but without giving up Lorentz invariance. He kept the isotropic scaling, and compensates it, by replacing the standard measure used in the action with another measure, (initially a Lebesgue-Stieltjes measure) which reduces the Hausdorff dimension at the UV fixed point to two. The action becomes fractional, and the resulting theory is non-unitary, being dissipative. But the energy is still conserved.

The used action was
\begin{equation}
\label{eq_stmes}
S=\int_\mc M\de\varrho(x)\, \mc L(\phi,\partial_\mu\phi),
\end{equation}
with the measure
\begin{equation}
\label{eq_stme}
\de \varrho(x) = v\,\de x^D
\end{equation}
He used a weight of the form $v:=\prod_{\mu=0}^{D-1} f_{(\mu)}(x)$.

The original argument in \cite{Calc2012FractalGeometryFT} was that this measure leads to perturbative renormalizability.
By similar ideas, Calcagni obtained a modified action for General Relativity, and from this, a modified version of the Einstein equation.

He refined the theory introduced in \cite{Calc2010FractalQFT,Calc2010FractalUniverse,Calc2011FractalGravity}, and put on more rigorous mathematical basis, in the subsequent papers \cite{Calc2011FractalDiscreteContinuum,Calc2011FractalGeometry,Calc2012FractalGeometryFT,ACOS2011FractionalNC,Calc2012MomentumLaplace,Calc2012DiffusionQG,Calc2012DiffusionMultiFractional}.
The mathematically foundation of the fractal universe theory is a Lebesgue-Stieltjes measure or a fractional measure \cite{Calc2011FractalGeometry}, fractional calculus, and fractional action principles \cite{el2005fractional,el2008fractional,udriste2008euler}. 

Later, in \cite{calcagni2013quantum}, it was shown that modified measure is not enough to ensure it, and in fact the original power-counting argument from \cite{Calc2012FractalGeometryFT} fails. Yet, in \cite{calcagni2013quantum} is argued that ``[t]he interest in fractional theories is not jeopardized'', and in \cite{calcagni2013multi} that multi-scale theories are still of interest, not only for QG, but also for cosmology (accelerated expansion, cosmological constant, alternative to inflation, early universe, big bounce, \etc).

Maybe the modified measure approach is insufficient to obtain perturbative renormalizability, but the dimensional flow at singularities is richer and provides some additional features.

We will show in section \sref{s_dimensional_reduction_singularities_measure} that our theory of singularities leads precisely to a measure of the form \eqref{eq_stme}.

%~~~~~~~~~~~~~~~~~~~~~~~~~~~~~~~~~~~~~~~~~~~~~~~~~~~~~~~~~~~~~~~~~~~~~~~%
\subsubsection{Topological dimensional reduction}
\label{s_dim_red_topological}

An interesting approach, proposed by D. V. Shirkov, was initially aimed to replace the Higgs boson, which was expected to be found at $140\pm 25$ GeV, with the Ginzburg-Landau-Higgs solution of a constant classical Higgs field at $\sim~250$ GeV.  But in the absence of a Higgs quantum field, the weak force would be {\nonrenormalizable}. Shirkov realized that the regularization can be obtained by working in a spacetime with variable topology, so that the number of dimensions varies from $D=4$ in the IR limit, to $D=1+d<4$ in the UV limit, and the coupling constant is assumed to run between these limits.

Shirkov proposed a $g\varphi^4$ QFT model described by a self-interacting Lagrangian $L=T-V$ \cite{shirkov2010coupling}, where
\begin{equation}
	V(m,g;\varphi) = \dsfrac{m^2}{2}\varphi^2 + \dsfrac{4\pi^{d/2}M^{4+d}}{9}g_d\varphi^4,\,g>0
\end{equation}

The toy model he used was a manifold obtained by joining two cylinders $S_{R,L}$ and $S_{r,l}$, of lengths $L,l$ and radii $R>r$, with a transition region $S_{coll}$ of variable radius. One interesting problem taking place on this manifold, was to solve some QFT equations, and then take the limit $r\to 0$.

In momentum space, the volume element $\de^4 k$ was replaced by
\begin{equation}
	\de_M k = \dsfrac{\de^4 k}{1 + \frac{k^2}{M^2}},
\end{equation}
yielding a one-loop Feynman integral, whose IR asymptote is the function
\begin{equation}
\ln\dsfrac{q^2}{m_i^2},
\end{equation}
and the UV asymptote, the function
\begin{equation}
\ln\dsfrac{4M^2}{m_i^2} + \dsfrac{M^2}{q^2}\ln\dsfrac{q^2}{M^2}.
\end{equation}
This reversed the running of the coupling constant in the small distance region, to a minimum value $\bar g_2(\infty)<\bar g_2(M_{dr}^2)$, where $M_{dr}$ is the reduction scale.

Another feature of this approach is the possibility that the coupling constants of the Standard Model converge in the UV limit toward the same minimum value, without requiring by this a GUT scheme like $SU(5)$ \cite{georgi1982lie,BH2010algebra}.

The exploration of the idea of topological dimensional reduction was continued by P. Fiziev and D. V. Shirkov in \cite{FS2011KG,Fiz2010Riem,FS2012Axial,shirkov2012dreamland}, and applied to Klein-Gordon equation.

In section \sref{s_dimensional_reduction_singularities_metric_vs_topologic}, we establish strong connections between our approach, and the topological dimensional reduction approach of Shirkov and Fiziev.

%~~~~~~~~~~~~~~~~~~~~~~~~~~~~~~~~~~~~~~~~~~~~~~~~~~~~~~~~~~~~~~~~~~~~~~~%
\subsection{Dimensional reduction in Quantum Gravity}

A review of the indications that various approaches to QG suggest a dimensional reduction at small scales, is done by Carlip \cite{Car09SDR,Car10sssst}. 

The \textit{causal dynamical triangulations} approach approximates spacetime by flat four-simplicial manifolds, as in quantum Regge calculus \cite{Reg61}. Additional constraints on the lengths of the edges of the simplices are imposed \cite{AJL00,AJL04,AJL05r,AJL05s,AJL09aqg}. To enforce causality, it is required that the time-slicing at discrete times is fixed, and that the time-like edges are in the same direction. The \textit{spectral dimension} -- the effective dimension of the diffusion process, related to the dispersion relation of the corresponding differential operator, turns out to be four at large distances, but two at short distances \cite{AJL05s}. Spectral dimension is calculated in other approaches to QG, and agrees \cite{LR05fractal,Hor09spectral,Mod2008fractalLQG,Car10sssst,Ben2009fractalQST,MN2010sdQU,NS2011sdUG}. There is a connection between the spectral dimension and the spacetime dimension, but they are not equivalent \cite{SVW2011spectral,SVW2011dispersion,FS2012Axial,Calc2012DiffusionMultiFractional}.

It has been noted \cite{Car95} that, because in dimension $<3$ the Weyl curvature vanishes, the vacuum Einstein equation has only solutions of constant curvature (flat, if the cosmological constant is $0$). Hence, local degrees of freedom are absent, that is, gravitational waves in the classical case, and gravitons in QG, are absent. As we have seen in section \sref{s_wch}, {\quasireg} singularities have vanishing Weyl curvature (see \sref{s_dimensional_reduction_singularities_weyl}).

%~~~~~~~~~~~~~~~~~~~~~~~~~~~~~~~~~~~~~~~~~~~~~~~~~~~~~~~~~~~~~~~~~~~~~~~%
\subsubsection{Asymptotic safety for the renormalization group}

Even though GR appears to be {\nonrenormalizable}, the renormalization group analysis may provide useful indications. The idea of \textit{asymptotic safety}, proposed by S. Weinberg  \cite{Wein79AS}, is that, although there is an infinite number of coupling constants, they become ``unified'', under the renormalization group flow, when approaching an ultraviolet fixed point. That is, they converge to a finite-dimensional \textit{ultraviolet critical surface}. To make the $3+1$-dimensional gravity asymptotically safe, one can supplement the Einstein-Hilbert Lagrangian density $\mathcal L=-\dsfrac{1}{16\pi \mathcal G}R\sqrt{-\det g}$ with higher-order curvature terms. For example, by replacing it with $\mathcal L=f(R)\sqrt{-\det g}$. Weinberg observed that ``there is an asymptotically safe theory of pure gravity in $2 + \varepsilon$ dimensions, with a one-dimensional critical surface'', and ``[a]symptotic safety is also preserved when we add matter fields'', if we add certain compensatory fields \cite{Wein79AS}.

The evidence for asymptotic safety \cite{RS01rgqg,Lit04fixed,Nied07asqg,HW05qg,Reu07frg,CAR09qg}, involves especially two dimensions \cite{KKN96ren2dim,Lit06fixed}. The spectral dimension near the fixed point also is $d_S=2$ \cite{LR05fractal}. Other indications are that the existence of a non-Gaussian fixed point for the dimensionless coupling constant $g_N={\mathcal G}_N\mu^{d-2}$ requires two-dimensionality \cite{Nied07asqg}.

%~~~~~~~~~~~~~~~~~~~~~~~~~~~~~~~~~~~~~~~~~~~~~~~~~~~~~~~~~~~~~~~~~~~~~~~%
\subsubsection{{\HL} gravity}
\label{s_dim_red_horava}

{\HL} gravity, inspired by the quantum critical phenomena in condensed matter systems, was proposed in 2009 by {\hor} \cite{Hor09qglp}. The idea is to assume that space and time behave differently at scaling:
\begin{equation}
\label{eq_horava_anisotropy}
\begin{array}{l}
\left\{
\begin{array}{ll}
	\mathbf{x}&\mapsto b\mathbf{x}, \\
	t&\mapsto b^z t. \\
\end{array}
\right.
\\
\end{array}
\end{equation}
To describe an UV fixed point, it is required that $z=D-1=3$, and sometimes that $z=4$.
The anisotropy \eqref{eq_horava_anisotropy} is required to be a symmetry of the solution, but not necessarily of the action itself. In the UV limit, the theory describes interacting non-relativistic gravitons. In $1+3$ dimensions, it is power-counting renormalizable.

Lorentz symmetry is broken in the UV limit, but it is conjectured that it is restored at large distances, and $z\to 1$. The theory leads to field equations which are of high order in space, to cancel the divergences of the loop integrals, and of second-order in time, to avoid ghosts. The spectral dimension is $4$ at low energies, and $2$ at high energies \cite{Hor09spectral}.

The volume element has the dimension
\begin{equation}
	[\de t\de^d\mathbf{x}] = -d-z,
\end{equation}
which is incremented by each time derivative by 
\begin{equation}
	[\partial t] = z,
\end{equation}
hence, $\kappa$ is dimensionless when $z=d$, since its scaling dimension is
\begin{equation}
	[\kappa] = \dsfrac{z-d}{2}.
\end{equation}

Some objections have been raised, most of them objections coming from the difficulty to prove that GR is recovered in the IR limit \cite{CNPS09HL,WSV10HL,Sot11HL,Vis11HL,BPS09HL,KP10HL,HKG10HL,PS10HL,BPS10HL,WW11HL}.

In section \sref{s_dimensional_reduction_singularities_anisotropy} we will show how our singularities suggest a mechanism for the anisotropy \eqref{eq_horava_anisotropy}.

%~~~~~~~~~~~~~~~~~~~~~~~~~~~~~~~~~~~~~~~~~~~~~~~~~~~~~~~~~~~~~~~~~~~~~~~%
\section{Dimensional reduction at singularities}
\label{s_dimensional_reduction_singularities}

While dimensional reduction seems to be needed for Quantum Gravity, little is known about what causes it. But the singularities studied in this Thesis may provide an explanation, because they act like sources for the dimensional reduction.

%~~~~~~~~~~~~~~~~~~~~~~~~~~~~~~~~~~~~~~~~~~~~~~~~~~~~~~~~~~~~~~~~~~~~~~~%
\subsection{The dimension of the metric tensor}
\label{s_dimensional_reduction_singularities_metric}

At points $p\in M$ where the metric tensor $g_{ab}$ is degenerate, the distance vanishes along the degenerate directions, which belong to the space  $\ker(\flat):=T^\perp_p(M)\leq T_pM$ (fig. \ref{degenerate-metric}). These directions can be factored out by taking the quotient
\begin{equation}
\label{eq_factorize}
	\coannih{T_p}M:=\dsfrac{T_pM}{T^\perp_p(M)},
\end{equation}
which has the dimension equal to the rank of the metric at $p$:
\begin{equation}
	\dim{\coannih{T_p}M}=\rank g_p.
\end{equation}

Because the metric $g$ is degenerate, it can be reduced to a metric tensor on the space $\coannih{T_p}M$, and its inverse is a metric tensor on $\annih{T_p}M$, both spaces having dimension $\rank g_p$.

%~~~~~~~~~~~~~~~~~~~~~~~~~~~~~~~~~~~~~~~~~~~~~~~~~~~~~~~~~~~~~~~~~~~~~~~%
\subsection{Metric dimension {\vs} topological dimension}
\label{s_dimensional_reduction_singularities_metric_vs_topologic}

Let $(M,g)$ be a singular {\semiriem} manifold. Even if the rank of the metric is, at the points where the metric becomes degenerate, lower than $\dim M$, the topological dimension of the manifold remains $\dim M$.
Mathematically, there are three layers in Differential Geometry: the \textit{topological structure}, the \textit{differential structure}, and the \textit{geometric structure}. The  dimension of the vector space used in the charts of the atlas is the topological dimension of $M$. The \textit{metric dimension}, or the \textit{geometric dimension} is the rank of the metric, and can be at most equal to the topological dimension.

Kupeli showed that, if $(M,g)$ is a {\rstationary} {\semiriem} manifold of constant signature $(k,l,m)$, it is locally isomorphic to a direct product manifold $P\times_0 N$ between a (\nondeg) {\semiriem} manifold $P$ of signature $(l,m)$, and a $k$-dimensional manifold $N$, whose metric can be considered equal to $0$ \cite{Kup87b}. Therefore, from the viewpoint of geometry, at any point $p\in M$ the degenerate directions can be ignored, and we can identify the $D$-dimensional manifold $M$ at $p$, with the $\rank g_p$-dimensional manifold $P$. The information contained in the metric $g$ of the manifold $M$ can be obtained by pull-back, from that of a metric on a manifold $(P,h)$ with smaller topological dimension.

In addition to the metric, in GR there are present other fields. To admit smooth metric contractions between covariant indices, and to admit smooth covariant derivatives (Chapter \ref{ch_ssr}), the fields have to behave as if they don't ``see'' the degenerate directions.

This connects our results with the topological dimensional reduction of D.V. Shirkov and P. Fiziev (see section \sref{s_dim_red_topological}). Our approach confirm their observation \cite{FS2012Axial}:
\begin{quote}
dimensional reduction of the physical space in general relativity (GR) can be regarded as an unrealized
and as yet untapped consequence of Einstein's equations (EEqs) themselves which takes place around singular points
of their solutions.
\end{quote}

%~~~~~~~~~~~~~~~~~~~~~~~~~~~~~~~~~~~~~~~~~~~~~~~~~~~~~~~~~~~~~~~~~~~~~~~%
\subsection{Metric dimensional reduction and the Weyl tensor}
\label{s_dimensional_reduction_singularities_weyl}

According to Theorem \ref{thm_wch}, in dimension $4$, the Weyl curvature vanishes at {\quasireg} singularities, because at the points where the metric is degenerate, $\dim{\annih{T_p}M}\leq 3$. Because the Ricci decomposition is smooth, the Weyl curvature tensor $C_{abcd}$ is smooth, and remains small around the singularity, where it vanishes. 

Examples of {\quasireg} singularities include the {\schw} black hole (section \sref{s_black_hole_schw}), which can represent a classical neutral spinless particle. It is not yet clear  whether the singularities of stationary charged and rotating black holes are {\quasireg}, but we know they are analytic. Such black holes undergo geometric dimensional reduction at singularities. For the charged ones, the electromagnetic potential and its field are analytic, and finite at $r=0$ (see sections \sref{s_black_hole_rn} and \sref{s_black_hole_kn}).

At singularities, because the Weyl curvature tensor vanishes, when quantizing, gravitons are absent, making gravity renormalizable \cite{Car95}.

%~~~~~~~~~~~~~~~~~~~~~~~~~~~~~~~~~~~~~~~~~~~~~~~~~~~~~~~~~~~~~~~~~~~~~~~%
\subsection{Lorentz invariance and metric dimensional reduction}
\label{s_dimensional_reduction_singularities_lorentz_invariance}

At the points where the metric is degenerate, Lorentz invariance is affected. The metric is, at those points, not invariant to a group of transformations isomorphic to Lorentz, group. The role of the Lorentz group is taken by a larger group -- the \textit{Barbilian group} \cite{Barb39}. If we factorize (at a point as in \eqref{eq_factorize}, or on a region of constant signature, if the metric is {\rstationary}) we reduce the Barbilian group to a subgroup, which is an orthogonal subgroup of the Lorentz group.

Hence, in our approach, Lorentz invariance is maintained at all points where the metric is {\nondeg}, and the theory is equivalent with Einstein's. At the points where the metric is degenerate, no theory can maintain Lorentz invariance. The best we can do at singularities is to maintain the invariance with respect to the subgroup of the Lorentz group, obtained after the metric dimensional reduction.

Other approaches to Quantum Gravity, like Loop Quantum Gravity and {\HL} gravity, gave up Lorentz invariance even at the points where the metric is {\nondeg}, and hope that it emerges at large scales.

%~~~~~~~~~~~~~~~~~~~~~~~~~~~~~~~~~~~~~~~~~~~~~~~~~~~~~~~~~~~~~~~~~~~~~~~%
\subsection{Particles lose two dimensions}
\label{s_dimensional_reduction_singularities_charged}

From the viewpoint of GR, charged particles are charged black holes. As viewed in Chapter \ref{ch_black_hole}, the metric of a charged black hole of type {\rn} or {\kn} can be made analytic. The electromagnetic potential and its field are also analytic, and finite at $r=0$. This can be easily extended to Yang-Mills fields. 

Equation \eqref{eq_rn_ext_ext} shows that, in our coordinates,  at $\rho=0$ (which is equivalent to $r=0$), the metric loses two dimensions, $\de\rho$ and $\de\tau$. The spherical metric seems to be canceled by the factor $\rho^{2\CS}$ but this is only apparent, being due to the warped product -- similar to the spherical coordinates for $\R^3$, when we know that multiplying with $r^2$ doesn't make the metric degenerate at $r=0$.

%In addition to dimensional reduction, the light cones degenerate along the time axis \cite{Sto11f}, decoupling the neighboring points and leading to an \textit{asymptotic silence} of the type described by Carlip \cite{Car09SDR,Car10sssst}.

Metric dimensional reduction is present also in the {\schw} and {\kn} models. So, we see that dimensional reduction takes place for particles which are charged or neutral, spinning or not.
The dimensional reduction, and the finiteness of the gauge potential and fields at the singularity. Future developments of our research program should check how these features affect the perturbative expansions and the renormalization group analysis, including for the gravitational field, taking into account the new results.

%~~~~~~~~~~~~~~~~~~~~~~~~~~~~~~~~~~~~~~~~~~~~~~~~~~~~~~~~~~~~~~~~~~~~~~~%
\subsection{Particles and spacetime anisotropy}
\label{s_dimensional_reduction_singularities_anisotropy}

For the metric \eqref{eq_rn_ext_ext} to admit a foliation in spacelike hypersurfaces, the condition $\CT\geq 3\CS$ is required in \eqref{eq_coordinate_ext_ext} \cite{Sto11f}. This allows the space and time to be distinguished at $\rho=0$, with the price of an anisotropy between space and time. The anisotropy is manifest when going back to the old {\rn} coordinates $(t,r)$ --  a rescaling which is isotropic in the coordinates $(\tau,\rho)$, is anisotropic in the coordinates $(t,r)$. If our coordinates $(\tau,\rho)$ are the correct ones, diffeomorphism invariance should be considered valid in these coordinates, and not in the singular coordinates $(t,r)$.

P. {\hor} showed that an anisotropy like \eqref{eq_horava_anisotropy} leads to the correct dimension for the Newton constant (see \sref{s_dim_red_horava} and \cite{Hor09qglp}). 
Could the anisotropy we obtained at the {\rn} singularity, be related to the one proposed by {\hor}? They are different in nature: our anisotropy follows from our coordinates, and GR, while {\hor} modified GR.
We don't need to impose the recovery of standard GR at low energies, since our approach is already equivalent to GR everywhere outside the singularities.

%~~~~~~~~~~~~~~~~~~~~~~~~~~~~~~~~~~~~~~~~~~~~~~~~~~~~~~~~~~~~~~~~~~~~~~~%
\subsection{The measure in the action integral}
\label{s_dimensional_reduction_singularities_measure}

As approaching the degenerate singularities, the volume form
\begin{equation}
\label{eq_volume}
\vol := \sqrt{-\det g}\de x^0\wedge\ldots\wedge \de x^{D-1} = \sqrt{-\det g}\de x^D
\end{equation}
tends to $0$. In the case of black hole singularities, this is due to the fact that in our coordinates the metric is analytic.
The volume form appears in the action principle
\begin{equation}
\label{eq_action}
S=\int_\mc M\vol(x)\, \mc L
\end{equation}
If, in some coordinates $(x^\mu)$, the metric is diagonal, then
\begin{equation}
\label{eq_volume_diag}
\vol(x) = \prod_{\mu=0}^{D-1}\sqrt{\abs{g_{\mu\mu}(x)}}\de x^D.
\end{equation}
We see that we obtain a measure of the type proposed by G. Calcagni \cite{Calc2010FractalQFT,Calc2010FractalUniverse}. To see this, we take in the Lebesgue measure from section \sref{s_dim_red_fractal}, equation \eqref{eq_stmes}, the weights
\begin{equation}
\label{eq_stme_sgr_weights}
f_{(\mu)}(x) = \sqrt{\abs{g_{\mu\mu}(x)}}.
\end{equation}
Hence, much of the analysis developed in \cite{Calc2010FractalQFT,Calc2010FractalUniverse} for QFT follows from the fact that the metric becomes degenerate.
The measure becomes 
\begin{equation}
\label{eq_stme_sgr}
\de \varrho(x) = \sqrt{-\det g}\de x^D,
\end{equation}
the standard measure from General Relativity, which we now allow to vanish.

Calcagni justifies the change of the measure by considering the spacetime to be fractal, {\ie} having Hausdorff dimension which changes with the scale. The Lebesgue-Stieltjes measure he postulates is independent of GR, and he applies it both to flat spacetime QFT, and to QG \cite{Calc2010FractalQFT,Calc2010FractalUniverse,Calc2011FractalGravity}.
He applies his action \eqref{eq_stmes} to Special Relativistic QFT. On the other hand, the identification we proposed in \eqref{eq_stme_sgr_weights} ises in place of the weights $f_{\mu}$, the square root of the components of a diagonal metric, which we allowed to become zero. This suggests that, at small scales, QFT is improved by GR, by dimensional reduction.

Calcagni applies to GR the same recipe he used for Special Relativistic QFT, but multiplies the weights with $\sqrt{-\det g}$ \cite{Calc2010FractalQFT}. His approach leads therefore to a modified Einstein equation, and modified GR.

On the other hand, our approach keeps the standard GR everywhere where the metric is not degenerate.
The measure is similar to Calcagni's measure which tames the infinities of QFT, but it is just given by the volume form of a metric which becomes degenerate.

%~~~~~~~~~~~~~~~~~~~~~~~~~~~~~~~~~~~~~~~~~~~~~~~~~~~~~~~~~~~~~~~~~~~~~~~%
\subsection{How can dimension vary with scale?}
\label{s_dimensional_reduction_singularities_scale}

We have seen that the singularities with smooth but degenerate metric exhibit dimensional reduction, and showed how it connects with dimensional reduction from other approaches. It seems that our approach confirmed the intuition of some of these versions of dimensional reduction. What is more important, degenerate metric follows naturally from GR, makes both big-bang and black hole singularities behave well, and it was not invented ad-hoc to provide the needed dimensional reduction.

Our geometric dimensional reduction manifests as we approach a singularity. But what is needed in QFT and QG, is that the dimensional reduction depends on the scale.
We will give only a qualitative justification, leaving the quantitative proof for future research.

We will assume that it is enough to sum over the trivial topology, and there's no need to include non-trivial topologies. Because the topology of the black holes solutions which can be used to describe particles can be taken the same as for the Minkowski spacetime \cite{Sto11e,Sto11f,Sto11g,Sto12e}, probably it will be enough to work with the same topology, and not sum over all possible ones\footnote{Also in {\HL} gravity and CDT is not summed over other topologies.}.

If particles are to be represented by black hole singularities with degenerate metric, then, if the number of particles are present in a given volume grows, the integral of the volume form in a local coordinate system of a region containing them decreases (fig. \ref{dim-red-feynman}). That is, the average volume element decreases, as required for example in Calcagni's approach.
The Feynman diagrams of higher order involve a larger number of particles in the same volume. This leads to smaller volume element, hence to lower contribution in the integrals.

\image{dim-red-feynman}{0.7}{Artistic representation of the principle that the metric's average volume element decreases as the number of singularities ({\ie} particles) in the region increases. Singularities are represented as red dots, and $|\det g|$ is represented by blue curves.}

While this argument refers to local coordinates, it is invariant, in the sense that the constraint $\det g=0$ at singularities is the same in all coordinates obtained by a nonsingular transformation. Hence, the diffeomorphism invariance (general covariance) is maintained in our argument.

Similar considerations can start with the fact that $C_{abcd}=0$ for {\quasireg} singularities, and therefore, in average, as the number of particles in the Feynman diagrams increase, the Weyl curvature becomes smaller, and the local degrees of freedom like gravitons vanish with the energy.

Hence, our main conjecture, which we name \textit{the hypothesis of average dimensional reduction}, is that the needed dimensional reduction which appears at smaller scales, is due to the dimensional reduction at the singularities representing the particles.

%~~~~~~~~~~~~~~~~~~~~~~~~~~~~~~~~~~~~~~~~~~~~~~~~~~~~~~~~~~~~~~~~~~~~~~~%
\section{Conclusions}

We reviewed several results indicating that some problems concerning field quantization, including of gravity, would be resolved if one sort or another of dimensional reduction takes place at small scales. We showed that the theory of singularities developed in this Thesis leads to a (geo)metric dimensional reduction, manifested in many ways similarly to dimensional reduction from other approaches, ours having the bonus that General Relativity is not modified. Therefore, our approach to the problem of singularities opens new possibilities to be explored in Quantum Gravity.

\imagerot{classification}{1.6}{Several examples of singularities, how they fit in our scheme, and how they are resolved.}{270}

%%%%%%%%%%%%%%%%%%%%%%%%%%%%%%%%%%%%%%%%%%%%%%%%%%%%%%%%%%%%%%%%%%%%%%%%%%%%%%%%%%%%%%%%%

%----------------------------------------------------------------------------------------
%	THESIS CONTENT - APPENDICES
%----------------------------------------------------------------------------------------

\addtocontents{toc}{\vspace{2em}} % Add a gap in the Contents, for aesthetics

\appendix % Cue to tell LaTeX that the following 'chapters' are Appendices

% Include the appendices of the Thesis as separate files from the Appendices folder
% Uncomment the lines as you write the Appendices

%\input{./Appendices/AppendixA}
%\input{./Appendices/AppendixB}
%\input{./Appendices/AppendixC}

\addtocontents{toc}{\vspace{2em}} % Add a gap in the Contents, for aesthetics

\backmatter

%----------------------------------------------------------------------------------------
%	BIBLIOGRAPHY
%----------------------------------------------------------------------------------------

\label{Bibliography}

\lhead{\emph{Bibliography}} % Change the page header to say "Bibliography"

\bibliographystyle{abbrvnat} % Use the "unsrtnat" BibTeX style for formatting the Bibliography

%\bibliography{../articles/bib/sing-gr_bib} % The references (bibliography) information are stored in the file named "Bibliography.bib"

\begin{thebibliography}{230}
\providecommand{\natexlab}[1]{#1}
\providecommand{\url}[1]{\texttt{#1}}
\expandafter\ifx\csname urlstyle\endcsname\relax
  \providecommand{\doi}[1]{doi: #1}\else
  \providecommand{\doi}{doi: \begingroup \urlstyle{rm}\Url}\fi

\bibitem[Abdalla et~al.(1991)Abdalla, Abdalla, and Rothe]{AAR91}
E.~Abdalla, M.~C.~B. Abdalla, and K.~D. Rothe.
\newblock \emph{{Non-perturbative methods in 2 dimensional quantum field
  theory}}.
\newblock World Scientific Pub Co Inc, 1991.

\bibitem[Ambj{\o}rn et~al.(2000)Ambj{\o}rn, Jurkiewicz, and Loll]{AJL00}
J.~Ambj{\o}rn, J.~Jurkiewicz, and R.~Loll.
\newblock {Nonperturbative Lorentzian path integral for gravity}.
\newblock \emph{Phys. Rev. Lett.}, 85\penalty0 (5):\penalty0 924--927, 2000.

\bibitem[Ambj{\o}rn et~al.(2004)Ambj{\o}rn, Jurkiewicz, and Loll]{AJL04}
J.~Ambj{\o}rn, J.~Jurkiewicz, and R.~Loll.
\newblock {Emergence of a 4D world from causal quantum gravity}.
\newblock \emph{Phys. Rev. Lett.}, 93\penalty0 (13):\penalty0 131301, 2004.

\bibitem[Ambj{\o}rn et~al.(2005{\natexlab{a}})Ambj{\o}rn, Jurkiewicz, and
  Loll]{AJL05r}
J.~Ambj{\o}rn, J.~Jurkiewicz, and R.~Loll.
\newblock {Reconstructing the universe}.
\newblock \emph{Phys. Rev. D}, 72\penalty0 (6):\penalty0 064014,
  2005{\natexlab{a}}.

\bibitem[Ambj{\o}rn et~al.(2005{\natexlab{b}})Ambj{\o}rn, Jurkiewicz, and
  Loll]{AJL05s}
J.~Ambj{\o}rn, J.~Jurkiewicz, and R.~Loll.
\newblock Spectral dimension of the universe.
\newblock \emph{Phys. Rev. Lett.}, 95\penalty0 (17):\penalty0 171301,
  2005{\natexlab{b}}.

\bibitem[Ambj{\o}rn et~al.(2009)Ambj{\o}rn, Jurkiewicz, and Loll]{AJL09aqg}
J.~Ambj{\o}rn, J.~Jurkiewicz, and R.~Loll.
\newblock {Quantum gravity, or the art of building spacetime}.
\newblock In D.~Oriti, editor, \emph{{Approaches to Quantum Gravity: Toward a
  New Understanding of Space, Time and Matter}}, pages 341--359. Cambridge
  University Press, 2009.
\newblock ISBN 9780521860451.
\newblock URL \url{http://arxiv.org/abs/hep-th/0604212}.
\newblock \href{http://arxiv.org/abs/hep-th/0604212}{arXiv:hep-th/0604212}.

\bibitem[Anchordoqui et~al.(2012)Anchordoqui, Dai, Fairbairn, Landsberg, and
  Stojkovic]{ADFLS10}
L.~Anchordoqui, D.~C. Dai, M.~Fairbairn, G.~Landsberg, and D.~Stojkovic.
\newblock Vanishing dimensions and planar events at the {LHC}.
\newblock \emph{Mod. Phys. Lett. A}, 27\penalty0 (04), 2012.

\bibitem[Anguige and Tod(1999{\natexlab{a}})]{AT99i}
K.~Anguige and K.~P. Tod.
\newblock Isotropic cosmological singularities: {I}. {P}olytropic perfect fluid
  spacetimes.
\newblock \emph{Ann. of Phys.}, 276\penalty0 (2):\penalty0 257--293,
  1999{\natexlab{a}}.

\bibitem[Anguige and Tod(1999{\natexlab{b}})]{AT99ii}
K.~Anguige and K.~P. Tod.
\newblock Isotropic cosmological singularities: {II}. {T}he {E}instein-{V}lasov
  system.
\newblock \emph{Ann. of Phys.}, 276\penalty0 (2):\penalty0 294--320,
  1999{\natexlab{b}}.

\bibitem[Arnowitt et~al.(1962)Arnowitt, Deser, and Misner]{ADM62}
R.~Arnowitt, S.~Deser, and C.~W. Misner.
\newblock {The Dynamics of General Relativity}.
\newblock In \emph{{Gravitation: An Introduction to Current Research}}, pages
  227--264. Wiley, New York, 1962.

\bibitem[Arzano et~al.(2011)Arzano, Calcagni, Oriti, and
  Scalisi]{ACOS2011FractionalNC}
M.~Arzano, G.~Calcagni, D.~Oriti, and M.~Scalisi.
\newblock Fractional and noncommutative spacetimes.
\newblock \emph{Phys. Rev. D}, 84\penalty0 (12):\penalty0 125002, 2011.
\newblock URL \url{http://arxiv.org/abs/1107.5308}.
\newblock \href{http://arxiv.org/abs/1107.5308}{arXiv:hep-th/1107.5308}.

\bibitem[Ashtekar(1987)]{ASH87}
A.~Ashtekar.
\newblock {New Hamiltonian formulation of general relativity}.
\newblock \emph{Phys. Rev. D}, 36\penalty0 (6):\penalty0 1587--1602, 1987.
\newblock ISSN 1550-2368.

\bibitem[Ashtekar(1991)]{ASH91}
A.~Ashtekar.
\newblock \emph{{Non-perturbative canonical gravity, Lecture notes in
  collaboration with R. S. Tate}}.
\newblock World Scientific, Singapore, 1991.

\bibitem[Ashtekar(2009)]{Ash08}
A.~Ashtekar.
\newblock {Singularity Resolution in Loop Quantum Cosmology: A Brief Overview}.
\newblock \emph{J. Phys. Conf. Ser.}, 189:\penalty0 012003, 2009.
\newblock \doi{10.1088/1742-6596/189/1/012003}.
\newblock \href{http://arxiv.org/abs/0812.4703}{arXiv:gr-qc/0812.4703}.

\bibitem[Ashtekar and Singh(2011)]{ashtekar2011LQC}
A.~Ashtekar and P.~Singh.
\newblock Loop quantum cosmology: a status report.
\newblock \emph{Class. Quant. Grav.}, 28\penalty0 (21):\penalty0
  213001--213122, 2011.
\newblock \href{http://arxiv.org/abs/1108.0893}{arXiv:gr-qc/1108.0893}.

\bibitem[Ashtekar and Wilson-Ewing(2009)]{Ash09}
A.~Ashtekar and E.~Wilson-Ewing.
\newblock {Loop quantum cosmology of Bianchi I models}.
\newblock \emph{Phys. Rev.}, D79:\penalty0 083535, 2009.
\newblock \doi{10.1103/PhysRevD.79.083535}.
\newblock \href{http://arxiv.org/abs/0903.3397}{arXiv:gr-qc/0903.3397}.

\bibitem[Ashtekar et~al.(2003)Ashtekar, Bojowald, and Lewandowski]{ABL03}
A.~Ashtekar, M.~Bojowald, and J.~Lewandowski.
\newblock {Mathematical Structure of Loop Quantum Cosmology}.
\newblock \emph{Advances in Theoretical and Mathematical Physics}, 7\penalty0
  (2):\penalty0 233--268, 2003.

\bibitem[{Baez, J. And Huerta, J.}(2010)]{BH2010algebra}
{Baez, J. And Huerta, J.}
\newblock {The Algebra of Grand Unified Theories}.
\newblock \emph{American Mathematical Society}, 47\penalty0 (3):\penalty0
  483--552, 2010.

\bibitem[Barbilian(1939)]{Barb39}
D.~Barbilian.
\newblock {Galileische Gruppen und quadratische Algebren}.
\newblock \emph{Bull. Math. Soc. Roumaine Sci.}, page XLI, 1939.

\bibitem[Bardeen et~al.(1973)Bardeen, Carter, and Hawking]{bardeen1973four}
J.~M. Bardeen, B.~Carter, and S.~W. Hawking.
\newblock The four laws of black hole mechanics.
\newblock \emph{Comm. Math. Phys.}, 31\penalty0 (2):\penalty0 161--170, 1973.

\bibitem[Beem et~al.(1982)Beem, Ehrlich, and Powell]{BEP82}
J.~K. Beem, P.~E. Ehrlich, and T.~G. Powell.
\newblock {Warped Product Manifolds in Relativity}.
\newblock In \emph{{Th. M. Rassias, G. M. Rassias (Eds.), Selected Studies,
  North-Holland, Amsterdam}}, pages 41--56, 1982.

\bibitem[Beem et~al.(1996)Beem, Ehrlich, and Easley]{BE96}
J.~K. Beem, P.~E. Ehrlich, and K.~L. Easley.
\newblock \emph{{Global Lorentzian Geometry}}.
\newblock CRC, 1996.
\newblock ISBN 0824793242.

\bibitem[Bejancu and Duggal(1995)]{Bej95}
A.~Bejancu and K.~L. Duggal.
\newblock {Lightlike Submanifolds of Semi-{R}iemannian Manifolds}.
\newblock \emph{Acta Appl. Math.}, 38\penalty0 (2):\penalty0 197--215, 1995.
\newblock ISSN 0167-8019.

\bibitem[Bekenstein(1973)]{bekenstein1973black}
J.~D. Bekenstein.
\newblock Black holes and entropy.
\newblock \emph{Phys. Rev. D}, 7\penalty0 (8):\penalty0 2333, 1973.

\bibitem[Benedetti(2009)]{Ben2009fractalQST}
D.~Benedetti.
\newblock Fractal properties of quantum spacetime.
\newblock \emph{Phys. Rev. Lett.}, 102\penalty0 (11):\penalty0 111303, 2009.

\bibitem[Besse(1987)]{BESS87}
A.~L. Besse.
\newblock \emph{{E}instein Manifolds, {E}rgebnisse der {M}athematik und ihrer
  {G}renzgebiete (3) [{R}esults in Mathematics and Related Areas (3)], vol.
  10}.
\newblock Berlin, New York: Springer-Verlag, 1987.

\bibitem[Bishop and O'Neil(1969)]{BON69}
R.~L. Bishop and B.~O'Neil.
\newblock {Manifolds of Negative Curvature}.
\newblock \emph{Trans. Amer. Math. Soc.}, 145:\penalty0 1--49, 1969.

\bibitem[Blas et~al.(2009)Blas, Pujolas, and Sibiryakov]{BPS09HL}
D.~Blas, O.~Pujolas, and S.~Sibiryakov.
\newblock On the extra mode and inconsistency of ho{\v{r}}ava gravity.
\newblock \emph{Journal of High Energy Physics}, 2009:\penalty0 029, 2009.

\bibitem[Blas et~al.(2010)Blas, Pujolas, and Sibiryakov]{BPS10HL}
D.~Blas, O.~Pujolas, and S.~Sibiryakov.
\newblock Comment on ``strong coupling in extended ho{\v{r}}ava--lifshitz
  gravity'' [phys. lett. b 685 (2010) 197].
\newblock \emph{Physics Letters B}, 688\penalty0 (4):\penalty0 350--355, 2010.

\bibitem[Bogoliubov and Shirkov(1956)]{BS56RG}
N.~N. Bogoliubov and D.~V. Shirkov.
\newblock Charge renormalization group in quantum field theory.
\newblock \emph{Il Nuovo Cimento}, 3\penalty0 (5):\penalty0 845--863, 1956.

\bibitem[Bogoliubov and Shirkov(1980)]{BS80intro}
N.~N. Bogoliubov and D.~V. Shirkov.
\newblock \emph{{Introduction to the theory of quantized fields}}.
\newblock John Wiley \& Sons, 1980.

\bibitem[Bojowald(2001{\natexlab{a}})]{Boj01}
M.~Bojowald.
\newblock {Absence of a Singularity in Loop Quantum Cosmology}.
\newblock \emph{Phys. Rev. Lett.}, 86\penalty0 (23):\penalty0 5227--5230,
  2001{\natexlab{a}}.

\bibitem[Bojowald(2001{\natexlab{b}})]{bojowald2003absenceLQC}
M.~Bojowald.
\newblock {Absence of a Singularity in Loop Quantum Cosmology}.
\newblock \emph{Phys. Rev. Lett.}, 86\penalty0 (23):\penalty0 5227--5230,
  2001{\natexlab{b}}.
\newblock \href{http://arxiv.org/abs/gr-qc/0102069}{arXiv:gr-qc/0102069}.

\bibitem[Bojowald(2005)]{Boj05}
M.~Bojowald.
\newblock {Loop Quantum Cosmology}.
\newblock \emph{Living Rev. Rel}, 8\penalty0 (11):\penalty0 2006, 2005.

\bibitem[Bollini and Giambiagi(1972)]{BG72dimreg}
C.~G. Bollini and J.~J. Giambiagi.
\newblock {Dimensional renorinalization: The number of dimensions as a
  regularizing parameter}.
\newblock \emph{Il Nuovo Cimento B (1971-1996)}, 12\penalty0 (1):\penalty0
  20--26, 1972.

\bibitem[Calcagni(2010{\natexlab{a}})]{Calc2010FractalQFT}
G.~Calcagni.
\newblock {Quantum field theory, gravity and cosmology in a fractal universe}.
\newblock \emph{Journal of High Energy Physics}, 2010\penalty0 (3):\penalty0
  1--38, 2010{\natexlab{a}}.
\newblock URL \url{http://arxiv.org/abs/1001.0571}.
\newblock \href{http://arxiv.org/abs/1001.0571}{arXiv:hep-th/1001.0571}.

\bibitem[Calcagni(2010{\natexlab{b}})]{Calc2010FractalUniverse}
G.~Calcagni.
\newblock {Fractal universe and quantum gravity}.
\newblock \emph{Phys. Rev. Lett.}, 104\penalty0 (25):\penalty0 251301,
  2010{\natexlab{b}}.
\newblock URL \url{http://arxiv.org/abs/0912.3142}.
\newblock \href{http://arxiv.org/abs/0912.3142}{arXiv:hep-th/0912.3142}.

\bibitem[Calcagni(2011{\natexlab{a}})]{Calc2011FractalDiscreteContinuum}
G.~Calcagni.
\newblock Discrete to continuum transition in multifractal spacetimes.
\newblock \emph{Phys. Rev. D}, 84\penalty0 (6):\penalty0 061501,
  2011{\natexlab{a}}.
\newblock URL \url{http://arxiv.org/abs/1106.0295}.
\newblock \href{http://arxiv.org/abs/1106.0295}{arXiv:hep-th/1106.0295}.

\bibitem[Calcagni(2011{\natexlab{b}})]{Calc2011FractalGeometry}
G.~Calcagni.
\newblock Geometry of fractional spaces.
\newblock \emph{\href{http://arxiv.org/abs/1106.5787}{arXiv:hep-th/1106.5787}},
  2011{\natexlab{b}}.

\bibitem[Calcagni(2011{\natexlab{c}})]{Calc2011FractalGravity}
G.~Calcagni.
\newblock Gravity on a multifractal.
\newblock \emph{Physics Letters B}, 2011{\natexlab{c}}.
\newblock URL \url{http://arxiv.org/abs/1012.1244}.
\newblock \href{http://arxiv.org/abs/1012.1244}{arXiv:hep-th/1012.1244}.

\bibitem[Calcagni(2012{\natexlab{a}})]{Calc2012DiffusionMultiFractional}
G.~Calcagni.
\newblock Diffusion in multi-fractional spacetimes.
\newblock \emph{\href{http://arxiv.org/abs/1205.5046}{arXiv:hep-th/1205.5046}},
  2012{\natexlab{a}}.

\bibitem[Calcagni(2012{\natexlab{b}})]{Calc2012DiffusionQG}
G.~Calcagni.
\newblock Diffusion in quantum gravity.
\newblock \emph{\href{http://arxiv.org/abs/1204.2550}{arXiv:hep-th/1204.2550}},
  2012{\natexlab{b}}.

\bibitem[Calcagni(2012{\natexlab{c}})]{Calc2012FractalGeometryFT}
G.~Calcagni.
\newblock Geometry and field theory in multi-fractional spacetime.
\newblock \emph{Journal of High Energy Physics}, 2012\penalty0 (1):\penalty0
  1--77, 2012{\natexlab{c}}.
\newblock URL \url{http://arxiv.org/abs/1107.5041}.
\newblock \href{http://arxiv.org/abs/1107.5041}{arXiv:hep-th/1107.5041}.

\bibitem[Calcagni(2013)]{calcagni2013multi}
G.~Calcagni.
\newblock Multi-scale gravity and cosmology.
\newblock \emph{\href{http://arxiv.org/abs/1307.6382}{arXiv:hep-th/1307.6382}},
  2013.

\bibitem[Calcagni and Nardelli(2012)]{Calc2012MomentumLaplace}
G.~Calcagni and G.~Nardelli.
\newblock Momentum transforms and laplacians in fractional spaces.
\newblock \emph{\href{http://arxiv.org/abs/1202.5383}{arXiv:hep-th/1202.5383}},
  2012.

\bibitem[Calcagni and Nardelli(2013)]{calcagni2013quantum}
G.~Calcagni and G.~Nardelli.
\newblock Quantum field theory with varying couplings.
\newblock \emph{\href{http://arxiv.org/abs/1306.0629}{arXiv:hep-th/1306.0629}},
  2013.

\bibitem[Carlip(1995)]{Car95}
S.~Carlip.
\newblock {Lectures in (2+ 1)-dimensional gravity}.
\newblock \emph{J. Korean Phys. Soc}, 28:\penalty0 S447--S467, 1995.
\newblock \href{http://arxiv.org/abs/gr-qc/9503024}{arXiv:gr-qc/9503024}.

\bibitem[Carlip(2010)]{Car10sssst}
S.~Carlip.
\newblock {The Small Scale Structure of Spacetime}.
\newblock \emph{\href{http://arxiv.org/abs/1009.1136}{arXiv:gr-qc/1009.1136}},
  2010.

\bibitem[Carlip et~al.(2009)Carlip, Kowalski-Glikman, Durka, and
  Szczachor]{Car09SDR}
S.~Carlip, J.~Kowalski-Glikman, R.~Durka, and M.~Szczachor.
\newblock Spontaneous dimensional reduction in short-distance quantum gravity?
\newblock In \emph{AIP Conference Proceedings}, volume~31, page~72, 2009.

\bibitem[Cartan(1922)]{cartan1922torsion}
{\'E}.~Cartan.
\newblock Sur une g{\'e}n{\'e}ralisation de la notion de courbure de riemann et
  les espaces {\`a} torsion.
\newblock \emph{C. R. Acad. Sci. (Paris)}, 174:\penalty0 593--595, 1922.

\bibitem[Cartan(1925)]{cartan1925varietes}
{\'E}.~Cartan.
\newblock Sur les vari{\'e}t{\'e}s {\`a} connexion affine et la th{\'e}orie de
  la relativit{\'e} g{\'e}n{\'e}ralis{\'e}e (deuxi{\'e}me partie).
\newblock \emph{Ann. Ecole Norm. Sup}, 42:\penalty0 17--88, 1925.

\bibitem[Charmousis et~al.(2009)Charmousis, Niz, Padilla, and Saffin]{CNPS09HL}
C.~Charmousis, G.~Niz, A.~Padilla, and P.~M. Saffin.
\newblock Strong coupling in ho{\v{r}}ava gravity.
\newblock \emph{Journal of High Energy Physics}, 2009:\penalty0 070, 2009.

\bibitem[Christodoulou(2009)]{Chr09}
D.~Christodoulou.
\newblock \emph{{The Formation of Black Holes in General Relativity}},
  volume~4.
\newblock European Mathematical Society, 2009.

\bibitem[Claudel and Newman(1998)]{CN98}
C.~M. Claudel and K.~P. Newman.
\newblock {The Cauchy problem for quasi--linear hyperbolic evolution problems
  with a singularity in the time}.
\newblock \emph{P. Roy. Soc. A-Math. Phy.}, 454\penalty0 (1972):\penalty0 1073,
  1998.

\bibitem[Codello et~al.(2009)Codello, Percacci, and Rahmede]{CAR09qg}
A.~Codello, R.~Percacci, and C.~Rahmede.
\newblock {Investigating the ultraviolet properties of gravity with a Wilsonian
  renormalization group equation}.
\newblock \emph{Ann. of Phys.}, 324\penalty0 (2):\penalty0 414--469, 2009.
\newblock URL \url{http://arxiv.org/abs/0805.2909}.
\newblock \href{http://arxiv.org/abs/0805.2909}{arXiv:hep-th/0805.2909}.

\bibitem[Cot{\u{a}}escu and
  Vi{\c{s}}inescu(2001{\natexlab{a}})]{cotaescu2001dynamical}
I.~I. Cot{\u{a}}escu and M.~Vi{\c{s}}inescu.
\newblock Dynamical algebra and {D}irac quantum modes in the {T}aub--{NUT}
  background.
\newblock \emph{Class. Quant. Grav.}, 18\penalty0 (17):\penalty0 3383,
  2001{\natexlab{a}}.

\bibitem[Cot{\u{a}}escu and
  Vi{\c{s}}inescu(2001{\natexlab{b}})]{cotuaescu2001dirac}
I.~I. Cot{\u{a}}escu and M.~Vi{\c{s}}inescu.
\newblock The {D}irac field in {T}aub--{NUT} background.
\newblock \emph{International Journal of Modern Physics A.}, 16\penalty0
  (10):\penalty0 1743--1758, 2001{\natexlab{b}}.

\bibitem[Cot{\u{a}}escu et~al.(2005)Cot{\u{a}}escu, Moroianu, and
  Vi{\c{s}}inescu]{cotaescu2004gravitational}
I.~I. Cot{\u{a}}escu, S.~Moroianu, and M.~Vi{\c{s}}inescu.
\newblock Gravitational and axial anomalies for generalized {E}uclidean
  {T}aub--{NUT} metrics.
\newblock \emph{J. Phys. A: Math. Gen.}, 38:\penalty0 7005--7019, 2005.

\bibitem[Dereli and Tucker(1993)]{Der93}
T.~Dereli and R.~W. Tucker.
\newblock {Signature Dynamics in General Relativity}.
\newblock \emph{Classical Quantum Gravity}, 10:\penalty0 365, 1993.

\bibitem[Dray(1996)]{Dray96}
T.~Dray.
\newblock {Einstein's Equations in the Presence of Signature Change}.
\newblock \emph{Journal of Mathematical Physics}, 37:\penalty0 5627--5636,
  1996.
\newblock \href{http://arxiv.org/abs/gr-qc/9610064}{arXiv:gr-qc/9610064}.

\bibitem[Dray et~al.(1991)Dray, Manogue, and Tucker]{Dray91}
T.~Dray, C.~A. Manogue, and R.~W. Tucker.
\newblock {Particle production from signature change}.
\newblock \emph{Gen. Relativity Gravitation}, 23\penalty0 (8):\penalty0
  967--971, 1991.
\newblock ISSN 0001-7701.

\bibitem[Dray et~al.(1993)Dray, Manogue, and Tucker]{Dray93}
T.~Dray, C.~A. Manogue, and R.~W. Tucker.
\newblock {Scalar Field Equation in the Presence of Signature Change}.
\newblock \emph{Phys. Rev. D}, 48\penalty0 (6):\penalty0 2587--2590, 1993.
\newblock ISSN 1550-2368.

\bibitem[Dray et~al.(1995)Dray, Manogue, and Tucker]{Dray95}
T.~Dray, C.~A. Manogue, and R.~W. Tucker.
\newblock {Boundary Conditions for the Scalar Field in the Presence of
  Signature Change}.
\newblock \emph{Classical Quantum Gravity}, 12:\penalty0 2767, 1995.

\bibitem[Dray et~al.(2001)Dray, Ellis, and Hellaby]{Dray01}
T.~Dray, G.~F.~R. Ellis, and C.~Hellaby.
\newblock {Note on Signature Change and Colombeau Theory}.
\newblock \emph{Gen. Relativity Gravitation}, 33\penalty0 (6):\penalty0
  1041--1046, 2001.
\newblock ISSN 0001-7701.

\bibitem[Driscoll and Trefethen(2002)]{dri02}
T.~A. Driscoll and L.~N. Trefethen.
\newblock \emph{{Schwarz-Christoffel Mapping}}, volume~8.
\newblock Cambridge Univ. Pr., 2002.

\bibitem[Duggal and Bejancu(1996)]{Bej96}
K.~L. Duggal and A.~Bejancu.
\newblock \emph{{Lightlike Submanifolds of Semi-{R}iemannian Manifolds and
  Applications}}, volume 364.
\newblock Kluwer Academic, 1996.

\bibitem[Eddington(1924)]{eddington1924comparison}
A.~S. Eddington.
\newblock {A Comparison of Whitehead's and Einstein's Formulae}.
\newblock \emph{Nature}, 113:\penalty0 192, 1924.

\bibitem[Einstein(1920)]{einstein1920relativity}
A.~Einstein.
\newblock Relativity: The special and the general theory.
\newblock \emph{Space}, 80\penalty0 (3):\penalty0 262, 1920.

\bibitem[Einstein and Rosen(1935)]{ER35}
A.~Einstein and N.~Rosen.
\newblock {The Particle Problem in the General Theory of Relativity}.
\newblock \emph{Phys. Rev.}, 48\penalty0 (1):\penalty0 73, 1935.

\bibitem[El-Nabulsi(2005)]{el2005fractional}
R.~El-Nabulsi.
\newblock A fractional action-like variational approach of some classical,
  quantum and geometrical dynamics.
\newblock \emph{International Journal of Applied Mathematics}, 17\penalty0
  (3):\penalty0 299, 2005.

\bibitem[El-Nabulsi and Torres(2008)]{el2008fractional}
R.~El-Nabulsi and D.~Torres.
\newblock Fractional actionlike variational problems.
\newblock \emph{Journal of Mathematical Physics}, 49\penalty0 (5):\penalty0
  053521--053521, 2008.

\bibitem[Ellis(1992)]{Ellis92b}
G.~F.~R. Ellis.
\newblock {Covariant Change of Signature in Classical Relativity}.
\newblock \emph{Gen. Relativity Gravitation}, 24\penalty0 (10):\penalty0
  1047--1068, 1992.
\newblock ISSN 0001-7701.

\bibitem[Ellis et~al.(1992)Ellis, Sumeruk, Coule, and Hellaby]{Ellis92a}
G.~F.~R. Ellis, A.~Sumeruk, D.~Coule, and C.~Hellaby.
\newblock {Change of Signature in Classical Relativity}.
\newblock \emph{Classical Quantum Gravity}, 9:\penalty0 1535, 1992.

\bibitem[Finkelstein(1958)]{finkelstein1958past}
D.~Finkelstein.
\newblock Past-future asymmetry of the gravitational field of a point particle.
\newblock \emph{Phys. Rev.}, 110\penalty0 (4):\penalty0 965, 1958.

\bibitem[Fiziev(2010)]{Fiz2010Riem}
P.~P. Fiziev.
\newblock {{R}iemannian (1+d)-Dim Space-Time Manifolds with Nonstandard
  Topology which Admit Dimensional Reduction to Any Lower Dimension and
  Transformation of the Klein-Gordon Equation to the 1-Dim Schr\"odinger Like
  Equation}.
\newblock
  \emph{\href{http://arxiv.org/abs/1012.3520}{arXiv:math-ph/1012.3520}}, 2010.

\bibitem[Fiziev and Shirkov(2011)]{FS2011KG}
P.~P. Fiziev and D.~V. Shirkov.
\newblock Solutions of the {K}lein-{G}ordon equation on manifolds with variable
  geometry including dimensional reduction.
\newblock \emph{Theoretical and Mathematical Physics}, 167\penalty0
  (2):\penalty0 680--691, 2011.
\newblock \href{http://arxiv.org/abs/1009.5309}{arXiv:hep-th/1009.5309}.

\bibitem[Fiziev and Shirkov(2012)]{FS2012Axial}
P.~P. Fiziev and D.~V. Shirkov.
\newblock {The (2+1)-dim Axial Universes -- Solutions to the Einstein
  Equations, Dimensional Reduction Points, and Klein--Fock--Gordon Waves}.
\newblock \emph{J. Phys. A}, 45\penalty0 (055205):\penalty0 1--15, 2012.
\newblock URL \url{http://arxiv.org/abs/1104.0903}.
\newblock \href{http://arxiv.org/abs/1104.0903}{arXiv:gr-qc/arXiv:1104.0903}.

\bibitem[Friedman(1922)]{FRI22de}
A.~Friedman.
\newblock {{\"U}ber die Kr{\"u}mmung des Raumes}.
\newblock \emph{Zeitschrift f{\"u}r Physik A Hadrons and Nuclei}, 10\penalty0
  (1):\penalty0 377--386, 1922.

\bibitem[Friedman(1924)]{FRI24}
A.~Friedman.
\newblock {{\"U}ber die M{\"o}glichkeit einer Welt mit konstanter negativer
  Kr{\"u}mmung des Raumes}.
\newblock \emph{Zeitschrift f{\"u}r Physik A Hadrons and Nuclei}, 21\penalty0
  (1):\penalty0 326--332, 1924.

\bibitem[Friedman(1999)]{FRI99en}
A.~Friedman.
\newblock {On the Curvature of Space}.
\newblock \emph{General Relativity and Gravitation}, 31\penalty0 (12):\penalty0
  1991--2000, 1999.

\bibitem[Frishman and Sonnenschein(2010)]{frishman2010non}
Y.~Frishman and J.~Sonnenschein.
\newblock \emph{{Non-perturbative Field Theory: From Two-dimensional Conformal
  Field Theory to QCD in Four Dimensions}}.
\newblock Cambridge Monographs on Mathematical Physics. Cambridge University
  Press, 2010.
\newblock ISBN 9780521662659.

\bibitem[Gallot et~al.(2004)Gallot, Hullin, and Lafontaine]{GHLF04}
S.~Gallot, D.~Hullin, and J.~Lafontaine.
\newblock \emph{{R}iemannian Geometry}.
\newblock Springer-Verlag, Berlin, New York, 3rd edition, 2004.

\bibitem[Garc{\'i}a-Parrado and
  Senovilla(2005)]{garcia2005singularities_review}
A.~Garc{\'i}a-Parrado and J.~M.~M. Senovilla.
\newblock Causal structures and causal boundaries.
\newblock \emph{Class. Quant. Grav.}, 22\penalty0 (9):\penalty0 R1--R84, 2005.
\newblock \href{http://arxiv.org/abs/gr-qc/0501069}{arXiv:gr-qc/0501069}.

\bibitem[Gell-Mann and Low(1954)]{GML54RG}
M.~Gell-Mann and F.~E. Low.
\newblock Quantum electrodynamics at small distances.
\newblock \emph{Phys. Rev.}, 95\penalty0 (5):\penalty0 1300, 1954.

\bibitem[Georgi(1982)]{georgi1982lie}
H.~Georgi.
\newblock \emph{{Lie algebras in particle physics: From isospin to unified
  theories}}.
\newblock Benjamin/Cummings Pub. Co., Advanced Book Program (Reading, Mass.),
  1982.

\bibitem[Gibbons(2006)]{Gibb06}
G.~W. Gibbons.
\newblock {Part III: Applications of Differential Geometry to Physics}.
\newblock \emph{Cambridge CB3 0WA, UK}, 2006.

\bibitem[Goroff and Sagnotti(1986)]{GS86uvgr}
M.~H. Goroff and A.~Sagnotti.
\newblock {The ultraviolet behavior of Einstein gravity}.
\newblock \emph{Nuclear Physics B}, 266\penalty0 (3-4):\penalty0 709--736,
  1986.

\bibitem[Hamber and Williams(2005)]{HW05qg}
H.~W. Hamber and R.~M. Williams.
\newblock Nonlocal effective gravitational field equations and the running of
  newton's constant {G}.
\newblock \emph{Phys. Rev. D}, 72\penalty0 (4):\penalty0 044026, 2005.
\newblock URL \url{http://arxiv.org/abs/hep-th/0507017}.
\newblock \href{http://arxiv.org/abs/hep-th/0507017}{arXiv:hep-th/0507017}.

\bibitem[Hawking(1966{\natexlab{a}})]{Haw66i}
S.~W. Hawking.
\newblock The occurrence of singularities in cosmology.
\newblock \emph{P. Roy. Soc. A-Math. Phy.}, 294\penalty0 (1439):\penalty0
  511--521, 1966{\natexlab{a}}.

\bibitem[Hawking(1966{\natexlab{b}})]{Haw66ii}
S.~W. Hawking.
\newblock The occurrence of singularities in cosmology. {II}.
\newblock \emph{P. Roy. Soc. A-Math. Phy.}, 295\penalty0 (1443):\penalty0
  490--493, 1966{\natexlab{b}}.

\bibitem[Hawking(1967)]{Haw67iii}
S.~W. Hawking.
\newblock The occurrence of singularities in cosmology. {III}. {C}ausality and
  singularities.
\newblock \emph{P. Roy. Soc. A-Math. Phy.}, 300\penalty0 (1461):\penalty0
  187--201, 1967.

\bibitem[Hawking(1975)]{Haw75}
S.~W. Hawking.
\newblock {Particle Creation by Black Holes}.
\newblock \emph{Comm. Math. Phys.}, 43\penalty0 (3):\penalty0 199--220, 1975.

\bibitem[Hawking(1976)]{Haw76}
S.~W. Hawking.
\newblock {Breakdown of Predictability in Gravitational Collapse}.
\newblock \emph{Phys. Rev. D}, 14\penalty0 (10):\penalty0 2460, 1976.

\bibitem[Hawking and Ellis(1995)]{HE95}
S.~W. Hawking and G.~F.~R. Ellis.
\newblock \emph{{The Large Scale Structure of Space Time}}.
\newblock Cambridge University Press, 1995.

\bibitem[Hawking and Penrose(1970)]{HP70}
S.~W. Hawking and R.~W. Penrose.
\newblock {The Singularities of Gravitational Collapse and Cosmology}.
\newblock \emph{Proc. Roy. Soc. London Ser. A}, 314\penalty0 (1519):\penalty0
  529--548, 1970.

\bibitem[Hawking and Penrose(1996)]{HP96}
S.~W. Hawking and R.~W. Penrose.
\newblock \emph{{The Nature of Space and Time}}.
\newblock Princeton University Press, Princeton and Oxford, 1996.

\bibitem[Hayward(1992)]{Hay92}
S.~A. Hayward.
\newblock {Signature Change in General Relativity}.
\newblock \emph{Classical Quantum Gravity}, 9:\penalty0 1851, 1992.

\bibitem[Hayward(1993)]{Hay93}
S.~A. Hayward.
\newblock {Junction Conditions for Signature Change}.
\newblock \emph{arXiv:gr-qc/9303034}, 1993.

\bibitem[Hayward(1995)]{Hay95}
S.~A. Hayward.
\newblock {Comment on "Failure of Standard Conservation Laws at a Classical
  Change of Signature"}.
\newblock \emph{Phys. Rev. D}, 52\penalty0 (12):\penalty0 7331--7332, 1995.
\newblock ISSN 1550-2368.

\bibitem[Hehl et~al.(1976)Hehl, von~der Heyde, Kerlick, and
  Nester]{hehl1976cartan}
F.~W. Hehl, P.~von~der Heyde, G.~D. Kerlick, and J.~M. Nester.
\newblock General relativity with spin and torsion: Foundations and prospects.
\newblock \emph{Reviews of Modern Physics}, 48\penalty0 (3):\penalty0 393--416,
  1976.

\bibitem[Hellaby and Dray(1994)]{Dray94}
C.~Hellaby and T.~Dray.
\newblock {Failure of Standard Conservation Laws at a Classical Change of
  Signature}.
\newblock \emph{Phys. Rev. D}, 49\penalty0 (10):\penalty0 5096--5104, 1994.
\newblock ISSN 1550-2368.

\bibitem[Henneaux et~al.(2010)Henneaux, Kleinschmidt, and G{\'o}mez]{HKG10HL}
M.~Henneaux, A.~Kleinschmidt, and G.~L. G{\'o}mez.
\newblock A dynamical inconsistency of ho{\v{r}}ava gravity.
\newblock \emph{Phys. Rev. D}, 81\penalty0 (6):\penalty0 064002, 2010.

\bibitem[Hironaka(1964{\natexlab{a}})]{hironaka1964resolution1}
H.~Hironaka.
\newblock {Resolution of Singularities of an Algebraic Variety Over a Field of
  Characteristic Zero: {I}}.
\newblock \emph{The Annals of Mathematics}, 79\penalty0 (1):\penalty0 109--203,
  1964{\natexlab{a}}.

\bibitem[Hironaka(1964{\natexlab{b}})]{hironaka1964resolution2}
H.~Hironaka.
\newblock {Resolution of Singularities of an Algebraic Variety Over a Field of
  Characteristic Zero: {II}}.
\newblock \emph{The Annals of Mathematics}, 79\penalty0 (2):\penalty0 205--326,
  1964{\natexlab{b}}.

\bibitem[Ho{\v{r}}ava(2009{\natexlab{a}})]{Hor09qglp}
P.~Ho{\v{r}}ava.
\newblock {Quantum Gravity at a Lifshitz Point}.
\newblock \emph{Phys. Rev. D}, 79\penalty0 (8):\penalty0 084008,
  2009{\natexlab{a}}.
\newblock URL \url{http://arxiv.org/abs/0901.3775}.
\newblock \href{http://arxiv.org/abs/0901.3775}{arXiv:hep-th/0901.3775}.

\bibitem[Ho{\v{r}}ava(2009{\natexlab{b}})]{Hor09spectral}
P.~Ho{\v{r}}ava.
\newblock Spectral dimension of the universe in quantum gravity at a lifshitz
  point.
\newblock \emph{Phys. Rev. Lett.}, 102\penalty0 (16):\penalty0 161301,
  2009{\natexlab{b}}.

\bibitem[Kawai et~al.(1996)Kawai, Kitazawa, and Ninorniya]{KKN96ren2dim}
H.~Kawai, Y.~Kitazawa, and M.~Ninorniya.
\newblock {Renormalizability of quantum gravity near two dimensions}.
\newblock \emph{Nuclear Physics B}, 467\penalty0 (1):\penalty0 313--331, 1996.
\newblock URL \url{http://arxiv.org/abs/hep-th/9511217}.
\newblock \href{http://arxiv.org/abs/hep-th/9511217}{arXiv:hep-th/9511217}.

\bibitem[Kerr(1963)]{kerr63}
R.~P. Kerr.
\newblock {Gravitational Field of a Spinning Mass as an Example of
  Algebraically Special Metrics}.
\newblock \emph{Phys. Rev. Lett.}, 11\penalty0 (5):\penalty0 237--238, 1963.

\bibitem[Kerr and Schild(1965)]{ks65}
R.~P. Kerr and A.~Schild.
\newblock {A New Class of Vacuum Solutions of the Einstein Field Equations}.
\newblock \emph{Atti del Congregno Sulla Relativita Generale: Galileo
  Centenario}, 1965.

\bibitem[Kimpton and Padilla(2010)]{KP10HL}
I.~Kimpton and A.~Padilla.
\newblock Lessons from the decoupling limit of ho{\v{r}}ava gravity.
\newblock \emph{Journal of High Energy Physics}, 2010\penalty0 (7):\penalty0
  1--26, 2010.

\bibitem[Klainerman and Rodnianski(2009)]{KR09}
S.~Klainerman and I.~Rodnianski.
\newblock {On the Formation of Trapped Surfaces}.
\newblock \emph{\href{http://arxiv.org/abs/0912.5097}{arXiv:gr-qc/0912.5097}},
  2009.

\bibitem[Kobayashi and Nomizu(1963)]{kob:1963}
S.~Kobayashi and K.~Nomizu.
\newblock \emph{{Foundations of Differential Geometry I}}.
\newblock Number~15 in Interscience Tracts in Pure and Applied Mathematics.
  John Wiley {\&} Sons, New York, London, Sydney, 1963.

\bibitem[Kossowski(1985)]{Koss85}
M.~Kossowski.
\newblock {Fold Singularities in Pseudo {R}iemannian Geodesic Tubes}.
\newblock \emph{Proc. Amer. Math. Soc.}, pages 463--469, 1985.
\newblock ISSN 0002-9939.

\bibitem[Kossowski(1987)]{Koss87}
M.~Kossowski.
\newblock {Pseudo-{R}iemannian Metric Singularities and the Extendability of
  Parallel Transport}.
\newblock \emph{Proc. Amer. Math. Soc.}, 99\penalty0 (1):\penalty0 147--154,
  1987.
\newblock ISSN 0002-9939.

\bibitem[Kossowski and Kriele(1993{\natexlab{a}})]{Koss93a}
M.~Kossowski and M.~Kriele.
\newblock {Signature Type Change and Absolute Time in General Relativity}.
\newblock \emph{Classical Quantum Gravity}, 10:\penalty0 1157,
  1993{\natexlab{a}}.

\bibitem[Kossowski and Kriele(1993{\natexlab{b}})]{Koss93b}
M.~Kossowski and M.~Kriele.
\newblock {Smooth and Discontinuous Signature Type Change in General
  Relativity}.
\newblock \emph{Classical Quantum Gravity}, 10:\penalty0 2363,
  1993{\natexlab{b}}.

\bibitem[Kossowski and Kriele(1994{\natexlab{a}})]{Koss94a}
M.~Kossowski and M.~Kriele.
\newblock {The Einstein Equation for Signature Type Changing Spacetimes}.
\newblock \emph{Proceedings: Mathematical and Physical Sciences}, 446\penalty0
  (1926):\penalty0 115--126, 1994{\natexlab{a}}.
\newblock ISSN 0962-8444.

\bibitem[Kossowski and Kriele(1994{\natexlab{b}})]{Koss94b}
M.~Kossowski and M.~Kriele.
\newblock {Transverse, Type Changing, Pseudo {R}iemannian Metrics and the
  Extendability of Geodesics}.
\newblock \emph{Proceedings: Mathematical and Physical Sciences}, pages
  297--306, 1994{\natexlab{b}}.
\newblock ISSN 0962-8444.

\bibitem[Kupeli(1987{\natexlab{a}})]{Kup87a}
D.~Kupeli.
\newblock On null submanifolds in spacetimes.
\newblock \emph{Geom. Dedicata}, 23\penalty0 (1):\penalty0 33--51,
  1987{\natexlab{a}}.

\bibitem[Kupeli(1987{\natexlab{b}})]{Kup87b}
D.~Kupeli.
\newblock Degenerate manifolds.
\newblock \emph{Geom. Dedicata}, 23\penalty0 (3):\penalty0 259--290,
  1987{\natexlab{b}}.

\bibitem[Kupeli(1987{\natexlab{c}})]{Kup87c}
D.~Kupeli.
\newblock Degenerate submanifolds in semi-{R}iemannian geometry.
\newblock \emph{Geom. Dedicata}, 24\penalty0 (3):\penalty0 337--361,
  1987{\natexlab{c}}.

\bibitem[Kupeli(1996)]{Kup96}
D.~Kupeli.
\newblock \emph{Singular Semi-{R}iemannian Geometry}.
\newblock Kluwer Academic Publishers Group, 1996.

\bibitem[Lauscher and Reuter(2005)]{LR05fractal}
O.~Lauscher and M.~Reuter.
\newblock Fractal spacetime structure in asymptotically safe gravity.
\newblock \emph{Journal of High Energy Physics}, 10:\penalty0 050, 2005.

\bibitem[Lema{\^\i}tre(1927)]{LEM27}
G.~Lema{\^\i}tre.
\newblock Un univers homog{\`e}ne de masse constante et de rayon croissant
  rendant compte de la vitesse radiale des n{\'e}buleuses extra-galactiques.
\newblock \emph{Annales de la Societe Scietifique de Bruxelles}, 47:\penalty0
  49--59, 1927.

\bibitem[Litim(2006)]{Lit06fixed}
D.~Litim.
\newblock On fixed points of quantum gravity.
\newblock In \emph{A Century of relativity physics: ERE 2005, XXVIII Spanish
  Relativity Meeting, Oviedo, Asturias, Spain, 6-10 September 2005}, volume
  841, pages 322--329. Amer Inst of Physics, 2006.

\bibitem[Litim(2004)]{Lit04fixed}
D.~F. Litim.
\newblock Fixed points of quantum gravity.
\newblock \emph{Phys. Rev. Lett.}, 92\penalty0 (20):\penalty0 201301, 2004.

\bibitem[Misner(1963)]{misner1963flatter}
C.~W. Misner.
\newblock The flatter regions of {N}ewman, {U}nti, and {T}amburino's
  generalized {S}chwarzschild space.
\newblock \emph{Journal of Mathematical Physics}, 4:\penalty0 924, 1963.

\bibitem[Misner and Wheeler(1957)]{MW57}
C.~W. Misner and J.~A. Wheeler.
\newblock {Classical Physics as Geometry: Gravitation, Electromagnetism,
  Unquantized Charge, and Mass as Properties of Curved Empty Space}.
\newblock \emph{Ann. of Phys.}, 2:\penalty0 525--603, 1957.

\bibitem[Modesto(2009)]{Mod2008fractalLQG}
L.~Modesto.
\newblock Fractal structure of loop quantum gravity.
\newblock \emph{Class.Quant.Grav}, 26:\penalty0 242002, 2009.

\bibitem[Modesto and Nicolini(2010)]{MN2010sdQU}
L.~Modesto and P.~Nicolini.
\newblock Spectral dimension of a quantum universe.
\newblock \emph{Phys. Rev. D}, 81\penalty0 (10):\penalty0 104040, 2010.

\bibitem[Moisil(1940)]{Moi40}
G.~C. Moisil.
\newblock {Sur les g\'eod\'esiques des espaces de Riemann singuliers}.
\newblock \emph{Bull. Math. Soc. Roumaine Sci.}, 42:\penalty0 33--52, 1940.

\bibitem[Newman et~al.(1965)Newman, Couch, Chinnapared, Exton, Prakash, and
  Torrence]{new65}
E.~T. Newman, E.~Couch, K.~Chinnapared, A.~Exton, A.~Prakash, and R.~Torrence.
\newblock {Metric of a Rotating, Charged Mass}.
\newblock \emph{Journal of mathematical physics}, 6:\penalty0 918--919, 1965.

\bibitem[Newton(1960)]{newt60}
I.~Newton.
\newblock {Letter to Oldenburg (24 October 1676)}, 1960.

\bibitem[Nicolescu and Pripoae(2005)]{pripoae2004vranceanu}
L.~Nicolescu and G.~Pripoae.
\newblock The work of {G}h. {V}r{\u a}nceanu on nonholonomic spaces.
\newblock \emph{An. Univ. Buc., Matem.}, LIV:\penalty0 21--28, 2005.

\bibitem[Nicolescu et~al.(2009)Nicolescu, Pripoae, Hiric{\u a}, and
  R.]{pripoae2009teleman}
L.~Nicolescu, G.~Pripoae, I.~Hiric{\u a}, and G.~R.
\newblock {K}ostake {T}eleman -- a great {R}omanian geometer.
\newblock \emph{An. Univ. Buc., Matem.}, LVIII:\penalty0 3--16, 2009.

\bibitem[Nicolescu et~al.(2011)Nicolescu, Pripoae, and
  Damian]{pripoae2011connections}
L.~Nicolescu, G.~Pripoae, and V.~Damian.
\newblock On some families of linear connections.
\newblock \emph{Balkan Journal of Geometry and Its Applications}, 16\penalty0
  (1):\penalty0 98--110, 2011.

\bibitem[Nicolini and Spallucci(2011)]{NS2011sdUG}
P.~Nicolini and E.~Spallucci.
\newblock Un-spectral dimension and quantum spacetime phases.
\newblock \emph{Physics Letters B}, 695\penalty0 (1):\penalty0 290--293, 2011.

\bibitem[Niedermaier(2007)]{Nied07asqg}
M.~Niedermaier.
\newblock {The asymptotic safety scenario in quantum gravity: An introduction}.
\newblock \emph{Class. Quant. Grav.}, 24:\penalty0 R171, 2007.

\bibitem[Nordstr{\"o}m(1918)]{nord18}
G.~Nordstr{\"o}m.
\newblock {On the Energy of the Gravitation field in Einstein's Theory}.
\newblock \emph{Koninklijke Nederlandse Akademie van Weteschappen Proceedings
  Series B Physical Sciences}, 20:\penalty0 1238--1245, 1918.

\bibitem[O'Neill(1983)]{ONe83}
B.~O'Neill.
\newblock \emph{Semi-{R}iemannian Geometry with Applications to Relativity}.
\newblock Number 103 in Pure Appl. Math. Academic Press, New York-London, 1983.

\bibitem[O'Neill(1995)]{ONe95}
B.~O'Neill.
\newblock \emph{The Geometry of {K}err Black Holes}.
\newblock Wellesley: A. K. Peters Ltd., 1995.

\bibitem[Pambira(2005)]{Pam03}
A.~Pambira.
\newblock {Harmonic Morphisms Between Degenerate Semi-Rie\-mann\-ian
  Manifolds}.
\newblock \emph{Contributions to Algebra and Geometry}, 46\penalty0
  (1):\penalty0 261--281, 2005.
\newblock \href{http://arxiv.org/abs/math/0303275}{arXiv:math/0303275}.

\bibitem[Papazoglou and Sotiriou(2010)]{PS10HL}
A.~Papazoglou and T.~P. Sotiriou.
\newblock Strong coupling in extended horava-lifshitz gravity.
\newblock \emph{Physics Letters B}, 685\penalty0 (2-3):\penalty0 197--200,
  2010.

\bibitem[Penrose(1965)]{Pen65}
R.~Penrose.
\newblock {Gravitational Collapse and Space-Time Singularities}.
\newblock \emph{Phys. Rev. Lett.}, 14\penalty0 (3):\penalty0 57--59, 1965.

\bibitem[Penrose(1969)]{Pen69}
R.~Penrose.
\newblock {Gravitational Collapse: the Role of General Relativity}.
\newblock \emph{Revista del Nuovo Cimento; Numero speciale 1}, pages 252--276,
  1969.

\bibitem[Penrose(1974)]{Pen74}
R.~Penrose.
\newblock Gravitational collapse.
\newblock In C.~{Dewitt-Morette}, editor, \emph{Gravitational Radiation and
  Gravitational Collapse}, volume~64 of \emph{IAU Symposium}, pages 82--91,
  1974.

\bibitem[Penrose(1979)]{Pen79}
R.~Penrose.
\newblock {Singularities and time-asymmetry}.
\newblock In \emph{{General relativity: an Einstein centenary survey}},
  volume~1, pages 581--638, 1979.

\bibitem[Penrose(1998)]{Pen98}
R.~Penrose.
\newblock {The Question of Cosmic Censorship}.
\newblock In \emph{{(ed. R. M. Wald) Black Holes and Relativistic Stars}},
  pages 233--248. University of Chicago Press, Chicago, Illinois, 1998.

\bibitem[Penrose and Rindler(1987)]{PeR87}
R.~Penrose and W.~Rindler.
\newblock \emph{{Spinors and Space-Time: Volume 1, Two-Spinor Calculus and
  Relativistic Fields (Cambridge Monographs on Mathematical Physics)}}.
\newblock {Cambridge University Press}, 1987.
\newblock ISBN 0521337070.

\bibitem[Perlmutter et~al.(1999)Perlmutter, Aldering, Goldhaber, Knop, Nugent,
  Castro, Deustua, Fabbro, Goobar, Groom, et~al.]{PER99}
S.~Perlmutter, G.~Aldering, G.~Goldhaber, R.~Knop, P.~Nugent, P.~G. Castro,
  S.~Deustua, S.~Fabbro, A.~Goobar, D.~E. Groom, et~al.
\newblock {Measurements of $\Omega$ and $\Lambda$ from 42 High-Redshift
  Supernovae}.
\newblock \emph{The Astrophysical Journal}, 517:\penalty0 565, 1999.

\bibitem[Pop{\l}awski(2012)]{poplawski2012nonsingular}
N.~Pop{\l}awski.
\newblock Nonsingular, big-bounce cosmology from spinor-torsion coupling.
\newblock \emph{Phys. Rev. D}, 85\penalty0 (10):\penalty0 107502, 2012.

\bibitem[Pripoae(2005)]{pripoae05srg}
G.~Pripoae.
\newblock Semi-{R}iemannian geometry and theory of relativity.
\newblock University Lectures, 2005.

\bibitem[R.(1972)]{Ros72}
R.~R.
\newblock \emph{{Variet{\u a\c t}i izotrope {\c s}i pseudoizotrope incluse
  {\^i}ntr-o varietate relativist{\u a}}}.
\newblock {Editura Academiei Republicii Socialiste Rom{\^a}nia, Bucure{\c
  s}ti}, 1972.

\bibitem[Raigorodski et~al.(1999)Raigorodski, Stavrinos, and
  Balan]{Balan1999diffgeom}
L.~Raigorodski, P.~Stavrinos, and V.~Balan.
\newblock \emph{Introduction to the physical principles of differential
  geometry}.
\newblock Athens: University of Athens, 1999.

\bibitem[Regge(1961)]{Reg61}
T.~Regge.
\newblock Nuovo cimento 19 558 collins pa and williams rm 1974.
\newblock \emph{Phys. Rev. D}, 10:\penalty0 3537, 1961.

\bibitem[Reissner(1916)]{reiss16}
H.~Reissner.
\newblock {{\"U}ber die Eigengravitation des elektrischen Feldes nach der
  Einsteinschen Theorie}.
\newblock \emph{Annalen der Physik}, 355\penalty0 (9):\penalty0 106--120, 1916.

\bibitem[Reuter and Saueressig(2002)]{RS01rgqg}
M.~Reuter and F.~Saueressig.
\newblock {Renormalization group flow of quantum gravity in the
  Einstein-Hilbert truncation}.
\newblock \emph{Phys. Rev. D}, 65:\penalty0 065016, 2002.

\bibitem[Reuter and Saueressig(2007)]{Reu07frg}
M.~Reuter and F.~Saueressig.
\newblock {Functional renormalization group equations, asymptotic safety, and
  quantum Einstein gravity}.
\newblock \emph{\href{http://arxiv.org/abs/0708.1317}{arXiv:hep-th/0708.1317}},
  2007.
\newblock URL \url{http://arxiv.org/abs/0708.1317}.

\bibitem[Riemann(1873)]{riemann1873hypotheses}
B.~Riemann.
\newblock On the hypotheses which lie at the bases of geometry.
\newblock \emph{Nature}, 8\penalty0 (183):\penalty0 14--17, 1873.

\bibitem[Riess et~al.(1998)Riess, Filippenko, Challis, Clocchiatti, Diercks,
  Garnavich, Gilliland, Hogan, Jha, Kirshner, et~al.]{RIE98}
A.~G. Riess, A.~V. Filippenko, P.~Challis, A.~Clocchiatti, A.~Diercks, P.~M.
  Garnavich, R.~L. Gilliland, C.~J. Hogan, S.~Jha, R.~P. Kirshner, et~al.
\newblock {Observational Evidence from Supernovae for an Accelerating Universe
  and a Cosmological Constant}.
\newblock \emph{The Astronomical Journal}, 116:\penalty0 1009, 1998.

\bibitem[Robertson(1935)]{ROB35I}
H.~P. Robertson.
\newblock {Kinematics and World-Structure}.
\newblock \emph{The Astrophysical Journal}, 82:\penalty0 284, 1935.

\bibitem[Robertson(1936{\natexlab{a}})]{ROB35II}
H.~P. Robertson.
\newblock {Kinematics and World-Structure II.}
\newblock \emph{The Astrophysical Journal}, 83:\penalty0 187,
  1936{\natexlab{a}}.

\bibitem[Robertson(1936{\natexlab{b}})]{ROB35III}
H.~P. Robertson.
\newblock {Kinematics and World-Structure III.}
\newblock \emph{The Astrophysical Journal}, 83:\penalty0 257,
  1936{\natexlab{b}}.

\bibitem[Romano(1993)]{Rom93a}
J.~D. Romano.
\newblock {Geometrodynamics vs. Connection Dynamics}.
\newblock \emph{Gen.Rel.Grav.}, 25\penalty0 (8):\penalty0 759--854, 1993.
\newblock \href{http://arxiv.org/abs/gr-qc/9303032}{arXiv:gr-qc/9303032}.

\bibitem[Saha and Vi{\c{s}}inescu(2006)]{visinescu2006bel}
B.~Saha and M.~Vi{\c{s}}inescu.
\newblock {B}el--{R}obinson tensor and dominant energy property in the
  {B}ianchi type {I} universe.
\newblock \emph{Mod. Phys. Lett. A}, 21\penalty0 (11):\penalty0 847--861, 2006.

\bibitem[Saha and Vi{\c{s}}inescu(2010)]{visinescu2010bianchi}
B.~Saha and M.~Vi{\c{s}}inescu.
\newblock Bianchi type-{I} model with cosmic string in the presence of a
  magnetic field: spinor description.
\newblock \emph{International Journal of Theoretical Physics}, 49\penalty0
  (7):\penalty0 1411--1421, 2010.

\bibitem[Saha and Vi{\c{s}}inescu(2012)]{visinescu2012bianchi}
B.~Saha and M.~Vi{\c{s}}inescu.
\newblock Bianchi type-{I} string cosmological model in the presence of a
  magnetic field: classical versus loop quantum cosmology approaches.
\newblock \emph{Astrophysics and Space Science}, 339\penalty0 (2):\penalty0
  371--377, 2012.

\bibitem[Sakharov(1984)]{Sak84}
A.~D. Sakharov.
\newblock {Cosmological Transitions with a Change in Metric Signature}.
\newblock \emph{Sov. Phys. JETP}, 60:\penalty0 214, 1984.

\bibitem[Schwarzschild(1916{\natexlab{a}})]{Scw16a}
K.~Schwarzschild.
\newblock {{\"U}ber das Gravitationsfeld eines Massenpunktes nach der
  Einsteinschen Theorie}.
\newblock \emph{Sitzungsber. Preuss. Akad. D. Wiss.}, pages 189--196,
  1916{\natexlab{a}}.
\newblock \href{http://arxiv.org/abs/physics/9905030}{arXiv:physics/9905030}.

\bibitem[Schwarzschild(1916{\natexlab{b}})]{Scw16b}
K.~Schwarzschild.
\newblock {{\"U}ber das Gravitationsfeld eines Kugel aus inkompressibler
  Fl{\"u}ssigkeit nach der Einsteinschen Theorie}.
\newblock \emph{Sitzungsber. Preuss. Akad. D. Wiss.}, pages 424--434,
  1916{\natexlab{b}}.
\newblock \href{http://arxiv.org/abs/physics/9912033}{arXiv:physics/9912033}.

\bibitem[Shirkov(1996)]{Shirkov96}
D.~V. Shirkov.
\newblock {The Bogoliubov Renormalization Group}.
\newblock \emph{arXiv:hep-th/9602024}, 1996.
\newblock URL \url{http://arxiv.org/abs/hep-th/9602024}.
\newblock \href{http://arxiv.org/abs/hep-th/9602024}{arXiv:hep-th/9602024}.

\bibitem[Shirkov(1999)]{Shirkov99}
D.~V. Shirkov.
\newblock {The Bogoluibov Renormalization Group in Theoretical and Mathematical
  Physics }.
\newblock
  \emph{\href{http://arxiv.org/abs/hep-th/9903073}{arXiv:hep-th/9903073}},
  1999.
\newblock URL \url{http://arxiv.org/abs/hep-th/9903073}.

\bibitem[Shirkov(2010)]{shirkov2010coupling}
D.~V. Shirkov.
\newblock Coupling running through the looking-glass of dimensional reduction.
\newblock \emph{Phys. Part. Nucl. Lett.}, 7\penalty0 (6):\penalty0 379--383,
  2010.
\newblock \href{http://arxiv.org/abs/1004.1510}{arXiv:hep-th/1004.1510}.

\bibitem[Shirkov(2011)]{shirkov2012dreamland}
D.~V. Shirkov.
\newblock {Dream-land with Classic Higgs field, Dimensional Reduction and all
  that}.
\newblock In \emph{{Proceedings of the Steklov Institute of Mathematics}},
  volume 272, pages 216--222, 2011.

\bibitem[Simionescu(2005)]{osimion05gr}
O.~Simionescu.
\newblock General relativity.
\newblock University Lectures, 2005.

\bibitem[Singer and Thorpe(1969)]{ST69}
I.~M. Singer and J.~A. Thorpe.
\newblock {The curvature of 4-dimensional Einstein spaces}.
\newblock In \emph{{G}lobal Analysis ({P}apers in Honor of {K}. {K}odaira)},
  pages 355--365. {Princeton Univ. Press, Princeton, and Univ. Tokyo Press,
  Tokyo}, 1969.

\bibitem[Sotiriou(2011)]{Sot11HL}
T.~P. Sotiriou.
\newblock Ho{\v{r}}ava-lifshitz gravity: a status report.
\newblock In \emph{Journal of Physics: Conference Series}, volume 283, page
  012034. IOP Publishing, 2011.

\bibitem[Sotiriou et~al.(2011{\natexlab{a}})Sotiriou, Visser, and
  Weinfurtner]{SVW2011dispersion}
T.~P. Sotiriou, M.~Visser, and S.~Weinfurtner.
\newblock From dispersion relations to spectral dimension--and back again.
\newblock \emph{Phys. Rev. D}, 84\penalty0 (10):\penalty0 104018,
  2011{\natexlab{a}}.

\bibitem[Sotiriou et~al.(2011{\natexlab{b}})Sotiriou, Visser, and
  Weinfurtner]{SVW2011spectral}
T.~P. Sotiriou, M.~Visser, and S.~Weinfurtner.
\newblock Spectral dimension as a probe of the ultraviolet continuum regime of
  causal dynamical triangulations.
\newblock \emph{Phys. Rev. Lett.}, 107\penalty0 (13):\penalty0 131303,
  2011{\natexlab{b}}.

\bibitem[Stoica(2011{\natexlab{a}})]{Sto11a}
O.~C. Stoica.
\newblock On singular semi-{R}iemannian manifolds.
\newblock \emph{To appear in Int. J. Geom. Methods Mod. Phys.}, May
  2011{\natexlab{a}}.
\newblock \href{http://arxiv.org/abs/1105.0201}{arXiv:math.DG/1105.0201}.

\bibitem[Stoica(2011{\natexlab{b}})]{Sto11b}
O.~C. Stoica.
\newblock Warped products of singular semi-{R}iemannian manifolds.
\newblock \emph{Arxiv preprint math.DG/1105.3404}, May 2011{\natexlab{b}}.
\newblock \href{http://arxiv.org/abs/1105.3404}{arXiv:math.DG/1105.3404}.

\bibitem[Stoica(2011{\natexlab{c}})]{Sto11d}
O.~C. Stoica.
\newblock Cartan's structural equations for degenerate metric.
\newblock \emph{Arxiv preprint math.DG/1111.0646}, November 2011{\natexlab{c}}.
\newblock \href{http://arxiv.org/abs/1111.0646}{arXiv:math.DG/1111.0646}.

\bibitem[Stoica(2012{\natexlab{a}})]{Sto11e}
O.~C. Stoica.
\newblock Schwarzschild singularity is semi-regularizable.
\newblock \emph{\href{http://dx.doi.org/10.1140/epjp/i2012-12083-1}{Eur. Phys.
  J. Plus}}, 127\penalty0 (83):\penalty0 1--8, 2012{\natexlab{a}}.
\newblock URL \url{http://dx.doi.org/10.1140/epjp/i2012-12083-1}.
\newblock \href{http://arxiv.org/abs/1111.4837}{arXiv:gr-qc/1111.4837}.

\bibitem[Stoica(2012{\natexlab{b}})]{Sto11f}
O.~C. Stoica.
\newblock Analytic {R}eissner-{N}ordstr{\"o}m singularity.
\newblock \emph{\href{http://stacks.iop.org/1402-4896/85/i=5/a=055004}{Phys.
  Scr.}}, 85\penalty0 (5):\penalty0 055004, 2012{\natexlab{b}}.
\newblock URL \url{http://stacks.iop.org/1402-4896/85/i=5/a=055004}.
\newblock \href{http://arxiv.org/abs/1111.4332}{arXiv:gr-qc/1111.4332}.

\bibitem[Stoica(2012{\natexlab{c}})]{Sto12a}
O.~C. Stoica.
\newblock Beyond the {F}riedmann-{L}ema{\^i}tre-{R}obertson-{W}alker {B}ig
  {B}ang singularity.
\newblock \emph{Commun. Theor. Phys.}, 58\penalty0 (4):\penalty0 613--616,
  March 2012{\natexlab{c}}.
\newblock \href{http://arxiv.org/abs/1203.1819}{arXiv:gr-qc/1203.1819}.

\bibitem[Stoica(2012{\natexlab{d}})]{Sto12d}
O.~C. Stoica.
\newblock Quantum gravity from metric dimensional reduction at singularities.
\newblock \emph{Arxiv preprint gr-qc/1205.2586}, May 2012{\natexlab{d}}.
\newblock \href{http://arxiv.org/abs/1205.2586 }{arXiv:gr-qc/1205.2586}.

\bibitem[Stoica(2012{\natexlab{e}})]{Sto12e}
O.~C. Stoica.
\newblock
  \href{http://www.degruyter.com/view/j/auom.2012.20.issue-2/v10309-012-0050-3/v10309-012-0050-3.xml}{Spacetimes
  with Singularities}.
\newblock \emph{An. {\c S}t. Univ. Ovidius Constan{\c t}a}, 20\penalty0
  (2):\penalty0 213--238, July 2012{\natexlab{e}}.
\newblock \href{http://arxiv.org/abs/1108.5099}{arXiv:gr-qc/1108.5099}.

\bibitem[Stoica(2012{\natexlab{f}})]{Sto12f}
O.~C. Stoica.
\newblock An exploration of the singularities in general relativity.
\newblock \emph{\href{http://www.jinr.ru/news_article.asp?n_id=1246}{Seminar
  held at JINR, Dubna}}, May 2012{\natexlab{f}}.
\newblock \href{http://arxiv.org/abs/1207.5303}{arXiv:gr-qc/1207.5303}.

\bibitem[Stoica(2013{\natexlab{a}})]{Sto11g}
O.~C. Stoica.
\newblock {K}err-{N}ewman solutions with analytic singularity and no closed
  timelike curves.
\newblock \emph{To appear in U.P.B. Sci. Bull., Series A}, 2013{\natexlab{a}}.
\newblock \href{http://arxiv.org/abs/1111.7082}{arXiv:gr-qc/1111.7082}.

\bibitem[Stoica(2013{\natexlab{b}})]{Sto11h}
O.~C. Stoica.
\newblock {B}ig {B}ang singularity in the
  {F}riedmann-{L}ema{\^i}tre-{R}obertson-{W}alker spacetime.
\newblock \emph{The International Conference of Differential Geometry and
  Dynamical Systems}, October 2013{\natexlab{b}}.
\newblock \href{http://arxiv.org/abs/1112.4508}{arXiv:gr-qc/1112.4508}.

\bibitem[Stoica(2013{\natexlab{c}})]{Sto12c}
O.~C. Stoica.
\newblock On the {W}eyl curvature hypothesis.
\newblock \emph{Ann. of Phys.}, 338:\penalty0 186--194, November
  2013{\natexlab{c}}.
\newblock \href{http://arxiv.org/abs/1203.3382}{arXiv:gr-qc/1203.3382}.

\bibitem[Stoica(2014)]{Sto12b}
O.~C. Stoica.
\newblock Einstein equation at singularities.
\newblock \emph{Central European Journal of Physics}, pages 1--9, 2014.
\newblock ISSN 1895-1082.
\newblock \doi{10.2478/s11534-014-0427-1}.
\newblock URL \url{http://dx.doi.org/10.2478/s11534-014-0427-1}.
\newblock \href{http://arxiv.org/abs/1203.2140}{arXiv:gr-qc/1203.2140}.

\bibitem[Strubecker(1941)]{Str41}
K.~Strubecker.
\newblock {Differentialgeometrie des isotropen Raumes. I. Theorie der
  Raumkurven}.
\newblock \emph{Sitzungsber. Akad. Wiss. Wien, Math.-Naturw. Kl., Abt. IIa},
  150:\penalty0 1--53, 1941.

\bibitem[Strubecker(1942{\natexlab{a}})]{Str42a}
K.~Strubecker.
\newblock {Differentialgeometrie des isotropen Raumes. II. Die Fl{\"a}chen
  konstanter Relativkr{\"u}mmung $K= rt- s^2$}.
\newblock \emph{Math. Z.}, 47\penalty0 (1):\penalty0 743--777,
  1942{\natexlab{a}}.
\newblock ISSN 0025-5874.

\bibitem[Strubecker(1942{\natexlab{b}})]{Str42b}
K.~Strubecker.
\newblock {Differentialgeometrie des isotropen Raumes. III.
  Fl{\"a}chentheorie}.
\newblock \emph{Math. Z.}, 48\penalty0 (1):\penalty0 369--427,
  1942{\natexlab{b}}.
\newblock ISSN 0025-5874.

\bibitem[Strubecker(1944)]{Str45}
K.~Strubecker.
\newblock {Differentialgeometrie des isotropen Raumes. IV. Theorie der
  fl{\"a}chentreuen Abbildungen der Ebene}.
\newblock \emph{Math. Z.}, 50\penalty0 (1):\penalty0 1--92, 1944.
\newblock ISSN 0025-5874.

\bibitem[Stueckelberg and Petermann(1953)]{SP53RG}
E.~C.~G. Stueckelberg and A.~Petermann.
\newblock Normalization of the constants in the theory of quanta.
\newblock \emph{Helvetica Physica Acta (Switzerland)}, 26, 1953.

\bibitem['t~Hooft(1973)]{tHooft73dimreg}
G.~'t~Hooft.
\newblock {Dimensional regularization and the renormalization group}.
\newblock \emph{Nuclear Physics B}, 61:\penalty0 455--468, 1973.

\bibitem['t~Hooft(1998)]{tHooft98glorious}
G.~'t~Hooft.
\newblock The glorious days of physics-renormalization of gauge theories.
\newblock
  \emph{\href{http://arxiv.org/abs/hep-th/9812203}{arXiv:hep-th/9812203}},
  1998.

\bibitem['t~Hooft(2009)]{tH09fundST}
G.~'t~Hooft.
\newblock {The fundamental nature of space and time}.
\newblock In D.~Oriti, editor, \emph{{Approaches to Quantum Gravity: Toward a
  New Understanding of Space, Time and Matter}}, pages 13--36. Cambridge
  University Press, 2009.
\newblock ISBN 9780521860451.
\newblock URL
  \url{http://www.staff.science.uu.nl/~hooft101/gthpub/QuantumGrav_06.pdf}.
\newblock
  \href{http://www.staff.science.uu.nl/\~hooft101/gthpub/QuantumGrav_06.pdf}{Online
  version.}

\bibitem['t~Hooft and Veltman(1972)]{HV72dimreg}
G.~'t~Hooft and M.~Veltman.
\newblock {Regularization and renormalization of gauge fields}.
\newblock \emph{Nuclear Physics B}, 44\penalty0 (1):\penalty0 189--213, 1972.

\bibitem['t~Hooft and Veltman(1974)]{HV74qg}
G.~'t~Hooft and M.~Veltman.
\newblock {One loop divergencies in the theory of gravitation}.
\newblock \emph{Annales de l'Institut Henri Poincar{\'e}: Section A, Physique
  th{\'e}orique}, 20\penalty0 (1):\penalty0 69--94, 1974.

\bibitem[Teleman(2000)]{teleman2000connections}
K.~Teleman.
\newblock A classification theorem for connections.
\newblock \emph{Balkan Journal of Geometry and Its Applications}, 5\penalty0
  (1):\penalty0 157--162, 2000.

\bibitem[Teleman(2005)]{teleman2004vranceanu}
K.~Teleman.
\newblock The work of {G}h. {V}r{\u a}nceanu on nonholonomic spaces.
\newblock \emph{An. Univ. Buc., Matem.}, LIV:\penalty0 13--20, 2005.

\bibitem[Thirring(1958)]{Thi58}
W.~E. Thirring.
\newblock {A soluble relativistic field theory}.
\newblock \emph{Ann. of Phys.}, 3\penalty0 (1):\penalty0 91--112, 1958.

\bibitem[Tipler et~al.(1980)Tipler, Clarke, and
  Ellis]{tipler1980singularities_review}
F.~J. Tipler, C.~J.~S. Clarke, and G.~F.~R. Ellis.
\newblock Singularities and horizons.
\newblock In \emph{General Relativity and Gravitation II}, volume~1, page~97,
  1980.

\bibitem[Tod(1987)]{Tod87}
K.~P. Tod.
\newblock {Quasi-local Mass and Cosmological Singularities}.
\newblock \emph{Class. Quant. Grav.}, 4:\penalty0 1457, 1987.

\bibitem[Tod(1990)]{Tod90}
K.~P. Tod.
\newblock {Isotropic Singularities and the $\gamma=2$ Equation of State}.
\newblock \emph{Class. Quant. Grav.}, 7:\penalty0 L13--L16, 1990.

\bibitem[Tod(1991)]{Tod91}
K.~P. Tod.
\newblock {Isotropic Singularities and the Polytropic Equation of State}.
\newblock \emph{Class. Quant. Grav.}, 8:\penalty0 L77, 1991.

\bibitem[Tod(1992)]{Tod92}
K.~P. Tod.
\newblock {Isotropic Singularities}.
\newblock \emph{Rend. Sem. Mat. Univ. Politec. Torino}, 50:\penalty0 69--93,
  1992.

\bibitem[Tod(2002)]{Tod02}
K.~P. Tod.
\newblock {Isotropic Cosmological Singularities}.
\newblock \emph{The Conformal Structure of Space-Time}, pages 123--134, 2002.

\bibitem[Tod(2003)]{Tod03}
K.~P. Tod.
\newblock {Isotropic Cosmological Singularities: Other Matter Models}.
\newblock \emph{Class. Quant. Grav.}, 20:\penalty0 521, 2003.

\bibitem[Udri\c{s}te and Opri\c{s}(2008)]{udriste2008euler}
C.~Udri\c{s}te and D.~Opri\c{s}.
\newblock {Euler-Lagrange-Hamilton dynamics with fractional action}.
\newblock \emph{WSEAS Transactions on Mathematics}, 7\penalty0 (1):\penalty0
  19--30, 2008.

\bibitem[Udri{\c{s}}te(2013)]{udricste2013optimal}
C.~Udri{\c{s}}te.
\newblock Optimal control on nonholonomic black holes.
\newblock \emph{Journal of Computational Methods in Science and Engineering},
  13\penalty0 (1):\penalty0 271--278, 2013.

\bibitem[Udri{\c s}te and Balan(2005)]{udriste2005linear}
C.~Udri{\c s}te and V.~Balan.
\newblock Linear algebra and analysis, volume 4 of applied sciences.
  monographs, 2005.

\bibitem[Udri{\c{s}}te and Ciancio(2013)]{udricste2013controllability}
C.~Udri{\c{s}}te and V.~Ciancio.
\newblock Controllability of nonholonomic black holes systems.
\newblock \emph{International Journal of Geometric Methods in Modern Physics},
  10\penalty0 (02), 2013.

\bibitem[Udri{\c{s}}te and Tevy(2013)]{udriste2013black}
C.~Udri{\c{s}}te and I.~Tevy.
\newblock Equilibrium states after interaction of nonholonomic black hole
  systems.
\newblock In \emph{Proceedings of the 2013 International Conference on Systems,
  Control, Signal Processing and Informatics (SCSI-13), July 16-19, 2013,
  Rhodes (Rodos) Island, Greece}, 2013.

\bibitem[Udri{\c{s}}te et~al.(2010)Udri{\c{s}}te, Ciancio, and
  Farsaci]{udriste2010black}
C.~Udri{\c{s}}te, V.~Ciancio, and F.~Farsaci.
\newblock Black holes nonholonomic thermodynamics.
\newblock In \emph{Proceedings of the 12th WSEAS international conference on
  Mathematical methods, computational techniques and intelligent systems},
  pages 162--171. World Scientific and Engineering Academy and Society (WSEAS),
  2010.

\bibitem[Vi{\c{s}}inescu(2009)]{visinescu2009bianchi}
M.~Vi{\c{s}}inescu.
\newblock Bianchi type-{I} string cosmological model in the presence of a
  magnetic field: classical and quantum loop approach.
\newblock \emph{Romanian Reports in Physics}, 61\penalty0 (3):\penalty0
  427--435, 2009.

\bibitem[Visser(2011)]{Vis11HL}
M.~Visser.
\newblock {Status of {H}o{\v{r}}ava gravity: A personal perspective}.
\newblock In \emph{Journal of Physics: Conference Series}, volume 314, page
  012002. IOP Publishing, 2011.

\bibitem[Vr\u{a}nceanu(1926)]{Vra26}
G.~Vr\u{a}nceanu.
\newblock {Sur les espaces non holonomes}.
\newblock \emph{C.R. Acad. Sci. Paris}, 183:\penalty0 852--854, 1926.

\bibitem[Vr\u{a}nceanu(1936)]{Vra36}
G.~Vr\u{a}nceanu.
\newblock Sur une th{\'e}orie unitaire non holonome des champs physiques.
\newblock \emph{J. Phys. Radium}, 7\penalty0 (12):\penalty0 514--526, 1936.

\bibitem[Vr\u{a}nceanu(1942)]{Vra42}
G.~Vr\u{a}nceanu.
\newblock {Sur les invariants des espaces de Riemann singuliers}.
\newblock \emph{Disqu. Math. Phys. {B}ucure{\c s}ti}, 2:\penalty0 253--281,
  1942.

\bibitem[Wald(1984)]{Wal84}
R.~M. Wald.
\newblock \emph{{General Relativity}}.
\newblock University Of Chicago Press, June 1984.
\newblock ISBN 0226870332.

\bibitem[Walker(1937)]{WAL37}
A.~G. Walker.
\newblock {On Milne's Theory of World-Structure}.
\newblock \emph{Proceedings of the London Mathematical Society}, 2\penalty0
  (1):\penalty0 90, 1937.

\bibitem[Wang and Wu(2011)]{WW11HL}
A.~Wang and Q.~Wu.
\newblock Stability of spin-0 graviton and strong coupling in horava-lifshitz
  theory of gravity.
\newblock \emph{Phys. Rev. D}, 83\penalty0 (4):\penalty0 044025, 2011.

\bibitem[Weinberg(1979)]{Wein79AS}
S.~Weinberg.
\newblock Ultraviolet divergences in quantum theories of gravitation.
\newblock In \emph{General relativity: an Einstein centenary survey}, volume~1,
  pages 790--831, 1979.

\bibitem[Weinfurtner et~al.(2010)Weinfurtner, Sotiriou, and Visser]{WSV10HL}
S.~Weinfurtner, T.~P. Sotiriou, and M.~Visser.
\newblock Projectable {H}o{\v{r}}ava--lifshitz gravity in a nutshell.
\newblock In \emph{Journal of Physics: Conference Series}, volume 222, page
  012054. IOP Publishing, 2010.

\bibitem[Wheeler(1962)]{whe:1962}
J.~A. Wheeler.
\newblock \emph{{Geometrodynamics}}, volume~1 of \emph{Topics of Modern
  Physics}.
\newblock Academic Press, New York, 1962.

\bibitem[Yoneda et~al.(1997)Yoneda, Shinkai, and Nakamichi]{Yon97}
G.~Yoneda, H.~Shinkai, and A.~Nakamichi.
\newblock {Trick for Passing Degenerate Points in the Ashtekar Formulation}.
\newblock \emph{Phys. Rev. D}, 56\penalty0 (4):\penalty0 2086--2093, 1997.
\newblock ISSN 1550-2368.

\end{thebibliography}

\begin{thebibliography}{10}

\bibitem{CSto12c}
O.~C. Stoica.
\newblock On the {W}eyl curvature hypothesis.
\newblock {\em Ann. of Phys.}, 338:186--194, November 2013.
\newblock \href{http://arxiv.org/abs/1203.3382}{arXiv:gr-qc/1203.3382}.

\bibitem{CSto11e}
O.~C. Stoica.
\newblock Schwarzschild singularity is semi-regularizable.
\newblock {\em \href{http://dx.doi.org/10.1140/epjp/i2012-12083-1}{Eur. Phys.
  J. Plus}}, 127(83):1--8, 2012.
\newblock \href{http://arxiv.org/abs/1111.4837}{arXiv:gr-qc/1111.4837}.

\bibitem{CSto11f}
O.~C. Stoica.
\newblock Analytic {R}eissner-{N}ordstr{\"o}m singularity.
\newblock {\em \href{http://stacks.iop.org/1402-4896/85/i=5/a=055004}{Phys.
  Scr.}}, 85(5):055004, 2012.
\newblock \href{http://arxiv.org/abs/1111.4332}{arXiv:gr-qc/1111.4332}.

\bibitem{CSto11g}
O.~C. Stoica.
\newblock {K}err-{N}ewman solutions with analytic singularity and no closed
  timelike curves.
\newblock {\em To appear in U.P.B. Sci. Bull., Series A}, 2013.
\newblock \href{http://arxiv.org/abs/1111.7082}{arXiv:gr-qc/1111.7082}.

\bibitem{CSto12e}
O.~C. Stoica.
\newblock
  \href{http://www.degruyter.com/view/j/auom.2012.20.issue-2/v10309-012-0050-3/v10309-012-0050-3.xml}{Spacetimes
  with Singularities}.
\newblock {\em An. {\c S}t. Univ. Ovidius Constan{\c t}a}, 20(2):213--238, July
  2012.
\newblock \href{http://arxiv.org/abs/1108.5099}{arXiv:gr-qc/1108.5099}.

\bibitem{CSto12a}
O.~C. Stoica.
\newblock Beyond the {F}riedmann-{L}ema{\^i}tre-{R}obertson-{W}alker {B}ig
  {B}ang singularity.
\newblock {\em Commun. Theor. Phys.}, 58(4):613--616, March 2012.
\newblock \href{http://arxiv.org/abs/1203.1819}{arXiv:gr-qc/1203.1819}.

\bibitem{CSto12b}
O.~C. Stoica.
\newblock Einstein equation at singularities.
\newblock {\em Central European Journal of Physics}, pages 1--9, 2014.
\newblock \href{http://arxiv.org/abs/1203.2140}{arXiv:gr-qc/1203.2140}.

\bibitem{CSto12f}
O.~C. Stoica.
\newblock An exploration of the singularities in general relativity.
\newblock {\em \href{http://www.jinr.ru/news_article.asp?n_id=1246}{Seminar
  held at JINR, Dubna}}, May 2012.
\newblock \href{http://arxiv.org/abs/1207.5303}{arXiv:gr-qc/1207.5303}.

\bibitem{CSto11i}
O.~C. Stoica.
\newblock Tensor operations on degenerate inner product spaces.
\newblock {\em Arxiv preprint gr-qc/1112.5864}, December 2011.
\newblock \href{http://arxiv.org/abs/1112.5864}{arXiv:gr-qc/1112.5864}.

\bibitem{CSto11a}
O.~C. Stoica.
\newblock On singular semi-{R}iemannian manifolds.
\newblock {\em To appear in Int. J. Geom. Methods Mod. Phys.}, May 2011.
\newblock \href{http://arxiv.org/abs/1105.0201}{arXiv:math.DG/1105.0201}.

\bibitem{CSto11b}
O.~C. Stoica.
\newblock Warped products of singular semi-{R}iemannian manifolds.
\newblock {\em Arxiv preprint math.DG/1105.3404}, May 2011.
\newblock \href{http://arxiv.org/abs/1105.3404}{arXiv:math.DG/1105.3404}.

\bibitem{CSto11d}
O.~C. Stoica.
\newblock Cartan's structural equations for degenerate metric.
\newblock {\em Arxiv preprint math.DG/1111.0646}, November 2011.
\newblock \href{http://arxiv.org/abs/1111.0646}{arXiv:math.DG/1111.0646}.

\bibitem{CSto11h}
O.~C. Stoica.
\newblock {B}ig {B}ang singularity in the
  {F}riedmann-{L}ema{\^i}tre-{R}obertson-{W}alker spacetime.
\newblock {\em The International Conference of Differential Geometry and
  Dynamical Systems}, October 2013.
\newblock \href{http://arxiv.org/abs/1112.4508}{arXiv:gr-qc/1112.4508}.

\bibitem{CSto12d}
O.~C. Stoica.
\newblock Quantum gravity from metric dimensional reduction at singularities.
\newblock {\em Arxiv preprint gr-qc/1205.2586}, May 2012.
\newblock \href{http://arxiv.org/abs/1205.2586 }{arXiv:gr-qc/1205.2586}.

\bibitem{CSto08f}
C.~{Stoica}.
\newblock {Flowing with a Frozen River}.
\newblock {\em Foundational Questions Institute,
  \href{http://fqxi.org/community/essay/winners/2008.1}{``The Nature of Time''
  essay contest}}, 2008.
\newblock
  \href{http://fqxi.org/community/forum/topic/322}{http://fqxi.org/community/forum/topic/322}.

\bibitem{CSto12QMa}
O.~C. Stoica.
\newblock \href{http://www.noema.crifst.ro/doc/2012_5_01.pdf}{Modern Physics,
  Determinism, and Free-Will}.
\newblock {\em {Noema, Romanian Committee for the History and Philosophy of
  Science and Technologies of the Romanian Academy}}, XI:431--456, 2012.
\newblock
  \href{http://www.noema.crifst.ro/doc/2012_5_01.pdf}{http://www.noema.crifst.ro/doc/2012\_5\_01.pdf}.

\bibitem{CSto12QMb}
O.~C. Stoica.
\newblock Quantum measurement and initial conditions, in preparation.
\newblock {\em Preprint
  \href{http://arxiv.org/abs/1212.2601}{arXiv:quant-ph/1212.2601}}, December
  2012.

\bibitem{CSto12QMc}
C.~Stoica.
\newblock \href{http://dx.doi.org/10.1051/epjconf/20135801017}{Global and local
  aspects of causality in quantum mechanics}.
\newblock In {\em \href{http://timemachine.polito.it/home/?q=node/2}{TM 2012 --
  The Time Machine Factory [unspeakable, speakable] on Time Travel in Turin}},
  volume~58, page 01017.
  \href{http://www.epj-conferences.org/index.php?option=com_toc&url=/articles/epjconf/abs/2013/19/contents/contents.html}{EPJ
  Web of Conferences}, September 2013.
\newblock \href{http://dx.doi.org/10.1051/epjconf/20135801017}{Open Access}.

\bibitem{CSto13a}
C.~Stoica.
\newblock {Singular General Relativity}.
\newblock {\em Ph.D. Thesis}, January 2013.
\newblock \href{http://arxiv.org/abs/1301.2231}{arXiv:math.DG/1301.2231}.

\bibitem{CSto13b}
C.~{Stoica}.
\newblock {The Tao of It and Bit}.
\newblock {\em Foundational Questions Institute,
  \href{http://fqxi.org/community/essay/winners/2013.1}{``It from Bit or Bit
  from It?'' essay contest, fourth prize}}, 2013.
\newblock
  \href{http://fqxi.org/community/forum/topic/1627}{http://fqxi.org/community/forum/topic/1627}.

\bibitem{CSto14a}
C.~Stoica.
\newblock {The Geometry of Black Hole singularities}.
\newblock {\em To appear in Advances in High Energy Physics}, 2014, May 2014.

\end{thebibliography}

\pagebreak
\addtotoc{My Papers}
\section*{My Papers}
\label{MyBibliography}
\lhead{\emph{My Papers}} % Change the page header to say "Bibliography"

\begingroup
\renewcommand\bibsection{}

\endgroup

\end{document}